\documentclass{article}

\usepackage[affil-it]{authblk}
\usepackage[usenames,dvipsnames]{xcolor}
\usepackage{amsfonts}
\usepackage{amsmath,amsthm,amssymb,dsfont}
\usepackage{enumerate}
\usepackage{graphicx}	
\usepackage{subcaption}
\usepackage[margin=3cm]{geometry}
\usepackage{url}
\usepackage{todonotes}
\usepackage{bbm}

\usepackage{tikz}

\usepackage{pifont}
\usepackage{multirow}
\usepackage{makecell}

\usepackage{epsfig}
\usetikzlibrary{shapes.symbols,patterns} % for source symbols
\usetikzlibrary{decorations.pathreplacing,angles,quotes}
\usepackage{pgfplots}
\pgfplotsset{compat=1.10}
\usepgfplotslibrary{fillbetween}

\definecolor{linkblue}{HTML}{001487}
\usepackage{hyperref}
\hypersetup{colorlinks=true,citecolor=linkblue,linkcolor=linkblue,filecolor=linkblue,urlcolor=linkblue,breaklinks=true}

\usepackage{nicefrac}
\usepackage{mathtools}
\usepackage[nameinlink,capitalize,noabbrev]{cleveref}

\usepackage{algorithm}
\usepackage{algorithmic}

\usepackage{mdframed}
\usepackage{aligned-overset}
\theoremstyle{plain}
\newtheorem{theorem}{Theorem}[section]
\newtheorem{lemma}[theorem]{Lemma}

\newtheorem{corollary}[theorem]{Corollary}
\newtheorem{proposition}[theorem]{Proposition}

\theoremstyle{definition}
\newtheorem{definition}[theorem]{Definition}
\newtheorem{definition-2}[theorem]{Definition}
\newtheorem{remark}[theorem]{Remark}
\newtheorem{example}[theorem]{Example}
\newtheorem{question}[theorem]{Question}
\newtheorem{notation}[theorem]{Notation}
\newtheorem{assumption}[theorem]{Assumption}

\usepackage{tcolorbox}
\tcbuselibrary{breakable}
\tcolorboxenvironment{definition}{breakable, pad at break*=2mm}
\tcolorboxenvironment{definition-2}{colback=white,colframe=black!30!white}
\tcolorboxenvironment{theorem}{colback=pink!25!white,colframe=pink!100!black}
\tcolorboxenvironment{lemma}{breakable, pad at break*=2mm, colback=yellow!5!white,colframe=yellow!75!black}
\tcolorboxenvironment{corollary}{colback=Dandelion!5!white,colframe=Dandelion!75!black}
\tcolorboxenvironment{proposition}{colback=Emerald!5!white,colframe=Emerald!75!black}
\tcolorboxenvironment{claim}{colback=RoyalBlue!5!white,colframe=RoyalBlue!75!black}
\tcolorboxenvironment{conjecture}{colback=red!5!white,colframe=red!75!black}
\tcolorboxenvironment{example}{colback=white,colframe=lightgray}
\tcolorboxenvironment{remark}{colback=white,colframe=lightgray}
\tcolorboxenvironment{question}{colback=SkyBlue!5!white,colframe=lightgray!85!SkyBlue}
\tcolorboxenvironment{fact}{colback=Purple!5!white,colframe=Purple!75!black}
\tcolorboxenvironment{notation}{colback=white,colframe=black!30!white}
\tcolorboxenvironment{assumption}{colback=CornflowerBlue!20!white,colframe=CornflowerBlue!75!black}
\crefname{assumption}{Assumption}{Assumptions}
\crefname{notation}{Notation}{Notations}

\DeclareRobustCommand{\abbrevcrefs}{%
\Crefname{theorem}{Thm.}{Thms.}%
\Crefname{corollary}{Cor.}{Cors.}%
\Crefname{lemma}{Lem.}{Lems.}%
\Crefname{proposition}{Prop.}{Props.}%
\Crefname{equation}{Eq.}{Eqs.}%
\Crefname{example}{Ex.}{Exs.}%
\Crefname{remark}{Rmk.}{Rmks.}%
}

\DeclareRobustCommand{\Cshref}[1]{{\abbrevcrefs\Cref{#1}}}

\newcommand*{\E}{\mathbb{E}}

\newcommand*{\cA}{\mathcal{A}}
\newcommand*{\cB}{\mathcal{B}}

\newcommand*{\cD}{\mathcal{D}}
\newcommand*{\cE}{\mathcal{E}}
\newcommand*{\cF}{\mathcal{F}}

\newcommand*{\cH}{\mathcal{H}}

\newcommand*{\cK}{\mathcal{K}}
\newcommand*{\cN}{\mathcal{N}}
\newcommand*{\cM}{\mathcal{M}}
\newcommand*{\cP}{\mathcal{P}}
\newcommand*{\cQ}{\mathrm{Q}}
\newcommand*{\cR}{\mathcal{R}}
\newcommand*{\cS}{\mathcal{S}}
\newcommand*{\cT}{\mathcal{T}}
\newcommand*{\cU}{\mathcal{U}}
\newcommand*{\cV}{\mathcal{V}}
\newcommand*{\cW}{\mathcal{W}}
\newcommand*{\cX}{\mathcal{X}}

\newcommand*{\N}{\mathbb{N}}

\newcommand*{\R}{\mathbb{R}}

\newcommand*{\C}{\mathbb{C}}

\newcommand*{\LOCC}{\mathsf{LOCC}}

\newcommand*{\St}{\mathrm{S}}

\newcommand*{\reg}{\mathrm{reg}}

\newcommand*{\eps}{\varepsilon}

\newcommand*{\id}{\mathds{1}}
\newcommand*{\poly}{\mathrm{poly}}

\newcommand*{\supp}{\mathrm{supp}}
\newcommand*{\tr}{\mathrm{tr}}
\newcommand*{\ket}[1]{| #1 \rangle}
\newcommand*{\bra}[1]{\langle #1 |}

\newcommand{\proj}[1]{|#1\rangle\!\langle #1|}
\newcommand*{\braket}[1]{\langle #1 \rangle}

\newcommand*{\Pos}{\mathrm{P}}

\newcommand*{\di}{\mathrm{d}} % integration d

\newcommand{\norm}[1]{\left\lVert#1\right\rVert}

\newcommand*{\ox}{\otimes}

\DeclareMathOperator*{\argmax}{arg\,max}

\usepackage{float}

 \allowdisplaybreaks

\title{Operational transformation rates of quantum states}

  \author{\normalsize Giulia Mazzola and Renato Renner}
   \affil{\small Institute for Theoretical Physics, ETH Zurich\\
  }
 \date{}

\newcommand*{\OP}{\mathrm{OP}}
\usepackage{mathrsfs}
\newcommand*{\F}{\mathscr{F}}
\newcommand{\subsetsim}{\mathrel{\underset{\raisebox{0.2ex}{$\sim$}}{\subset}}}
\newcommand{\insim}{\mathrel{\underset{\raisebox{0.2ex}{$\sim$}}{\in}}}

\usepackage{stmaryrd}

\begin{document}

\maketitle

 \begin{abstract}
Typically, multiple copies of a quantum state are described by a tensor power of that state. However, from an operational point of view, an experimentalist cannot, in general, rule out imperfections at unknown positions or correlations between copies, so assuming independent and identically distributed (iid) copies is not justified a priori. In the following, we relax the iid assumption by considering a whole set of resource states that are fundamentally indistinguishable from a tensor power state, and in that sense look like many copies of the quantum state to the experimentalist. Using a de Finetti-type argument, we show that this operational set of states can asymptotically be represented by almost-iid states. As an application, we introduce operational analogues of asymptotic transformation rates, such as entanglement cost and distillation. This raises the question of whether these rates coincide with their iid counterparts.

%As an application, we introduce a modified notion of asymptotic state transformation rates, such as entanglement cost and distillation, that accounts for this more general structure. We investigate how these operational rates relate to their iid counterparts.

%As an application, we study a modified notion of asymptotic state transformation rates, such as entanglement cost and distillation, that accounts for this more general structure.
 %  However, from an operational point of view, where an experimentalist \red{has access to each individual copy separately}, this iid assumption is a priori not justified. 
 \end{abstract}
 
\subsection*{Introduction}
Many tasks in quantum information theory are formulated in terms of multiple copies of a quantum state. For instance, hypothesis testing, data compression, communication, quantum cryptography, and resource manipulation all process long sequences of quantum systems, which are typically represented by a tensor power $\rho^{\otimes n}$ of a fixed state $\rho$. This independent and identically distributed (iid) model is mathematically convenient and often physically well motivated. From a fundamental standpoint, however, an exact tensor power structure cannot in general be certified: an experimentalist performing measurements on the systems emitted by a source cannot rule out correlations among them, nor exclude a small number of defective systems at unknown positions. Thus, even when a source consistently appears to produce the state $\rho$, its joint $n$-system state $\rho^{(n)}$ need not equal $\rho^{\otimes n}$ (see~\cref{fig:intro-iid-source}). Already in classical probability theory, de Finetti cautioned that the ``most common and misleading error" is to regard the iid assumption as fundamental~\cite{dF70}.

\begin{figure}[htb!]
    \centering
    \includegraphics[width=0.44\columnwidth]{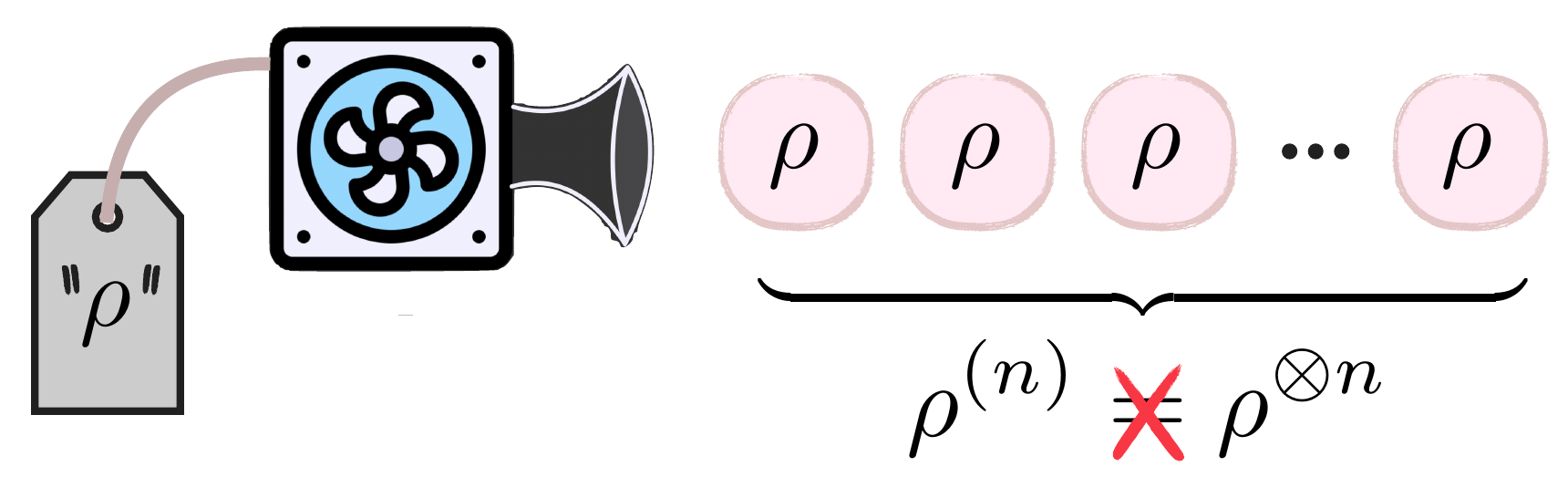}
    \caption{A $\rho$-source is expected to emit many copies of a state $\rho$. While tomography may consistently identify the individual outputs with $\rho$, it cannot in general certify that their joint state $\rho^{(n)}$ has the exact tensor power form $\rho^{\otimes n}$.}
    \label{fig:intro-iid-source}
\end{figure}
\vspace{-0.25cm}
To make the motivation concrete, consider the following examples. The experimentalist receives a single $n$-system block in a state $\rho^{(n)}$, with no additional copies, and wishes to use it as a resource for further information processing.\footnote{Appealing to repeated preparations of the entire block, e.g., for standard tomography, would introduce an independence assumption at a higher level – one would then need to justify that the blocks themselves are iid, and the same verification question arises again.} Verification should therefore preserve most of that resource and ideally use only free operations or a negligible amount of additional resources. The underlying limitation for such verification already arises in classical settings, such as coin tossing.
%Sacrificing a small sample is compatible with this aim; consuming or replacing the entire resource is not. 
%To make the motivation concrete, consider the following examples. We regard $\rho^{(n)}$ as a resource for further information-processing tasks. Verification should therefore preserve most of that resource and ideally use only free operations or a negligible amount of additional resources. Sacrificing a small sample is compatible with this aim; consuming or replacing the entire resource is not. 
\begin{enumerate}[(i)]
\item \textit{Fair coins with a parity constraint.} Consider an $n$-bit string drawn uniformly from all even-parity strings. Every proper subset consists of independent fair (iid) bits, although the full string is correlated. Even reading the entire string does not determine its generating law with certainty, since one observed even-parity string is compatible with iid fair coins as well as the even-parity distribution. In an even more extreme case, a deterministic source that always emits one fixed string is indistinguishable from iid fair coins given a single observation.
%In a more extreme case, a single observed string is even compatible with a deterministic source that always emits that particular string. 
Reading the string again, or making copies of it, does not provide fresh samples from the generating distribution and thus, is not helpful.

These distinctions have direct operational consequences. If the string is used as a one-time-pad key in cryptography, an adversary who knows that the source produces only even-parity strings can infer the parity of the original message with certainty from a single encrypted message. If the source instead always emits a deterministic fixed string known to the adversary, the entire original message can be recovered.

\item \textit{Fair coins with unknown defects.} Consider the task of distinguishing $n$ iid fair coins from a family of sources that produce independent fair bits except at $k = o(\sqrt{n})$ positions, where they always output heads. If the defective positions are known, one can test the iid hypothesis by inspecting those $k$ bits and accepting if at least one shows tails. This test accepts the iid source with probability $1-2^{-k}$ and rejects the defective source with certainty.

If the defective positions are unknown, the test must instead work uniformly over all possible placements. No single test can achieve reliable discrimination in this setting, even when the entire string is available. Indeed, for iid fair coins, the number of heads is centered around $\frac{n}{2}$, with fluctuations of order $\sqrt n$. The $k$ defected coins shift this mean by $\frac{k}{2}$, which is smaller than the natural fluctuations when $k=o(\sqrt n)$. More precisely, the total variation distance between the iid fair-coin distribution and the uniform mixture over all defect placements tends to zero. Consequently, any test that accepts iid fair coins with probability at least $1-\delta$ must also accept some defective source with probability at least $1-\delta-o(1)$.

%the uniform mixture over all placements approaches the iid fair-coin distribution in total variation whenever $k=o(\sqrt n)$. Consequently, any test that accepts iid fair coins with probability at least $1-\delta$ must also accept some defective source with probability at least $1-\delta-o(1)$.

%\item \textit{Fair vs.~deterministic coins.}

\item \textit{Copies of a mixed state.} Let $\rho=\frac{1}{2}\id_2$ be the maximally mixed qubit state. Consider an accept/reject measurement on an $n$-system block, with acceptance POVM element $0\le E\le\id$. If it accepts the
iid target with certainty, then $\tr[E\rho^{\otimes n}]=1$. Since $\rho^{\otimes n}$ has full rank, this forces $E=\id$. The test therefore accepts every input state with certainty, including arbitrarily correlated states. In contrast to a projection onto a pure target, a test that always accepts this mixed target cannot reject any alternative. Even if a small rejection probability $\delta$ was allowed on the iid target, every such test would necessarily accept some highly correlated states with probability at least $1-\delta$. This is the quantum analogue of the limitation illustrated by the classical coin examples. 
%In both cases, a test that accepts the iid target with certainty cannot rule out alternative preparations.
%\item \text{Copies of a mixed state.} Let $\rho = \frac{1}{2}\id$ denote the maximally mixed state, and consider any accept/reject protocol on an n-block system that is represented by a POVM element $0\leq E \leq \id$ that accepts the iid state $\rho^{\otimes n}$ with certainty. Because $\rho^{\otimes n}$ is a full-rank state, it immediately follows that $E = \id$ and thus, the protocol accepts any input multi-copy state with certainty. Therefore, contrary to the pure state case, no test can verify the iid state with certainty.
\end{enumerate}
In the preceding examples, the obstruction to certifying an iid structure stems from statistical uncertainty. For pure quantum states, on the other hand, the challenge lies in the resource cost of verification, as the following examples illustrate.

 \begin{enumerate}[(i)]
 \setcounter{enumi}{3}
\item \textit{Copies of a pure Bell state.} Consider the resource state $\ket{\Phi^+}^{\otimes n} $ where \smash{$\ket{\Phi^{+}}:= \frac{1}{\sqrt{2}}(\ket{00}+\ket{11})$} denotes a maximally entangled Bell state. An ideal nondestructive implementation of the projection $P_n =  \proj{\Phi^+}^{\otimes n} $ yields exactly $n$ iid Bell pairs upon success. When the pairs are shared between two distant parties, however, this projection cannot be implemented using local operations and classical communication alone. In fact, any exact nondestructive implementation must consume at least $n$ shared Bell pairs. On the other hand, local measurements can test Bell correlations without additional shared entanglement, but destroy the entanglement in the measured pairs.

Nevertheless, approximate verification with a small error may require fewer resources. In particular, the techniques developed in this work yield such a procedure for iid Bell pairs without consuming additional shared entanglement. This is achieved by combining local tomography with a decoupling procedure (see~\cref{sec:LOCC_rates}).
 %\GM{Add the general resource argument in words or with formulas.}
%Allowing a small verification error can reduce this resource cost. The techniques developed in this work provide an approximate verification procedure for iid Bell pairs that requires no additional shared entanglement.

\item \textit{Tomography on a few systems.} Replace one of the $n$ intended $\proj{\Phi^+}$ Bell pairs from the previous example by $\proj{\Phi^-}$ at a uniformly random position, where \smash{$\ket{\Phi^{-}}:= \frac{1}{\sqrt{2}}(\ket{00}-\ket{11})$}. The resulting joint state is orthogonal to the ideal tensor power, yet the reduced state of any $m$ pairs has trace distance only $\frac{m}{n}$ from the ideal sample.
%while a uniformly chosen sample of $m$ pairs contains the defect with probability only $\frac{m}{n}$ and therefore, its reduced state has trace distance exactly $\frac{m}{n}$ from the ideal sample. 
Thus, even an optimal measurement confined to a sublinear sample, $m=o(n)$, has a vanishing distinguishing advantage. %and is thus unable to make the complete resource perfectly distinguishable from the ideal tensor power. 
This limits what can be concluded from tomography performed on a few sampled pairs. A procedure that returns an estimate close to $\proj{\Phi^+}$ with high probability on the ideal input will do so with nearly the same probability on the defective input. A tomographic estimate close to the ideal Bell state can therefore occur with high probability even when the full input has maximal trace distance from the ideal tensor power.

\end{enumerate}
These examples illustrate why successful verification using few resources need not establish an exact tensor power structure. In particular, an accurate single-system estimate from tomography does not by itself certify that the full resource is close to an ideal tensor power state. %They motivate asking whether subsequent information-processing protocols remain reliable for all inputs consistent with the chosen verification procedure.

Considerable effort has been devoted to studying information processing and resource manipulation beyond the iid setting. For instance, the information-spectrum approach was developed as a general framework for this purpose. It describes arbitrary sequences of states or channels through spectral rates, which generalize the entropic quantities familiar from the iid setting, and yields operational characterizations of tasks such as hypothesis testing, source and channel coding, and entanglement manipulation~\cite{VerduHan1994,Han2003,NagaokaHayashi2007,BowenDatta2006,Hayashi2006,BowenDatta2011,BuscemiDatta2010}. The framework was later connected to smooth entropies and one-shot information theory, which also provide finite-blocklength bounds and refined asymptotic expansions~\cite{RenWol04a,tomamichel09,DattaRenner2009,datta09,TH13,DL15}.

%Crucially, these approaches typically consider a given sequence of resources and allow the protocol to depend on that sequence. 
Crucially, these approaches generally characterize what can be achieved for each fixed resource sequence, allowing the optimizing protocol to depend on that sequence. Our setting is different. We fix a reference state $\rho$ and consider a source that is expected to produce many copies of $\rho$. Moreover, we fix an operational criterion that determines when a multi-copy state $\rho^{(n)}$ is accepted as being consistent with $\rho$. This criterion may arise, for example, from tomography, property estimation, or hypothesis testing. Importantly, the joint multi-copy state $\rho^{(n)}$ is otherwise unknown. We then consider the entire class of multi-copy states that satisfy this criterion, and we ask the following: 
% \GM{soruce $\equiv$ class of states (characterized by $\rho$). Add in question that the protocol may depend on $\rho$.}
%Their joint states need not be close to $\rho^{\otimes n}$ and may contain correlations or defects that the test does not detect. Therefore, we ask the following.

%Crucially, these approaches generally characterize what can be achieved for each fixed resource sequence, allowing the optimizing protocol to depend on that sequence. We instead ask for a single protocol that works uniformly over an entire class of sources. To define such a class, we fix a reference state $\rho$ and an operational criterion that determines when a multi-copy source is accepted as being consistent with $\rho$. This criterion may arise, for example, from tomography, property estimation, or hypothesis testing. The actual joint state $\omega^{(n)}$ is otherwise unknown: it need not be close to $\rho^{\otimes n}$ and may contain correlations or defects that the criterion does not detect. The distinction is therefore one of quantifiers: instead of optimizing separately for each fixed sequence, we require one protocol to work for every sequence satisfying the operational criterion.

\medskip
\noindent \textbf{Question.} \textit{Consider a $\rho$-source for a fixed state $\rho$. Can a single protocol, chosen without knowing the actual resource state $\rho^{(n)}$, process every resource state in this operationally defined class at the usual iid rate?}

\medskip
\noindent In other words, we require one protocol, chosen without knowledge of the correlations or defects, that works uniformly for every sequence compatible with the operational criterion. In terms of quantifiers, the information-spectrum approach generally asks what can be achieved for each fixed sequence, whereas we ask what can be achieved by a single protocol over an entire operationally defined class.

\begin{figure}[htb!]
    \centering
    \includegraphics[width=0.7\columnwidth]{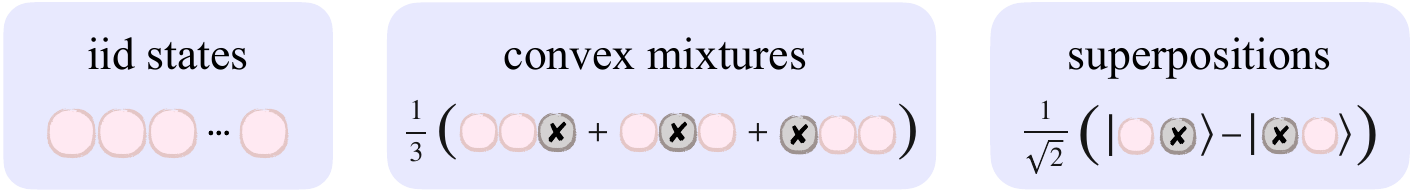}
    \caption{Illustration of almost-iid states. Each circle represents a single-copy system. The pink circles correspond to the iid components, represented by $\rho$, while the crossed circles indicate the defective and possibly correlated subsystems. A precise formulation is given in~\cref{def_almost_product_state_mixed}.}
     \label{fig:intro-iid-states}
\end{figure}
\vspace{-0.25cm}
A natural candidate class compatible with this operational viewpoint is provided by \emph{almost-iid states}~\cite{renner_phd,MSR2026}. Informally, these states have a tensor power structure up to a small number of defects, whose positions may be unknown and whose contribution may be coherent (see~\cref{fig:intro-iid-states}). Their operational relevance is supported by the fact that product measurements yield asymptotically the same empirical statistics as for the corresponding iid source~\cite[Proposition~2.7]{MSR2026}. 
% Product measurements on such states nevertheless yield empirical statistics that asymptotically approach those of the corresponding iid source~\cite[Proposition~2.7]{MSR2026}. 
%up to a sublinear number of possibly coherent defects. 
Motivated in part by that observation, the central question raised in~\cite{MSR2026} was whether almost-iid resources are as effective as perfect iid resources for information-processing tasks. Subsequent works introduced broader notions of almost-iid behavior~\cite{GPL26} and continued this investigation for several information-processing tasks~\cite{GDPL26,ND26,ND26_2,MSR2026-2}. Importantly, these approaches postulate the admissible class through a structural condition on the joint state, rather than determine it through an operational criterion.

In this work, we make the operational viewpoint more concrete: rather than postulating the form of the joint state, we ask which state sequences an experimentalist would actually recognize as many copies of $\rho$. To this end, we fix a tomography protocol \texttt{tom($\cdot$)} and consider all sequences $\pmb{\omega}=(\omega^{(n)})_{n\in\N}$ on which the protocol returns an estimate close to $\rho$ with high confidence and with a controlled sample complexity. Importantly, this does not mean that $\omega^{(n)}$ is close to $\rho^{\otimes n}$ in trace distance, or that no global measurement could distinguish the two states. It only means that the sequence is consistent with the operational test available to the experimentalist.

%Our main example of such an operational criterion comes from quantum state tomography. A tomography procedure examines part of a multi-copy source and accepts it whenever the resulting estimate is sufficiently close to $\rho$. The corresponding operational class consists of all joint states that pass this procedure with the prescribed confidence, including states with correlations that are not captured by the estimate.

\begin{tcolorbox}[colback=white,colframe=lightgray]
\textbf{Operational set of multi-copy states} \textit{(informal, see~\cref{def_OP-set})}\textbf{.}
For a tomography protocol \texttt{tom($\cdot$)} and a single-copy state $\rho$, we define
\begin{align} \label{intro:op_set_def}
 \OP(\texttt{tom($\cdot$)};\rho)\,:=\,\bigg\{ \,\pmb{\omega}=\big(\omega^{(n)}\big)_{n\in\N}\  \Big| \   \Pr\big[\texttt{tom}\big(\omega^{(n)}\big) \approx \rho \big] \overset{n\to\infty}{\longrightarrow} 1 \,\bigg\}\,.
%\OP(\texttt{tom($\cdot$)};\rho) :=\Bigg\{\pmb{\omega}=(\omega^{(n)})_{n\in\N}\ \Bigg|\  \begin{array}{c} \texttt{tom}(\omega^{(n)})\text{ estimates }\rho\text{ with}\\[-0.2em] \text{efficiently improving accuracy and confidence} \end{array}\Bigg\}.
\end{align}
Thus, every element of $\OP(\texttt{tom($\cdot$)};\rho)$ is interpreted operationally as a sequence of multiple copies of $\rho$, whose members may contain correlations or defects. %A precise formulation is given in~\cref{def_OP-set}.
\end{tcolorbox}

\noindent We impose only a few natural requirements on the tomography protocol: It should be correct on genuine iid inputs, be consistent under random symmetrization and sampling, and remain reliable when a sufficiently small number of arbitrary systems is corrupted (see~\cref{def_tom12,def_tom34}). Then, we show that standard tomography schemes based on informationally complete single-copy measurements and empirical frequencies satisfy these requirements (see~\cref{app_standard-tom-protocols}). Consistently with the measurement-statistics bound in~\cite[Proposition~2.7]{MSR2026}, almost-iid state sequences with a sublinear amount of defects belong to the corresponding operational set. The operational set is nevertheless larger since its definition places no direct structural restriction on the joint state (for instance, see~\cref{example-dF-thrm}).

This operational construction is also motivated by a complementary line of work on quantum state learning beyond the iid assumption. 
% A complementary line of work asks what can be learned about a quantum source when its outputs are not assumed to be iid. 
Reliable tomography can provide statistically valid predictions about unmeasured systems without assuming that the initial joint state is a tensor power~\cite{Christandl-Renner-2012}. Similarly, more recent results show that learning procedures designed for iid inputs can be adapted to arbitrary, possibly correlated inputs~\cite{Fawzi-Kueng-2024}, while related hypothesis-testing methods can efficiently identify the average state of non-identically distributed resource states~\cite{depalma2025}. These results concern the information that can be inferred from measurement data, rather than the subsequent use of the source in an information-processing task. In particular, learning that a source is consistent with a state $\rho$ does not imply that its joint state is close to $\rho^{\otimes n}$. This motivates our operational perspective: we consider all multi-copy states that a learning or tomography procedure accepts as copies of $\rho$, and ask whether a single protocol can process every state in this class at the usual iid rate.

\medskip
\noindent \textbf{Asymptotic transformation rates.} Having specified what it means operationally to possess many copies of a state, we use the resulting operational sets to define asymptotic transformation rates. A transformation from $\rho$ to $\sigma$ must map every sequence accepted as copies of $\rho$ to a sequence accepted as copies of $\sigma$, using a single sequence of allowed operations that is chosen independently of the actual resource state. Optimizing over such protocols gives operational counterparts of the usual iid conversion rates, including operational versions of entanglement distillation and entanglement cost. This develops the robustness question raised in~\cite{MSR2026}, where the stability of entanglement measures under almost-iid perturbations was investigated. The main difficulty is already visible when a resource state contains a small number of arbitrary defects at unknown positions. A protocol designed for the ideal tensor power input need not work simultaneously for all possible defect patterns (see~\cref{fig:intro-iid-transformations}).

\begin{figure}[htb!]
    \centering
    \includegraphics[width=0.71\columnwidth]{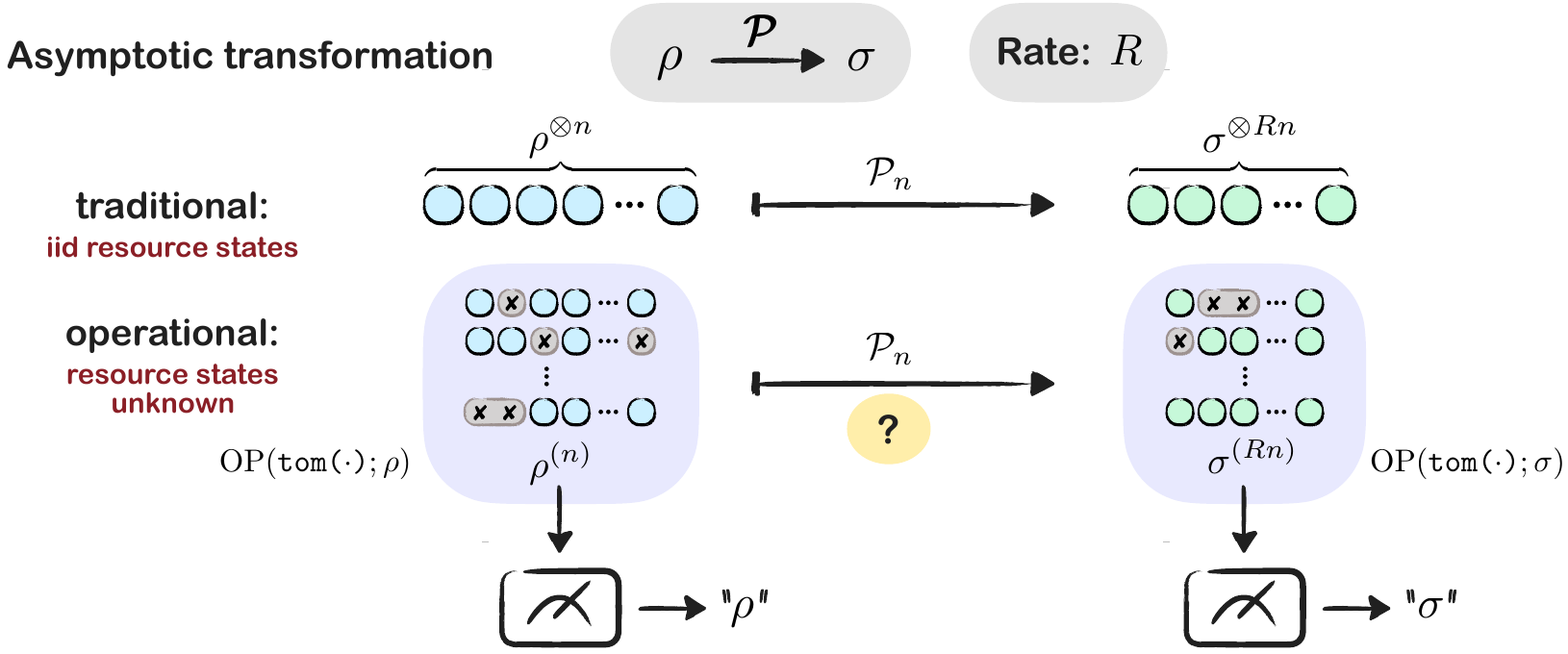}
    \caption{Asymptotic state transformation rates characterize the conversion of $n$ copies of $\rho$ into approximately $Rn$ copies of $\sigma$ via a protocol $\cP_n$ consisting of free operations. Traditionally, such protocols are designed for iid input resources $\rho^{\otimes n}$ and produce outputs approximating the iid target state $\sigma^{\otimes  Rn}$. In practice, however, multi-copy resources may contain correlated defects that are practically undetectable. States with such defects therefore belong to the operational set of multi-copy states. The question is whether $\cP_n$ remains effective uniformly over this larger class of resource states.}
     \label{fig:intro-iid-transformations}
\end{figure}
\vspace{-0.25cm}
Several robustness questions raised in~\cite{MSR2026} have since received partial answers~\cite{MSR2026-2,ND26_2}. These results provide evidence that iid conclusions can remain valid under almost-iid perturbations, but they do not establish the full stability of entanglement distillation and entanglement cost. As outlined above, we take a different route: we define admissible resource states through an operational criterion and study asymptotic transformations in this new setting. This change of perspective does not by itself resolve the stability question, but we hope that it provides a useful framework in which to investigate it. An affirmative answer would show that the usual rates and protocols are robust under deviations that cannot be excluded by the operational criterion. A negative answer would expose a gap between the resources that can be certified experimentally and the resources that can actually be used at the rates predicted by iid theory.

%One main contribution of this project: Precise formulation of the question and operational set of sequences as a motivation for almost-iid notions. Goal of this project:  Are the operational definition compatible with the more idealized iid definitions? We hope yes. But if no, it has drastic consequences! Study those new definitions, and try to relate them to the traditional iid counterpart. The benefit is hopefully leading to a robustness of QIT techniques and protocols.

\subsection*{Overview and main results}
\let\savedthetheorem\thetheorem
\renewcommand{\thetheorem}{\arabic{theorem}}
In the following, we provide a concise overview of the main definitions and results of this work. Unless stated otherwise, all systems considered are finite-dimensional.

\medskip
\noindent
\textbf{Notation.} To simplify the presentation of our main results, we introduce the following notation.
\begin{itemize}
\item \emph{State sequences:} We denote a sequence of states $\big(\rho^{(n)}\big)_{n\in \N} $ with $\rho^{(n)} \in \St(\cH^{\otimes n})$ by ${\pmb{\rho}} := \big(\rho^{(n)}\big)_{n\in \N} \,$. 
\item \emph{iid state sequences:} We denote an iid state sequence by $\pmb{\rho}^{\mathrm{iid}} := \big(\rho^{\otimes n}\big)_{n\in \N} \,$. 
\item \emph{Free operations with rate $r$:} For a sequence of free operations $\big( \cP^{(r)}_n\big)_{n\in\N}$ with  $\cP^{(r)}_n \in \F \big(C^n \to D^{ m_n}\big)$ and $\lim_{n\to\infty} \frac{m_n}{n} = r$, we write \smash{${\pmb{\mathcal{P}}}^{(r)} := \big( \cP^{(r)}_n\big)_{n\in\N} \in \F\,$}.
\item \emph{Trace operations:} For any $\alpha \in (0,1)$ and $k_n := \lceil  n^\alpha \rceil$, we denote the sequence $\big(\tr_{k_n}\big)_{n\in \N} $ of trace operations, where $k_n$ systems are traced out, by  $\pmb{\tr}_\alpha :=   \big(\tr_{k_n}\big)_{n\in \N} \,.$
\item \emph{Subexponentially decreasing functions:} A function $g : \mathbb{N} \to \mathbb{R} $ is called \emph{subexponentially decreasing} if there exist constants $K, L, \alpha >0 $ such that $g(n) \leq K\, 2^{-Ln^\alpha}$ for all $n\in \N$.
\item \emph{Equivalence relation for state sequences:} Two state sequences $\pmb{\rho}$ and $\pmb{\tau}$ are said to be \emph{equivalent} if the function $g(n) := \tfrac{1}{2} \| \rho^{(n)} - \tau^{(n)} \|_1 $ is subexponentially decreasing (here, $\| \cdot \|_1$ denotes the trace norm). In this case, we write $\pmb{\rho} \simeq \pmb{\tau}$. This defines an equivalence relation. 
\item \emph{Negligible ball:} For any set $\cA$ of state sequences, we define the \emph{negligible ball} as 
\[ B( \mathcal{A})\,:=\, \Big\{\,  \pmb{\tau} \ \big| \ \exists\ \pmb{\rho} \in \mathcal{A} \ \textrm{ such that }\,   \pmb{\tau} \simeq  \pmb{\rho}\,\Big\}\,. \]
\item \emph{Set of power-law sublinear almost-iid state sequences:} For a (single-copy) state $\rho$, we define the \emph{set of power-law sublinear almost-iid state sequences along $\rho$} as
\[
\mathcal{A}(\rho) \,:=\, \Big\{\pmb{\rho} \ \Big| \  \exists \, L\geq0,\, \kappa \in [0,1)\ \forall \, n\in \N\,:\  \rho^{(n)} \textrm{ is $\binom{n}{r_n}$-almost-iid along } \rho \textrm{ with } r_n:=\ \lceil L\,n^\kappa \rceil \,\Big\} \,.
\]
\end{itemize}

\medskip
\noindent
\textbf{Rate definitions.} Using the above notation, we may express traditional iid asymptotic transformation rates as follows.
\begin{tcolorbox}[colback=white,colframe=lightgray]
\textbf{iid transformation rates.} 
Let $\rho \in \St(\cH_C)$ and $\sigma\in \St(\cH_D)$, and $\F$ be any fixed set of free operations. The iid asymptotic state transformation rate is defined by
\begin{align} \label{intro_eq:iid_rate}
R_\F(\rho \to \sigma) \,=\, \sup\Big\{\, r>0 \ \big| \ \exists\ {\pmb{\mathcal{P}}}^{(r)} \in \F \,:\ {\pmb{\mathcal{P}}}^{(r)}\big( \pmb{\rho}^{\mathrm{iid}}\big) \, \simeq \,\pmb{\sigma}^{\mathrm{iid}} \Big\} \,,
\end{align}
%where \smash{$\pmb{\rho}^{\mathrm{iid}} :=\big(\rho^{\otimes n}\big)_{n\in \N} $ and  $\pmb{\sigma}^{\mathrm{iid}} :=\big(\sigma^{\otimes n}\big)_{n\in \N} \,$}.
If the set in~\cref{intro_eq:iid_rate} is empty, we set $ R_\F(\rho \to \sigma) = 0$.
\end{tcolorbox}
\noindent Now we can use the notion of an operational set of multi-copy states introduced in~\cref{intro:op_set_def} to define an operational asymptotic transformation rate as follows.
\begin{tcolorbox}[colback=white,colframe=lightgray]
\textbf{Operational transformation rates} \textit{(informal, see~\cref{def_OP-rate})}\textbf{.}
Let \texttt{tom($\cdot$)} denote a sensible tomography protocol and  $\F$ a set of free operations. For any density operators $\rho \in \St(\cH_C)$ and $\sigma\in \St(\cH_D)$ , we define the asymptotic transformation rate by
\begin{align} \label{intro_eq:op_rate}
  R^{ \texttt{tom}}_\F(\rho \to \sigma) \,=\,\sup\Big\{\, r>0\ \big| \ \exists\ {\pmb{\mathcal{P}}}^{(r)} \in \F \,:\ {\pmb{\mathcal{P}}}^{(r)}\big(\OP(\texttt{tom($\cdot$)};\rho) \big) \,\subseteq \,\OP(\texttt{tom($\cdot$)};\sigma) \Big\}\,.
\end{align}
If the set in~\cref{intro_eq:op_rate} is empty, we set $ R^{ \texttt{tom}}_\F(\rho \to \sigma) = 0$.
\end{tcolorbox}
\noindent Although this definition is natural from an operational point of view, its structure may appear complicated at first sight. Furthermore, the operational transformation rates seem to rely on the choice of the tomography protocol \texttt{tom($\cdot$)}. As part of our main results, we provide a simpler characterization of both the operational set $\OP(\texttt{tom($\cdot$)};\rho)$ and rate $R^{ \texttt{tom}}_\F(\rho \to \sigma)$, and further show that the latter is actually independent of the chosen tomography protocol.

%%%%%%%%%%%%%%%%%%%%%%%%%%%%%%%%%%%%

\medskip
\noindent
\textbf{Main results.}
%Although~\cref{def_OP-set,def_OP-rate} are natural from an operational point of view, they may appear complicated at first sight. Furthermore, the rate in~\cref{def_OP-rate} seems to rely on the choice of the tomography protocol \texttt{tom($\cdot$)}. 
The first main result characterizes the operational set $\OP(\texttt{tom($\cdot$)};\rho)$ when restricted to state sequences \smash{$\pmb{\rho}=\big(\rho^{(n)}\big)_{n\in \N}$} whose elements are permutation invariant. The permutation symmetry is natural from a physical point of view, and can usually be enforced by means of an additional random permutation. In the case of asymptotic transformation rates, for instance, the permutation symmetry of the state sequences is justified whenever the corresponding protocols admit a random permutation as a free operation. Then, we show that by tracing out a sublinear amount of copies, a permutation invariant state sequence from the operational set is asymptotically equal to an almost-iid state with a power-law sublinear number of defects.

\begin{theorem}[{Characterizing the operational set; \textit{informal, see~\cref{thm_almost-iid-dF}}}] \label{intro_thrm-dF}
 Let \texttt{tom($\cdot$)} be any sensible tomography protocol. Let $\pmb{\rho} \in  \OP(\texttt{tom($\cdot$)};\rho)$ and assume $\pmb{\rho}$ to be permutation invariant. Then, for any $\alpha \in (0,1)$, we find
 \begin{align}
 	\pmb{\tr}_\alpha\big( \pmb{\rho}\big)\, \simeq\, \pmb{\tau} \,\in\, \mathcal{A}(\rho)\,.
 \end{align}
\end{theorem}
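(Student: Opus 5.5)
The plan is to combine a de Finetti–type reduction for permutation-invariant states with the operational information supplied by the tomography protocol.

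\textbf{Step 1: exponential de Finetti reduction.} Fix $\alpha\in(0,1)$ and $k_n=\lceil n^\alpha\rceil$. Since $\rho^{(n)}$ is permutation invariant, I would apply the exponential de Finetti theorem (Renner's version, in the almost-product form of~\cref{def_almost_product_state_mixed}). It gives
\[
\Big\|\tr_{k_n}\rho^{(n)}-\int \tau^{(n-k_n)}_{\sigma}\,\di\mu_n(\sigma)\Big\|_1\le c_n\,2^{-k_n r_n/n}\,(n-k_n)^{\dim},
\]
where each $\tau^{(n-k_n)}_\sigma$ is $\binom{n-k_n}{r_n}$-almost-iid along a pure $\ket{\sigma}$ on a purified space. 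Mixed states are handled by first symmetrically purifying $\rho^{(n)}$ and then tracing the purification out at the end. Choosing $r_n=\lceil L n^{\kappa}\rceil$ with $\kappa>1-\alpha$, say $\kappa=(2-\alpha)/2<1$, makes $k_nr_n/n\gtrsim n^{\alpha+\kappa-1}$, which is a positive power of $n$. The error is therefore subexponentially decreasing, and this already yields equivalence $\simeq$ to a mixture of almost-iid states along varying $\sigma$.

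\textbf{Step 2: concentrate $\mu_n$ near $\rho$ using the tomography axioms.} Because $\pmb\rho\in\OP(\texttt{tom};\rho)$, the protocol outputs an estimate close to $\rho$ with probability tending to $1$. By the robustness requirement (\cref{def_tom34}), \texttt{tom} applied to an almost-iid state along $\sigma$ with sublinearly many defects behaves like it does on $\sigma^{\otimes m}$. Correctness on iid inputs then gives estimates close to $\sigma$ with high confidence, and the sampling/symmetrization consistency (\cref{def_tom12}) lets me pass from $\rho^{(n)}$ to its reduced and mixed form. Linearity of acceptance probabilities in the state then implies that $\mu_n$ places weight $1-o(1)$ on a shrinking ball $\|\sigma-\rho\|\le\delta_n$.

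\textbf{Step 3: upgrade to a single almost-iid state along $\rho$.} This is the main obstacle. A mixture of $\sigma^{\otimes m}$ with $\sigma\approx\rho$ is not close to $\rho^{\otimes m}$ in trace distance, and $\mu_n$-mass near $\rho$ is only $1-o(1)$ rather than subexponentially close to $1$. Two ingredients are needed to close this gap.

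First, the confidence of the tomography protocol must be subexponential. The ``efficiently improving accuracy and confidence'' requirement should deliver this, so that the mass outside the ball is subexponentially small and can be absorbed into $\simeq$.

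Second, I would need a lemma stating that $\sigma^{\otimes m}$ with $\|\sigma-\rho\|\le\delta_n$ can be rewritten, up to subexponential error, as $\binom{m}{r'_n}$-almost-iid along $\rho$ with $r'_n$ still power-law sublinear. I would try the decomposition $\sigma=(1-\epsilon)\rho+\epsilon\,\omega$, or more generally write $\sigma\le \rho/(1-\epsilon)$ after a small smoothing in the purified picture. Expanding $\sigma^{\otimes m}$ binomially then shows that, except with probability $2^{-\Omega(n^{\beta})}$, at most $\epsilon m+n^{\beta}$ positions deviate from $\rho$. This requires $\delta_n\lesssim n^{\kappa'-1}$, which constrains the accuracy schedule of \texttt{tom} and must be matched with the $k_n$ copies traced out.

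Finally, the defects from Step 1 and Step 3 are combined into a single power-law bound $r_n=\lceil L n^{\kappa}\rceil$. Convexity of the almost-iid set then gives $\pmb\tau\in\mathcal{A}(\rho)$, with $\pmb\tau$ taken as the resulting mixture restricted to the ball.
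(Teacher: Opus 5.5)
Your architecture matches the paper's: a de Finetti reduction, then concentrating the de Finetti measure near $\rho$, then converting almost-iid states along nearby states into almost-iid states along $\rho$, then convexity. Applying Step~1 directly to $\tr_{k_n}\rho^{(n)}$ is fine. The gap is in Step~2. You run the tomography on the $n-k_n$ retained systems, i.e.\ on the almost-iid states $\tau_\sigma$. To integrate against $\mu_n$ you then need $\Pr[\texttt{tom}(\tau_\sigma)\in B_\epsilon(\sigma_C)]$ close to one \emph{uniformly} over all $\sigma$.

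The axioms do not give this. The robustness-under-noise property in \cref{def_tom34} is stated for one fixed reference state, with a sample complexity that may depend on that state (the footnote after \cref{def_tom34} says so). It also covers only product-form defects. The only guarantee that is uniform over states is iid correctness, \cref{def_tom12}.

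The paper never uses robustness here. It applies the tomography POVM to the $k_n$ \emph{discarded} systems inside Renner's purified decomposition, where those systems are projected onto $\ket{\theta}^{\otimes k_n}$. Consistency under sampling shows that the marginal on those $k_n$ systems fails tomography at accuracy $\zeta_n=\tfrac{1}{2n^{\gamma}}$ only with subexponential probability $p^{k_n}_{\zeta_n}$. The gentle measurement lemma then bounds the weight $p(\theta)$ of every $\theta$ with $\theta_C\notin B_{2\zeta_n}(\rho)$ by $2\sqrt{p^{k_n}_{\zeta_n}}$ plus a multiple of $\Pr[\texttt{tom}(\theta_C^{\otimes k_n})\notin B_{\zeta_n}(\theta_C)]$. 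That last input is exactly iid, so \cref{def_tom12} controls it uniformly.

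In Step~3 the lemma you need is the right one (it is \cref{prop_close-almost-iid}), but the mechanism you describe fails. The bound $\tfrac12\|\sigma-\rho\|_1\le\epsilon$ implies neither $\sigma=(1-\epsilon)\rho+\epsilon\omega$ nor $\sigma\le\rho/(1-\epsilon)$. Both fail whenever $\supp\sigma\not\subseteq\supp\rho$, for example for two distinct nearby pure states, and no classical smoothing repairs this. You mention the purified picture, but the argument you actually give is a classical mixture decomposition.

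The truncation has to be coherent:
\begin{itemize}
\item Choose purifications via Uhlmann with $\ket{\theta}=a\ket{\rho}+b\ket{\rho^\perp}$ and $b\le\sqrt{2\epsilon}$.
\item Keep only the terms of the binomial expansion of $\ket{\theta}^{\otimes m}$ with at most $\lceil 2m\epsilon'\rceil$ factors $\ket{\rho^\perp}$. This discards a binomial-tail weight of at most $2^{-m\Lambda(2\epsilon',2\epsilon)}$.
\end{itemize}

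This must be done not for $\sigma^{\otimes m}$ but for almost-iid states along $\sigma$. The truncated vector replaces the factor $\ket{\theta}^{\otimes m}$ in every basis vector $\pi(\ket{\theta}^{\otimes m}\otimes\ket{\Omega})$ of the purification. The error therefore picks up a factor $|\cT|\le 2^{nh(r/n)}d_{CE}^{\,r}$ coming from the Step~1 defects, and the binomial tail must beat it. In the paper's notation $r=\lfloor n^\beta\rfloor$ and $\epsilon'=n^{-\kappa}$, which is where the condition $\kappa<\min\{\gamma,1-\beta\}$ comes from. The resulting bound must also be uniform over all centres in $B_\epsilon(\rho)$ before you integrate. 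Your final ``combine the defects'' step skips exactly this.
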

\noindent The proof for characterizing the operational set is based on two main ideas: \textit{(i)} We use the permutation symmetry to represent the states $\rho^{(n)}$ as de Finetti states~\cite{Renner_2007}, and \textit{(ii)} make use of the tomography property to restrict the convex sum of the de Finetti states to (single-copy) states that are close to $\rho$. Crucially, this de Finetti-type approach yields a description of operational state sequences in terms of almost-iid states. Conceptually,
\begin{align*}
\text{operational \& symmetric resource $\pmb{\rho}$} \quad\xrightarrow{\ \text{discard $k_n = o(n)$ systems}\ }\quad\text{almost-iid resource $\pmb{\tau}$}.
\end{align*}

As a consequence,~\cref{intro_thrm-dF} allows us to simplify the structure of operational transformation rates. Under certain (standard) assumptions on the set of free operations $\F$ (see~\cref{def_QRTs}), we can show that the operational rate can be expressed in terms of almost-iid state sequences.
\begin{corollary}[Characterizing the operational transformation rate; \textit{informal, see~\cref{thrm_rate-tom-indep}}]  \label{thrm_RATEE}
For a given sensible tomography protocol  \texttt{tom($\cdot$)}, set of free operations $\F$, and any two density operators $\rho$ and $\sigma$, we find
\begin{align} \label{thrm_rate}
R^{ \texttt{tom}}_\F(\rho \to \sigma) \,=\,  \sup\Big\{\, r>0\ \big| \ \exists\ {\pmb{\mathcal{P}}}^{(r)} \in \F \,:\ {\pmb{\mathcal{P}}}^{(r)}\big(\mathcal{A}(\rho)\big) \,\subseteq \, B\big(\mathcal{A}(\sigma)\big) \Big\} \,=:\, \tilde{R}_\F(\rho \to \sigma)\,.
\end{align}
\end{corollary}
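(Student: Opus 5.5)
We prove the two inequalities $R^{\texttt{tom}}_\F(\rho\to\sigma)\le \tilde R_\F(\rho\to\sigma)$ and $\tilde R_\F(\rho\to\sigma)\le R^{\texttt{tom}}_\F(\rho\to\sigma)$ separately. We use two facts about the operational set. The first is the inclusion $\mathcal{A}(\rho)\subseteq \OP(\texttt{tom($\cdot$)};\rho)$, which follows from the robustness requirement on sensible tomography protocols: correctness on iid inputs together with reliability under a sublinear number of corrupted systems. The second is the converse direction supplied by \cref{intro_thrm-dF}. We also need that $\OP$ is closed under $\simeq$, i.e.\ $B(\OP)=\OP$. This holds because the acceptance probability of a measurement changes by at most the trace distance, and a subexponentially decreasing function tends to zero.

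\textbf{Step 1 ($R^{\texttt{tom}}\le\tilde R$).} Let $\pmb{\mathcal{P}}^{(r)}$ map $\OP(\texttt{tom};\rho)$ into $\OP(\texttt{tom};\sigma)$. We modify it to $\pmb{\mathcal{P}}'=\pmb{\tr}_\alpha\circ\mathcal{S}\circ\pmb{\mathcal{P}}^{(r)}$, where $\mathcal{S}$ is a random permutation on the output. Both $\mathcal{S}$ and partial traces are free by the assumptions in \cref{def_QRTs}. Since $k_m=\lceil m^\alpha\rceil=o(m)$, the rate is unchanged. For any $\pmb{\tau}\in\mathcal{A}(\rho)\subseteq\OP(\rho)$, the output $\mathcal{S}\pmb{\mathcal{P}}^{(r)}(\pmb{\tau})$ lies in $\OP(\sigma)$, because the symmetrization-consistency requirement makes $\OP$ invariant under random permutation. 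It is also permutation invariant, so \cref{intro_thrm-dF} gives $\pmb{\tr}_\alpha(\cdot)\in B(\mathcal{A}(\sigma))$. Hence $r$ is feasible for $\tilde R$.

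\textbf{Step 2 ($\tilde R\le R^{\texttt{tom}}$).} Let $\pmb{\mathcal{P}}^{(r)}$ satisfy $\pmb{\mathcal{P}}^{(r)}(\mathcal{A}(\rho))\subseteq B(\mathcal{A}(\sigma))$. We define $\pmb{\mathcal{P}}'=\pmb{\mathcal{P}}^{(r)}\circ\pmb{\tr}_\alpha\circ\mathcal{S}$, which again loses no rate; here the input length is reduced to $n-k_n$, so we reindex. Take $\pmb{\omega}\in\OP(\rho)$. Then $\mathcal{S}\pmb{\omega}\in\OP(\rho)$ is permutation invariant, and \cref{intro_thrm-dF} yields $\pmb{\tr}_\alpha\mathcal{S}\pmb{\omega}\simeq\pmb{\tau}\in\mathcal{A}(\rho)$. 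Since channels contract trace distance, $\pmb{\mathcal{P}}'(\pmb{\omega})\simeq\pmb{\mathcal{P}}^{(r)}(\pmb{\tau})\in B(\mathcal{A}(\sigma))$. By transitivity of $\simeq$, this lies in $B(\mathcal{A}(\sigma))\subseteq B(\OP(\sigma))=\OP(\sigma)$. So $r$ is feasible for $R^{\texttt{tom}}$. The right-hand side $\tilde R$ makes no reference to \texttt{tom}, so the tomography independence follows.

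\textbf{Main obstacle.} The delicate point is the bookkeeping of indices in Step 2. The sequence $\pmb{\mathcal{P}}^{(r)}$ is indexed by input size $n$, but it is applied to $n-k_n$ systems. We must check that the reindexed sequence still has rate $r$. We must also check that the subexponential equivalence, together with membership in $\mathcal{A}(\sigma)$ with power-law defects $\lceil Ln^\kappa\rceil$, survives the reindexing $n\mapsto n-\lceil n^\alpha\rceil$. We would handle this by noting that $(n-n^\alpha)^\kappa\le n^\kappa$ and $2^{-L(n-n^\alpha)^\beta}\le 2^{-L' n^\beta}$ for large $n$.

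A secondary point is verifying that $\mathcal{A}(\rho)\subseteq\OP$ and $B(\OP)=\OP$ hold exactly under the precise definition in \cref{def_OP-set}, which may also impose sample-complexity requirements. We would establish these first as small lemmas from \cref{def_tom12,def_tom34}.
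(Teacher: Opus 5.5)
Your proposal is correct and follows the paper's proof of \cref{thrm_rate-tom-indep} essentially step for step. It uses the same two compositions $\pmb{\tr}_\alpha\circ\pmb{\cS}\circ\pmb{\mathcal{P}}^{(r)}$ and $\pmb{\mathcal{P}}^{(r)}\circ\pmb{\tr}_\alpha\circ\pmb{\cS}$, combined with \cref{prop_almost-iid-in-OP}, the asymptotic permutation invariance \cref{def_tom-perm-inv}, \cref{thm_almost-iid-dF}, contractivity, and \cref{lem_OP-set-robust}; the two auxiliary facts you planned to prove as lemmas are exactly \cref{prop_almost-iid-in-OP} and \cref{lem_OP-set-robust}, where the latter indeed needs subexponential decay (not merely vanishing) to keep the sample complexity poly-logarithmic.
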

%\begin{proposition} \label{prop_trans}
%For any density operators $\rho \in \St(\cH_C)$, $\sigma\in \St(\cH_D)$ and $\rho \in \St(\cH_E)$, we find
%\begin{align}
% \tilde{R}_\F(\rho \to \sigma) \,\geq\, \tilde{R}_\F(\rho \to \tau)\tilde{R}_\F(\tau \to \sigma)\,.
%\end{align}
%\end{proposition}
\noindent In particular,~\cref{thrm_RATEE} shows that the operational transformation rate is independent of the actual choice of the tomography protocol \texttt{tom($\cdot$)}. In addition, it can be shown that
\begin{align} \label{intro:composability}
 \tilde{R}_\F(\rho \to \sigma) \,\geq\, \tilde{R}_\F(\rho \to \tau)\tilde{R}_\F(\tau \to \sigma)
\end{align}
for any density operators $\rho$, $\sigma$, and $\tau$ (see~\cref{prop_rates-transitivity}). Hence, the operational transformation rate satisfies the necessary composability property for a reasonable transformation rate. Furthermore,~\cref{thrm_rate} substantially simplifies the analysis of the operational transformation rate and allows us to compare it to other asymptotic transformation rates that are widely studied in the literature~\cite{chit19,Gour_2025}. Indeed, in a second main result, we can relate the operational rates to the traditional iid transformation rates as follows.
 \begin{theorem} \label{intro:thm_almost-iid-iid-rates-relation}
 Let $\rho \in \St(\cH_C)$ and $\sigma\in \St(\cH_D)$, and $\F$ be a fixed set of free operations as above. Then, the (operational) almost-iid asymptotic transformation rate $ \tilde{R}_\F(\rho \to \sigma)$ is lower bounded by the iid asymptotic rate, i.e., 
 \begin{align}
R_\F(\rho \to \sigma) \,\leq\,  \tilde{R}_\F(\rho \to \sigma)\,.
 \end{align}
 \end{theorem}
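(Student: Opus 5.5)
The plan is to take any rate $r<R_\F(\rho\to\sigma)$ and the witnessing iid protocol $\pmb{\mathcal{P}}^{(r)}=(\cP_k)_k$, with $\tfrac12\|\cP_k(\rho^{\otimes k})-\sigma^{\otimes m_k}\|_1\le \eps_k:=K2^{-Lk^{\gamma}}$. From it we build a new protocol of essentially the same rate that maps every $\pmb{\rho}\in\mathcal{A}(\rho)$ into $B(\mathcal{A}(\sigma))$. Taking the supremum over $r$ then gives $R_\F\le\tilde R_\F$ by \cref{thrm_rate}.

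\textbf{Block construction.} We cannot simply apply $\cP_n$ to an almost-iid input, since a global protocol may let a few defects spoil the whole output. Instead we use blocks. Fix $\beta\in(0,1)$, set $k=k_n:=\lceil n^\beta\rceil$, and partition the $n$ systems into $b_n=\lfloor n/k_n\rfloor$ blocks, discarding the remainder. The new protocol is $\cQ_n:=\cP_{k_n}^{\otimes b_n}$ composed with the partial trace. We assume, as in \cref{def_QRTs}, that $\F$ is closed under tensor products, composition and partial traces, so $\cQ_n\in\F$. Its output length is $b_nm_{k_n}$, so its rate is $\lim_n b_nm_{k_n}/n=r$.

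\textbf{Structure of the output.} Take $\rho^{(n)}$ to be $\binom{n}{r_n}$-almost-iid along $\rho$ with $r_n=\lceil Ln^\kappa\rceil$. By \cref{def_almost_product_state_mixed}, its purification is a superposition, over defect placements $S$ with $|S|\le r_n$, of vectors that are a purification of $\rho$ on every system outside $S$ and arbitrary on $S$. In each branch at most $r_n$ blocks contain a defect, and every other block carries exactly $\rho^{\otimes k}$. Passing to a Stinespring dilation of $\cQ_n$, each branch is mapped to a vector whose reduced state is exactly $\sigma^{\otimes m_k}$ on all untouched blocks, up to an error $\eps_k$ per block. We then replace each good block's output by an exact purification of $\sigma^{\otimes m_k}$ using Uhlmann's theorem; this is the one place the $\eps_k$ error is paid. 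The result is an ideal vector that is, by construction, almost-iid along $\sigma$ with at most $r_n m_k=O(n^{\kappa+\beta})$ defective output systems. This is sublinear and power-law provided $\kappa+\beta<1$, so the ideal output lies in $\mathcal{A}(\sigma)$ and the actual output lies in its negligible ball.

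\textbf{Main obstacle: error control.} Per branch the error is at most $b_n\sqrt{\eps_k}$. Since the branches are combined coherently, a naive triangle inequality multiplies this by the number of placements, roughly $\binom{b_n}{r_n}\le 2^{r_n\log n}$. The total error is then of order $2^{Ln^\kappa\log n-\frac{L}{2}n^{\beta\gamma}}$, which is subexponentially small only if $\beta\gamma>\kappa$. This clashes with $\kappa+\beta<1$ when $\kappa$ is large or $\gamma$ is small. I would try three remedies in this order. First, instead of the triangle inequality, use an operator inequality of the form $\rho^{(n)}\le \mathrm{poly}\cdot 2^{O(r_n\log n)}\sum_S p_S\,\omega_S$, with each $\omega_S$ having fixed defect positions, combined with the gentle measurement lemma, so that errors add incoherently. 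Second, impose only the amplitude condition in each good block, which costs $\sqrt{\eps_k}$ once per block rather than per placement. Third, if neither suffices, first discard a random sublinear subset of systems; by \cref{thm_almost-iid-dF}, after symmetrization this lowers the effective $\kappa$ enough to choose $\beta$ with $\kappa<\beta\gamma$ and $\kappa+\beta<1$. This choice of parameters, as opposed to the block construction itself, is the step I expect to require the most care.
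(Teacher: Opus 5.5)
Your skeleton matches the paper's: apply the iid protocol blockwise, align each clean block's dilated output with $\ket{\varphi}^{\otimes m}$ via Uhlmann, and turn defective blocks into output defects. The proof still has a genuine gap, and it is not just parameter bookkeeping. Achievability of $r$ for $\tilde R_\F(\rho\to\sigma)$ requires \emph{one} sequence of free operations mapping all of $\cA(\rho)$ into $B(\cA(\sigma))$. So the protocol must be fixed before the defect exponent $\kappa\in[0,1)$ of the input is known. A fixed block length $\lceil n^\beta\rceil$ cannot achieve this. Take an input with $\kappa>1-\beta$, for instance a symmetrized $\rho^{\otimes n-r_n}\otimes\Omega^{(r_n)}$. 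Its $r_n\gg b_n$ defects typically hit every block, so your construction gives no control at all over the output. Your own conditions $\kappa+\beta<1$ and $\beta\gamma>\kappa$ make the protocol depend on $\kappa$.

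The paper instead takes $\ell_N=\lceil e^{\sqrt{\ln N}}\rceil=N^{o(1)}$, so input defects cost only $m_{\ell_N}r_N=N^{\kappa+o(1)}$ output systems, for every $\kappa<1$. With such blocks, however, $\eps_{\ell_N}\le 2^{-\ell_N^\alpha}$ is only $2^{-N^{o(1)}}$. Your accounting insists that \emph{every} clean block be corrected and pays $b_n\sqrt{\eps_k}$, so it can never give a subexponential error.

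The missing idea is to let clean blocks fail and absorb the failures as further defects. Write $\cW\ket{\theta}^{\otimes\ell}=\sqrt{1-p_\ell}\,\ket{\varphi}^{\otimes m_\ell}+\sqrt{p_\ell}\,\ket{\zeta^\perp}$ with $p_\ell\le 2\eps_\ell$. The paper expands the $\tilde s$ clean blocks over failure sets $J$ and coherently truncates to $|J|<a_N=\lceil N^\lambda\rceil$, with $\lambda\in(\kappa,1)$. By Chernoff, the discarded tail has squared norm $\Pr[\mathrm{Bin}(\tilde s,p_\ell)\ge a_N]\le (e\mu_N/a_N)^{a_N}\le 2^{-a_N\ell_N^\alpha(1+o(1))}$. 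This is subexponentially small in $N$, and it beats the coherent factor $|\cT_N|\le 2^{CN^\kappa\log N}$ precisely because $\lambda>\kappa$. Meanwhile the output defect count $m_{\ell_N}(r_N+a_N)=N^{\lambda+o(1)}$ stays power-law sublinear. Since $\lambda$ enters only the analysis, the protocol itself is universal.

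Your proposed remedies do not close this gap:
\begin{itemize}
\item Remedy (1) keeps the same $2^{O(r_n\log n)}$ pinching factor you already wrote down, so the comparison between $n^{\beta\gamma}$ and $n^\kappa\log n$ is unchanged.
\item Remedy (2), even if it eliminated the placement factor, would leave the constraint $\kappa+\beta<1$ and hence a $\kappa$-dependent protocol.
\item Remedy (3) is invalid, because discarding sublinearly many systems does not lower the defect exponent. By \cref{lem_subpoly-almost-iid}, $\pmb{\tr}_\alpha$ maps $\cA(\rho)$ into itself with the same $\kappa$. A random discard of $n^\alpha$ systems removes only about $n^{\kappa+\alpha-1}$ defects on average. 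Moreover, \cref{thm_almost-iid-dF} produces approximants with of order $n^{1-\kappa'}$ defects, typically more rather than fewer.
\end{itemize}

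Finally, you must append a full symmetrization $\cS$, as the paper does; it is free by conditions $5'$ and $6$ of \cref{def_QRTs}. The output of $\cP_{k_n}^{\otimes b_n}$ is only invariant under permutations of whole blocks. If defective blocks emit ordered outputs, the state can stay far from every permutation-invariant state, and hence outside $B(\cA(\sigma))$.
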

\noindent Here, the key idea is to apply a conventional iid protocol to suitably chosen blocks of increasing size, thereby ensuring that any power-law sublinear number of input defects gives rise to only a power-law sublinear number of output defects. This result is conceptually noteworthy: almost-iid resources are not only as effective as perfect iid resources for interconversion tasks, but may even offer improved performance because the structural requirements imposed on the output states are less restrictive.

On the other hand, under the stated assumptions on the set of free states, almost-iid transformation rates satisfy a converse bound in terms of the regularized relative entropy of resource,
\begin{align} \label{intro:converse-rel-entropy}
\tilde{R}_\F(\rho\to\sigma) \,\leq\, \frac{D^\reg(\rho\| \F)}{D^\reg(\sigma\| \F)}\,,
\end{align}
whenever the denominator is nonzero (see~\cref{prop_rates-upper-bound}). Ultimately, the question is whether the operational transformation rate $\tilde{R}_\F$ and its iid counterpart $R_\F$ may actually coincide.
\begin{question} \label{intro:quesion-rates-equal}
 Is it true that $R_\F(\rho\to\sigma) \overset{?}{=} \tilde{R}_\F(\rho\to\sigma)$ for all $\rho_{C} \in \St(\cH_{C})$ and $\sigma_{D} \in \St(\cH_{D})$  ?
\end{question}

\medskip
\noindent
\textbf{Entanglement theory.}
We can partially answer this question in the particular case of entanglement manipulation where the set of free operations is given by local operations and classical communication, $\F = \LOCC$. Let  $\Phi_2 =  \proj{\Phi_2}$ with \smash{$\ket{\Phi_2} := \frac{1}{\sqrt{2}}(\ket{00}+\ket{11})$} denote maximally entangled Bell state. For a bipartite state $\rho \in \St(\cH_{AB})$, the (iid) entanglement distillation and entanglement cost are defined as
\begin{align} \label{eq_E_D_and_E_C}
E_D(A:B)_{\rho} \,:=\, R_{\LOCC}(\rho \to \Phi_2) \qquad \textnormal{and} \qquad E_C(A:B)_{\rho} \,:=\, R_{\LOCC}(\Phi_2 \to \rho)^{-1} \, .
\end{align}
Analogously, we introduce the notion of \emph{operational entanglement distillation} and \emph{cost} through operational asymptotic state transformation rates.
\begin{tcolorbox}[colback=white,colframe=lightgray]
\textbf{Operational entanglement cost and distillation.} 
Let $\rho \in \St(\cH_{AB})$ and $\F=\LOCC$. The operational entanglement distillation and entanglement cost are defined as
\begin{align}
\tilde{E}_D(A:B)_{\rho} := \tilde{R}_{\LOCC}(\rho \to \Phi_2) \qquad \textnormal{and} \qquad \tilde{E}_C(A:B)_{\rho} :=  \tilde{R}_{\LOCC}(\Phi_2 \to \rho)^{-1} \, ,
\end{align}
respectively, where $\tilde{R}_{\LOCC}$ denotes the (operational) almost-iid asymptotic transformation rate.
\end{tcolorbox}
\noindent As a consequence of~\cref{intro:thm_almost-iid-iid-rates-relation}, it holds that $E_D(A:B)_{\rho} \leq \tilde{E}_D(A:B)_{\rho}$ and $\tilde{E}_C(A:B)_{\rho} \leq E_C(A:B)_{\rho}$ for any bipartite state $\rho$. Notably, we further show that the operational and standard asymptotic transformation rates for entanglement distillation coincide.
\begin{corollary} \label{intro:ED-is-operational}
For any density operator $\rho \in \St(\cH_{AB})$, we find
\begin{align} 
E_D(A:B)_{\rho} \,=\, \tilde{E}_D(A:B)_{\rho}\,.
\end{align}
\end{corollary}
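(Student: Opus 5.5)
The inequality $E_D(A:B)_\rho \le \tilde{E}_D(A:B)_\rho$ follows directly from \cref{intro:thm_almost-iid-iid-rates-relation}, applied with $\F=\LOCC$ and $\sigma=\Phi_2$. It therefore remains to show the reverse inequality $\tilde{E}_D(A:B)_\rho \le E_D(A:B)_\rho$. The plan is to start from an almost-iid distillation protocol at rate $r$ and turn it into an iid protocol at a rate arbitrarily close to $r$. The key observation is that the target sequence only needs to lie in $B(\mathcal{A}(\Phi_2))$. So I will show that an almost-iid sequence along the \emph{pure} state $\Phi_2$ can be converted by LOCC, with a sublinear loss, into a sequence subexponentially close to $\Phi_2^{\otimes m}$.

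Concretely, fix $r<\tilde{E}_D(A:B)_\rho$ and a protocol $\pmb{\cP}^{(r)}\in\LOCC$ with $\pmb{\cP}^{(r)}(\mathcal{A}(\rho))\subseteq B(\mathcal{A}(\Phi_2))$. Since $\pmb{\rho}^{\mathrm{iid}}\in\mathcal{A}(\rho)$ (take $L=0$), the output $\omega^{(m_n)}:=\cP^{(r)}_n(\rho^{\otimes n})$ is subexponentially close to some $\tau^{(m_n)}$. This $\tau^{(m_n)}$ is $\binom{m_n}{s_n}$-almost-iid along $\Phi_2$ with $s_n=\lceil L m_n^\kappa\rceil$ for some $\kappa<1$.

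For the pure state $\Phi_2$, the almost-iid structure means that $\tau^{(m)}$ is supported on the span of vectors of the form $\pi(\ket{\Phi_2}^{\otimes (m-s)}\ox\ket{\psi})$, with $\ket\psi$ arbitrary on $s$ pairs. I then post-process by LOCC in three steps:
\begin{enumerate}[(i)]
\item Apply a random permutation, which is LOCC with shared randomness; this keeps the state within the same class.
\item Use a hashing or error-correction type LOCC procedure that identifies and removes the defects. Because the defects are at most $s=o(m)$ arbitrary Pauli-type errors on Bell pairs, this amounts to a stabilizer code correcting $s$ errors on $m$ pairs, with local syndrome measurements and classical communication. By the quantum Gilbert--Varshamov bound or the hashing bound, it costs only $O(s\log m)=o(m)$ Bell pairs.
\item Output approximately $m-O(s\log m)$ pairs in a state subexponentially close to $\Phi_2^{\otimes m'}$ with $m'/n\to r$.
\end{enumerate}
To reduce to Pauli errors, I first apply twirling by random local $U\ox\bar U$ Pauli operations; this maps any defect to a Pauli mixture. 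Each term of the almost-iid superposition then carries at most $s$ Pauli errors, and the code corrects the whole span exactly. The correction is therefore exact on $\tau$, and the subexponential error is inherited from $\|\omega-\tau\|_1$ alone. Composing with $\cP^{(r)}_n$ yields an LOCC protocol taking $\rho^{\otimes n}$ to a state subexponentially close to $\Phi_2^{\otimes m'_n}$, so $r\le R_{\LOCC}(\rho\to\Phi_2)=E_D(A:B)_\rho$.

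The main obstacle is step (ii): checking that the almost-iid structure (a superposition over unknown defect positions, possibly coherent) is corrected exactly by a code correcting $s$ arbitrary errors. I will argue this via linearity of error correction. Any operator acting on at most $s$ pairs expands in Paulis of weight at most $s$, and a nondegenerate or degenerate code of distance $2s+1$ corrects all of them simultaneously. The remaining care is ensuring that encoding and syndrome extraction on shared Bell pairs can be done by LOCC. This is the standard breeding/hashing construction: the parties measure bilateral stabilizer observables $g\ox\bar g$ and compare outcomes classically, which projects the Bell basis onto the syndrome. The overhead $O(s\log m)$ is sublinear because $s=O(m^\kappa)$, which completes the argument.
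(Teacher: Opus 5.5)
Your proof is correct and follows the paper's outer structure (\cref{cor_ED-equal}), but it replaces the paper's key post-processing step with a different one. In both arguments the easy direction comes from \cref{intro:thm_almost-iid-iid-rates-relation}, and the converse feeds $\pmb{\rho}^{\mathrm{iid}}\in\mathcal{A}(\rho)$ into the almost-iid protocol and post-processes an output subexponentially close to some $\pmb{\tau}\in\mathcal{A}(\Phi_2)$.

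The paper appends the state-dependent decoupling protocol of \cref{sec:decoupling}, built on~\cite{DBWR14}. This yields $k_n=\lfloor -H^{\nu_n}_{\max}(\bar A^{m_n}|\bar B^{m_n})_{\tau}-2\log(1/\nu_n)\rfloor$ Bell pairs with error $2\sqrt{26\nu_n}$. To get $k_n/n\to r$, the paper then needs the almost-iid AEP of~\cite{MSR2026} with a vanishing smoothing parameter $\nu_n=2^{-n^\alpha}$, which it justifies only by ``keeping track of the bounds.''

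You instead use that an almost-iid state along the pure state $\Phi_2$ with $s$ defects is supported on the span of $(I\ox P)\ket{\Phi_2}^{\ox m}$ with $\mathrm{wt}(P)\le s$. The paper exploits this fact only later, in \cref{lem_universal-decoupling}. You then correct the state exactly with a one-way stabilizer-code purification protocol of distance $2s+1$. This gives you:
\begin{itemize}
\item zero error on $\tau$, so by data processing the only error is $\|\omega-\tau\|_1$;
\item no smooth-entropy or AEP bookkeeping at all;
\item a sublinear loss of $\log_2\sum_{j\le 2s}\binom{m}{j}3^j=O(s\log m)$, with no $\log(1/\delta)$ penalty. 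The paper's universal lemma instead loses $5\log D^w_n+4\log(1/\delta)+c_0$.
\end{itemize}
The price is that your argument is tied to Bell-pair targets and Pauli structure, whereas the decoupling route is phrased in general one-shot entropic terms.

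Three small remarks.
\begin{itemize}
\item The existence of the code should be justified by the quantum Gilbert--Varshamov bound. The hashing bound concerns iid Pauli noise, not worst-case errors of weight $s$.
\item The random permutation and the Pauli twirl are unnecessary. As you note yourself, linearity of error correction already handles coherent superpositions of weight-$\le s$ Pauli errors, including any purifying environment.
\item Your code depends on $(L,\kappa)$ of $\pmb{\tau}$, just as the paper's decoupling map does. This is harmless because $\rho$ and the protocol are fixed. Choosing the distance as $2w_m+1$ with $w_m=o(m)$ growing faster than every $m^\kappa$ would even make your protocol universal, in the spirit of \cref{prop:unit-rate-universal-sanitizer}.
\end{itemize}
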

\noindent The proof employs a decoupling protocol~\cite{DBWR14} to convert defective almost-iid Bell pairs into perfect iid Bell pairs at unit rate. This technique relies on the specific structure of the output Bell states and therefore cannot be directly applied to the entanglement cost setting. In summary, for a general mixed state $\rho \in \St(\cH_{AB})$, our results give the following hierarchy:
%\begin{tcolorbox}[colback=Cyan!3!white,colframe=Cyan!75!black]
\begin{align}
E_D(A:B)_{\rho} \overset{\textnormal{\Cshref{intro:ED-is-operational}}}{=} \tilde{E}_D(A:B)_{\rho}  \overset{\textnormal{\Cshref{intro:converse-rel-entropy}}}{\leq} D^\reg(\rho \| \mathrm{SEP})  \overset{\textnormal{\Cshref{intro:converse-rel-entropy}}}{\leq} \tilde{E}_C(A:B)_{\rho} \overset{\textnormal{\Cshref{intro:thm_almost-iid-iid-rates-relation}}}{\leq} E_C(A:B)_{\rho}\,, 
\end{align}
%\end{tcolorbox}
where $ \mathrm{SEP}$ denotes the set of separable states. If $\rho = \psi_{AB}$ is pure, then all of the above notions coincide and equal the entanglement entropy $H(A)_{\psi}$. In the more general case of asymptotic transformations with pure endpoints,~\cref{intro:composability} with the relations above imply the following formulas. Let $\rho \in\St(\cH_{AB})$, and let $\ket{\psi}_{AB},\,\ket{\phi}_{AB} \in \cH_{AB}$ be entangled pure states. Then,
\begin{align} \label{eq:moving-pure-endpoints}
\tilde{R}_{\LOCC}(\rho\to\phi) \,=\, \frac{E_D(A:B)_\rho}{H(A)_\phi}\,=\, R_{\LOCC}(\rho\to\phi)\,, \qquad \tilde{R}_{\LOCC}(\psi\to \rho) \,=\, \frac{H(A)_\psi}{\tilde{E}_C(A:B)_\rho}\,.
\end{align}
In particular, it holds
\begin{align}
\tilde{R}_{\LOCC}(\psi\to\phi) \,=\, \frac{H(A)_\psi}{H(A)_\phi} \,=\, R_{\LOCC}(\psi\to\phi) \,.
\end{align}

\medskip
\noindent
\textbf{Outlook.}
In this work, we studied the setting in which an experimentalist has access to multiple copies of a quantum state of interest. To capture the operational limitations of such a scenario, we introduced a class of states, referred to as the operational set, that are generically indistinguishable from tensor power states, as they appear to the experimentalist like many copies of the same state. While the operational set provides a notion of multi-copy resource states that is verifiable in principle, we showed that such verification can be achieved through a practical tomography procedure based on symmetrized product measurements. Based on this framework, we defined an asymptotic transformation rate that incorporates this broader operational perspective.

For instance, we found that the corresponding operational distillable entanglement coincides with its iid counterpart as shown in~\cref{intro:ED-is-operational}. However, it remains open whether the equality between operational asymptotic transformation rates $\tilde{R}_\F$ and the traditional iid rates $R_\F$ holds in general. In particular, for the case of entanglement cost,~\cref{intro:thm_almost-iid-iid-rates-relation} implies that the operational version is smaller than the standard one, $\tilde{E}_C(A:B)_{\rho} \leq E_C(A:B)_{\rho}$, while the converse inequality remains unclear. In that case,~\cref{intro:quesion-rates-equal} reads:

\medskip
%\begin{tcolorbox}[colback=SkyBlue!5!white,colframe=lightgray!85!SkyBlue]
\noindent \textbf{Question.} Is it true that $\tilde{E}_C(A:B)_{\rho} \overset{?}{=} E_C(A:B)_{\rho}$ for all $\rho_{AB} \in \St(\cH_{AB})$ ?
%\end{tcolorbox}

\medskip
\noindent Importantly, operational asymptotic transformation rates are, by construction, robust to small perturbations and experimental imperfections. Should equality between $\tilde{R}_\F$ and $R_\F$ hold, then this robustness would extend to standard iid transformation rates as well. Conversely, if the two rates ultimately differ, we argue that the operational definition constitutes the more natural notion of an asymptotic transformation rate from a fundamental and experimental standpoint, even if the standard iid definition possesses a higher degree of mathematical elegance.
\let\thetheorem\savedthetheorem

\paragraph{Acknowledgements.} We thank David Sutter and Nilanjana Datta for helpful discussions on almost-iid information theory. This project began several years ago, and its central definitions, ideas, and insights were developed before capable generative AI models became widely available. We acknowledge the subsequent use of OpenAI’s GPT models Sol and Astra for language polishing, assistance with correctness checks, and minor technical adjustments. GM and RR acknowledge support from the NCCR SwissMAP, the ETH Zurich Quantum Center, the SNSF project No.~20QU-1 225171, and the CHIST-ERA project MoDIC.

\newpage 
%\section*{Technical manuscript}
\tableofcontents
\newpage
%%%%%%%%%%%%%%%%%%%%%%%%%%%%%%%%%%%%%%%%%%%%%%%%%%%%%%%%%%%%%%%%%%%%%%%%%%%%%%%%%%%%
\section{Preliminaries and notation}
Throughout this work (unless otherwise stated), we work with finite-dimensional systems. Moreover, we use the logarithms to base 2 and we denote $\log(\cdot) \equiv \log_2(\cdot)$. We write $\poly(n)$ for terms of order at most polynomial in $n$. For two Hermitian matrices $X$ and $Y$, we say that $X \leq Y$ if $Y - X$ is positive semidefinite. We write $X>0$ and $X\geq 0$ for $X$ being a positive definite or positive semidefinite matrix, respectively. Let $\St(\cH_A):=\{X \geq 0: \tr[X]=1\}$ denote the set of density matrices on a Hilbert space $\cH_A$ with dimension $d_A$, and we may also write $\cH_A\equiv A$. For any state $\rho \in \St(\cH)$ and $\epsilon\in(0,1)$ define the $\epsilon$-ball around $\rho$ with respect to the trace distance by $B_{\epsilon}(\rho):=\{\rho' \in \St(\cH): T(\rho,\rho') \leq \epsilon \}$. For any two density matrices $\rho,\,\sigma \in \St(A)$, the trace distance $T$ is defined by $T(\rho,\sigma) := \frac{1}{2}\|\rho -\sigma \|_1$ where \smash{$\| X\|_1 = \tr[\sqrt{X^\dagger X} ]$} for any (bounded) linear operator $X$. Accordingly, the fidelity $F$ is defined as $F(\rho,\sigma) := \|\sqrt{\rho}\sqrt{\sigma}\|_1^2$.

% \GM{Convert this into a list?}

\subsection{Subexponentially decreasing functions and sample complexities}

\begin{lemma}[Polynomial bounds] \label{lem_poly-1}
   Let $P(X)= c_k X^{\beta_k} + c_{k-1} X^{\beta_{k-1}} + \cdots + c_1 X^{\beta_1}$  be a polynomial with real exponents $\beta_i \in \R$ such that $\beta_k > \dots > \beta_1$, and real coefficients $c_i \in \R$ for $i \in [k]$. %with $c_k \neq 0$. 
   Then there exists a constant $C \geq 0$ such that 
   \begin{align}
       |P(X)| \,\leq \, C X^{\beta_k} \quad \forall\, X \in  \R_{\geq 1}\,.
   \end{align}
\end{lemma}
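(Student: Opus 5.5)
The plan is to bound each monomial separately by the triangle inequality and then compare powers of $X$ on the domain $X \geq 1$. First,
\begin{align*}
|P(X)| \,\leq\, \sum_{i=1}^k |c_i|\, X^{\beta_i} \qquad \forall\, X \in \R_{\geq 1}\,,
\end{align*}
where we use that $X^{\beta_i} > 0$ for $X \geq 1$. This holds for arbitrary real exponents, because $X^{\beta}$ is well defined and positive for $X>0$.

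The key step is monotonicity in the exponent. For $X \geq 1$ and $\beta \leq \beta'$ we have
\begin{align*}
X^{\beta'} \,=\, X^{\beta}\, X^{\beta'-\beta}\,, \qquad X^{\beta'-\beta} \,=\, 2^{(\beta'-\beta)\log X} \,\geq\, 1\,,
\end{align*}
since $\log X \geq 0$ and $\beta'-\beta \geq 0$. Hence $X^{\beta_i} \leq X^{\beta_k}$ for every $i \in [k]$, because $\beta_k$ is the largest exponent.

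Combining the two bounds gives $|P(X)| \leq \big(\sum_{i=1}^k |c_i|\big) X^{\beta_k}$. We can therefore take $C := \sum_{i=1}^k |c_i| \geq 0$, which works even if some or all of the coefficients vanish.

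There is no real obstacle here. The only point to watch is the restriction $X \geq 1$: it is what makes the comparison $X^{\beta_i} \leq X^{\beta_k}$ valid. For $X \in (0,1)$ the inequality would reverse, and the statement could then fail when $\beta_1 < \beta_k$.
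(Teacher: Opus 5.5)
Your proof is correct and follows essentially the same argument as the paper: the triangle inequality, monotonicity of $X^{\beta}$ in $\beta$ for $X\geq 1$, and the choice $C=\sum_{i=1}^k|c_i|$. The only cosmetic difference is that the paper writes $X^{\beta}=\exp(\beta\ln X)$ where you write $X^{\beta}=2^{\beta\log X}$.
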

\begin{proof}
    For any $X \in \R$ with $X \geq 1$, we have that $X^\beta = \exp(\beta\,\ln(X))$ is monotonically increasing in $\beta \in \R$. Therefore, $0 \leq X^{\beta_1} \leq \dots \leq X^{\beta_k}$. Hence, 
    \begin{align}
    |P(X)| \leq  |c_k X^{\beta_k}| + |c_{k-1} X^{\beta_{k-1}}| + \cdots + |c_1 X^{\beta_1}|  \leq (|c_k| + \dots + |c_1|)X^{\beta_k} \equiv CX^{\beta_k}  
    \end{align}
    with $C := |c_k| + \dots + |c_1| \geq 0$.
\end{proof}

\begin{notation}[Subexponentially decreasing functions] \label{def_subexp}
% A function $g : \mathbb{N} \to \mathbb{R} $ is called \emph{subexponentially decreasing} if there exist constants $K, L, \alpha >0 $ such that $|g(n)| \leq K\, 2^{-Ln^\alpha}$ for all $n\in \N$.
A function $g : \mathbb{N} \to \mathbb{R} $ is called \emph{subexponentially decreasing} if there exist constants $N_0 \in \N$ and $\alpha \in (0,1) $ such that $|g(n)| \leq 2^{-n^\alpha}$ for all $n \geq N_0$. 
 %\red{If $\sup\{\alpha \} =1$, we say that the function is \emph{exponentially decreasing}}.
If there exist constants $N_0 \in \N$ and $c >0$ such that $|g(n)| \leq 2^{-cn}$ for all $n \geq N_0$, we say that the function is \emph{exponentially decreasing}.
\end{notation}

\begin{remark}[Equivalent notions for subexponentially decreasing functions] \label{def_subexp-equivalence}
Let a function $g : \mathbb{N} \to \mathbb{R} $ and constants $L, \alpha, \tilde{L}, \tilde{\alpha} >0 $ be given such that $\tilde{L} < L$ and $\tilde{\alpha} < \alpha$. Consider the following statements:
%Then, the following are equivalent:
\begin{enumerate}[(i)]
%\item $\exists \, K, L, \alpha >0 \ \ \forall\, n\in \N: \ \  |g(n)| \leq K\, 2^{-Ln^\alpha}$.
%\item $\exists \, N_0, \tilde{L}, \alpha >0 \ \ \forall\, n \geq N_0: \ \  |g(n)| \leq 2^{-\tilde{L}n^\alpha}$.
%\item $\exists \, N_0, \tilde{\alpha} >0 \ \ \forall\, n \geq N_0: \ \  |g(n)| \leq 2^{-n^{\tilde\alpha}}$.
\item $\exists \,K \geq 0 \ \ \forall\, n\in \N: \   |g(n)| \leq K\, 2^{-Ln^\alpha}$ \label{rem_subex-it1}
\item $\exists \, N_0 \in \N \ \ \forall\, n \geq N_0: \  |g(n)| \leq 2^{-\tilde{L}n^\alpha}$  \label{rem_subex-it2}
% \item $\exists \,\bar{K} \geq 0 \ \ \forall\, n\in \N: \   |g(n)| \leq \bar{K}\, 2^{-\tilde{L}n^\alpha}$. \label{rem_subex-it4}
\item $\exists \, \bar{N}_0 \in \N \ \ \forall\, n \geq \bar{N}_0:\   |g(n)| \leq 2^{-n^{\tilde\alpha}}$  \label{rem_subex-it3}
\item $\exists \,\bar{K} \geq 0 \ \ \forall\, n\in \N: \   |g(n)| \leq \bar{K}\, 2^{- n^{\tilde{\alpha}}}$ \label{rem_subex-it4}
\end{enumerate}

Then the following chain of implications holds: $\eqref{rem_subex-it1} \implies \eqref{rem_subex-it2} \implies \eqref{rem_subex-it3} \implies \eqref{rem_subex-it4} $.
%\\ \\
%In addition, the choice of the constants $K,\,\bar{K},\,N_0,\,\bar{N}_0$ is constructive as can be seen from the proof.
\end{remark} \noindent
This remark shows that the properties $\eqref{rem_subex-it1}$-$\eqref{rem_subex-it4}$ above are essentially equivalent. In particular, a function $g : \mathbb{N} \to \mathbb{R} $ that satisfies any of the properties $\eqref{rem_subex-it1}$-$\eqref{rem_subex-it4}$ is a subexponentially decreasing function. Furthermore, a function $g$ that satisfies any of the properties $\eqref{rem_subex-it1}$-$\eqref{rem_subex-it2}$ with exponent $\alpha =1$ is exponentially decreasing.
\begin{proof}
\underline{$\eqref{rem_subex-it1} \implies \eqref{rem_subex-it2}$}: If $K \leq 1$, then $K\, 2^{-Ln^\alpha} \leq 2^{-\tilde{L}n^\alpha}$ for all $n \in \N$ and the claim directly follows for all $n \in \N$. Now let $K > 1$. Then we have
\begin{align}
 K\, 2^{-Ln^\alpha} \,\leq\, 2^{-\tilde{L}n^\alpha} \,&\iff\, -L n^\alpha + \log(K) \,\leq\, -\tilde{L}  n^\alpha
 \,\iff\, \log(K) \,\leq\, n^\alpha(L-\tilde{L}) \\
 \,&\iff\, \left(\frac{\log(K)}{L-\tilde{L}}\right)^{\frac{1}{\alpha}} \,\leq\, n\,. \label{proof_subexp-equivalence-eq1}
\end{align}
By setting 
\begin{align}
N_0 := \Big\lceil  \left(\frac{\log(K)}{L-\tilde{L}}\right)^{\frac{1}{\alpha}} \Big\rceil \in \N\,, \label{proof_subexp-equivalence-eq2}
\end{align} 
the claim follows.
 
\medskip
\noindent
\underline{$\eqref{rem_subex-it2} \implies \eqref{rem_subex-it3}$}: We have
\begin{align}
 2^{-\tilde{L}n^\alpha} \,\leq\, 2^{-n^{\tilde{\alpha}}} \,\iff\,  n^{\tilde{\alpha}} \,\leq\,  \tilde{L}n^\alpha   \,\iff\, \left(\frac{1}{\tilde{L}} \right)^{\frac{1}{\alpha - \tilde{\alpha}}} \,\leq\, n\,.
\end{align}
By setting $\bar{N}_0 := \max\Big\{N_0, \,\big\lceil \tilde{L}^{-\frac{1}{\alpha - \tilde{\alpha}}} \big\rceil \Big\} \in \N$, the claim follows.

\medskip
\noindent
\underline{$\eqref{rem_subex-it3} \implies \eqref{rem_subex-it4}$}: Define $ \bar{K}_0 := \max_{n < \bar{N}_0}\Big\{ |g(n)|\, 2^{n^{\tilde\alpha}} \Big\} < \infty$, and $ \bar{K} := \max \big\{ 1, \bar{K}_0 \big\} $. For $n < \bar{N}_0$, we then obtain
\begin{align}
|g(n)| \,\leq \, \bar{K}_0\,2^{- n^{\tilde{\alpha}}} \,\leq\, \bar{K} \,2^{- n^{\tilde{\alpha}}}\,,
\end{align}
while for $n \geq \bar{N}_0$, we have
\begin{align}
 |g(n)| \leq 2^{-n^{\tilde\alpha}} \, \leq \, \bar{K} \, 2^{-n^{\tilde\alpha}} \,,
\end{align}
which proves the claim.
\end{proof}

\begin{lemma}[Combinations of subexponentially decreasing functions] \label{lem_sums-of-subexp}
Let $g : \mathbb{N} \to \mathbb{R} $ and $ h : \mathbb{N} \to \mathbb{R} $ be two subexponentially decreasing functions. Then the sum
\begin{align}
f : \mathbb{N} &\to \mathbb{R}  \nonumber \\
n \ &\mapsto \ f(n) := g(n) + h(n)
\end{align}
is a subexponentially decreasing function. \\
\\
Furthermore, for any polynomial $P(X)= c_k X^{\beta_k} + c_{k-1} X^{\beta_{k-1}} + \cdots + c_1 X^{\beta_1}$ with real exponents $\beta_i \in \R$, the function
\begin{align}
\tilde{f} : \mathbb{N} &\to \mathbb{R}  \nonumber \\
n \ &\mapsto \ \tilde{f}(n) := P(n)g(n)
\end{align}
is also a subexponentially decreasing function. 

\medskip
\noindent Finally, the above assertions remain valid if ``subexponentially decreasing'' is replaced by ``exponentially decreasing''.
\end{lemma}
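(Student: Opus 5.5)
The plan is to work with the definition in Notation~\ref{def_subexp}: there exist $N_g\in\N$ and $\alpha_g\in(0,1)$ with $|g(n)|\le 2^{-n^{\alpha_g}}$ for all $n\ge N_g$, and analogously $N_h$, $\alpha_h$ for $h$. Each claim will be reduced to a bound of the form $|f(n)|\le K\,2^{-Ln^{\alpha}}$ for all $n\in\N$, i.e.\ item~\eqref{rem_subex-it1} of Remark~\ref{def_subexp-equivalence}. The chain \eqref{rem_subex-it1}$\Rightarrow$\eqref{rem_subex-it3} in that remark then returns the bound to the form of Notation~\ref{def_subexp}, with a slightly smaller exponent $\tilde\alpha\in(0,1)$ and some threshold $\bar N_0$.

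\textbf{Sums.} Set $\alpha:=\min\{\alpha_g,\alpha_h\}\in(0,1)$ and $N:=\max\{N_g,N_h\}$. For $n\ge N\ge 1$ we have $n^{\alpha}\le n^{\alpha_g},n^{\alpha_h}$, so
\[
|f(n)|\le |g(n)|+|h(n)|\le 2\cdot 2^{-n^{\alpha}}\quad\text{for all } n\ge N.
\]
To reach item~\eqref{rem_subex-it1} for all $n$, enlarge the constant: take
\[
K:=\max\Bigl\{2,\ \max_{n<N}|f(n)|\,2^{n^{\alpha}}\Bigr\},
\]
which is a maximum over finitely many values. Then $|f(n)|\le K\,2^{-n^\alpha}$ for all $n\in\N$. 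Applying the remark with $L=1$ and any $\tilde L<1$, $\tilde\alpha<\alpha$ gives the result.

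\textbf{Products with polynomials.} By Lemma~\ref{lem_poly-1}, since $n\ge1$, there is $C\ge0$ with $|P(n)|\le C\,n^{\beta}$, where $\beta:=\max\{\beta_k,0\}$ (replacing $\beta_k$ by $\beta$ only weakens the bound). Hence
\[
|\tilde f(n)|\le C\,n^{\beta}\,2^{-n^{\alpha_g}}\quad\text{for } n\ge N_g.
\]
The key step, and the only real obstacle, is absorbing the polynomial factor. Write $n^\beta=2^{\beta\log n}$. Fix $\alpha:=\alpha_g/2$. Since $\beta\log n=o(n^{\alpha_g})$, there is $N'$ such that $\beta\log n\le n^{\alpha_g}-n^{\alpha}$ for all $n\ge N'$; this can be made explicit via $\log n\le c_\gamma n^\gamma$ for any $\gamma>0$. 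Then $|\tilde f(n)|\le C\,2^{-n^{\alpha}}$ for $n\ge\max\{N_g,N'\}$. The finitely many remaining $n$ are absorbed into a constant $K$ exactly as in the sum case, and the remark finishes the argument.

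\textbf{Exponential case.} The same computations go through with exponent $1$ and a rate $c>0$. For sums, $|f(n)|\le 2\cdot2^{-\min(c_g,c_h)n}$, and $2\le 2^{(\min(c_g,c_h)/2)n}$ for large $n$, so $|f(n)|\le 2^{-(\min(c_g,c_h)/2)n}$ eventually. For products, $C n^\beta 2^{-cn}\le 2^{-(c/2)n}$ for large $n$, because $\log C+\beta\log n\le (c/2)n$ eventually. In both cases the bound holds for all $n$ beyond an explicit threshold, which is exactly the definition of exponential decrease.
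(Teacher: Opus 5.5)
Your proof is correct and follows essentially the same route as the paper: the triangle inequality with the minimum exponent for sums, \cref{lem_poly-1} plus absorption of the factor $n^{\beta}$ for products, and \cref{def_subexp-equivalence} to pass between the equivalent formulations. The only difference is that you absorb $n^{\beta}$ by halving the exponent $\alpha_g$. The paper instead keeps the exponent fixed and shrinks the rate constant $L$, using that $n^{\beta}2^{-\nu n^{\alpha}}$ is bounded; this is why its exponential case follows from the same computation, while you treat that case separately.
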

\begin{proof}
By~\cref{def_subexp,def_subexp-equivalence}, there exist constants $K, L, \alpha >0 $ and $K', L', \alpha' >0 $ such that $|g(n)| \leq K\, 2^{-Ln^\alpha}$ and $|h(n)| \leq K'\, 2^{-L'n^{\alpha'}}$ for all $n\in \N$. Set $\tilde{K} := \max\{K, K'\}$, $\tilde{L} := \min\{L, L'\}$, and $\tilde{\alpha} := \min\{\alpha, \alpha'\}$. Then, $|f(n)| \leq  |g(n)| + |h(n)| \leq 2\tilde{K}\,2^{-\tilde{L}n^{\tilde{\alpha}}}$. In addition, $\tilde{\alpha} = 1$ for $\alpha= \alpha' =1$, which proves the first claim. \\
For the second claim, consider first the expression $\tilde{g}(n) := n^\beta 2^{-Ln^\alpha}$ for any $\beta \in \R$. We want to show that $\tilde{g}$ is subexponentially decreasing. To this end, let $\tilde{L} = L - \nu$ for any $\nu \in (0,L)$. Then, we aim to find a constant $\tilde{K}$ such that
\begin{align} \label{eq_exp-bound-1}
	\tilde{g}(n) = n^\beta 2^{-Ln^\alpha} \,\overset{!}{\leq}\, \tilde{K}2^{-\tilde{L}n^\alpha} \quad \iff \quad n^\beta 2^{-\nu n^\alpha} \leq \tilde{K}\,.
\end{align}
For $\beta \leq 0$ one has $n^\beta \leq 1$ and one can directly take $\tilde{K} =1$ (and $\nu = 0$). For $\beta > 0$,~\cref{eq_exp-bound-1} is achievable as the function $n^\beta 2^{-\nu n^\alpha}$ attains a global maximum at $n_0 = (\frac{\beta}{\alpha \nu \ln 2})^{\frac{1}{\alpha}}$, and therefore we can choose $\tilde{K}:= \max_{n\in\N} n^\beta 2^{-\nu n^\alpha} < \infty$. Now set $\beta := \max_{i\in [k]} \beta_i$. Together with~\cref{lem_poly-1}, we find
\begin{align}
|\tilde{f}(n)| = |P(n)||g(n)| \overset{\textnormal{\Cshref{lem_poly-1}}}{\leq} C n^\beta \, K\, 2^{-Ln^\alpha}  \overset{\textnormal{\Cshref{eq_exp-bound-1}}}{\leq} CK\tilde{K} 2^{-\tilde{L}n^\alpha}\,.
\end{align}
In particular, since the exponent of $\tilde{f}$ is given by $\alpha$, $\tilde{f}$ is exponentially decreasing in case $\alpha =1$, which proves the second claim.
\end{proof}

\begin{lemma}[\cite{Bellare2002}] \label{lem_bounded-subexp}
A function $g : \mathbb{N} \to \mathbb{R} $ is subexponentially decreasing if and only if there exists a subexponentially decreasing function $h : \mathbb{N} \to \mathbb{R} $ and $N_0 \in \N$ such that $|g(n)| \leq |h(n)|$ for all $n \geq N_0$.
\end{lemma}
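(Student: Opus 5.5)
The statement is an ``if and only if'', so I will prove the two directions separately, working directly from \cref{def_subexp}. The backward direction is the substantive one; the forward direction is a triviality.

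\emph{Forward direction.} If $g$ is subexponentially decreasing, take $h := g$ and $N_0 := 1$. Then $|g(n)| \leq |h(n)|$ holds for every $n \geq 1$.

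\emph{Backward direction.} Suppose $h$ is subexponentially decreasing and $|g(n)| \leq |h(n)|$ for all $n \geq N_0$. By \cref{def_subexp}, there exist $N_1 \in \N$ and $\alpha \in (0,1)$ with $|h(n)| \leq 2^{-n^\alpha}$ for all $n \geq N_1$. Put $N_2 := \max\{N_0, N_1\}$. For every $n \geq N_2$ we then have
\begin{align}
|g(n)| \,\leq\, |h(n)| \,\leq\, 2^{-n^\alpha}\,,
\end{align}
which is exactly the condition in \cref{def_subexp} with constants $N_2$ and $\alpha$. Hence $g$ is subexponentially decreasing.

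\emph{Remarks.} No step is a genuine obstacle, because the definition only constrains the tail $n \geq N_0$ and so ignores finitely many values. The one point to be careful about is that this tail-only definition is what the argument relies on. If one instead used the ``for all $n \in \N$'' formulation of \cref{def_subexp-equivalence}\eqref{rem_subex-it4}, the finitely many indices $n < N_0$ would need handling. That can be done by enlarging the constant, $\bar{K} := \max\big\{K, \max_{n<N_0} |g(n)|\, 2^{n^{\alpha}}\big\}$, as in the proof of \eqref{rem_subex-it3}$\implies$\eqref{rem_subex-it4}. The same argument goes through verbatim with ``exponentially decreasing'' in place of ``subexponentially decreasing'', using the bound $2^{-cn}$.
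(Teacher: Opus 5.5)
Your proof is correct and matches the paper's approach: the paper only says the argument is ``straightforward and analogous to'' \cite[Proposition 2.4]{Bellare2002}, which is the same tail argument you give. Taking $h := g$ gives the forward direction, and combining the two thresholds as $\max\{N_0, N_1\}$ in the tail-only \cref{def_subexp} gives the backward direction.
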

\begin{proof}
The proof is straightforward and analogous to~\cite[Proposition 2.4]{Bellare2002}.
\end{proof}

\begin{lemma} \label{lem_max-subexp}
Let $\{f_j\}_j$ denote a finite set of subexponentially decreasing functions $f_j : \mathbb{N} \to \mathbb{R} $. Then, the function $g : \mathbb{N} \to \mathbb{R} $, defined by
\begin{align}
g(n) \,:=\, \max_j f_j(n) \quad \forall\, n\in \N\,,
\end{align}
is also subexponentially decreasing.
\end{lemma}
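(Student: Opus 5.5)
The plan is to reduce the maximum to a single dominating subexponential bound and then appeal to \cref{lem_bounded-subexp}. One subtlety comes first. \cref{def_subexp} bounds $|g(n)|$, and $|\max_j f_j(n)|$ need not equal $\max_j |f_j(n)|$. The key pointwise observation is
\begin{align*}
\Big|\max_j f_j(n)\Big| \,\leq\, \max_j |f_j(n)| \,\leq\, \sum_j |f_j(n)| \qquad \forall\, n\in\N\,.
\end{align*}
To check the first inequality, take $j^*$ attaining the maximum. If $f_{j^*}(n)\geq 0$, then $|g(n)| = f_{j^*}(n)\leq |f_{j^*}(n)|$. If $f_{j^*}(n)<0$, then every $f_j(n)<0$, so $|g(n)| = \min_j|f_j(n)| \leq \max_j|f_j(n)|$. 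The second inequality is trivial.

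Next I bound $\sum_j |f_j|$. Each $|f_j|$ is subexponentially decreasing, since the definition only involves absolute values. The set is finite, so by induction on its size using the first part of \cref{lem_sums-of-subexp}, the sum $h := \sum_j |f_j|$ is subexponentially decreasing.

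Finally, $|g(n)|\leq |h(n)|$ holds for all $n\in\N$, so \cref{lem_bounded-subexp} with $N_0=1$ shows that $g$ is subexponentially decreasing.

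Alternatively, I could argue directly. Choose for each $j$ constants $N_j$ and $\alpha_j$ as in \cref{def_subexp}, and set $N_0:=\max_j N_j$ and $\alpha:=\min_j\alpha_j$. Then for $n\geq N_0$,
\begin{align*}
|g(n)| \,\leq\, \max_j 2^{-n^{\alpha_j}} \,\leq\, 2^{-n^{\alpha}}\,,
\end{align*}
using $n\geq 1$. Finiteness of the family is what makes $N_0$ finite and keeps $\alpha>0$.

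The only delicate step is the sign issue in the first inequality; everything else is routine bookkeeping of constants.
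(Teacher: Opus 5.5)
Your proof is correct. Your ``alternative'' direct argument is exactly the paper's proof. The paper sets $\bar N_0 := \max_j N_0^j$ and $\bar\alpha := \min_j \alpha_j$, and bounds $|g(n)| \leq \max_j |f_j(n)| \leq 2^{-n^{\bar\alpha}}$ for $n \geq \bar N_0$. The paper uses the inequality $|\max_j f_j(n)| \leq \max_j |f_j(n)|$ without comment, so your case distinction on the sign of the maximizer fills that small step.

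Your main route is genuinely different. You dominate $|g|$ by $\sum_j |f_j|$, then apply \cref{lem_sums-of-subexp} inductively and finish with \cref{lem_bounded-subexp}. This is valid, since both lemmas are proved independently of the statement at hand. It is, however, more roundabout:
\begin{itemize}
\item The proof of \cref{lem_sums-of-subexp} itself does the same max/min bookkeeping of constants.
\item Passing through the form $K\,2^{-Ln^{\alpha}}$ and back via \cref{def_subexp-equivalence} loses an arbitrarily small amount in the exponent.
\end{itemize}
The direct argument is self-contained and keeps explicit constants. It shows that $g$ is bounded by $2^{-n^{\min_j \alpha_j}}$ beyond the threshold $\max_j N_0^j$.
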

\begin{proof}
By definition, for any $j$ there exists a constant $N^j_0 \in \N$ and $\alpha_j \in (0,1)$ such that for all $n \geq N^j_0$ it holds that $|f_j(n)| \leq 2^{-n^{\alpha_j}}$. By choosing $\bar{N}_0 := \max_j N^j_0$ and $\bar{\alpha} := \min_j \alpha_j$, we find
\begin{align}
|g(n)| \,=\,  |\max_j f_j(n)| \,\leq\, \max_j |f_j(n)| \,\leq \, 2^{-n^{\bar{\alpha}}}
\end{align}
for all $n \geq \bar{N}_0$, which proves the claim.
\end{proof}

\begin{example} \label{ex_subexp-function}
Let $L>0 $, $\alpha \in (0,1]$ and $N_0 \in \N$. Consider a function $g : \mathbb{N} \to \mathbb{R} $ with $|g(n)| \leq 2^{-n^\alpha(L+\nu(n))}$ for all $n\geq N_0$ with some function $\nu : \mathbb{N} \to \mathbb{R} $ such that $\lim_{n\to\infty}|\nu(n)| = 0$. Then $g$ is subexponentially decreasing (with exponent $\alpha$).
\end{example}
\begin{proof}
Since $\lim_{n\to\infty}|\nu(n)| = 0$, there exists ${N'_0}\in \N $ such that $|\nu(n)| \leq \frac{L}{2}$ for all $n \geq {N'_0}$. By further noting that $\nu(n) \geq - |\nu(n)|$ for any $n \in \N$, and setting $\bar{N}_0 := \max\{N_0, N'_0\}$, we obtain
\begin{align}
|g(n)|\,\overset{n\geq N_0}{\leq}\, 2^{-n^\alpha(L+\nu(n))} \,\leq \,  2^{-n^\alpha(L-|\nu(n)|)} \,\overset{n\geq N'_0}{\leq}\, 2^{-n^\alpha \frac{L}{2}} \label{eq_lem-bounded-subexp}
\end{align}
for all $n\geq \bar{N}_0$. Since the RHS of~\cref{eq_lem-bounded-subexp} is subexponentially decreasing with exponent $\alpha$, the function $g$ is also subexponentially decreasing with exponent $\alpha$ due to~\cref{lem_bounded-subexp}.
\end{proof}

\begin{notation}[Sample complexity] \label{def_sample-complexity}
A \emph{sample complexity} $N_0$ is a function
\begin{align*}
    N_0 \,\colon \  \  (0,1)\times (0,\tfrac{1}{2}] \ &\longrightarrow \ \N \\
    (\epsilon,\delta)\  &\longmapsto \  \bar{N}_0(\epsilon,\delta) \,.
\end{align*}
A sample complexity $N_0$ is called \emph{poly-logarithmic} if
\begin{align}
    N_0(\epsilon,\delta) \,\leq \, \poly\big(\tfrac{1}{\epsilon}, \,\log(\tfrac{1}{\delta})\big)\,.
\end{align}
A sample complexity $\bar{N}_0$ is called \emph{efficient} if
\begin{align}
    \bar{N}_0(\epsilon,\delta) \,\leq \, \poly\big(\tfrac{1}{\epsilon}\big) \cdot\log(\tfrac{1}{\delta})\,.
\end{align}
\end{notation}

\noindent
As in statistical learning, the parameters $\epsilon$ and $\delta$ correspond to the accuracy and confidence of a protocol, respectively. For given parameters $(\epsilon,\delta)$, the sample complexity $N_0(\epsilon,\delta)$ then gives the number of copies that are needed for the protocol to succeed within the accuracy $\epsilon$ with confidence $1-\delta$. The efficient sample complexity appears for instance in the context of classical or quantum state learning when assuming an iid setting~\cite{Hoeffding1963, Nielsen_Chuang_2010-chap8, ODonnell2016}. Note that an efficient sample complexity corresponds to a poly-logarithmic sample complexity with a fixed scaling with respect to the confidence parameter $\delta$.

%----------------------------------------------------------------

\begin{lemma}[Sample complexities and subexponentially decreasing functions] \label{lem_sample-subexp}
Let $f(\epsilon, n): (0,1)  \times \N \to \R$ be a function. The following statements are equivalent:
\begin{enumerate}[(i)]
\item $f(\epsilon,n) \leq K \,2^{-n^\alpha \epsilon^\mu L} $ for some constants $K,L >0$, $\mu \geq 0$ and $\alpha \in (0,1]$. \label{lem_subex-it1}
\item  \label{lem_subex-it2} There exists a constant $K\geq 0$ such that $f(\epsilon, n )\leq K$, and a poly-logarithmic sample complexity $N_0$ such that
\begin{align}
 \forall\,(\epsilon,\delta) \in (0,1)\times (0,\tfrac{1}{2}] \ \ \forall\, n\geq N_0(\epsilon,\delta)\,:\   f(\epsilon, n) \leq \delta\,.
\end{align}
\end{enumerate}
% such that $f(\epsilon,n) \leq K \,2^{-n^\alpha \epsilon^\mu L} $ for some constants $K,L,\mu >0$ and $\alpha \in (0,1]$. Then there exists a poly-logarithmic sample complexity $N_0$ such that
%\begin{align}
% \forall\,(\epsilon,\delta) \in (0,1)\times (0,\tfrac{1}{2}] \ \ \forall\, n\geq N_0(\epsilon,\delta)\,:\   f(\epsilon, n) \leq \delta\,.
%\end{align}
Furthermore, the sample complexity can be chosen to be efficient iff $\alpha = 1$.
\end{lemma}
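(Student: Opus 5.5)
The plan is to prove the two implications separately and then treat the efficiency refinement.

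\textbf{(i)$\Rightarrow$(ii).} The uniform bound is immediate: since $2^{-n^\alpha\epsilon^\mu L}\le 1$, we have $f(\epsilon,n)\le K$. For the sample complexity, we solve $K\,2^{-n^\alpha\epsilon^\mu L}\le\delta$ for $n$, which gives the condition $n^\alpha\ge \frac{\log(K/\delta)}{L\epsilon^\mu}$. Accordingly we set
\begin{align*}
N_0(\epsilon,\delta):=\Big\lceil \Big(\tfrac{\max\{\log K,0\}+\log(1/\delta)}{L\,\epsilon^{\mu}}\Big)^{1/\alpha}\Big\rceil+1 .
\end{align*}
Every $n\ge N_0$ then satisfies $f(\epsilon,n)\le\delta$. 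Because $\delta\le\frac12$, we have $\log(1/\delta)\ge1$, so the additive constant $\log K$ is absorbed: $\max\{\log K,0\}+\log(1/\delta)\le(1+\max\{\log K,0\})\log(1/\delta)$. This yields
\begin{align*}
N_0\le C\,\epsilon^{-\mu/\alpha}\log(1/\delta)^{1/\alpha}+2 .
\end{align*}
Since $1/\alpha$ is a finite real exponent, this is bounded by a polynomial in $(1/\epsilon,\log(1/\delta))$, using \cref{lem_poly-1} with $1/\epsilon,\log(1/\delta)\ge1$. If $\alpha=1$, the bound is $\poly(1/\epsilon)\cdot\log(1/\delta)$, so the sample complexity is efficient.

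\textbf{(ii)$\Rightarrow$(i).} We assume $N_0(\epsilon,\delta)\le c\,\epsilon^{-a}\log(1/\delta)^{b}$ for all arguments, with constants $c\ge 1$ and $a,b\ge0$. The natural move is to invert this relation. Fix $\epsilon$ and $n$, and pick the smallest $\delta$ compatible with $n\ge N_0(\epsilon,\delta)$. Namely, we define $\delta_n$ through
\begin{align*}
\log(1/\delta_n)=\Big(\frac{n\,\epsilon^{a}}{c}\Big)^{1/b}.
\end{align*}
Whenever this gives $\delta_n\in(0,\frac12]$, the polynomial bound yields $n\ge N_0(\epsilon,\delta_n)$, and hence $f(\epsilon,n)\le\delta_n=2^{-n^{1/b}\epsilon^{a/b}c^{-1/b}}$.

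Two cases need adjustment. First, when the candidate $\delta_n$ exceeds $\frac12$, the uniform bound $f\le K$ takes over. In that regime $n^{1/b}\epsilon^{a/b}c^{-1/b}<1$, so $K\le 2K\cdot 2^{-n^{1/b}\epsilon^{a/b}c^{-1/b}}$. Second, the exponent $\alpha=1/b$ may exceed $1$. If $b<1$, we may simply replace $b$ by $1$, since enlarging the polynomial bound is harmless when $\log(1/\delta)\ge1$. Combining the two cases gives (i) with $\alpha=1/b$, $\mu=a/b$, $L=c^{-1/b}$, and constant $\max\{2K,1\}$.

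For the efficiency statement, $b=1$ gives $\alpha=1$. Conversely, $\alpha=1$ gives efficiency by the first part.

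The main obstacle is the "only if" direction of the efficiency clause, namely that efficient sample complexity forces $\alpha=1$. As literally stated this is false: any (i)-bound with $\alpha<1$ that happens to also hold with $\alpha=1$ would be a counterexample. I would therefore read the clause as saying that efficiency corresponds exactly to the existence of a representation with $\alpha=1$, and prove that version via the two constructions above. I would also take care with the bookkeeping in the case $b=0$, i.e. when $N_0$ does not depend on $\delta$. Then $f(\epsilon,n)\le\delta$ holds for every $\delta$ once $n\ge c\epsilon^{-a}$, which forces $f(\epsilon,n)\le0$ for those $n$. This case can be handled by taking $b=1$ directly.
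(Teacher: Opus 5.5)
Your proof is correct and follows the paper's argument. For (i)$\Rightarrow$(ii) you solve $K\,2^{-n^\alpha\epsilon^\mu L}\le\delta$ directly, whereas the paper goes through \cref{def_subexp-equivalence} with $\tilde L=\epsilon^\mu L/2$; for (ii)$\Rightarrow$(i) you use the paper's choice $\delta_{n,\epsilon}=2^{-(\epsilon^{q}n/C)^{1/s}}$ together with the fallback $f\le K\le 2K\delta_{n,\epsilon}$ when $\delta_{n,\epsilon}>\tfrac12$, and your reading of the efficiency clause as ``there exists a representation with $\alpha=1$'' is exactly what the paper proves ($s=1\Rightarrow\alpha=1/s=1$ and $\alpha=1\Rightarrow s=\lceil 1/\alpha\rceil=1$).
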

\begin{proof}
\underline{$\eqref{lem_subex-it1} \implies \eqref{lem_subex-it2}$}:
First, note that it immediately follows that $f(\epsilon, n) \leq K$ globally. Now, let $\epsilon \in (0,1)$ be fixed, and define $\tilde{L}:=\frac{ \epsilon^\mu L }{2}$. By~\cref{def_subexp-equivalence}~$\eqref{rem_subex-it2}$, there exists $N_0(\epsilon) \in \N$ such that $f(\epsilon,n) \leq 2^{-n^\alpha \tilde{L}} $ for all $n\geq N_0(\epsilon) $. In addition, because of~\cref{proof_subexp-equivalence-eq2} and~\cref{lem_poly-1}, we can bound
\begin{align}
N_0(\epsilon) \,\leq\, C \left(\frac{1}{\epsilon} \right)^q
\end{align}
for some constant $C\geq 0$ and $q := {\lceil\frac{\mu}{\alpha} \rceil} \in \N$. Now for any $\delta \in (0,\frac{1}{2}]\,$, set 
\begin{align}
N_0(\epsilon,\delta) \,:= \, \left\lceil \bar{C} \left(\frac{1}{\epsilon} \right)^q \log\left(\frac{1}{\delta}\right)^s \,\right\rceil \,\geq\, N_0(\epsilon) \label{eq_sample-subexp-1}
\end{align}
with $\bar{C} := \max\big\{C, \big(\frac{2}{{L}}\big)^{1/\alpha} \big\}$ and $s := {\lceil\frac{1}{\alpha} \rceil} \in \N$. Since
\begin{align}
N_0(\epsilon,\delta) \,\geq\,  \left(\frac{2}{{L}}\right)^{\frac{1}{\alpha}}  \left(\frac{1}{\epsilon} \right)^{\frac{\mu}{\alpha}} \log\left(\frac{1}{\delta} \right)^{\frac{1}{\alpha}} \label{eq_sample-subexp-2}
\end{align}
by definition, we find for all $n\geq N_0(\epsilon,\delta) $ that
% it can be seen from~\cref{proof_subexp-equivalence-eq1} that $N_0(\epsilon)$ can be chosen to scale in $\epsilon$ as $\frac{1}{}$
\begin{align}
f(\epsilon,n) \, \overset{\textnormal{\Cshref{eq_sample-subexp-1}}}{\leq} \, 2^{-n^\alpha \tilde{L}} \, = \,  2^{-n^\alpha \epsilon^\mu \frac{L}{2}} \, \leq \, 2^{-N_0(\epsilon, \delta)^\alpha \epsilon^\mu \frac{L}{2}} \,\overset{\textnormal{\Cshref{eq_sample-subexp-2}}}{\leq}\,\delta\,.
\end{align}
Finally, note that $s = 1$ for $\alpha = 1$ and hence, the sample complexity $N_0(\epsilon, \delta)$ defined in~\cref{eq_sample-subexp-1} is efficient in that case.

\medskip
\noindent
\underline{$\eqref{lem_subex-it2} \implies \eqref{lem_subex-it1}$}: 
By definition of a poly-logarithmic sample complexity, there exist constants $C\geq 0$ and $q ,\,s \in \N$ such that ${N}_0(\epsilon,\delta) \leq C\left(\tfrac{1}{\epsilon}\right)^q \log\left(\tfrac{1}{\delta}\right)^s$. Now, define $\delta_{n,\epsilon} := 2^{-(\frac{1}{C}\epsilon^{q} n)^{\frac{1}{s}}}$. For any $\epsilon \in (0,1)$ and $n\in \N $ with $n\geq C\left(\tfrac{1}{\epsilon}\right)^q  $, we have $\delta_{n,\epsilon} \leq \frac{1}{2} $. In that case, it holds that ${N}_0(\epsilon, {\delta}_{n,\epsilon}) \leq n$ and hence,
\begin{align} 
f(\epsilon, n)\, \leq\, \delta_{n,\epsilon} \,.
\end{align}
For the case where $n < C\left(\tfrac{1}{\epsilon}\right)^q  $, we exploit the fact that $f(\epsilon,n)$ is bounded from above. Formally, we have $f(\epsilon, n) \leq K \leq  K 2 \delta_{n,\epsilon}$, where the last step follows since $\delta_{n,\epsilon} \geq \frac{1}{2}$. Hence, for $K':= 2\max\{K,1\}$, we overall find
\begin{align}
f(\epsilon, n)\, \leq\, K'\delta_{n,\epsilon}  \,=\, K' 2^{-(\frac{1}{C}\epsilon^{q} n)^{\frac{1}{s}}}
\end{align}
for any $\epsilon \in (0,1)$ and $n\in \N$. By further setting $L = C^{-\frac{1}{s}},\, \mu = \frac{q}{s},\,\alpha = \frac{1}{s}$, the claim follows. Finally, note that $\alpha =1$ if $s=1$.
\end{proof}

\begin{example} \label{ex_sums-of-subexp-eps}
Let $g,\,h : (0,1)\times \mathbb{N} \to \mathbb{R} $ be two functions such that
\begin{align}
g(\epsilon,n) \,\leq\,K2^{-L\epsilon^\mu n^\alpha}\,,\quad h(\epsilon,n) \,\leq\,\bar{K}2^{-\bar{L}\epsilon^{\bar{\mu}} n^{\bar{\alpha}}}
\end{align}
for some constants $ K, \bar{K}, L, \bar{L} >0$, $\mu,\bar{\mu} \geq 0$ and $\alpha,\bar{\alpha} \in (0,1]$. Then, their sum satisfies
\begin{align}
f(\epsilon,n) \,:=\, g(\epsilon,n)+h(\epsilon,n) \,\leq\, \tilde{K}2^{-\tilde{L}\epsilon^{\tilde{\mu}} n^{\tilde{\alpha}}}\,
\end{align}
with $\tilde{K} := 2\max\{K, \bar{K}\}$, $\tilde{L} := \min\{L, \bar{L}\}$, $\tilde{\mu} := \max\{\mu,\bar{\mu} \}$ and $\tilde{\alpha} := \min\{\alpha, \bar{\alpha}\}$. Hence, one may apply~\cref{lem_sample-subexp} to the function $f(\epsilon,n)$.
\end{example}

\begin{lemma} \label{lem_sample-subexp-function-eps}
Consider a function  $g : (0,1)\times \mathbb{N} \to \mathbb{R} $ defined by
\begin{align}
g(\epsilon, n) \,=\, 2^{-n^\alpha L\epsilon^\mu +r_n }
\end{align}
for some constants $L>0$, $\mu \geq 0$, $\alpha \in (0,1]$, and $r_n = o(n^\alpha)$ any sublinear function in $n$ (i.e., such that $\lim_{n\to \infty}\frac{r_n}{n^\alpha} = 0$). Then, there exists a function ${N}_0 : (0,1) \to \mathbb{N} $ such that 
\begin{align}
g(\epsilon, n)\,\leq\, 2^{-n^\alpha \frac{L}{2}\epsilon^\mu} \quad \textrm{for all} \quad n \geq N_0(\epsilon)\,.
\end{align}
Furthermore, the scaling of $N_0(\epsilon)$ with respect to $\epsilon$ depends on the precise form of $r_n$ and is given by solving the inequality $\frac{r_n}{n^\alpha} \leq \frac{L}{2}\epsilon^\mu$ with respect to $n$.
\end{lemma}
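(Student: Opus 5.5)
The plan is to compare the exponent directly. We have
\begin{align}
g(\epsilon,n) \,=\, 2^{-n^\alpha\left(L\epsilon^\mu - \frac{r_n}{n^\alpha}\right)}\,.
\end{align}
The target bound $g(\epsilon,n)\le 2^{-n^\alpha\frac{L}{2}\epsilon^\mu}$ is therefore equivalent to $n^\alpha\big(L\epsilon^\mu - \tfrac{r_n}{n^\alpha}\big)\ge n^\alpha \tfrac{L}{2}\epsilon^\mu$. Since $n^\alpha>0$, this is in turn equivalent to
\begin{align}
\frac{r_n}{n^\alpha} \,\le\, \frac{L}{2}\epsilon^\mu\,.
\end{align}
This exhibits the inequality that determines the scaling of $N_0(\epsilon)$, as stated in the second part of the lemma.

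For fixed $\epsilon\in(0,1)$, the right-hand side $\frac{L}{2}\epsilon^\mu$ is a strictly positive constant, because $L>0$ and $\epsilon>0$. The hypothesis $\lim_{n\to\infty}\frac{r_n}{n^\alpha}=0$ gives, by the definition of the limit with tolerance $\frac{L}{2}\epsilon^\mu$, an index beyond which $\big|\frac{r_n}{n^\alpha}\big|\le \frac{L}{2}\epsilon^\mu$. In particular $\frac{r_n}{n^\alpha}\le\frac{L}{2}\epsilon^\mu$ for all such $n$. We then set
\begin{align}
N_0(\epsilon) \,:=\, \min\Big\{N\in\N \ \Big|\ \forall\, n\ge N:\ \tfrac{r_n}{n^\alpha}\le \tfrac{L}{2}\epsilon^\mu\Big\}\,.
\end{align}
This is well defined and finite by the previous step. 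It defines the desired function $N_0:(0,1)\to\N$, and the bound holds for all $n\ge N_0(\epsilon)$ by the equivalence in the first paragraph. The argument parallels the proof of \cref{ex_subexp-function}, with $\nu(n)=-r_n/n^\alpha$ and $L$ replaced by $L\epsilon^\mu$.

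For the scaling statement, note that $N_0(\epsilon)$ is by construction the threshold at which $\frac{r_n}{n^\alpha}$ falls below $\frac{L}{2}\epsilon^\mu$. Solving this inequality for $n$ therefore gives the $\epsilon$-dependence. For example, if $r_n\le c\,n^{\beta}$ with $\beta<\alpha$, the condition becomes $n^{\alpha-\beta}\ge \frac{2c}{L}\epsilon^{-\mu}$. One may then take $N_0(\epsilon)=\big\lceil(\tfrac{2c}{L})^{1/(\alpha-\beta)}\epsilon^{-\mu/(\alpha-\beta)}\big\rceil$, which is polynomial in $\frac{1}{\epsilon}$. This makes the result compatible with \cref{lem_sample-subexp}.

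There is no real obstacle. The only point needing care is that for general $r_n$ the sequence $r_n/n^\alpha$ need not be monotone. This is why $N_0(\epsilon)$ is defined as the tail threshold (the minimum over $N$ such that the inequality holds for all $n\ge N$) rather than as the first $n$ satisfying the inequality.
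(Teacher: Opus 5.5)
Your proposal is correct and follows the paper's proof: you use $\lim_{n\to\infty} r_n/n^\alpha = 0$ with tolerance $\tfrac{L}{2}\epsilon^\mu$ to obtain a threshold $N_0(\epsilon)$, and then compare the exponents. Defining $N_0(\epsilon)$ as the tail threshold, and giving the explicit example $r_n \le c\,n^\beta$ with $\beta<\alpha$, are small clarifications that the paper's proof leaves implicit.
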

\begin{proof}
Let $\epsilon \in (0,1)$. Since $r_n = o(n^\alpha)$ by assumption, there exists ${N_0}(\epsilon)\in \N $ such that $\frac{r_n}{n^\alpha} \leq \frac{L}{2}\epsilon^\mu$ for all $n \geq {N_0}(\epsilon)$. Note that solving this inequality with respect to $n$ would exhibit the scaling of the function $N_0(\epsilon)$ with respect to $\epsilon$. Then, for all $n \geq {N_0}(\epsilon)$, we find
\begin{align}
g(\epsilon, n) \,=\, 2^{-n^\alpha L\epsilon^\mu +r_n } \,\leq\,  2^{-n^\alpha L\epsilon^\mu + n^\alpha\frac{L}{2}\epsilon^\mu } \,=\,  2^{-n^\alpha \frac{L}{2}\epsilon^\mu }\,.
\end{align}
\end{proof}

\begin{example} \label{ex_subexp-function-eps}
Let $L,\alpha>0$, $c, \mu \geq 0 $, and $N_0 \in \N$. Consider a function $g : (0,1)\times \mathbb{N} \to \mathbb{R} $ with $|g(\epsilon,n)| \leq 2^{-n^\alpha\big(L\epsilon^\mu-c\frac{\log(n+1)}{n^\alpha}\big)}$ for all $n\geq N_0$. Then, there exists a function $\bar{N}_0 : (0,1) \to \mathbb{N} $ such that $\bar{N}_0(\epsilon) \leq C \big( \frac{1}{\epsilon}\big)^{q}$ with $C>0$, $q=\frac{1}{\alpha}(\lfloor \mu \rfloor  +1)$, and such that $|g(\epsilon,n)| \leq 2^{-n^\alpha\frac{L}{2}\epsilon^\mu}$ for all $n\geq \bar{N}_0(\epsilon)$.
\end{example}
\begin{proof}
Define the function \smash{$h(\epsilon, n):= 2^{-n^\alpha\big(L\epsilon^\mu-c\frac{\log(n+1)}{n^\alpha}\big)} $}. Since \smash{$\lim_{n\to\infty} c\,\frac{\log(n)}{n^\alpha} = 0$}, it follows directly from~\cref{lem_sample-subexp-function-eps} that there exists a function $N'_0(\epsilon)$ such that $h(\epsilon, n) \leq 2^{-n^\alpha\frac{L}{2}\epsilon^\mu}$ for all $n\geq N'_0(\epsilon)$. Now define the function $\bar{N_0}  : (0,1) \to \mathbb{R}  $ by setting $\bar{N_0}(\epsilon) := \max\{N_0, N'_0(\epsilon)\}$. Then, for any $\epsilon \in (0,1)$, we directly obtain
\begin{align}
|g(\epsilon,n)| \,\overset{n\geq N_0}{\leq}\, 2^{-n^\alpha\big(L\epsilon^\mu-c\frac{\log(n+1)}{n^\alpha}\big)} \,\overset{n\geq N'_0(\epsilon)}{\leq} \,2^{-n^\alpha\frac{L}{2}\epsilon^\mu}
\end{align}
for all $n\geq \bar{N}_0(\epsilon)$. To determine the precise form of the function $\bar{N}_0 (\epsilon)$, we need to solve the inequality
\begin{align} \label{eq_ex_subexp-function-eps-0}
c\,\frac{\log(n+1)}{n^\alpha} \,\overset{!}{\leq} \,\frac{L}{2}\epsilon^\mu \quad \implies \quad \frac{2c}{L}\left(\frac{1}{\epsilon} \right)^\mu \,\leq\, \frac{n^\alpha}{\log(n+1)}\,.
\end{align}
For $\mu =0$, the $\epsilon$-dependence disappears, and since $\lim_{n\to\infty} \frac{\log(n+1)}{n^\alpha} = 0$, there exists a constant $\bar{N}_0 \in \N$ such that~\cref{eq_ex_subexp-function-eps-0} is satisfied for all $n\geq \bar{N}_0$. Now let $\mu >0$. First, note that since $\log(x) \leq \frac{x^p}{p}$ for any $x\geq 1$ and $p\in(0,1)$, one has
\begin{align} \label{eq_ex_subexp-function-eps-1}
\frac{n^\alpha}{\log(n+1)} \,\geq\, \frac{n^\alpha p}{(n+1)^p} \,\geq\, \frac{n^\alpha p}{(2n)^p} \,=\,  \frac{p}{2^p}\,n^{\alpha-p}\,.
\end{align}
Therefore, it is sufficient to set $p := \alpha-  \frac{\mu}{q} \in (0,1)$ for $q =\frac{1}{\alpha}(\lfloor \mu \rfloor  +1) >\frac{\mu}{\alpha}$, $K:= \left(\frac{2^{p+1}c}{pL}\right)^{\frac{1}{\alpha-p}}$, and $N'_0(\epsilon) :=\big\lceil K\left(\frac{1}{\epsilon} \right)^q \big\rceil$. Indeed, for all $n \geq N'_0(\epsilon) $ we then find 
\begin{align}
 \frac{n^\alpha}{\log(n+1)} \,\overset{\textnormal{\Cshref{eq_ex_subexp-function-eps-1}}}{\geq}\,  \frac{p}{2^p}\,n^{\alpha-p} \,\geq\,  \frac{p}{2^p}\,{N'_0(\epsilon)}^{\alpha-p} \,\geq\, \frac{p}{2^p} \frac{2^{p+1}c}{pL}\left(\frac{1}{\epsilon} \right)^\mu \,=\,  \frac{2c}{L}\left(\frac{1}{\epsilon} \right)^\mu\,,
\end{align}
as desired. Furthermore, due to~\cref{lem_poly-1}, there exists a constant $\tilde{C}>0$ such that $N'_0(\epsilon) \leq \tilde{C} \big( \frac{1}{\epsilon}\big)^{q}$. Finally, since $N_0 \leq N_0 \big( \frac{1}{\epsilon}\big)^{q}$, we directly find that $\bar{N}_0(\epsilon) \leq C \big( \frac{1}{\epsilon}\big)^{q}$ with $C:=\max\{N_0, \tilde{C} \}$, which proves the claim.
\end{proof}

%---------------------------------------------------------------- State sequences
\subsection{Sequences of states and channels}
\begin{notation}[State sequences]
We denote a sequence of states $\big(\rho^{(n)}\big)_{n\in \N} $ with $\rho^{(n)} \in \St(\cH^{\ox n})$ for all $n \in \N$  by 
\begin{align}
{\pmb{\rho}} := \big(\rho^{(n)}\big)_{n\in \N} \,.
\end{align} 

If $\rho^{(n)}$ is permutation invariant (over the $n$ subsystems) for all $n \in \N$, then we say that $\pmb{\rho}$ is permutation invariant.
\end{notation}

\begin{example}[iid state sequences]
Let $\rho \in \St(\cH)$. The iid state sequence along $\rho$ is defined by 
\begin{align}
\pmb{\rho}^{\mathrm{iid}} :=\big(\rho^{\otimes n}\big)_{n\in \N} \,.
\end{align}
\end{example}

\begin{notation}[Equivalence relation for state sequences] \label{def_eq-rel}
Two state sequences $\pmb{\rho}$ and $\pmb{\tau}$ are said to be \emph{equivalent} if the function $g(n) := \tfrac{1}{2} \| \rho^{(n)} - \tau^{(n)} \|_1 $ is subexponentially decreasing (here, $\| \cdot \|_1$ denotes the trace norm). In this case, we write 
\begin{align}
\pmb{\rho} \simeq \pmb{\tau}\,.
\end{align}
\end{notation}

\begin{proposition} \label{prop_eq-rel}
\cref{def_eq-rel} defines an equivalence relation.
\end{proposition}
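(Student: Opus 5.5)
We verify the three defining properties of an equivalence relation directly from \cref{def_eq-rel}, using that the trace distance is a metric on $\St(\cH^{\otimes n})$ for every fixed $n$, together with the closure properties of subexponentially decreasing functions established in \cref{lem_sums-of-subexp,lem_bounded-subexp}.

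\emph{Reflexivity.} For any sequence $\pmb{\rho}$, the function $g(n)=\tfrac12\|\rho^{(n)}-\rho^{(n)}\|_1=0$ satisfies $|g(n)|\le 2^{-n^{\alpha}}$ for every $n\in\N$ and any choice of $\alpha\in(0,1)$, e.g.\ $\alpha=\tfrac12$ with $N_0=1$. Hence $\pmb{\rho}\simeq\pmb{\rho}$.

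\emph{Symmetry.} We have $\|\rho^{(n)}-\tau^{(n)}\|_1=\|\tau^{(n)}-\rho^{(n)}\|_1$ because the trace norm satisfies $\|{-X}\|_1=\|X\|_1$. The two functions defining $\pmb{\rho}\simeq\pmb{\tau}$ and $\pmb{\tau}\simeq\pmb{\rho}$ are therefore identical, so one is subexponentially decreasing exactly when the other is.

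\emph{Transitivity.} This is the only step requiring a short argument. Suppose $\pmb{\rho}\simeq\pmb{\tau}$ and $\pmb{\tau}\simeq\pmb{\omega}$, with witnessing functions $g(n)=\tfrac12\|\rho^{(n)}-\tau^{(n)}\|_1$ and $h(n)=\tfrac12\|\tau^{(n)}-\omega^{(n)}\|_1$. For each $n$, the triangle inequality for the trace norm gives
\begin{align}
0\,\le\,\tfrac12\|\rho^{(n)}-\omega^{(n)}\|_1\,\le\, g(n)+h(n)\,.
\end{align}
By \cref{lem_sums-of-subexp}, $f=g+h$ is subexponentially decreasing. Since $|\tfrac12\|\rho^{(n)}-\omega^{(n)}\|_1|\le|f(n)|$ for all $n$, \cref{lem_bounded-subexp} shows that $\tfrac12\|\rho^{(n)}-\omega^{(n)}\|_1$ is subexponentially decreasing as well, so $\pmb{\rho}\simeq\pmb{\omega}$.

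One could also argue directly: take exponents $\alpha,\alpha'$ and thresholds $N_0,N_0'$ for $g$ and $h$, and set $\bar\alpha=\min\{\alpha,\alpha'\}$. For large $n$ this yields $2\cdot 2^{-n^{\bar\alpha}}$, and this factor of $2$ is absorbed by slightly decreasing the exponent via \cref{def_subexp-equivalence}. No real obstacle is expected; the only care needed is this constant-absorption, which the cited lemmas already handle.
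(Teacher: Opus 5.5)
Your proof is correct and follows essentially the same route as the paper. The paper also gets reflexivity and symmetry directly from the properties of the trace norm, and proves transitivity by bounding $\tfrac12\|\rho^{(n)}-\omega^{(n)}\|_1$ via the triangle inequality and then applying \cref{lem_sums-of-subexp,lem_bounded-subexp}.
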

\begin{proof}
For any two state sequences $\pmb{\rho}$ and $\pmb{\tau}$, the properties $\pmb{\rho} \simeq \pmb{\rho}$ (reflexivity) and $\pmb{\rho} \simeq  \pmb{\tau} \iff   \pmb{\tau} \simeq \pmb{\rho}$ (symmetry) follow immediately from the properties of the trace norm. For the transitivity property, let $\pmb{\sigma}$ be a third state sequence such that $\pmb{\rho} \simeq \pmb{\sigma}$ and $\pmb{\sigma} \simeq \pmb{\tau}$. Then, due to the triangle inequality of the trace norm, we find that the function
\begin{align}
f(n) \,:=\, \tfrac{1}{2} \| \rho^{(n)} - \tau^{(n)} \|_1 \,\leq \, \tfrac{1}{2} \| \rho^{(n)} - \sigma^{(n)} \|_1 +  \tfrac{1}{2} \| \sigma^{(n)} - \tau^{(n)} \|_1
\end{align}
is bounded by a sum of two subexponentially decreasing functions, and by~\cref{lem_sums-of-subexp,lem_bounded-subexp}, it is itself subexponentially decreasing. Hence, $\pmb{\rho} \simeq  \pmb{\tau}$.
\end{proof} 

\begin{notation}[Channel sequences] \label{not_channel-seq}
We denote a sequence of CPTP maps \smash{$(\mathcal{P}_{n})_{n\in \N}$}, with $\mathcal{P}_{n}:  \St(\cH^{\ox n}_C) \to  \St(\cH^{\ox m_n}_D)$ for any sequence $(m_n)_{n\in\N}$ of integers $m_n \in \N $ such that $\lim_{n\to\infty} m_n = \infty$, by
\begin{align}
{\pmb{\mathcal{P}}} := \big( \cP_n\big)_{n\in\N}\,.
\end{align}
The action of a channel sequence $\pmb{\mathcal{P}}$ on an input state sequence \smash{$\pmb{\rho}= \big(\rho^{(n)}\big)_{n \in \N} $} produces a new output state sequence \smash{$\pmb{\sigma} =  \big(\sigma^{(m)}\big)_{m \in \N} $} containing the elements \smash{$ \sigma^{(m_n)}:= \cP_n(\rho^{(n)}) $}. More precisely, we fix the following convention: Write $I:=\{m_n:n\in\N\}$ and define the following input index representative,
\begin{align}\label{eq_channel-selector}
j(m) \,:=\, \begin{cases} \min\{n\in\N:m_n=m\},&m\in I,\\ \min\{n\in\N:m_n\ge m\},&m\notin I. \end{cases}
\end{align}
Both minima are well-defined because $m_n\to\infty$. For any $\pmb{\rho}=\big(\rho^{(n)}\big)_{n\in\N}$, we then set
\begin{align}\label{eq_channel-completed-action}
\sigma^{(m)} \,:=\, \tr_{m_{j(m)}-m} \Big[\cP_{j(m)}\big(\rho^{(j(m))}\big)\Big]\,,
\end{align}
for all $m\in \N$, where $\tr_0$ is the identity. Thus attained output sizes use their earliest exact representative, while missing sizes use the next larger attained size followed by a partial trace. %All selections are independent of the input state sequence.

%if there exists $m = m_n = m_{n'}$ for $n\neq n'$, we take the smallest number $\bar{n}$ such that $m = m_{\bar{n}}$ and define \smash{$\sigma^{(m)} := \sigma^{(m_{\bar{n}})} $}. On the other hand, for any $m \in \N$ such that $m_n > m > m_{n-1}$ for some $n\in \N$, we may write $m = m_n - k$ with $k \in \N$ and define $\sigma^{(m)} := \tr_k[\sigma^{(m_n)}]$ (where  the last $k$ subsystems are traced out).

Accordingly, we denote the action of a channel sequence $\pmb{\mathcal{P}}$ on an input state sequence $\pmb{\rho}$ with output sequence $\pmb{\sigma}$ by 
\begin{align}
\pmb{\mathcal{P}}(\pmb{\rho}) \,=\,\pmb{\sigma}\,.
\end{align}
For any set \smash{$\mathcal{A}$} of state sequences, we further define
\begin{align}
\pmb{\cP}(\cA) \,:=\, \Big\{  \pmb{\cP}(\pmb{\tau}) \ \big| \  \pmb{\tau} \in \cA \Big\} \,.
\end{align}
\end{notation}

\begin{example}[Trace operations] \label{not_trace-ops}
For any $\alpha \in (0,1)$ and $k_n := \lceil  n^\alpha \rceil$, we denote the sequence $\big(\tr_{k_n}\big)_{n\in \N} $ of trace operations with $\tr_{k_n}:  \St(\cH^{\ox n}_C) \to  \St(\cH^{\ox n-k_n}_D)$ and where $k_n$ systems are traced out, by  
\begin{align}
\pmb{\tr}_\alpha :=   \big(\tr_{k_n}\big)_{n\in \N} \,.
\end{align}
\end{example}

\begin{notation}[Negligible ball] \label{not_negligible-ball}
Consider any set \smash{$\mathcal{A}$} of state sequences. We define the \emph{negligible ball} with respect to \smash{$\mathcal{A}$} as
\begin{align}
B( \mathcal{A})\,:=\, \Big\{  \pmb{\tau} \equiv \big(\tau^{(n)}\big)_{n\in \N} \ \Big| \ \exists\ \pmb{\rho} \in \mathcal{A} \ \textrm{ such that }\   \pmb{\tau} \simeq  \pmb{\rho}\Big\} \,.
\end{align}
For any state sequence $\pmb{\tau} $ such that \smash{$\pmb{\tau} \in B( \mathcal{A})$}, we may write \smash{$\pmb{\tau}  \insim  \mathcal{A}\,$}. Similarly, for any set \smash{$\mathcal{C}$} of state sequences such that \smash{$\mathcal{C} \subseteq B( \mathcal{A})$}, we may write \smash{$\mathcal{C} \subsetsim  \mathcal{A}\,$}.
\end{notation}

\begin{lemma}[Data processing] \label{lem_data-processing-neg-ball}
Consider any channel sequence $\pmb{\cP}$ with output system size $m_n = \poly(n)$, and any set $\mathcal{A}$ of state sequences. Then
\begin{align}
\pmb{\cP}\big(B( \mathcal{A})\big) \, \subseteq \,  B\big(\pmb{\cP}(\mathcal{A})\big)\,.
\end{align}
\end{lemma}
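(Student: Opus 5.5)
The plan is to unfold the definitions and reduce the claim to the monotonicity of the trace distance under CPTP maps, followed by a reindexing argument. The reindexing is the only real issue: the output sequence is indexed by $m$, while the closeness of the inputs is controlled in terms of $n$.

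Let $\pmb{\tau} \in B(\mathcal{A})$. By \cref{not_negligible-ball} there is $\pmb{\rho} \in \mathcal{A}$ with $\pmb{\tau} \simeq \pmb{\rho}$. Hence $g(n) := \tfrac12\|\tau^{(n)}-\rho^{(n)}\|_1$ is subexponentially decreasing: there are $N_0 \in \N$ and $\alpha\in(0,1)$ with $g(n)\le 2^{-n^\alpha}$ for all $n\ge N_0$. Since $\pmb{\cP}(\pmb{\rho}) \in \pmb{\cP}(\mathcal{A})$, it suffices to show $\pmb{\cP}(\pmb{\tau}) \simeq \pmb{\cP}(\pmb{\rho})$.

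Write $\pmb{\cP}(\pmb{\tau}) = (\sigma_\tau^{(m)})_m$ and $\pmb{\cP}(\pmb{\rho}) = (\sigma_\rho^{(m)})_m$. By the convention \cref{eq_channel-completed-action}, both are obtained from the same input index $j(m)$ through the same map $\tr_{m_{j(m)}-m}\circ \cP_{j(m)}$; this is the key point that the selector $j(m)$ does not depend on the input sequence. Data processing of the trace distance under this CPTP map then gives
\begin{align}
h(m) \,:=\, \tfrac12\big\|\sigma_\tau^{(m)}-\sigma_\rho^{(m)}\big\|_1 \,\le\, g\big(j(m)\big)\,.
\end{align}

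It remains to show that $m\mapsto g(j(m))$ is subexponentially decreasing in $m$; this is the main step. From \cref{eq_channel-selector}, in both cases $m_{j(m)} \ge m$. The hypothesis $m_n=\poly(n)$ together with \cref{lem_poly-1} gives constants $C\ge 1$ and $\beta\ge 1$ (enlarging $\beta$ if necessary) with $m_n \le C n^\beta$ for all $n$. Hence
\begin{align}
j(m) \,\ge\, (m/C)^{1/\beta}\,,
\end{align}
so $j(m)\to\infty$ and $j(m)\ge N_0$ for all $m\ge M_0 := C N_0^\beta$. For such $m$,
\begin{align}
h(m) \,\le\, 2^{-j(m)^\alpha} \,\le\, 2^{-C^{-\alpha/\beta}\, m^{\alpha/\beta}}\,.
\end{align}
The right-hand side has the form of \cref{def_subexp-equivalence}~\eqref{rem_subex-it2} with $\tilde L = C^{-\alpha/\beta}$ and exponent $\alpha/\beta\in(0,1)$. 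Passing to \eqref{rem_subex-it3} with a slightly smaller exponent, and invoking \cref{lem_bounded-subexp}, shows that $h$ is subexponentially decreasing.

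Therefore $\pmb{\cP}(\pmb{\tau})\simeq\pmb{\cP}(\pmb{\rho})\in\pmb{\cP}(\mathcal{A})$, that is, $\pmb{\cP}(\pmb{\tau})\in B(\pmb{\cP}(\mathcal{A}))$, which proves the claim.

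The one subtlety to check is the polynomial bound. Only the upper bound $m_n\le\poly(n)$ is used; it is exactly what converts decay in $n$ into subexponential decay in $m$. Without it, a super-polynomially growing output size could destroy the subexponential rate.
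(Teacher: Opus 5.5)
Your proof is correct and follows the same route as the paper: contractivity of the trace distance under the fixed map $\tr_{m_{j(m)}-m}\circ\cP_{j(m)}$, followed by reindexing from $n$ to $m$ via the polynomial bound $m_n\le Cn^\beta$. Your version is more careful than the paper's. You treat non-attained output sizes through the selector $j(m)$ and the bound $m_{j(m)}\ge m$, and you make the exponent $\alpha/\beta$ explicit. The paper, by contrast, only considers attained sizes via $\bar n_m=\min\{n: m_n=m\}$ and simply asserts that the reindexed function decays subexponentially.
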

\begin{proof}
Let $\pmb{\tau}\in B(\cA)$. By the definition of the negligible ball, there exists a state sequence $\pmb{\rho} \in \mathcal{A}$ such that $\pmb{\tau} \simeq \pmb{\rho}$, meaning that 
\begin{align}
\tfrac{1}{2}\big\| \tau^{(n)} - \rho^{(n)}\big\|_1 \,\leq\, f(n)
\end{align}
for some subexponentially decreasing function $f(n)$. Then it follows by the contractivity of the trace distance that 
\begin{align}
\tfrac{1}{2}\norm{\cP_n\big(\tau^{(n)}\big) - \cP_n\big(\rho^{(n)}\big)}_1 \,\leq\, f(n)\,.
\end{align}
Since $m_n =\poly(n)$ by assumption, there exists constants $C,\,q >0$ such that $m_n \leq Cn^q$ for all $n\in \N$ due to~\cref{lem_poly-1}. Consequently, the function $g(m):= f(\bar{n}_m)$, with $\bar{n}_m := \min\{n\in \N \,:\, m_n = m\}$, is also subexponentially decreasing in $m$. Hence, $\pmb{\cP}(\pmb{\tau}) \simeq \pmb{\cP}(\pmb{\rho}) \in \pmb{\cP}(\mathcal{A})$. Therefore, $\pmb{\cP}(\pmb{\tau}) \in  B\big(\pmb{\cP}(\mathcal{A})\big)$. Since $\pmb{\tau}\in B(\cA)$ was chosen arbitrarily, the claim follows.
\end{proof}

\subsection{Entropic quantities}
Recall that $\St(\cH):=\{X \geq 0: \tr[X]=1\}$ denotes the set of density matrices on a Hilbert space $\cH$ with dimension $d$, and that the fidelity is defined as $F(\rho,\sigma) := \|\sqrt{\rho}\sqrt{\sigma}\|_1^2$ for any $\rho,\,\sigma \in \St(\cH)$. For $\alpha \in [1/2,1) \cup (1,\infty)$ the \emph{sandwiched R\'enyi divergence}~\cite{MLDSFT13,WWY14} is given by
\begin{align} \label{eq_sandwhiched}
D_{\alpha}(\rho \| \sigma) :=\frac{1}{\alpha -1 } \log \tr\big[(\sigma^{\frac{1-\alpha}{2\alpha}} \rho \, \sigma^{\frac{1-\alpha}{2\alpha}})^{\alpha}\big] \, .
\end{align}
For $\alpha = \frac{1}{2}$ we have \smash{$D_{\frac{1}{2}}(\rho \| \sigma)=- \log F(\rho,\sigma)$}.
In the limits $\alpha \to 1$ and $\alpha \to \infty$, the sandwiched R\'enyi divergence converges to the relative entropy
\begin{align}
D(\rho \| \sigma) := \tr\big[ \rho (\log \rho -  \log \sigma )\big]\,,
\end{align}
and the max-relative entropy~\cite{renner_phd,datta09}
\begin{align}
D_{\max}(\rho \| \sigma) :=\inf\big\{\lambda \in \R: \rho \leq 2^\lambda \sigma \big\} \, ,
\end{align}
respectively.
For $\rho_{AB} \in \St(\cH_{AB})$, the R\'enyi divergence can be used to define a conditional R\'enyi entropy~\cite{marco_book}
\begin{align} \label{eq_def_cond_renyi_entropy}
H_{\alpha}(A|B)_\rho := - \min_{\sigma_B \in \St(\cH_B)} D_{\alpha}(\rho_{AB} \| \id_A \otimes \sigma_B) \, , 
\end{align}
which converges to the conditional entropy, 
\begin{align}
H(A|B)_{\rho} := H(AB)_\rho - H(B)_\rho \,,\qquad H(A)_\rho := -\tr\big[\rho_{A} \log \rho_{A} \big]\,,
\end{align}
for $\alpha \to 1$, and to the conditional min-entropy,
\begin{align} 
H_{\min}(A|B)_{\rho} := \sup_{\sigma_B \in \St(\cH_B)} \sup \big\{\lambda \in \R: \rho_{AB} \leq 2^{-\lambda}(\id_A \otimes \sigma_B) \big\}\,,
\end{align}
for $\alpha \to \infty$. For $\alpha = \frac{1}{2}$, we obtain the conditional max-entropy 
\begin{align}
H_{\max}(A|B)_{\rho} := \max_{\sigma_B \in \St(\cH_B)}  \log F(\rho_{AB},\id_A \otimes \sigma_B)   \, .
\end{align}
For any two states $\rho,\, \sigma \in \St(\cH)$, the purified distance~\cite{marco_book} is defined as \smash{$P(\rho,\sigma):=\sqrt{1-F(\rho,\sigma)}$}. The Fuchs-van de Graaf inequality~\cite{fuchs99} implies $P(\rho,\sigma) \geq T(\rho,\sigma)$, where $T$ denotes the trace distance.
Let $\St_{\leq}(\cH):=\{X \geq 0: \tr[X]\leq1\}$ denote the set of subnormalized density matrices on a Hilbert space $\cH$. For $\rho \in \St(\cH)$ and $\eps\in(0,1)$ define the $\eps$-ball around $\rho$ with respect to the purified distance by $\cB^P_{\eps}(\rho):=\{\rho' \in \St_{\leq}(\cH): P(\rho,\rho') \leq \eps \}$.\footnote{For subnormalized states $\rho,\,\sigma \in \St_{\leq}(\cH)$, the purified distance is defined via the generalized fidelity, $P(\rho,\sigma) := \sqrt{1-F_{*}(\rho,\sigma)}$, where $F_{*}(\rho,\sigma) := \left(\| \sqrt{\rho} \sqrt{\sigma} \|_1 + \sqrt{(1-\tr\,\rho)(1-\tr\,\sigma)}\right)^2 $ (for instance, see~\cite{marco_book}).}
We then define a smooth variant of the min- and max-entropy by
\begin{align}
H^\eps_{\min}(A|B)_{\rho} :=\max_{\rho' \in \cB^P_{\eps}(\rho)} H_{\min}(A|B)_{\rho'} \qquad \textnormal{and} \qquad  H^\eps_{\max}(A|B)_{\rho} :=\min_{\rho' \in \cB^P_{\eps}(\rho)} H_{\max}(A|B)_{\rho'} \, .
\end{align}
% Similarly, we define the smooth max-relative entropy
% \begin{align}
%  D_{\max}^\eps(\rho \| \sigma ) := \min_{\rho' \in \cB_{\eps}(\rho)}  D_{\max}(\rho' \| \sigma ) \, .
% \end{align}

\subsection{Almost-iid states}
In this section we shortly recap the notion almost-iid states~\cite{renner_phd,MSR2026} and mention some of their properties that will be useful later. Let $\St(\cH)$ denote the set of density matrices on a Hilbert space $\cH$ whose dimension is denoted by $d$, and let $S_n$ be the symmetric group and $U_\pi$ denote the unitary representation of $\pi \in S_n$ that acts on the tensor product space $\cH^{\ox n}$ by permuting the factors. The symmetric subspace is given by $\mathrm{Sym}^n(\cH):= \mathrm{span} \{ \ket{\theta}^{\otimes n}: \ket{\theta} \in \cH \}$. Let $\cV^n_r(\cH, \ket{\theta}):=\{U_\pi(\ket{\theta}^{\otimes n-r} \otimes \ket{\Omega^{(r)}}): \pi \in S_n,\, \ket{\Omega^{(r)}} \in \cH^{\otimes r} \}$ for any $n,\,r\in \N_0$ with $r\leq n$.

\begin{definition}[Almost-iid states~\cite{MSR2026}] \label{def_almost_product_state_mixed}
Let $\cH_C$ be a Hilbert space, $\tau_C \in \St(\cH_C)$, and $n,r \in \N_0$ such that $r \leq n$. Then, $\rho_{C_1^n} \in \St(\cH_C^{\otimes n})$ is called a $\binom{n}{r}$-\emph{almost-iid state along} $\tau_C $ if there exists a purification $\ket{\theta}_{CE}$ of $\tau_C$ and an extension $\rho_{C_1^n E_1^n}$ of $\rho_{C_1^n}$ such that
\begin{enumerate}[(i)]
\item $\rho_{C_1^n E_1^n}$ is invariant under simultaneous permutations $(C_i,E_i) \leftrightarrow (C_j,E_j)$;\label{it_first_def}
\item $\supp(\rho_{C_1^n E_1^n}) \subseteq \,  \mathrm{span}\,\cV^n_r(\cH_{CE},\ket{\theta}_{CE})$. \label{it_second_def}
\end{enumerate}
The set of $\binom{n}{r}$-almost-iid states along $\tau_C$ is denoted by $\St^n_r(\cH_C, \tau_C)$.
\end{definition}
\noindent If a state $\rho_{C_1^n}$ satisfies~\eqref{it_second_def} but not necessarily~\eqref{it_first_def}, we call it a $\binom{n}{r}$-\emph{generalized-almost-iid-state along} $\tau_C$. We then write  $\rho_{C_1^n} \in \bar \St^n_r(\cH_C, \tau_C)$. Trivially, we have $\St^n_r(\cH_C, \tau_C) \subseteq \bar \St^n_r(\cH_C, \tau_C)$.

\begin{remark}[Properties of almost-iid states~\cite{MSR2026}] \label{rmk_properties_almost_iid_states}
An almost-iid state $\rho_{C_1^n} \in \St^n_r(\cH_C, \tau_C)$, according to~\cref{def_almost_product_state_mixed}, satisfies the following properties:
\begin{enumerate}[(a)]
\item For $\emph{any}$ purification $\ket{\theta}_{CE}$ of $\tau_C$ there exists an extension $\rho_{C_1^n E_1^n}$ of $\rho_{C_1^n}$ that satisfies~\eqref{it_first_def} and~\eqref{it_second_def}.
%\item In case that $\sigma$ and $\rho$ are pure,~\cref{def_almost_product_state_mixed} reduces to~\cref{def_almost_product_state} since the extension can be chosen to be trivial.
\item There exists an orthonormal basis $\{\ket{\Psi_t}\}_{t \in \cT}$ of $\mathrm{span}\,\cV^n_r(\cH_{CE},\ket{\theta}_{CE})$ with vectors $\ket{\Psi_t} \in \cV^n_r(\cH_{CE},\ket{\theta}_{CE})$ and with
\begin{align} \label{eq_sizeT}
    |\cT| \leq \binom{n}{r}\,d_{CE}^{\,r} \leq 2^{n h(r/n)}\, d_{CE}^{\,r},
\end{align}
where $h(x):=-x\log x - (1-x)\log(1-x)$ for $x \in [0,1]$ is the binary entropy function. With respect to this basis, condition~\eqref{it_second_def} can then be rewritten  as
\begin{align} \label{eq_almost-iid-form}
\rho_{C_1^n E_1^n} = \sum_{i,j \in \cT} \beta_{i,j} \ket{\Psi_i} \bra{\Psi_j} \, ,
\end{align}
for coefficients $\beta_{i,j} \in \C$ with $\beta_{i,i} \in [0,1]$ and $\sum_{i \in \cT} \beta_{i,i}=1$.
\item \smash{$\rho_{C_1^n}$} is permutation-invariant.
\item  $\rho_{C_1^{n+m}} \in \St^{n+m}_r(\cH_C, \tau_C)$  implies $\rho_{C_1^n} \in \St^n_r(\cH_C, \tau_C)$. 
\item For any \smash{$\rho_{C_1^n} \in \bar \St^n_r(\cH_C, \tau_C)$} and \smash{$\tilde\rho_{C_1^m} \in \bar \St^m_{\tilde{r}}(\cH_C, \tau_C)$},  we have $\rho_{C_1^n} \otimes \tilde\rho_{C_1^m}  \in \bar \St^{n+m}_{r+\tilde r}(\cH_C, \tau_C)$. 
\item  The set of almost-iid states is convex and compact.
\end{enumerate}
\end{remark}
In the analysis of almost-iid states, the representation in~\cref{eq_almost-iid-form} is particularly useful because it exploits the product structure of the space $\cV^n_r(\cH_{CE},\ket{\theta}_{CE})$. Nonetheless, it is sometimes preferable to pass to a purified description, i.e., to work with a purification of~\cref{eq_almost-iid-form}. The following lemma establishes that the underlying structure of $\cV^n_r(\cH_{CE},\ket{\theta}_{CE})$ persists even in this purified formulation.

\begin{lemma}[Purification in a preferred basis~\cite{MSR2026}] \label{lem_purif_almost_iid}
%Let $\rho_{C_1^n} \in \St(\cH_C^{\otimes n}, \tau_C^{\otimes n-r})$ be a $\binom{n}{r}$-\emph{almost-iid state} in $\tau_C $. 
Let $\ket{\Theta}_{AH} \in \cH_{AH}$ for some composite Hilbert space $\cH_{AH}$, and let $\cV_A \subseteq \cH_A$ be any subset of vectors such that $\,\mathrm{span}\,\cV_A$ admits an orthonormal basis $\big\{ \ket{\Psi_j}_A\big\}_{j \in \cT}$ with vectors $\ket{\Psi_j}_A \in \cV_A \ \  \forall\,j \in \cT$. 
Then, the following are equivalent:
\begin{enumerate}[(i)]
\item $\supp\big(\tr_H\big[\proj{\Theta}_{AH}\big]\big) \,\subseteq\, \mathrm{span}\, \cV_A$ \label{it_1}
\item $\ket{\Theta}_{AH} =\sum\limits_{j\in \cT} \alpha_j \ket{\Psi_j}_A \ket{\tilde{h}_j}_H$ for coefficients $ \alpha_j  \in \C$ and normalized vectors $\ket{\tilde{h}_j}_H \in \cH_H$ for all $j\in \cT$. \label{it_2}
\end{enumerate}
In addition, if $\ket{\Theta}_{AH}$ is normalized and satisfies $(ii)$, we have $1 = \braket{\Theta|\Theta}_{AH} = \sum_{j\in\cT} |\alpha_j|^2 $.
\end{lemma}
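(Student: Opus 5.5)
Both implications are linear algebra, and I would prove them in the order (ii)$\Rightarrow$(i), then (i)$\Rightarrow$(ii), then the normalization identity.

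\textbf{From (ii) to (i).} Assume $\ket{\Theta}_{AH}=\sum_j \alpha_j \ket{\Psi_j}_A\ket{\tilde h_j}_H$. Then
\[
\tr_H\proj{\Theta}_{AH}=\sum_{i,j}\alpha_i\bar\alpha_j\braket{\tilde h_j|\tilde h_i}\,\ket{\Psi_i}\bra{\Psi_j}.
\]
Every term has range inside $\mathrm{span}\{\ket{\Psi_i}\}_{i\in\cT}=\mathrm{span}\,\cV_A$. Since the support of a positive operator equals its range, (i) follows.

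\textbf{From (i) to (ii).} Let $\Pi$ be the orthogonal projector onto $\mathrm{span}\,\cV_A$, so $\Pi=\sum_{j\in\cT}\proj{\Psi_j}$ by orthonormality. Condition (i) gives $\tr[(\id-\Pi)\rho_A]=0$ with $\rho_A:=\tr_H\proj{\Theta}_{AH}$. This equals $\|((\id-\Pi)\otimes\id_H)\ket{\Theta}\|^2$, hence $\ket{\Theta}=(\Pi\otimes\id_H)\ket{\Theta}$.

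Expanding, $\ket{\Theta}=\sum_j \ket{\Psi_j}_A\ket{h_j}_H$ with the unnormalized vectors $\ket{h_j}:=(\bra{\Psi_j}\otimes\id_H)\ket{\Theta}$. Set $\alpha_j:=\|h_j\|$. If $\alpha_j\neq0$, take $\ket{\tilde h_j}:=\ket{h_j}/\alpha_j$. If $h_j=0$, take $\alpha_j=0$ and let $\ket{\tilde h_j}$ be an arbitrary normalized vector. This gives the form in (ii).

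\textbf{Normalization.} Suppose $\ket{\Theta}$ is normalized and written as in (ii). Orthonormality of the $\ket{\Psi_j}$ makes the cross terms vanish:
\[
\braket{\Theta|\Theta}=\sum_{i,j}\bar\alpha_i\alpha_j\braket{\Psi_i|\Psi_j}\braket{\tilde h_i|\tilde h_j}=\sum_j|\alpha_j|^2\braket{\tilde h_j|\tilde h_j}=\sum_j|\alpha_j|^2 .
\]
This identity uses only orthonormality of the $\ket{\Psi_j}$, not orthogonality of the $\ket{\tilde h_j}$, so it holds for every representation of the form (ii) and not only the one constructed above.

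\textbf{Main obstacle.} The only step needing care is the passage from the support condition to $(\Pi\otimes\id)\ket{\Theta}=\ket{\Theta}$. It rests on two facts: the support of the reduced state is the smallest subspace carrying all of $\ket{\Theta}$'s $A$-component, and a vanishing expectation of the positive operator $\id-\Pi$ forces the corresponding vector component to vanish. Note that membership of the $\ket{\Psi_j}$ in $\cV_A$ plays no role in the argument beyond their spanning $\mathrm{span}\,\cV_A$.
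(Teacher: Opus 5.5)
Your proof is correct and complete. The computation of $\tr_H\proj{\Theta}_{AH}$ settles (ii)$\Rightarrow$(i). The identity $\tr[(\id-\Pi)\rho_A]=\|((\id-\Pi)\otimes\id_H)\ket{\Theta}_{AH}\|_2^2$ settles (i)$\Rightarrow$(ii). As you note, the normalization identity needs only orthonormality of the $\ket{\Psi_j}_A$ and normalization of the $\ket{\tilde h_j}_H$.

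The paper gives no proof of this lemma; it imports it from \cite{MSR2026}, so there is no in-text argument to compare yours against. A natural alternative for (i)$\Rightarrow$(ii) uses a Schmidt decomposition of $\ket{\Theta}_{AH}$. The Schmidt vectors on $A$ lie in the support of $\tr_H\proj{\Theta}_{AH}$, hence in $\mathrm{span}\,\cV_A$. Expanding them in $\{\ket{\Psi_j}_A\}$ and regrouping gives the same vectors as your construction, namely $\alpha_j\ket{\tilde h_j}_H=(\bra{\Psi_j}_A\otimes\id_H)\ket{\Theta}_{AH}$. Your projector argument reaches these vectors directly, without choosing a Schmidt basis. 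It also makes clear that these unnormalized vectors are uniquely determined once the orthonormal basis $\{\ket{\Psi_j}_A\}$ is fixed.
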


\begin{notation}[Set of power-law sublinear almost-iid state sequences] \label{not_subpoly-almost-iid}
For a (single-copy) state $\rho \in \St(\cH)$, we define the set of \emph{power-law sublinear almost-iid state sequences along $\rho$} as
\begin{align}
\mathcal{A}(\rho) \,:=\, \Big\{\pmb{\rho} \ \big| \  \exists \, L\geq 0,\, \kappa \in [0,1)\ \forall \, n\in \N\,:\  \rho^{(n)} \textrm{ is an $\binom{n}{\lceil Ln^\kappa \rceil}$-almost-iid state along } \rho \,\Big\}\,.
\end{align}
By convention, for (the finitely many) $n\in \N$ such that $\lceil Ln^\kappa \rceil \geq n$, the state $\rho^{(n)} \in \St(\cH^{\otimes n})$ may be chosen arbitrarily.
\end{notation}

\begin{lemma}[Stability of the set of power-law sublinear almost-iid state sequences]  \label{lem_subpoly-almost-iid}
Let $\rho \in \St(\cH)$ and $\pmb{\rho} \in \mathcal{A}(\rho)$. Then, for any sequence $(k_n)_{n\in \N}$ such that $k_n \leq cn$ eventually for some $c\in (0,1)$, there exist constants $L\geq 0$ and $\kappa \in [0,1)$ such that for all $n\in \N$ (with $n>k_n$),
\begin{align}
 \tr_{k_n}\big[\rho^{(n)}\big] \textrm{  is $\binom{n-k_n}{\big\lceil L(n-k_n)^\kappa \big\rceil}$-almost-iid along $\rho$}\,.
\end{align}
In particular, the above holds for sequences $(k_n)_{n\in \N}$ that are sublinear in $n$, thereby implying that $\pmb{\tr}_\alpha\big( \pmb{\rho}\big)  \in \mathcal{A}(\rho)$ and $\bar{\pmb{\rho}}:=\big( \tr_{k_n}\big[\rho^{(n+k_n)}\big] \big)_{n\in\N} \in \mathcal{A}(\rho)$.
\end{lemma}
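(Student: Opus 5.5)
The plan is to combine the monotonicity of almost-iid sets under partial traces (\cref{rmk_properties_almost_iid_states}(d)) with elementary exponent bookkeeping. Since $\pmb{\rho}\in\mathcal{A}(\rho)$, there are constants $L_0\geq 0$ and $\kappa_0\in[0,1)$ with $\rho^{(n)}\in \St^n_{r_n}(\cH,\rho)$ for $r_n:=\lceil L_0 n^{\kappa_0}\rceil$, for all $n$ with $r_n<n$. \cref{rmk_properties_almost_iid_states}(d), applied with $n$ replaced by $n-k_n$ and $m=k_n$, then gives
\begin{align}
\tr_{k_n}\big[\rho^{(n)}\big]\in \St^{n-k_n}_{r_n}(\cH,\rho)\,.
\end{align}
Partial traces over any $k_n$ systems give the same state, since $\rho^{(n)}$ is permutation invariant by (c). The number of defects is therefore unchanged, so it remains to re-express $r_n$ as a power law in the new length $n-k_n$.

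By assumption there are $c\in(0,1)$ and $N_1$ with $k_n\leq cn$ for $n\geq N_1$. For these $n$ we have $n-k_n\geq(1-c)n$, and hence
\begin{align}
r_n\leq L_0 n^{\kappa_0}+1\leq L_0(1-c)^{-\kappa_0}(n-k_n)^{\kappa_0}+1\,.
\end{align}
Set $\kappa:=\kappa_0$ if $\kappa_0>0$; if $\kappa_0=0$, take any $\kappa\in(0,1)$ so that the additive constant can be absorbed. Choose $L\geq L_0(1-c)^{-\kappa_0}+1$. Then $r_n\leq \lceil L(n-k_n)^{\kappa}\rceil$ whenever $n-k_n\geq 1$.

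Next I use monotonicity in $r$: $\mathrm{span}\,\cV^n_r\subseteq \mathrm{span}\,\cV^n_{r'}$ for $r\leq r'$, because $\ket{\theta}^{\otimes(r'-r)}$ can be absorbed into $\ket{\Omega^{(r')}}$. This gives $\St^{m}_r\subseteq \St^{m}_{r'}$, and the claim follows for all $n\geq N_1$.

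The finitely many $n<N_1$ with $n>k_n$ are the only real obstacle. They are handled by enlarging $L$ so that $\lceil L(n-k_n)^\kappa\rceil\geq n-k_n$ for each of them, which is possible since there are only finitely many. An $\binom{m}{m}$-almost-iid state is just any state with a symmetric extension, since $\cV^m_m$ spans the whole space. I will check that any permutation-invariant state admits such an extension, namely the symmetrized purification. Alternatively, I will invoke the convention of \cref{not_subpoly-almost-iid} that these finitely many terms are arbitrary.

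For the ``in particular'' statements, sublinear $k_n$ (for instance $\lceil n^\alpha\rceil$) satisfies $k_n\leq n/2$ eventually. This yields $\pmb{\tr}_\alpha(\pmb{\rho})\in\mathcal{A}(\rho)$, with the output indexing of \cref{not_channel-seq} matching because $n-k_n$ is nondecreasing. For $\bar{\pmb{\rho}}$ I apply the same argument at input length $n+k_n$, giving $r_{n+k_n}\leq L_0(2n)^{\kappa_0}+1$, which is again a power law in $n$.
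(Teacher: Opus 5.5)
Your proof is correct and follows the paper's route: use the partial-trace property (d) of almost-iid states to keep the defect count $\lceil L_0 n^{\kappa_0}\rceil$, then re-express it in the new length via $n^{\kappa_0}\le(1-c)^{-\kappa_0}(n-k_n)^{\kappa_0}$. The paper is slightly more economical in two places. It uses monotonicity of the ceiling directly, so the $+1$ and the case split on $\kappa_0=0$ are unnecessary. It also absorbs the finitely many small $n$ into the single constant $\bar L:=\sup_{n>k_n}L\,(n/(n-k_n))^{\kappa}$ instead of making those terms vacuous.
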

\begin{proof}
By~\cref{not_subpoly-almost-iid}, there exist constants $ L\geq 0,\, \kappa \in [0,1)$ such that $ \rho^{(n)}$ is an $\binom{n}{\lceil Ln^\kappa \rceil}$-almost-iid state along $\rho$ for all $n\in \N$. Therefore, $ \tr_{k_n}\big[\rho^{(n)}\big] $ is $\binom{n-k_n}{\lceil Ln^\kappa \rceil}$-almost-iid along $\rho$ due to~\cref{rmk_properties_almost_iid_states}. It remains to be shown that $\lceil Ln^\kappa \rceil \leq \lceil \bar{L} (n-k_n)^\kappa \rceil$ for some other constant $\bar{L}$. Indeed, we can choose
\begin{align}
\bar{L} \,:= \, \max_{n \in \N \,:\, n> k_n} \,L \left( \frac{n}{n-k_n}\right)^\kappa \,< \,\infty\,,
\end{align}
and the value $\bar{L}$ is bounded because $k_n \leq cn$ eventually for some $c\in (0,1)$ and thus, $ \frac{n}{n-k_n} \leq  \frac{1}{1-c} <\infty$ as $n\to \infty$. Furthermore, note that the state $ \tr_{k_n}[\rho^{(n)}] $ is only non-trivial when $n>k_n$. Then, we find
\begin{align}
\lceil Ln^\kappa \rceil \,=\, \left\lceil L  \left( \frac{n}{n-k_n}\right)^\kappa (n-k_n)^\kappa \right\rceil \,\leq\,  \left\lceil \bar{L}  (n-k_n)^\kappa \right\rceil\,,
\end{align}
for all $n\in \N$ such that $n>k_n$, which also directly shows that $\pmb{\tr}_\alpha\big( \pmb{\rho}\big)  \in \mathcal{A}(\rho)$. In order to treat the sequence $\bar{\pmb{\rho}}$, for each $n\in \N$ we set $m_n:= n+k_n$. Since $k_n \leq cn \leq cm_n$ for $c\in (0,1)$ eventually, it follows that $\bar{\pmb{\rho}}  \in \mathcal{A}(\rho)$ by the same argument as above.
\end{proof}

%--------------------------- STATISTICS lemma
%\begin{proposition}[Statistics of almost-iid states~\cite{MSR2026}] \label{prop_statistics}
%Let $n,r \in \N$ such that $r\leq \tfrac{1}{2}n$, $\sigma \in \St(\cH)$, and $\rho^{(n)} \in \St(\cH^{\otimes n}, \sigma^{\otimes n-r})$. Let $\cM = \{M_x\}_{x\in\cX}$ be a POVM on $\cH$, and $P_X(x) = \tr[M_x \sigma]$ for all $x\in \cX$. Then, for any $\eps > 0$, we have
%\begin{align}
%\underset{{\bf{x}}}{\PP}\big[ \norm{\lambda_{\bf{x}} - P_X }_1 >  f(\eps,r,n)\big] \leq \eps \,,
%\end{align}
%where the probability is taken over the outcomes ${\bf{x}} = (x_1,\dots,x_n)$ of the product measurement $\cM^{\otimes n}$ applied to $\rho^{(n)}$, and where 
%\begin{align}
%    f(\eps,r,n) = 2\sqrt{\frac{\log{\big(\frac{1}{\eps}\big)}}{n} + \frac{|\cX|}{n}\log\Big( \frac{n}{2}+1\Big) + h\Big(\frac{r}{n}\Big) + \frac{2r}{n}\log(d) } + \frac{2r}{n}\,,
%\end{align}
%with $h(x)=-x\log x - (1-x)\log(1-x)$, and $ d = \dim{\cH}$.
%Furthermore, if $r=o(n)$, then $\lim_{n\to \infty} f(\eps,r,n) = 0$.
%\end{proposition}

%--------------------------------------------------------------------------------------------------------------------------------------------
\newpage
\section{Operational state sequences}
Consider a source that generates $n$ copies of some quantum state $\rho \in \St(\cH)$.
%Let $\rho \in \St(\cH_C)$ and $\sigma\in \St(\cH_D)$ be density operators. For a given set of (free) operations $\F$, such as the set of $\LOCC$ operations, one may consider the conversion of $n$ copies of $\rho$ into $m$ copies of $\sigma$, i.e.,
%\begin{align}
%    \rho^{(n)} \overset{\F}{\longrightarrow} \sigma^{(m)}\,,
%\end{align}
%such that the asymptotic rate $R \equiv m/n$ is maximal. 
%Here, $\rho^{(n)}\in \St(\cH^{\ox n}_C)$ and $\sigma^{(m)} \in \St(\cH^{\ox m}_D)$ represent the $n$ and $m$ copies of $\rho$ and $\sigma$, respectively. 
Typically, multiple copies of any state are assumed to be described by a tensor product, i.e., $\rho^{(n)} = \rho^{\ox n}$. However, from an operational point of view, where an experimentalist has access to each individual copy separately, this assumption is a priori not justified, as was motivated in the introduction. In the following, we want to relax this assumption by considering a whole \emph{set} of states $\rho^{(n)}$ that are operationally indistinguishable from a tensor product state $\rho^{\ox n}$, and in that sense look like many copies of $\rho$ to the experimentalist (see~\cref{fig:intro-iid-source}). This is relevant to many tasks in quantum information theory that involve processing entire sequences of quantum systems.
% \GM{Add other examples in quantum information where many copies of a state are considered.}
%\begin{figure}[h]
%    \centering
%    \includegraphics[width=0.35\columnwidth]{fig1f.pdf} % Or .eps, .png
%    \caption{Consider a source that generates $n$ copies of some quantum state $\sigma$. A common assumption is that the global state \smash{$\rho_{A^n_1}$}, describing these copies jointly, is given by a tensor power (iid) state $\sigma^{\otimes n}$. Fundamentally, however, such a perfect iid structure cannot be certified in practice.}
%    \label{fig:set-up}
%\end{figure}

Formally, we capture this set of states as follows.
%by the following definition. 
Let $\rho \in \St(\cH)$ be a density operator on a finite-dimensional Hilbert space $\cH$. We consider a setup where an experimentalist is given a source that is supposed to produce \emph{multiple copies} of $\rho$. In order for the experimentalist to verify that the source is actually producing the state $\rho$, they can perform a tomography procedure in order to estimate the single-copy state. 

\subsection{Tomography protocols \& operational sets}
Here, we allow any general tomography protocol \texttt{tom($\cdot$)} satisfying the following natural property:
\begin{definition}[Tomography protocols] \label{def_tom12}
%\phantom{abc}
For any fixed (single-copy) Hilbert space $\cH$, a tomography protocol \texttt{tom($\cdot$)} should be represented by a $\mathrm{POVM}$ (including eventual classical post-processing steps) with input space $\St(\cH^{\ox n})$ for any $n \in \N$,
%\GM{Should we additionally require a product structure for the map to make sure that it only acts "locally" (i.e., on the individual copies separately)? Maybe this is important in Definition 1.1 below, maybe not.}
and should output a (probabilistic) classical description, say $``\hat{\omega}"$, of the guessed (single-copy) state $\hat{\omega} \in \St(\cH)$. We require that:
\begin{enumerate}
% \item \label{def_tom12-1} 
\item %For any state $\omega \in \St(\cH_C)$, 
\label{def_tom12-2}
There exists an efficient sample complexity $\bar{N}_0$ (see~\cref{def_sample-complexity}) such that
\begin{align}
\forall\,(\epsilon,\delta) \in (0,1)\times (0,\tfrac{1}{2}]\ \ \forall\, n\geq \bar{N}_0(\epsilon,\delta) \ \ \forall\,\omega \in \St(\cH)\,:\ \Pr\big[\texttt{tom}\big(\omega^{\ox n}\big) \notin B_\epsilon(\omega)\big] \leq \delta\,. \label{eq_tom-def1} 
% \\
% &\red{\mathrm{where} }\quad N_0(\epsilon,\delta) \leq \poly\left(\tfrac{1}{\epsilon},\,\log\tfrac{1}{\delta}\right)\,. \label{eq_tom-def2}
\end{align} 
%\item Consider a sequence of states $\big(\rho^{(n)}\big)_{n\in \N}$ with $\rho^{(n)} \in \St(\cH^{\ox n}_C)$ for all $n\in \N$. Let $(k_n)_{n\in\N}$ be any (strictly monotonically increasing) sequence of integers with $k_n \leq n$, and define a reduced state sequence $\big(\bar{\rho}^{(k_n)}\big)_{k_n\in \N}$ by tracing out some of the systems such that $\bar{\rho}^{(k_n)}:=  \tr_{n-k_n}\big[\rho^{(n)}\big]$. Then, we require that
% \begin{align}
%      \Pr\big[\texttt{tom}\big(\bar{\rho}^{(k_n)}\big) \notin B_\epsilon(\rho)\big] \,=\, \Pr\big[\texttt{tom}\big(\rho^{(k_n)}\big) \notin B_\epsilon(\rho)\big]\,.    
% \end{align}
\end{enumerate}
\end{definition}
Intuitively, the relation in~\cref{def_tom12-2} is the defining property of any sensible tomography protocol in that it guesses the correct single-copy state in the case where the (multi-copy) input state has a tensor power structure (iid), while the requirement of an efficient sample complexity ensures that the number of copies needed for a given guessing accuracy (as specified by $\epsilon$ and $\delta$) does not grow too fast with increasing accuracy. For our work, we need to make a few additional assumptions about the tomography protocol to control its behavior when applied to more general input states $\rho^{(n)}$, as explained below.

For a given tomography protocol \texttt{tom($\cdot$)}, and for any $\rho \in \St(\cH)$, we define the following operational set of (multi-copy) state sequences \smash{$\pmb{\rho} =  (\rho^{(n)})_{n\in \N}$}, where each state \smash{$\rho^{(n)} \in \St(\cH^{\ox n})$} represents $n$ copies of $\rho$. Intuitively, we consider all possible state sequences $\pmb{\rho}$ such that by performing any (sensible) tomography procedure on the elements \smash{$\rho^{(n)}$}, for large $n$, the experimentalist will obtain a classical description $``\hat{\rho}"$ such that $\hat{\rho} \approx_\epsilon \rho$ with high probability. Recall~\cref{def_sample-complexity} for the definition of sample complexities.

\begin{definition}[Operational set of multi-copy states] \label{def_OP-set} Let \texttt{tom($\cdot$)} denote a tomography protocol and $\rho \in \St(\cH)$. For any sample complexity $N_0$, we define the set of state sequences
\begin{align*}
    \OP^{(N_0)}(\texttt{tom($\cdot$)};\rho)\,:=\,\bigg\{ \pmb{\rho} \equiv \big(\rho^{(n)}\big)_{n\in \N} \ \Big| \    \forall\,\epsilon,\delta  \in (0,1)\times (0,\tfrac{1}{2}] \ \ \forall\, n\geq N_0(\epsilon,\delta)\,&:\ \\
    \Pr\big[\texttt{tom}\big(\rho^{(n)}\big) \notin B_\epsilon(\rho)\big] &\leq \delta \,\bigg\}\,.
    % \OP(\texttt{tom($\cdot$)};\rho)\,:=\,\bigg\{\big(\rho^{(n)}\big)_{n\in \N} \ \Big| \  \forall\,\epsilon,\delta>0\ \ \exists\, N_0 \in \N \ \ \forall\,n\geq &N_0\,:\ \Pr\big[\texttt{tom}\big(\rho^{(n)}\big) \notin B_\epsilon(\rho)\big] \leq \delta\,, \quad \   \\
    % &\,\mathrm{where} \quad N_0(\epsilon,\delta) \leq \poly\left(\tfrac{1}{\epsilon},\,\log\tfrac{1}{\delta}\right) \bigg\}\,, 
\end{align*}
The \emph{operational set} is then defined as the union over all poly-logarithmic sample complexities,
\begin{align*}
    \OP(\texttt{tom($\cdot$)};\rho)\,:=& \bigcup_{N_0 \textrm{ is poly-log.}}   \OP^{(N_0)}(\texttt{tom($\cdot$)};\rho)
    \\ =&\,\bigg\{\ \pmb{\rho} \ \, \Big| \ \exists\ \, \textrm{poly-log.~sample complexity } N_0 \ \  \textrm{s.t.}\quad\\ 
    &\phantom{\,\bigg\{\ } \ \,   \forall\,(\epsilon,\delta) \in (0,1)\times (0,\tfrac{1}{2}]
    \ \ \forall\, n\geq N_0(\epsilon,\delta)\,:\ 
    \Pr\big[\texttt{tom}\big(\rho^{(n)}\big) \notin B_\epsilon(\rho)\big] \leq \delta \,\bigg\} \,.
    % \OP(\texttt{tom($\cdot$)};\rho)\,:=\,\bigg\{\big(\rho^{(n)}\big)_{n\in \N} \ \Big| \  \forall\,\epsilon,\delta>0\ \ \exists\, N_0 \in \N \ \ \forall\,n\geq &N_0\,:\ \Pr\big[\texttt{tom}\big(\rho^{(n)}\big) \notin B_\epsilon(\rho)\big] \leq \delta\,, \quad \   \\
    % &\,\mathrm{where} \quad N_0(\epsilon,\delta) \leq \poly\left(\tfrac{1}{\epsilon},\,\log\tfrac{1}{\delta}\right) \bigg\}\,, 
\end{align*}
Analogously, we may define an \emph{efficient} version of the operational set as
\begin{align*}
    \OP^{\textrm{(eff.)}}(\texttt{tom($\cdot$)};\rho)\,:=& \bigcup_{\bar{N}_0 \textrm{ is efficient}}   \OP^{(\bar{N}_0)}(\texttt{tom($\cdot$)};\rho)
    \\ =&\,\bigg\{\ \pmb{\rho} \ \, \Big| \ \exists\ \, \textrm{efficient sample complexity } \bar{N}_0 \ \  \textrm{s.t.}\quad\\ 
    &\phantom{\,\bigg\{\   } \ \,   \forall\,(\epsilon,\delta) \in (0,1)\times (0,\tfrac{1}{2}]
    \ \ \forall\, n\geq \bar{N}_0(\epsilon,\delta)\,:\ 
    \Pr\big[\texttt{tom}\big(\rho^{(n)}\big) \notin B_\epsilon(\rho)\big] \leq \delta \,\bigg\} \,.
    % \OP(\texttt{tom($\cdot$)};\rho)\,:=\,\bigg\{\big(\rho^{(n)}\big)_{n\in \N} \ \Big| \  \forall\,\epsilon,\delta>0\ \ \exists\, N_0 \in \N \ \ \forall\,n\geq &N_0\,:\ \Pr\big[\texttt{tom}\big(\rho^{(n)}\big) \notin B_\epsilon(\rho)\big] \leq \delta\,, \quad \   \\
    % &\,\mathrm{where} \quad N_0(\epsilon,\delta) \leq \poly\left(\tfrac{1}{\epsilon},\,\log\tfrac{1}{\delta}\right) \bigg\}\,, 
\end{align*}
\end{definition}

Note that if, for instance, the protocol \texttt{tom($\cdot$)} is described by a tensor product measurement \smash{$\big(\cM^{\otimes n}\big)_{n\in\N}$}, then the iid state sequence \smash{$\pmb{\rho}^{\mathrm{iid}} :=\big(\rho^{\otimes n}\big)_{n\in \N} $} is contained in the efficient operational set~\cite{Hoeffding1963, Nielsen_Chuang_2010-chap8, ODonnell2016}, see also~\cref{app_item2}.

For any given tomography protocol $\texttt{tom($\cdot$)}$ and state $\rho \in \St(\cH)$, one may wish to characterize the multi-copy state sequences contained in the set $\OP(\texttt{tom($\cdot$)};\rho)$. In general, this can be a difficult task. However, a more precise characterization becomes possible (as shown in~\cref{thm_almost-iid-dF}) if we require the tomography protocol to satisfy two additional properties as stated in~\cref{def_tom34}. In the following, we define the symmetrization map by 
\begin{align} \label{eq_symmetrization-map}
    \mathcal{S}_n :X\mapsto \frac{1}{n!} \sum_{\pi\in S_n}U_\pi X U^\dagger_\pi\,,
\end{align}
for any $X \in \St(\cH^{\ox n})$, where $S_n$ denotes the symmetric group and $U_\pi$ is the unitary representation of $\pi \in S_n$ on the space $\cH^{\ox n}$.
\begin{definition}[robust sampling tomography protocols] \label{def_tom34}
Let $\texttt{tom($\cdot$)}$ denote a tomography protocol that satisfies~\cref{def_tom12}, and let % $\big(\rho^{(n)}\big)_{n\in \N} \in \OP(\texttt{tom($\cdot$)};\rho)$ for some 
$\rho \in \St(\cH)$. We say that $\texttt{tom($\cdot$)}$ is \emph{robust \& sampling} if it satisfies the following additional properties:
\begin{enumerate}
\setcounter{enumi}{1}
    %\item[3.] (\emph{permutation invariance}): $\big(\mathcal{S}_n(\rho^{(n)})\big)_{n\in \N} \in \OP(\texttt{tom($\cdot$)};\rho)\,$.
   \item   \label{def_tom34-3} (\emph{consistency under random sampling}): Let $\pmb{\rho} \in \OP(\texttt{tom($\cdot$)};\rho)$. Then, there exists a poly-logarithmic sample complexity $N_0$ such that for any sequence $(m_n)_{n\in\N}$ of non-negative integers $m_n \in \N_0 $, the reduced state sequence $\bar{\pmb{\rho}}$, defined by 
\begin{align}
    % \bar{\rho}^{(m_n)}:=  \tr_{n}\big[\mathcal{S}_{n+m_n}(\rho^{(n+m_n)})\big]\,,
     \bar{\rho}^{(n)}:=  \tr_{m_n}\big[\mathcal{S}_{n+m_n}(\rho^{(n+m_n)})\big]\,,
\end{align}
where $\tr_{m_n}$ denotes the partial trace over (any) $m_n$ subsystems, is contained in the operational set with sample complexity $N_0$, i.e.,
\begin{align}
    %\pmb{\rho} \in \OP(\texttt{tom($\cdot$)};\rho) \quad \implies \quad \bar{\pmb{\rho}} \in \OP(\texttt{tom($\cdot$)};\rho)\,.
     \bar{\pmb{\rho}} \in \OP^{(N_0)}(\texttt{tom($\cdot$)};\rho) \subseteq  \OP(\texttt{tom($\cdot$)};\rho)  \,.
\end{align}

% \red{monotonically increasing} sequence
    \item \label{def_tom34-4} (\emph{robustness under noise}): Let $(r_n)_{n\in\N}$ be any sequence of non-negative integers $r_n \in \N_0 $ such that $r_n \leq n$ and $r_n = o(n^\beta)$ for some $\beta \in (0,1)$. Then, there exists an efficient sample complexity $\bar{N}_0$ such that for any \smash{$\Omega^{(r_n)} \in \St(\cH^{\otimes r_n})$} and any permutation $\pi\in S_n$, the state sequence $\tilde{\pmb{\rho}}$, defined by 
\begin{align}
    % \bar{\rho}^{(m_n)}:=  \tr_{n}\big[\mathcal{S}_{n+m_n}(\rho^{(n+m_n)})\big]\,,
     \tilde{\rho}^{(n)}:= U_\pi\big(\rho^{\otimes n-r_n} \otimes  \Omega^{(r_n)} \big) U_\pi^\dagger\,,
\end{align}
is contained in the efficient operational set with sample complexity $\bar{N}_0$, i.e.,
\begin{align}
	  \tilde{\pmb{\rho}} \in   \OP^{(\bar{N}_0)}(\texttt{tom($\cdot$)};\rho) \subseteq \OP^{\textrm{(eff.)}}(\texttt{tom($\cdot$)};\rho)\,.
\end{align}
\end{enumerate}
\end{definition}
The \emph{consistency under random sampling} property captures a weak form of the requirement that an experimentalist has access to each individual copy of the quantum state separately, and that the tomography protocol does not rely on global properties of a multi-copy state. A stronger requirement, for instance, would be to impose some product structure on the protocol. 
In addition, note that by choosing $m_n = 0 \ \forall\,n\in \N$, the above condition further captures an \emph{asymptotic permutation invariance} of the protocol, formally
\begin{align} \label{def_tom-perm-inv}
	  \big({\rho}^{(n)}\big)_{n \in \N} \in \OP(\texttt{tom($\cdot$)};\rho) \quad \implies \quad \big(\mathcal{S}_n(\rho^{(n)})\big)_{n\in \N} \in \OP(\texttt{tom($\cdot$)};\rho)\,,
\end{align}
which is a property that arises naturally for tomography protocols that are based on random sampling.
%Note that a stronger requirement, for instance, would be to impose some product structure on the protocol. 
% The \emph{permutation invariance} is a property that arises naturally for tomography protocols that are based on random sampling. It is  necessary in order to define a notion of robustness under partial trace, and it will further be used for characterizing the operational transformation rates as defined below. The \emph{robustness under partial trace} property captures a weak form of the requirement that an experimentalist has access to each individual copy of the quantum state separately. (A stronger requirement, for instance, would be to impose some product structure on the protocol.) \GM{We should somehow show/argue that known tomography arguments satisfy this definitions.}
On the other hand, the $\emph{robustness under noise}$ assumption ensures that the tomography protocol keeps functioning even after the insertion of a few ``defects" (represented by $\Omega^{(r_n)}$) into a tensor power (iid) state. Furthermore, note that~\cref{def_tom34-4} represents a generalization of~\cref{def_tom12-2} for the case where $r_n=0$.\footnote{Actually, we remark that the sample complexity in~\cref{def_tom34-4} a priori may depend on the single-copy state $\rho$, while the sample complexity from~\cref{def_tom12-2} is universal for all single-copy states. For standard tomography protocols, the sample complexity in~\cref{def_tom34-4} turns out to be independent of $\rho$ as shown in~\cref{app_item4}.} Finally, we argue in~\cref{app_standard-tom-protocols} that~\cref{def_tom12,def_tom34} are satisfied by most standard quantum state tomography procedures. 
% \GM{TODO. Further include counterexamples, showing that all assumptions are needed.} \GM{Comment about other ``operational criteria", such as hypothesis testing.}

\subsection{Basic properties of the operational set}
The following lemma shows that the equivalence of two state sequences is compatible with the membership in the operational set. In particular, its proof uses an operational characterization of the trace distance, thereby illustrating why the operational set is defined with respect to the trace distance (see also~\cite{ferradini-sandfuchs2025}, for instance, for a discussion of why the trace distance is central to defining security in quantum key distribution). %\GM{The robustness would also hold for other equivalent metrics (even if they scale (polynomially) in $n$).}
\begin{lemma} (Robustness of the operational set) \label{lem_OP-set-robust}
Let $\rho \in \St(\cH)$. Consider two states sequences $\pmb{\rho}$ and $\pmb{\sigma}$ such that $\pmb{\rho} \in \OP(\texttt{tom($\cdot$)};\rho)$ and $\pmb{\sigma} \simeq \pmb{\rho}$. Then, it follows that $\pmb{\sigma} \in \OP(\texttt{tom($\cdot$)};\rho)$. In particular, this implies that
\begin{align}
B(\pmb{\rho}) \,\subseteq \, \OP(\texttt{tom($\cdot$)};\rho)
\end{align}
where $B(\pmb{\rho}) $ denotes the negligible ball around $\pmb{\rho}$.
\end{lemma}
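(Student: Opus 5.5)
The plan is to transfer the tomography guarantee from $\pmb{\rho}$ to $\pmb{\sigma}$ through the operational meaning of the trace distance, and then absorb the extra error into a new poly-logarithmic sample complexity.

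\textbf{Step 1 (reduce to an acceptance probability).} For fixed $n$ and $\epsilon$, the tomography protocol together with its classical post-processing is a POVM, so the event $\{\texttt{tom}(\cdot)\notin B_\epsilon(\rho)\}$ corresponds to a single POVM element $0\le E_{n,\epsilon}\le \id$ on $\cH^{\ox n}$. The operational characterization of the trace distance then gives
\begin{align}
\Pr\big[\texttt{tom}(\sigma^{(n)})\notin B_\epsilon(\rho)\big] \,\le\, \Pr\big[\texttt{tom}(\rho^{(n)})\notin B_\epsilon(\rho)\big] + \tfrac12\big\|\sigma^{(n)}-\rho^{(n)}\big\|_1 .
\end{align}
This is exactly where defining the operational set via a probability, and the equivalence via the trace norm, fits together.

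\textbf{Step 2 (convert both terms to exponential form).} Let $N_0$ be the poly-logarithmic sample complexity witnessing $\pmb{\rho}\in\OP(\texttt{tom($\cdot$)};\rho)$. Define $f(\epsilon,n):=\Pr[\texttt{tom}(\rho^{(n)})\notin B_\epsilon(\rho)]$, which is bounded by $1$. By \cref{lem_sample-subexp} (ii)$\Rightarrow$(i), we get $f(\epsilon,n)\le K2^{-L\epsilon^\mu n^\alpha}$. By \cref{def_subexp-equivalence} (iii)$\Rightarrow$(iv), the function $g(n):=\tfrac12\|\sigma^{(n)}-\rho^{(n)}\|_1$ satisfies $g(n)\le \bar K 2^{-n^{\tilde\alpha}}$. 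This is trivially of the form $\bar K 2^{-\bar L\epsilon^{0}n^{\tilde\alpha}}$, with $\bar\mu=0$, uniformly in $\epsilon$.

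\textbf{Step 3 (recombine).} \cref{ex_sums-of-subexp-eps} gives $f+g\le \tilde K 2^{-\tilde L\epsilon^{\tilde\mu}n^{\min(\alpha,\tilde\alpha)}}$, and $f+g$ is bounded by $2$. Applying \cref{lem_sample-subexp} (i)$\Rightarrow$(ii) to the upper bound of Step 1 then yields a poly-logarithmic sample complexity $N_0'$ with $\Pr[\texttt{tom}(\sigma^{(n)})\notin B_\epsilon(\rho)]\le\delta$ for all $n\ge N_0'(\epsilon,\delta)$. Hence $\pmb{\sigma}\in\OP(\texttt{tom($\cdot$)};\rho)$.

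The statement $B(\pmb{\rho})\subseteq\OP(\texttt{tom($\cdot$)};\rho)$ is a restatement of this, since every element of the negligible ball is equivalent to $\pmb{\rho}$.

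\textbf{Main obstacle.} The only delicate point is bookkeeping in Step 3. The error $g$ does not depend on $\epsilon$ but has its own exponent $\tilde\alpha$, which may be smaller than $\alpha$. The lemma's conversion handles this because it only requires some $\alpha\in(0,1]$. If $\tilde\alpha<1$, the resulting complexity is poly-logarithmic but possibly not efficient, which is all the claim requires. Note also that \cref{lem_sample-subexp} needs the bound $K2^{-\cdots}$ for all $n$, so (iv) rather than (iii) of \cref{def_subexp-equivalence} must be used.
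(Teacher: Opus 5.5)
Your proposal is correct and follows essentially the same route as the paper's proof. You bound the failure probability for $\pmb{\sigma}$ by that for $\pmb{\rho}$ plus $\tfrac12\|\sigma^{(n)}-\rho^{(n)}\|_1$, using the POVM element of the failure event and the variational characterization of the trace distance. You then bring both terms into the form $K2^{-L\epsilon^\mu n^\alpha}$ via \cref{lem_sample-subexp} and \cref{def_subexp-equivalence}, and recombine them with \cref{ex_sums-of-subexp-eps} and \cref{lem_sample-subexp} to obtain a new poly-logarithmic sample complexity.
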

\begin{proof}
For $\pmb{\rho} \in \OP(\texttt{tom($\cdot$)};\rho)$, there exists a poly-log.~sample complexity $N_0(\epsilon,\delta)$ such that
\begin{align}
 \forall\,(\epsilon,\delta) \in (0,1)\times (0,\tfrac{1}{2}] \ \ \forall\, n\geq N_0(\epsilon,\delta)\,:\ \Pr\big[\texttt{tom}\big(\rho^{(n)}\big) \notin B_\epsilon(\rho)\big] \leq \delta 
 \end{align}
 by definition. By~\cref{lem_sample-subexp}, there exist constants $K,L >0$, $\mu \geq 0$ and $\alpha \in (0,1]$ such that
 \begin{align} \label{eq_robust-op-set-it1}
 \Pr\big[\texttt{tom}\big(\rho^{(n)}\big) \notin B_\epsilon(\rho)\big] \, \leq \,K \,2^{-n^\alpha \epsilon^\mu L} \,=:\, \delta_{n,\epsilon}
 \end{align}
 for all $n\in \N$ and $\epsilon \in (0,1)$. On the other hand, because $\pmb{\sigma} \simeq \pmb{\rho}$, the trace distance \smash{$ \frac{1}{2}\norm{\rho^{(n)}- \sigma^{(n)}}_1$} is a subexponentially decreasing function in $n$, meaning that
% let $\xi_n := \frac{1}{2}\norm{\rho^{(n)}- \sigma^{(n)}}_1$ for any $n \in \N$. Because $\pmb{\sigma} \simeq \pmb{\rho}$, the quantity $\xi_n$ is a subexponentially decreasing function in $n$, meaning that
\begin{align}  \label{eq_robust-op-set-it2}
 \frac{1}{2}\norm{\rho^{(n)}- \sigma^{(n)}}_1 \,\leq\, \bar{K} 2^{-n^{\bar{\alpha}} \bar{L}} \,=:\, \xi_n
\end{align}
for some constants $\bar{K}, \bar{L} > 0$ and $\bar{\alpha} \in (0,1]$ due to~\cref{def_subexp-equivalence}. Now consider the event $\Omega^n_\epsilon := \texttt{tom}\big(X^{(n)}\big) \notin B_\epsilon(\rho)$ with corresponding POVM element given by $\Lambda^n_\epsilon$ such that
\begin{align} \label{eq_POVM-event}
\Pr\big[\texttt{tom}\big(X^{(n)}\big) \notin B_\epsilon(\rho)\big] \,=\, \tr\big[\Lambda^n_\epsilon X^{(n)} \big]
\end{align}
for any $X^{(n)}\in \St(\cH^{\otimes n})$. Then, by using the variational characterization of the trace distance~\cite[Eq. (9.22)]{nielsenChuang_book},
\begin{align} \label{eq_var-trace-distance}
\tfrac{1}{2}\norm{\rho - \sigma }_1 \,=\, \max_{0\leq\Lambda \leq \id} \tr\big[\Lambda (\rho - \sigma) \big]
\end{align}
for any two states $\rho$ and $\sigma$, we can bound
\begin{align}
\Pr\big[\texttt{tom}\big(\sigma^{(n)}\big) \notin B_\epsilon(\rho)\big] \,\overset{\textnormal{\Cshref{eq_POVM-event}}}&{=}\,\tr\big[\Lambda^n_\epsilon \sigma^{(n)} \big] \,=\, \tr\big[\Lambda^n_\epsilon\big( \sigma^{(n)}  - \rho^{(n)}\big)\big] + \tr\big[\Lambda^n_\epsilon \rho^{(n)} \big] \\
\,\overset{\textnormal{\Cshref{eq_var-trace-distance}}}&{\leq}\, \frac{1}{2}\norm{\rho^{(n)}- \sigma^{(n)}}_1 + \tr\big[\Lambda^n_\epsilon \rho^{(n)} \big] \\
\,\overset{\textnormal{\Cshref{eq_POVM-event,eq_robust-op-set-it1,eq_robust-op-set-it2}}}&{\leq}\, \xi_n + \delta_{n,\epsilon}\,. \label{eq_robust-op-set-it3}
\end{align}
Combining~\cref{eq_robust-op-set-it3} with~\cref{ex_sums-of-subexp-eps,lem_sample-subexp}, we find that there exists a poly-log.~sample complexity $N'_0(\epsilon,\delta)$ such that\footnote{Proceeding in this way yields a scaling of the sample complexity in \smash{$\frac{1}{\epsilon}$} that is polynomial, but not optimal. In fact, one can show that for the sum in~\cref{eq_robust-op-set-it3}, one can find a sample complexity that has the same scaling in \smash{$\frac{1}{\epsilon}$} as one would find for $\delta_{n,\epsilon}$.}
\begin{align}
 \forall\,(\epsilon,\delta) \in (0,1)\times (0,\tfrac{1}{2}] \ \ \forall\, n\geq N'_0(\epsilon,\delta)\,:\ \Pr\big[\texttt{tom}\big(\sigma^{(n)}\big) \notin B_\epsilon(\rho)\big] \leq \delta \,,
 \end{align}
 implying that $\pmb{\sigma} \in \OP(\texttt{tom($\cdot$)};\rho)$.
\end{proof}

In order to better understand the operational set, it is instructive to find examples of state sequences that are contained in that set. The iid state sequence \smash{$\pmb{\rho}^{\mathrm{iid}} =\big(\rho^{\otimes n}\big)_{n\in \N} $} is one such example. In the following proposition, we show that, more generally, almost-iid state sequences with a limited amount of defects are also contained in the operational set.
\begin{proposition}[Almost-iid states are contained in the efficient operational set] \label{prop_almost-iid-in-OP}
Let \texttt{tom($\cdot$)} denote a tomography protocol satisfying~\cref{def_tom12,def_tom34}. Let $\rho \in \St(\cH)$ and $\pmb{\rho}$ denote a sequence of \smash{$\binom{n}{r_n}$}-almost iid states along $\rho$ with $(r_n)_{n\in\N}$ being any sequence of non-negative integers $r_n \in \N_0 $ such that $r_n \leq n$ and $r_n = o(n^\beta)$ for some $\beta \in (0,1)$. Then, $\pmb{\rho} \in  \OP^{\textrm{(eff.)}}(\texttt{tom($\cdot$)};\rho)$.
\end{proposition}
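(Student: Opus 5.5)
The plan is to reduce the almost-iid state to the ``defect'' states covered by the robustness-under-noise property (\cref{def_tom34}, item~\ref{def_tom34-4}). The coherences between different defect configurations will be handled by a pinching argument, which costs a factor $|\cT|$. That factor is then absorbed into an exponentially decaying failure probability.

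Fix $\epsilon$ and let $\Lambda^n_\epsilon$ be the POVM element of the failure event, as in \cref{eq_POVM-event}. By \cref{def_almost_product_state_mixed} and \cref{rmk_properties_almost_iid_states}(b), $\rho^{(n)}=\tr_{E_1^n}\rho_{C_1^nE_1^n}$ with $\rho_{C_1^nE_1^n}=\sum_{i,j\in\cT}\beta_{i,j}\ket{\Psi_i}\bra{\Psi_j}$. The $\ket{\Psi_i}$ form an orthonormal basis of vectors in $\cV^n_{r_n}(\cH_{CE},\ket{\theta})$. The pinching inequality for this basis gives
\begin{align}
\rho_{C_1^nE_1^n}\,\leq\, |\cT|\sum_{i\in\cT}\beta_{i,i}\proj{\Psi_i}\,.
\end{align}
Hence
\begin{align}
\tr\big[\Lambda^n_\epsilon\rho^{(n)}\big]\,\leq\,|\cT|\max_{i}\tr\big[\Lambda^n_\epsilon\,\tr_{E_1^n}\proj{\Psi_i}\big]\,.
\end{align}
Each $\ket{\Psi_i}=U_\pi(\ket{\theta}^{\ox n-r_n}\ox\ket{\Omega^{(r_n)}})$, so $\tr_{E_1^n}\proj{\Psi_i}=U_\pi(\rho^{\ox n-r_n}\ox\Omega^{(r_n)}_C)U_\pi^\dagger$. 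This is exactly the form in robustness under noise. That property supplies one efficient sample complexity $\bar N_0$ that works uniformly over all $\pi$ and $\Omega$.

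Next, \cref{lem_sample-subexp} (the case $\alpha=1$) converts $\bar N_0$ into a uniform bound $\max_i\tr[\Lambda^n_\epsilon\tr_E\Psi_i]\leq K2^{-nL\epsilon^\mu}$ for all $n$ and $\epsilon$. By \cref{eq_sizeT}, $|\cT|\leq 2^{nh(r_n/n)+r_n\log d_{CE}}$. This yields
\begin{align}
\Pr\big[\texttt{tom}(\rho^{(n)})\notin B_\epsilon(\rho)\big]\,\leq\, K\,2^{-nL\epsilon^\mu+s_n}\,,\qquad s_n:=nh(r_n/n)+r_n\log d_{CE}\,.
\end{align}
Since $r_n=o(n^\beta)$, we have $nh(r_n/n)\leq r_n\log(en/r_n)=O(n^\beta\log n)$, so $s_n/n\leq c\,n^{\beta-1}\log(n+1)$.

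To finish, apply \cref{lem_sample-subexp-function-eps}, quantified as in \cref{ex_subexp-function-eps}. Solving $c\,n^{\beta-1}\log(n+1)\leq\frac{L}{2}\epsilon^\mu$ gives $N_1(\epsilon)\leq\poly(1/\epsilon)$ such that the failure probability is at most $K2^{-n\frac L2\epsilon^\mu}$ for all $n\geq N_1(\epsilon)$. Now set
\begin{align}
\bar N_0'(\epsilon,\delta):=\max\Big\{N_1(\epsilon),\ \big\lceil\tfrac{2}{L}\epsilon^{-\mu}\log\tfrac{K}{\delta}\big\rceil\Big\}\,.
\end{align}
Because $\log(1/\delta)\geq1$, this is bounded by $\poly(1/\epsilon)\cdot\log(1/\delta)$, so it is efficient. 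It guarantees failure probability at most $\delta$ for all $n\geq\bar N_0'(\epsilon,\delta)$, which proves $\pmb\rho\in\OP^{\textrm{(eff.)}}(\texttt{tom($\cdot$)};\rho)$.

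The main obstacle I expect is the coherent superposition of defect positions. Without pinching, the acceptance probability of $\rho^{(n)}$ is not a convex combination of acceptance probabilities of the defect states. The $|\cT|$ penalty is affordable only because the noise-robust bound decays exponentially in $n$ (efficient sample complexity, $\alpha=1$), while $\log|\cT|=O(n^\beta\log n)$ is sublinear. This is precisely where the hypothesis $r_n=o(n^\beta)$ enters. A second point requiring care is that the sample complexity in robustness under noise must be uniform over $\Omega$ and $\pi$, which the definition grants.
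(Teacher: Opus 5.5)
Your proposal is correct and follows essentially the same route as the paper. The paper also works with the extension in an orthonormal basis of $\cV^n_{r_n}$, applies the pinching bound at a cost of a factor $|\cT_n|$, and uses the uniform robustness-under-noise sample complexity, converted by \cref{lem_sample-subexp} into an exponential bound, before absorbing $\log|\cT_n|$ into the exponent. The only difference is cosmetic: you bound $nh(r_n/n)\leq r_n\log(en/r_n)=O(n^\beta\log n)$, whereas the paper uses $h(x)\leq 2\sqrt{x}$ to get an $O(n^{(1+\beta)/2})$ correction; both are sublinear, so the resulting sample complexity is efficient either way.
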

\begin{proof}
Let $n\in \N$, $r \equiv r_n$, and $\rho^{(n)}$ denote the $n$-th element in the sequence $\pmb{\rho}$. Since $\rho^{(n)}$ is an \smash{$\binom{n}{r_n}$}-almost iid state along $\rho$, by~\cref{rmk_properties_almost_iid_states}, for any purification $\ket{\theta}_{CE}$ of $\rho \equiv \rho_C$ there exists an extension \smash{$\rho_{C_1^nE_1^n}$} of \smash{$\rho^{(n)} \equiv \rho_{C_1^n}$} such that
\begin{align} \label{lem_almost-iid-OP-it1}
\rho_{C_1^n E_1^n} \,=\, \sum_{i,j \in \cT_n} \beta_{i,j} \ket{\Psi_i} \bra{\Psi_j} \, ,\qquad   |\cT_n| \,\leq\, 2^{n h(r/n)}\, d_{CE}^{\,r}\,,
\end{align}
for a family $\{\ket{\Psi_j}_{C_1^n E_1^n}\}_{j \in \cT_n}$ of orthonormal vectors from  \smash{$\cV^n_r\big(\cH_{CE}, \ket{\theta}_{CE} \big)$}. Furthermore, due to the Pinching inequality, the extension $\rho_{C_1^n E_1^n}$ satisfies the operator inequality
\begin{align} \label{lem_almost-iid-OP-it2}
\rho_{C_1^n E_1^n} \, \overset{\textnormal{\cite[Lemma~4.3]{MSR2026}}}{\leq} \,|\cT_n| \sum_{j \in \cT_n} \beta_{j,j} \proj{\Psi_j}_{C_1^n E_1^n} \,=:\, |\cT_n|\, \tilde \rho_{C_1^n E_1^n}   \, .
\end{align}
Now let $\epsilon \in (0,1)$. We aim to bound the probability of the event $\Omega^n_\epsilon := \texttt{tom}\big(\rho^{(n)}\big) \notin B_\epsilon(\rho)$, with corresponding POVM element given by $\Lambda^n_\epsilon$ (see~\cref{eq_POVM-event}). One finds
\begin{align}
 \Pr\big[\texttt{tom}\big(\rho^{(n)}\big) \notin B_\epsilon(\rho)\big]  \,&=\, \tr\big[\Lambda^n_\epsilon \rho^{(n)}\big] \,=\,  \tr\big[\Lambda^n_\epsilon \otimes \id_{E_1^n} \rho_{C_1^n E_1^n} \big] \\
 \,\overset{\textnormal{\Cshref{lem_almost-iid-OP-it2}}}&{\leq} \, |\cT_n| \sum_{j \in \cT_n} \beta_{j,j}\, \tr\big[\Lambda^n_\epsilon \otimes \id_{E_1^n} \proj{\Psi_j}_{C_1^n E_1^n} \big] \\
 \,&=\,  |\cT_n| \sum_{j \in \cT_n} \beta_{j,j}\, \tr\big[\Lambda^n_\epsilon \underbrace{U_j  \big(\rho^{\otimes n-r} \otimes \omega^{(r)}_j \big) U_j^\dagger}_{=:\, \tilde{\rho}^{(n)}_j} \big] \\
 \,&\leq\,  |\cT_n| \, \max_{j \in \cT_n} \, \Pr\big[\texttt{tom}\big(\tilde{\rho}^{(n)}_j\big) \notin B_\epsilon(\rho)\big]\,,  \label{lem_almost-iid-OP-it3}
\end{align}
where the penultimate step holds since \smash{$\ket{\Psi_j} \in \cV^n_r\big(\cH_{CE}, \ket{\theta}_{CE} \big)$} with $U_j$ denoting some fixed permutation and \smash{$\omega_j^{(r)} \in \St(\cH^{\otimes r})$} for all $j \in \cT_n$, and the last step uses that $\beta_{j,j} \geq 0$ and $\sum_{j\in \cT_n} \beta_{j,j} = 1$.
Next, define 
\begin{align} \label{lem_almost-iid-OP-max}
F(\epsilon,n) :=  \max_{j \in \cT_n}  \, \Pr\big[\texttt{tom}\big(\tilde{\rho}^{(n)}_j\big) \notin B_\epsilon(\rho)\big] \,,
\end{align} 
which we wish to bound. Due to~\cref{def_tom34-4} in~\cref{def_tom34}, there exists an efficient sample complexity $N_0$, depending only on $\rho$ and the defect-count sequence $(r_n)_{n\in \N}$, such that for all $\delta \in (0,\frac{1}{2}]$ and $n\geq N_0(\epsilon,\delta)$, we have
\begin{align}
 \Pr\big[\texttt{tom}\big(\tilde{\rho}^{(n)}_j\big) \notin B_\epsilon(\rho)\big] \,\leq\, \delta
\end{align}
for every $j \in \cT_n$, therefore implying that $F(\epsilon,n) \,\leq\,\delta$ for all $n\geq N_0(\epsilon,\delta)$.
%\begin{align} \label{lem_almost-iid-OP-argmax}
%j^*_n :=  \argmax_{j \in \cT_n}  \, \Pr\big[\texttt{tom}\big(\tilde{\rho}^{(n)}_j\big) \notin B_\epsilon(\rho)\big] \,.
%\end{align} 
%Then, the sequence \smash{$\tilde{\pmb{\rho}} := \big(\tilde{\rho}^{(n)}_{j^*_n} \big)_{n\in\N}$} fulfils the assumptions of~\cref{def_tom34-4} in~\cref{def_tom34}, meaning that \smash{$\tilde{\pmb{\rho}} \in  \OP^{(N_0)}(\texttt{tom($\cdot$)};\rho)$} for some efficient sample complexity $N_0$. Note that even though the sequence $\tilde{\pmb{\rho}}$ may implicitly depend on $\epsilon$ via~\cref{lem_almost-iid-OP-argmax},~\cref{def_tom34} assures that the sample complexity $N_0$ is independent of the choice of $j^*_n$ and hence also independent of $\epsilon$. 
Then, due to~\cref{lem_sample-subexp}, there exist constants $K,L >0$ and $\mu \geq 0$ such that
 \begin{align}  \label{lem_almost-iid-OP-it4}
 %\Pr\big[\texttt{tom}\big(\tilde{\rho}^{(n)}_{j^*_n}\big) \notin B_\epsilon(\rho)\big] \, \leq \,K \,2^{-n \epsilon^\mu L} \,=:\, \delta_{n,\epsilon} \,,
F(\epsilon,n)\, \leq \,K \,2^{-n \epsilon^\mu L} \,=:\, \delta_{n,\epsilon} \,,
 \end{align}
and hence, we have
\begin{align}  \label{lem_almost-iid-OP-it5}
 \Pr\big[\texttt{tom}\big(\rho^{(n)}\big) \notin B_\epsilon(\rho)\big] \,\overset{\textnormal{\Cshref{lem_almost-iid-OP-it3,lem_almost-iid-OP-it4}}}{\leq} \,  |\cT_n| \delta_{n,\epsilon} \,.
\end{align}
Since the binary entropy function satisfies the inequality $h(x) \leq 2\sqrt{x}$ for all $x \in [0,1]$, and $r_n \leq \tilde{L} n^\beta$ for some $\tilde{L} \geq 0$ and $\beta \in (0,1)$ by assumption, one can bound
\begin{align}  \label{lem_almost-iid-OP-it6}
nh\left(\frac{r}{n}\right) + r\log(d_{CE}) \,\leq\, 2n\sqrt{\frac{r}{n}} +  r\log(d_{CE}) \, \leq \, 2\sqrt{\tilde{L}} n^{\frac{1+\beta}{2}} + \tilde{L}n^\beta \log(d_{CE}) \,\overset{\frac{1+\beta}{2} > \beta}{\leq}\,\bar{L}\,n^{\frac{1+\beta}{2}}
% |\cT_n|  \, \overset{\textnormal{\Cshref{lem_almost-iid-OP-it1}}}{\leq} \,  2^{-n \epsilon^\mu L + 2n\sqrt{\frac{r}{n}} + r\log(d_{CE})}\,.
\end{align}
for some constant $\bar{L} \geq 0$. Hence, we find
\begin{align} \label{lem_almost-iid-OP-it7}
 \Pr\big[\texttt{tom}\big(\rho^{(n)}\big) \notin B_\epsilon(\rho)\big] \,\overset{\textnormal{\Cshref{lem_almost-iid-OP-it5}}}{\leq} \,  |\cT_n| \delta_{n,\epsilon}  \,\overset{\textnormal{\Cshref{lem_almost-iid-OP-it1,lem_almost-iid-OP-it6,lem_almost-iid-OP-it4}}}{\leq} \,  K\,2^{-n \epsilon^\mu L +  \bar{L}\,n^{\frac{1+\beta}{2}} } \,=:\, g(n,\epsilon)\,. 
\end{align}
By~\cref{lem_sample-subexp-function-eps} (and since \smash{$\frac{1+\beta}{2} <1$}), it follows that there exists a function $N'_0(\epsilon)$ such that 
\begin{align} \label{lem_almost-iid-OP-it8}
g(n,\epsilon)\, \leq \, K\,2^{-n \epsilon^\mu \frac{L}{2}} 
\end{align}
for all $n \geq N'_0(\epsilon)$. The scaling of $N'_0(\epsilon)$ in $\epsilon$ is determined by solving the inequality 
\begin{align}
\frac{\bar{L}n^{\frac{1+\beta}{2}}}{n} \,\overset{!}{\leq}  \, \frac{L}{2}\epsilon^\mu \quad \iff \quad  \frac{2\bar{L}}{L}\left(\frac{1}{\epsilon}\right)^\mu \,\leq\, n^{\frac{1-\beta}{2}}  \quad \iff \quad \bar{K} \left(\frac{1}{\epsilon}\right)^{\frac{2\mu}{1-\beta}} \,\leq\, n\,,
\end{align}
for $\bar{K} \geq 0$, showing a polynomial scaling of $N'_0(\epsilon)$ in $\frac{1}{\epsilon}$. Therefore, it follows from~\cref{lem_almost-iid-OP-it7,lem_almost-iid-OP-it8,lem_sample-subexp} that there exists an efficient sample complexity $\bar{N}'_0(\epsilon,\delta)$ such that
\begin{align}
 \forall\,(\epsilon,\delta) \in (0,1)\times (0,\tfrac{1}{2}] \ \ \forall\, n\geq \bar{N}_0(\epsilon,\delta)\,:\ \Pr\big[\texttt{tom}\big(\rho^{(n)}\big) \notin B_\epsilon(\rho)\big] \leq \delta \,,
 \end{align}
 where we set $\bar{N}_0(\epsilon,\delta) :=  \max\{N'_0(\epsilon)\log(\frac{1}{\delta}), N_0(\epsilon, \delta), \bar{N}'_0(\epsilon,\delta)  \}$. Since $\bar{N}_0(\epsilon, \delta)$ is again an efficient sample complexity, this proves that $\pmb{\rho} \in  \OP^{\textrm{(eff.)}}(\texttt{tom($\cdot$)};\rho)$.
\end{proof}

\newpage
\section{Characterization of the operational set} \label{sec:characterization-OP-set}
In the previous section, we have introduced the notion of operational state sequences and sets (see~\cref{def_OP-set}). A priori, these definitions appear to depend on a specific tomography protocol \texttt{tom($\cdot$)}. 
%These sequences admit a characterization in terms of ...
In this section, we will show that, as long as \texttt{tom($\cdot$)} satisfies~\cref{def_tom12,def_tom34}, these sequences admit a characterization in terms of almost-iid states independently of \texttt{tom($\cdot$)}.
%In~\cref{def_OP-rate}, we have proposed an operational definition for asymptotic transformation rates. A priori, this definition seems to rely on a specific tomography protocol \texttt{tom($\cdot$)}. 
%In this section, we will show that these rates turn out to be independent of the chosen protocol \texttt{tom($\cdot$)} as long as it satisfies~\cref{def_tom12,def_tom34}.
The main idea is to use a de Finetti-type argument to characterize the operational set $\OP(\texttt{tom($\cdot$)};\rho)$ as shown in~\cref{thm_almost-iid-dF}. The proof of the latter is built on two central parts which are presented in~\cref{prop_modified-deFinetti,prop_close-almost-iid}.

\begin{proposition} \label{prop_modified-deFinetti}
Let \texttt{tom($\cdot$)} denote a tomography protocol satisfying~\cref{def_tom12,def_tom34}. Let $\rho \in \St(\cH_C)$, and $\pmb{\rho} \in \OP(\texttt{tom($\cdot$)};\rho)$ be permutation invariant. Furthermore, let \smash{$k_n:= \lceil n^\alpha \rceil,\ r_n:=\lfloor n^\beta\rfloor$} with $\alpha,\,\beta \in (0,1)$ such that $\alpha + \beta >1$. 
Then, there exist $\gamma>0$, a measure $\nu_n$ on $\St(\cH_C)$, and, for each $\omega \in \St(\cH_C)$, a sequence of states \smash{$\sigma_n^{(\omega)}$} that are $\binom{n}{r_n}$-almost-iid along $\omega$, such that, for $\epsilon_n = \frac{1}{n^\gamma}$ (or any fixed value $\epsilon_n \equiv \epsilon>0$), we find that
\begin{align}\label{eq_peaked-de-Finetti}
    \left\|\,\tr_{k_n}\big[\rho^{(n+k_n)}\big]\,-\, \int_{\omega \in B_{\epsilon_n}(\rho)} \sigma_n^{(\omega)} \,\nu_n(\omega)\, \right\|_1 \quad \longrightarrow\ \textrm{ is subexponentially decreasing in $n$} \,.
\end{align}
Crucially, the above integration is restricted to states $\omega$ that are (asymptotically) close to $\rho$.
\end{proposition}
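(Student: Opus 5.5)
The plan has two halves. First, a de Finetti step writes the symmetric state as a mixture of almost-iid states. Second, the operational (tomography) property removes the part of the mixture whose single-copy state lies far from $\rho$.

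\textbf{Step 1 (de Finetti).} Since $\rho^{(n+k_n)}$ is permutation invariant, I will invoke the de Finetti theorem for almost-product states of~\cite{Renner_2007}. Tracing out $k_n$ of the $n+k_n$ systems leaves a state that is close to a mixture
\begin{align}
\int_{\omega\in\St(\cH_C)} \sigma_n^{(\omega)}\,\nu_n(\omega)\,,
\end{align}
where each $\sigma_n^{(\omega)}$ is $\binom{n}{r_n}$-almost-iid along $\omega$ in the sense of~\cref{def_almost_product_state_mixed}. The error is of order $2^{-k_n r_n/(n+k_n)}\poly(n)$, up to a dimension factor.

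With $k_n=\lceil n^\alpha\rceil$ and $r_n=\lfloor n^\beta\rfloor$, the exponent is $k_nr_n/(n+k_n)\sim n^{\alpha+\beta-1}$. Because $\alpha+\beta>1$, this error is subexponentially decreasing, using~\cref{lem_sums-of-subexp} to absorb the polynomial prefactor. The measure $\nu_n$ is obtained as the pushforward of a symmetric measure on purifications $\ket{\theta}$ to the reduced states $\omega=\tr_E\proj{\theta}$. This matches the purification-based definition of almost-iid states, and~\cref{rmk_properties_almost_iid_states}(a) lets us choose any purification of $\omega$.

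\textbf{Step 2 (restricting the measure).} Split the integral into the part over $B_{\epsilon_n}(\rho)$ and the part over its complement. It then suffices to show that $\nu_n\big(\St(\cH_C)\setminus B_{\epsilon_n}(\rho)\big)$ is subexponentially small, since dropping that part costs at most twice this mass in trace norm.

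To bound the mass, I will apply the tomography POVM to both sides of the de Finetti approximation. By~\cref{def_tom34}\,(consistency under random sampling, with $m_n=k_n$, and the state already symmetric), the sequence $\tr_{k_n}[\rho^{(n+k_n)}]$ lies in $\OP^{(N_0)}(\texttt{tom($\cdot$)};\rho)$. Hence, by~\cref{lem_sample-subexp},
\begin{align}
\Pr[\texttt{tom}\notin B_{\epsilon/2}(\rho)]\le K2^{-n^{a}\epsilon^\mu L}.
\end{align}
On the other side, for each $\omega$ with $T(\omega,\rho)>\epsilon$, the state $\sigma_n^{(\omega)}$ is almost-iid along $\omega$ with $r_n=o(n^{\beta'})$. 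Exactly as in the proof of~\cref{prop_almost-iid-in-OP}, using the pinching bound, the estimate on $|\cT_n|$, and the robustness property of~\cref{def_tom34} applied along $\omega$, the tomography outputs something in $B_{\epsilon/2}(\omega)$ with probability at least $1-K'2^{-n\epsilon^{\mu'}L'+\bar L n^{(1+\beta)/2}}$.

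Since $B_{\epsilon/2}(\omega)\cap B_{\epsilon/2}(\rho)=\emptyset$, we get
\begin{align}
\nu_n(\text{far})\,(1-\text{small}) \le \Pr[\texttt{tom}(\text{mixture})\notin B_{\epsilon/2}(\rho)] \le \Pr[\texttt{tom}(\tr_{k_n}\rho^{(n+k_n)})\notin B_{\epsilon/2}(\rho)] + \tfrac12\|\text{de Finetti error}\|_1 .
\end{align}
This makes $\nu_n(\text{far})$ subexponentially small for fixed $\epsilon$. For $\epsilon_n=n^{-\gamma}$, I choose $\gamma>0$ small enough that $n^{a}\epsilon_n^{\mu}=n^{a-\gamma\mu}$ and $n\epsilon_n^{\mu'}$ still dominate $n^{(1+\beta)/2}$, so everything remains subexponential.

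\textbf{Main obstacle.} The delicate point is uniformity over $\omega$. The robustness property in~\cref{def_tom34} supplies a sample complexity that may depend on the single-copy state, whereas here we need a bound holding uniformly for all $\omega$ outside the ball. I will first try to extract uniformity from the universal iid sample complexity in~\cref{def_tom12}, applied to each product component $U_j(\omega^{\otimes n-r}\otimes\Omega)U_j^\dagger$. To do this, I would compare against $\omega^{\otimes n-r}$ by discarding or marginalizing the $r_n$ defect systems, or else use compactness of $\St(\cH_C)$ together with continuity to obtain constants uniform in $\omega$. If that fails, I would impose the $\rho$-independence of the robustness sample complexity, which the footnote after~\cref{def_tom34} notes holds for standard protocols.
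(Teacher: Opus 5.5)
Your diagnosis of the main obstacle is correct, but you leave it unresolved, and Step~2 depends on it entirely. Your inequality $\nu_n(\text{far})(1-\text{small})\le\cdots$ requires $\sup_{\omega\notin B_{\epsilon_n}(\rho)}\Pr[\texttt{tom}(\sigma_n^{(\omega)})\notin B_{\epsilon_n/2}(\omega)]$ to be subexponentially small. Item~4 of \cref{def_tom34} only gives, for each fixed reference state, a sample complexity that may depend on that state. The proposition assumes nothing stronger, so your fallback of imposing state-independence proves a different statement.

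Neither of your suggested repairs works as described. On $n$ systems, \texttt{tom} is an arbitrary POVM with no assumed relation to the protocol on $n-r_n$ systems, so ``discarding or marginalizing the defect systems'' is not an available operation. A trace-distance comparison with $\omega^{\ox n}$ also fails, because the defect may be orthogonal to $\omega^{\ox r_n}$. Compactness does not help either: no continuity of the robustness constants in the reference state is assumed. Moreover, an almost-iid state along $\omega$ is close to one with few defects along a nearby $\omega_0$ only if $T(\omega,\omega_0)$ shrinks with $n$, so there is no $n$-independent finite cover.

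The missing idea is a reduction to the \emph{universal} iid guarantee of \cref{def_tom12}, which needs no uniformity argument. The paper gets this by applying the tomography to the $k_n$ systems that are traced out, not to the $n$ retained ones. Inside a symmetric purification $\ket{\Psi^{(n+k_n)}}$ it inserts $\id_{\mathrm{Sym}^{k_n}(C\bar C)}=\binom{k_n+d_C^2-1}{k_n}\int\proj{\theta}^{\ox k_n}\di\theta$, so a direction $\theta$ carries weight $p(\theta)=\tr[(\id^{\ox n}\ox\proj{\theta}^{\ox k_n})\proj{\Psi}]$. The acceptance operator of \texttt{tom} on those $k_n$ systems accepts $\Psi$ with probability $1-p^{k_n}_{\zeta_n}$ subexponentially close to one, by consistency under sampling applied to $\tr_n[\rho^{(n+k_n)}]$. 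The gentle measurement lemma then gives $p(\theta)\le 2\sqrt{p^{k_n}_{\zeta_n}}+\Pr[\texttt{tom}(\theta_C^{\ox k_n})\notin B_{\zeta_n}(\theta_C)]/(1-p^{k_n}_{\zeta_n})$ whenever $\theta_C\notin B_{2\zeta_n}(\rho)$. Here the input $\theta_C^{\ox k_n}$ is exactly iid, so the bound is uniform in $\theta$ with the state-independent constants of \cref{def_tom12}. The polynomial prefactor is harmless, and the projection onto almost-iid states is done only afterwards.

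Your own route can also be rescued, again without item~4 of \cref{def_tom34}. For $\omega':=(1-\tfrac1n)\omega+\tfrac1n\tfrac{\id}{d_C}$ one has $\omega^{\ox n-r_n}\ox\omega_j^{(r_n)}\le(1-\tfrac1n)^{-n}(nd_C)^{r_n}\,\omega'^{\ox n}$. Combined with your pinching bound, this gives $\Pr[\texttt{tom}(\sigma_n^{(\omega)})\notin B_{\epsilon_n/2}(\omega)]\le 2^{O(n^{\beta}\log n)}\Pr[\texttt{tom}(\omega'^{\ox n})\notin B_{\epsilon_n/2-1/n}(\omega')]$. By \cref{def_tom12}, the right-hand side is at most $2^{O(n^{\beta}\log n)}K2^{-Ln(\epsilon_n/4)^{\mu}}$ for large $n$, with $K,L,\mu$ independent of the state. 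This bound is uniform in $\omega$ and subexponential once $\gamma<1$ and $\gamma\mu<1-\beta$. With either fix, the rest of your Step~2 goes through.
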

\begin{proof} 
This proposition is a modified version of~\cite[Theorem 4.3.2]{renner_phd}, and in what follows, we will make use of various techniques introduced in the proof therein.

\medskip
\noindent %First, we need to state some preparatory relations. 
\textbf{Step 1:} \textit{(Tomography part)} %First, for the integer sequence $\big(k_n = \lceil n^\alpha \rceil\big)_{n\in \N}$,   $\big(\bar{\rho}^{(k_n)}\big)_{k_n}$ with
For $n \in \N$ and $k_n = \lceil n^\alpha \rceil$, we consider reduced states of the form
\begin{align} \label{proof_eq_reduced-seq}
	\bar{\rho}^{(k_n)} := \tr_{n}\big[\rho^{(n+k_n)}\big]  \,.
	% \equiv  \tr_{\ell_k}\big[\rho^{(\ell_k +k )}\big]  \,.
\end{align}
The states in~\cref{proof_eq_reduced-seq} do not yet form a well-defined state sequence $\bar{\pmb{\rho}} = \big(\bar{\rho}^{(k)}\big)_{k \in \N}$, since multiple values $n$ result in the same value $k_n$ due to rounding. However, a valid sequence can always be defined by picking any representative value $n^*_k \in J_k := \{ n\in \N \,: \, k_n = k\}$ for any $k \in \N$ and taking $\tilde{\rho}^{(k)} = \bar{\rho}^{(k_{n^*})}$. Then, by rewriting $\tilde{\rho}^{(k)} =  \tr_{\ell_k}\big[\rho^{(\ell_k +k)}\big]  $ with $\ell_k := n^*_k$, it follows directly from~\cref{def_tom34-3} in~\cref{def_tom34} that \smash{$\bar{\pmb{\rho}} = \big(\tilde{\rho}^{(k)}\big)_{k \in \N}  \in \OP(\texttt{tom($\cdot$)};\rho)$}. More precisely, there exists a poly-logarithmic sample complexity $N_0$ such that for any choice of $\ell_k$ we have \smash{$\bar{\pmb{\rho}} \in \OP^{(N_0)}(\texttt{tom($\cdot$)};\rho)$}.

%In order to make use of~\cref{def_tom34-3} in~\cref{def_tom34}, we need to rewrite the above reduced states. 
%Thus, for any $k \in \N$, consider the set $J_k := \{ j\in \N \,: \, k_j = k\}$. Let $\ell_k$ be any sequence of integers such that 
Second, recall that by~\cref{def_OP-set,lem_sample-subexp}, there exist constants $K,L >0$, $q \geq 0$ and $\nu \in (0,1]$ such that
 \begin{align}  \label{proof_eq_tomo1}
  %\forall\,(\epsilon,\delta) \in (0,1)\times (0,\tfrac{1}{2}] \ \ \forall\, n\geq N_0(\epsilon,\delta)\,:\  
  \Pr\big[\texttt{tom}\big(\tilde{\rho}^{(k)} \big) \notin B_\epsilon(\rho)\big] \, \leq \,K \,2^{-k^\nu \epsilon^{q\nu} L} 
 \end{align}
 for any $k\in \N$ and $\epsilon \in (0,1)$.
%Second, recall that by~\cref{def_OP-set}, for any $\epsilon,\,\delta >0$ there exists a $N_0(\epsilon,\delta) \in \N$ and a tomography protocol \texttt{tom($\cdot$)} such that for all $k_n \geq N_0(\epsilon,\delta)$ we have
%\begin{align} \label{proof_eq_tomo1}
%    \Pr\big[\texttt{tom}\big(\bar{\rho}^{(k_n)}\big) \notin B_\epsilon(\rho)\big] \leq \delta\,, \quad \textrm{and} \quad N_0(\epsilon,\delta) \leq \poly\left(\tfrac{1}{\epsilon},\,\log\tfrac{1}{\delta}\right)\,.
%\end{align}
%W.l.o.g., we can assume that $\epsilon,\,\delta \in (0,1)$, as otherwise one may just choose $N_0(\epsilon,\delta) =1 $. The latter condition in~\cref{proof_eq_tomo1} then implies that there exists some constant $K >0$ and positive integers $q,s \in \N$ such that 
%\begin{align} \label{proof_eq_thresh}
%    N_0(\epsilon,\delta) \leq K\left(\tfrac{1}{\epsilon}\right)^{q}\left(\log\tfrac{1}{\delta}\right)^s \quad \implies \quad \delta \,\leq\,
%    \exp\Big(-\tilde{K}N_0^{\frac{1}{s}}\epsilon^{\frac{q}{s}}\Big)\,,
%\end{align}
%with $\tilde{K} = K^{-\frac{1}{s}}$. 
Importantly, the values $K,\,L,\,q,\,\nu$ are independent of $\epsilon$ and the integer sequence $(\ell_k)_{k\in\N}$, and depend only on the explicit tomography protocol \texttt{tom($\cdot$)} and the original state sequence $\pmb{\rho}$.

Now, let $n\in \N$ be fixed. Let $\zeta_n :=\frac{c}{n^\gamma}$ for any constant $c\in(0,1)$ and any fixed $\gamma$ satisfying $0\leq\gamma< \frac{\alpha}{q}\,$ (in the case $q=0$, the parameter $\gamma \geq 0$ may be chosen arbitrarily). Furthermore, for any $k\in \N$, define 
\begin{align} \label{proof_eq-argmax}
n^*_k :=  \argmax_{j \in J_{k}}  \,  \Pr\big[\texttt{tom}\big(\bar{\rho}^{(k_j)}\big) \notin B_{\zeta_j}(\rho)\big]  \,,
\end{align}
resulting in a valid state sequence $\bar{\pmb{\rho}}$ as described above.
Then, it follows from~\cref{proof_eq_tomo1,proof_eq-argmax} that
\begin{align}
    \Pr\big[\texttt{tom}\big(\bar{\rho}^{(k_n)}\big) \notin B_{\zeta_n}(\rho)\big] \,\overset{\textnormal{\Cshref{proof_eq-argmax}}}&{\leq}\,  \Pr\big[\texttt{tom}\big(\tilde{\rho}^{(k_n)}\big) \notin B_{\zeta_{n^*}}(\rho)\big] \\
     \,\overset{\textnormal{\Cshref{proof_eq_tomo1}}}&{\leq}\,K \,2^{-L k_n^\nu \zeta_{n^*}^{q\nu} } \\ 
     \,\overset{n^*\, \leq\, k_n^{1/\alpha}}&{\leq}\, K \,2^{- L c^{q\nu} k_n^{\nu(1-\gamma q / \alpha  ) }} \\
     \,\overset{k_n \, \geq \,n^\alpha}&{\leq}\, K \,2^{- L c^{q\nu} n^{\nu(\alpha-\gamma q   ) }}  \quad \longrightarrow\ \textrm{ is subexponentially decreasing in $n$} \,.  \label{proof_eq_tomo2}
\end{align}
%Then, by choosing $\delta_n := \exp\big(-\tilde{K}c^{\frac{q}{s}}n^\frac{\alpha-\gamma q}{s}\big)$, it follows due to~\cref{proof_eq_thresh} that
%\begin{align}
%    N_0(\zeta_n,\delta_n) \leq K\left(\tfrac{1}{\zeta_n}\right)^{q}\left(\log\tfrac{1}{\delta_n}\right)^s \leq k_n\,, 
%\end{align}
%and therefore, because of~\cref{proof_eq_tomo1},
%\begin{align} \label{proof_eq_tomo2}
%    \Pr\big[\texttt{tom}\big(\bar{\rho}^{(k_n)}\big) \notin B_{\zeta_n}(\rho)\big] \,\leq\, \exp\Big(-\tilde{K}c^{\frac{q}{s}}\,n^{\frac{\alpha-\gamma q}{s}} \Big) \quad \longrightarrow\quad \textrm{is negligible in}\ \  n\,.
%\end{align}

As a next step, let $\zeta_n = \frac{1}{2n^\gamma}$ and consider a POVM $\big\{\Pi_{\zeta_n}^m,\overline{\Pi}_{\zeta_n}^m \big\}$, which is determined by the tomography protocol \texttt{tom($\cdot$)}, such that
\begin{align}  \label{proof_eq_tomo3}
\Pr\big[\texttt{tom}\big(X^{(m)}\big) \in B_{\zeta_n}(\rho)\big] \,=\, \tr\big[\Pi_{\zeta_n}^m X^{(m)} \big]\,, \quad  \overline{\Pi}_{\zeta_n}^m \,:=\, \id_C^{\ox m} - \Pi_{\zeta_n}^m\,,
\end{align}
for any $X^{(m)}\in \St(\cH_C^{\otimes m})$.
%As a next step, we may describe the tomography protocol \texttt{tom($\cdot$)} by a POVM, denoted by $\{T^m_{\hat{\rho}} \}_{\hat{\rho}}\,$, which acts on $m$ copies, and where $\hat{\rho}$ labels any possible output single-copy state in $\St(\cH_C)$.\GM{To be precise, one should replace the sums below by integrals or sth. similar.} With that, and choosing $\zeta_n = \frac{1}{2n^\gamma}$, we define a coarse grained POVM $\big\{\Pi_{\zeta_n}^m,\overline{\Pi}_{\zeta_n}^m \big\}$ by
%\begin{align} \label{proof_eq_tomo3}
%    \Pi_{\zeta_n}^m \,:=\, \red{\sum_{\hat{\rho}\in B_{\zeta_n}(\rho)}} T^m_{\hat{\rho}}\,, \quad \textrm{and} \quad \overline{\Pi}_{\zeta_n}^m \,:=\, \id_C^{\ox m} - \Pi_{\zeta_n}^m \,= \red{\sum_{\hat{\rho}\notin B_{\zeta_n}(\rho)}} T^m_{\hat{\rho}} \,.
%\end{align}
Then, for $m:=k_n$,~\cref{proof_eq_tomo2,proof_eq_tomo3} imply that
\begin{align}
    p_{\zeta_n}^{k_n} \,:=\,\tr\Big[\big(\id_C^{\ox n} \ox \overline{\Pi}_{\zeta_n}^{k_n}  \big) \,\rho^{(n+k_n)}\Big] \,\overset{\textnormal{\Cshref{proof_eq_tomo3}}}&{=} \, \Pr\big[\texttt{tom}\big(\bar{\rho}^{(k_n)}\big) \notin B_{\zeta_n}(\rho)\big] \label{proof_eq_tomo4} \\
    \quad \overset{\textnormal{\Cshref{proof_eq_tomo2}}}&{\longrightarrow}\ \textrm{ is subexponentially decreasing in $n$}\,. \label{proof_eq_lowerbound1}
\end{align}

\medskip
\noindent \textbf{Step 2:} \textit{(Restricting the integration, part 1)} After the above preparation, consider now the first term in~\cref{eq_peaked-de-Finetti}. Because $\rho^{(n+k_n)}$ is permutation invariant by assumption, there exists a symmetric purification $\ket{\Psi^{(n+k_n)}} \in \mathrm{Sym}^{n+k_n}\big(\cH_C \ox \cH_{\bar{C}}\big)$ of $\rho^{(n+k_n)}$~\cite[Lemma 4.2.2]{renner_phd}. Then, we can rewrite the first term in~\cref{eq_peaked-de-Finetti} as
\begin{align}
    \tr_{k_n}\big[\rho^{(n+k_n)}\big] &=\tr_{k_n}\tr_{\bar{C}}\Big[\proj{\Psi}^{(n+k_n)}_{C\bar{C}} \Big]   \label{proof_eq_LHS4-1} \\
    &= \tr_{k_n}\tr_{\bar{C}}\Big[\id^{\ox n}_{C\bar{C}} \ox \id_{\mathrm{Sym}^{(k_n)}(C\bar{C})} \cdot \proj{\Psi}^{(n+k_n)}_{C\bar{C}} \Big]  \label{proof_eq_LHS4-2}\\
    \overset{\textnormal{\cite[Lem. 4.3.1]{renner_phd}}}&{=} \tr_{\bar{C}}\left( \binom{k_n+d_C^2-1}{k_n} \int_{\cS_1(C\bar{C})}\tr_{k_n}\Big[\id^{\ox n}_{C\bar{C}} \ox \proj{\theta}^{\ox k_n}_{C\bar{C}}  \cdot \proj{\Psi}^{(n+k_n)}_{C\bar{C}}  \Big] \di \theta \right)  \label{proof_eq_LHS4-3} \\
    &= \tr_{\bar{C}}\left( \int_{\cS_1(C\bar{C})}  \rho^{\ket{\theta}}_n \,\di \theta \right) \label{proof_eq_LHS4}
\end{align}
where $d_C := \dim(\cH_C)$, $\cS_1(C\bar{C})$ denotes the set of all normalized vectors $\ket{\theta} \in \cH_{C\bar{C}}$ and $\di \theta$ is the uniform probability (Haar) measure on $\cS_1(C\bar{C})$. The second step above holds since $\mathrm{Sym}^{n+k_n}(\cH) \subseteq \mathrm{Sym}^{n}(\cH)\ox \mathrm{Sym}^{k_n}(\cH) $ for any Hilbert space $\cH$, and in the last step we have defined
\begin{align} \label{proof_eq_def-rho-theta}
    \rho^{\ket{\theta}}_n \,:=\, \binom{k_n+d_C^2-1}{k_n}\, \tr_{k_n}\Big[\id^{\ox n}_{C\bar{C}} \ox \proj{\theta}^{\ox k_n}_{C\bar{C}}  \cdot \proj{\Psi}^{(n+k_n)}_{C\bar{C}}  \Big]\,.
\end{align}
For any $\gamma$ satisfying $0\leq\gamma< \frac{\alpha}{q}\,$, define $\epsilon_n := \frac{1}{n^\gamma}$. Now consider (normalized) vectors $\ket{\theta}_{C\bar{C}}$ with marginal $\theta_C := \tr_{\bar{C}}(\proj{\theta}_{C\bar{C}})$. 
%A natural measure on the reduced space of density operators in $\St(\cH_C)$ is then given by the one induced by the Hilbert-Schmidt metric~\cite{Zyczkowski_2001, Christandl_2009}, denoted by $\mu(\cdot)$. 
Then, we want to bound the following term,
\begin{align}
    \frac{1}{2}\,\bigg\| \tr_{k_n}\big[\rho^{(n+k_n)}\big] \,-\,\int_{\theta_C\in B_{\epsilon_n}(\rho)} &\tr_{\bar{C}}\big( \rho^{\ket{\theta}}_n\big)\,\di \theta  \bigg\|_1  \label{proof_eq_tprob0}\\
    \,\overset{\textnormal{DPI}}&{\leq} \, \frac{1}{2}\,\bigg\|\tr_{k_n}\Big[ \proj{\Psi}^{(n+k_n)}_{C\bar{C}}\Big] \,-\, \int_{\theta_C\in B_{\epsilon_n}(\rho)} \rho^{\ket{\theta}}_n \,\di \theta \bigg\|_1 \\
    \overset{\textnormal{\Cshref{proof_eq_LHS4-1,proof_eq_LHS4-2,proof_eq_LHS4-3,proof_eq_LHS4}}}&{=}\frac{1}{2}\,\bigg\| \int_{\theta_C\notin B_{\epsilon_n}(\rho)}\rho^{\ket{\theta}}_n \,\di \theta \bigg\|_1 \\ 
    &= \frac{1}{2}\,\tr\bigg[\int_{\theta_C\notin B_{\epsilon_n}(\rho)}\rho^{\ket{\theta}}_n \,\di \theta   \bigg] \label{proof_eq_tprob3} \\
    \overset{\textnormal{\Cshref{proof_eq_def-rho-theta}}}&{=}\frac{1}{2}\binom{k_n+d_C^2-1}{k_n} \int_{\theta_C\notin B_{\epsilon_n}(\rho)}  \tr\Big[\id^{\ox n}_{C\bar{C}} \ox \proj{\theta}^{\ox k_n}_{C\bar{C}}  \cdot \proj{\Psi}^{(n+k_n)}_{C\bar{C}}\Big]\,\di \theta   \\
    &= \frac{1}{2}\binom{k_n+d_C^2-1}{k_n} \int_{\theta_C\notin B_{\epsilon_n}(\rho)} p(\theta) \,\di \theta \label{proof_eq_tprob5}
\end{align}
where~\cref{proof_eq_tprob3} follows since $\rho^{\ket{\theta}}_n $ is positive semidefinite for any $\ket{\theta}$, and in the last step we have defined
\begin{align}
    p(\theta) \,:=\, \tr\Big[\id^{\ox n}_{C\bar{C}} \ox \proj{\theta}^{\ox k_n}_{C\bar{C}}  \cdot \proj{\Psi}^{(n+k_n)}_{C\bar{C}}\Big]
\end{align}
for any $\ket{\theta} \in \cS_1(C\bar{C})$.

\medskip
\noindent \textbf{Step 3:} \textit{(Restricting the integration, part 2)} In order to bound the term in~\cref{proof_eq_tprob5}, we will next derive a bound for $p(\theta)$ for any $\ket{\theta} \in \cS_1(C\bar{C})$ such that $\theta_C = \tr_{\bar{C}}(\proj{\theta}_{C\bar{C}}) \notin B_{\epsilon_n}(\rho)$ (for $\epsilon_n = \frac{1}{n^\gamma}$ with $0\leq\gamma< \frac{\alpha}{q}\,$). Define the positive semidefinite operator
\begin{align} \label{proof_eq_def-POVM}
    P^{k_n} := \id^{\ox n}_{C\bar{C}} \ox \Pi^{k_n}_{\zeta_n} \ox \id^{\ox k_n}_{\bar{C}}\,.
\end{align}
It follows that
\begin{align}
    \tr\Big[P^{k_n} \proj{\Psi}^{(n+k_n)}_{C\bar{C}} \Big] = \tr\Big[\big(\id_C^{\ox n} \ox \Pi_{\zeta_n}^{k_n}  \big) \,\rho^{(n+k_n)}\Big]\, \overset{\textnormal{\Cshref{proof_eq_tomo3}, }\eqref{proof_eq_tomo4}}{=}\,  1- p_{\zeta_n}^{k_n} 
\end{align}
and therefore, due to the Fuchs-van de Graaf inequality and the gentle measurement lemma (see~\cref{lem_gentle-measurement}),
\begin{align}
    \frac{1}{2}\,\Bigg|\, p(\theta)\, -\, \tr\Bigg[ \id^{\ox n}_{C\bar{C}} \ox \proj{\theta}^{\ox k_n}_{C\bar{C}}  &\cdot \frac{\sqrt{P^{k_n}} \proj{\Psi}^{(n+k_n)}_{C\bar{C}}\sqrt{P^{k_n}}}{ 1- p_{\zeta_n}^{k_n}}\Bigg] \Bigg| \\
    \overset{\textnormal{DPI}}&{\leq}    
    \frac{1}{2}\,\left\| \proj{\Psi}^{(n+k_n)}_{C\bar{C}}  -\frac{\sqrt{P^{k_n}} \proj{\Psi}^{(n+k_n)}_{C\bar{C}}\sqrt{P^{k_n}}}{ 1- p_{\zeta_n}^{k_n}} \right\|_1 \\ 
    \overset{\textnormal{F.v.d.G}}&{\leq} \sqrt{1-F\left(\proj{\Psi}^{(n+k_n)}_{C\bar{C}}, \frac{\sqrt{P^{k_n}} \proj{\Psi}^{(n+k_n)}_{C\bar{C}}\sqrt{P^{k_n}}}{ 1- p_{\zeta_n}^{k_n}}  \right)} \\
    \overset{\textnormal{\Cshref{lem_gentle-measurement}}}&{\leq} \sqrt{ p_{\zeta_n}^{k_n}} \,.
\end{align}
With the above, we can now bound
\begin{align} \label{proof_eq_lowerbound2}
    p(\theta) \,\leq\, 2\sqrt{ p_{\zeta_n}^{k_n}}\, +\, \underbrace{\tr\left[ \id^{\ox n}_{C\bar{C}} \ox \proj{\theta}^{\ox k_n}_{C\bar{C}}  \cdot \frac{\sqrt{P^{k_n}} \proj{\Psi}^{(n+k_n)}_{C\bar{C}}\sqrt{P^{k_n}}}{ 1- p_{\zeta_n}^{k_n}}\right]}_{=: (\epsilon)}\,.
\end{align}
The first term in~\cref{proof_eq_lowerbound2} is negligible in $n$ as argued in~\cref{proof_eq_lowerbound1}, while the second term $(\epsilon)$ can be bounded as follows. First, note that for any $\bar{\rho}\in B_{\zeta_n}(\rho)$ and $\theta_C\notin B_{\epsilon_n}(\rho)$ we have
\begin{align}
    \tfrac{1}{2}\big\|\bar{\rho}-\theta_C \big\|_1 \,\geq\,  \tfrac{1}{2}\big\|\rho-\theta_C \big\|_1 -\,  \tfrac{1}{2}\big\|\bar{\rho}-\rho \big\|_1 \,>\, \epsilon_n - \zeta_n \,=\, \zeta_n\,, \\
    \implies \quad \bar{\rho} \notin  B_{\zeta_n}(\theta_C) \label{proof_eq_triangle-ineq}
\end{align}
due to the triangle inequality. Furthermore, recall that the trace is cyclic and satisfies $\tr(AB) \leq \tr(A)\tr(B)$ for any two positive semidefinite operators $A,\,B$. Hence, as $\omega^{(k_n)}:= \tr_n\big[\proj{\Psi^{(n+k_n)}}\big]$ is normalized, we obtain
\begin{align}
    (\epsilon) \,&=\, \frac{1}{\big(1-p_{\zeta_n}^{k_n}\big)}\, \tr\left[\sqrt{P^{k_n}}\, \id^{\ox n}_{C\bar{C}} \ox \proj{\theta}^{\ox k_n}_{C\bar{C}}\,  \sqrt{P^{k_n}} \cdot\proj{\Psi}^{(n+k_n)}_{C\bar{C}}\right] \\
     \overset{\textnormal{\Cshref{proof_eq_def-POVM}}}&{=}\,  \frac{1}{\big(1-p_{\zeta_n}^{k_n}\big)}\, \tr\left[\sqrt{\Pi^{k_n}_{\zeta_n}} \ox \id^{\ox k_n}_{\bar{C}} \, \proj{\theta}^{\ox k_n}_{C\bar{C}}\, \sqrt{\Pi^{k_n}_{\zeta_n}} \ox \id^{\ox k_n}_{\bar{C}} \cdot\omega^{(k_n)}_{C\bar{C}}\right] \\
    &\leq \, \frac{1}{\big(1-p_{\zeta_n}^{k_n}\big)}\, \tr\left[\Pi^{k_n}_{\zeta_n} \ox \id^{\ox k_n}_{\bar{C}} \, \proj{\theta}^{\ox k_n}_{C\bar{C}}  \right] \\
    %\overset{\textnormal{\Cshref{proof_eq_tomo3}}}&{=}  \,\frac{1}{\big(1-p_{\zeta_n}^{k_n}\big)}\,\sum_{\bar{\rho}\in B_{\zeta_n}(\rho)}  \tr\left[ T^{k_n}_{\bar{\rho}} \theta_C^{\ox k_n} \right] \\
    \overset{\textnormal{\Cshref{proof_eq_tomo3}}}&{=}  \,\frac{1}{\big(1-p_{\zeta_n}^{k_n}\big)}\, \Pr\big[\texttt{tom}\big(\theta_C^{\ox k_n}\big) \in B_{\zeta_n}(\rho)\big] \\
    % \overset{\textnormal{\Cshref{proof_eq_triangle-ineq}}}&{\leq} \,\frac{1}{\big(1-p_{\zeta_n}^{k_n}\big)}\,\sum_{\bar{\rho}\notin B_{\zeta_n}(\theta_C)}  \tr\left[ T^{k_n}_{\bar{\rho}} \theta_C^{\ox k_n} \right] \\
      \overset{\textnormal{\Cshref{proof_eq_triangle-ineq}}}&{\leq} \,\frac{1}{\big(1-p_{\zeta_n}^{k_n}\big)}\, \Pr\big[\texttt{tom}\big(\theta_C^{\ox k_n}\big) \notin B_{\zeta_n}(\theta_C)\big] \\
    % &{=} \,\frac{1}{\big(1-p_{\zeta_n}^{k_n}\big)}\, \Pr\big[\texttt{tom}\big(\theta_C^{\ox k_n}\big) \notin B_{\zeta_n}(\theta_C)\big]  \\
     \overset{\textnormal{\cref{def_tom12-2},  \Cshref{lem_sample-subexp} \& \Cshref{proof_eq_tomo2},\eqref{proof_eq_lowerbound1}}}&{\longrightarrow}\ \textrm{ is subexponentially decreasing in $n$} \,.
\end{align}
Note that for the last step above, one needs to further restrict the parameter $\gamma$, which enters in $\zeta_n = \frac{1}{2n^\gamma}$, to satisfy $\gamma <\frac{\alpha}{\bar{q}}$ where $\bar{q}$ is the parameter analogous to $q$ that arises from the efficient sample complexity from~\cref{def_tom12-2}. 

In summary, we have shown that for any $\ket{\theta}_{C\bar{C}}$ such that $\theta_C \notin B_{\epsilon_n}(\rho)$, it follows that $p(\theta)$ is subexponentially decreasing in $n$.

\medskip
\noindent \textbf{Step 4:} \textit{(de Finetti part)} With Step 3 it is now clear that the bound in~\cref{proof_eq_tprob5} is subexponentially decreasing in $n$. Hence, we can now focus on the second term in~\cref{proof_eq_tprob0}. For that, we can repeat the arguments of~\cite[Theorem 4.3.2]{renner_phd} as follows. Let $P^{\ket{\theta}}$ denote the projector onto the space of pure states that are $\binom{n}{r_n}$-almost-iid along $\ket{\theta}$, denoted by $\mathrm{Sym}\big((C\bar{C})^{\ox n},\, \ket{\theta}^{\ox n - r_n} \big)$. Define the projected normalized density operator
\begin{align}
    \bar{\rho}^{\ket{\theta}}_n \,:=\, \frac{1}{q(\ket{\theta})}\,P^{\ket{\theta}}\rho^{\ket{\theta}}_n P^{\ket{\theta}}\,,
\end{align}
where $q(\ket{\theta}) := \tr\big(P^{\ket{\theta}}\rho^{\ket{\theta}}_n P^{\ket{\theta}}\big)$. In the case $q(\ket{\theta}) = 0$, the state $ \bar{\rho}^{\ket{\theta}}_n$ may be chosen arbitrarily. Then, it holds that
\begin{align}
    \frac{1}{2}\,\bigg\|\int_{\theta_C\in B_{\epsilon_n}(\rho)} \tr_{\bar{C}}\big( \rho^{\ket{\theta}}_n\big)\,\di \theta  \,&-\,\int_{\theta_C\in B_{\epsilon_n}(\rho)} \tr_{\bar{C}}\big( \bar{\rho}^{\ket{\theta}}_n\big)\,q(\ket{\theta})\,\di \theta  \bigg\|_1 \\
    \overset{\textnormal{DPI}}&{\leq} \frac{1}{2}\,\bigg\|\int_{\theta_C\in B_{\epsilon_n}(\rho)}  \rho^{\ket{\theta}}_n\,\di \theta \,-\,\int_{\theta_C\in B_{\epsilon_n}(\rho)} \bar{\rho}^{\ket{\theta}}_n\,q(\ket{\theta})\,\di \theta \bigg\|_1 \\
    &\leq  \frac{1}{2}\,\int_{\theta_C\in B_{\epsilon_n}(\rho)} \Big\| \rho^{\ket{\theta}}_n \,-\, P^{\ket{\theta}}\rho^{\ket{\theta}}_n P^{\ket{\theta}}  \Big\|_1 \di\theta \\
    &\leq  \frac{1}{2}\,\int_{\cS_1(C\bar{C})} \Big\| \rho^{\ket{\theta}}_n \,-\, P^{\ket{\theta}}\rho^{\ket{\theta}}_n P^{\ket{\theta}}  \Big\|_1 \di\theta\,, \label{proof_eq_old-deFinetti}
\end{align}
and at this point, we can just repeat exactly the same steps as in~\cite[Theorem 4.3.2]{renner_phd} to show that the term in~\cref{proof_eq_old-deFinetti} satisfies
\begin{align}
    \frac{1}{2}\,\int_{\cS_1(C\bar{C})} \Big\| \rho^{\ket{\theta}}_n \,-\, P^{\ket{\theta}}\rho^{\ket{\theta}}_n P^{\ket{\theta}}  \Big\|_1 \di\theta \,&\leq\, k_n^{d^2_C/2}\exp\left(- \frac{k_n(r_n+1)}{2(n+ k_n)} \right) \\
    &\leq\,\big(n^\alpha +1\big)^{d^2_C/2}\exp\left(- \frac{n^{\alpha+\beta}}{2(n+ n^\alpha+1)} \right) \\
    \quad &\longrightarrow\ \textrm{ is subexponentially decreasing in $n$} \,.
\end{align}

\medskip
\noindent \textbf{Step 5:} 
%A natural measure on the reduced space of density operators in $\St(\cH_C)$ is then given by the one induced by the Hilbert-Schmidt metric~\cite{Zyczkowski_2001, Christandl_2009}, denoted by $\mu(\cdot)$. 
Now we can put all the pieces together. First, for any vector $\ket{\theta}_{C\bar{C}}$ we relabel $\omega := \theta_C$ and we fix any purification $\ket{\bar{\theta}_\omega}_{C\bar{C}}$ of $\omega$ (e.g., the canonical one). Then, we define
\begin{align}
q(\omega)\,:=\, \int_{\cU(\bar{C})} q(\id_C \otimes U_{\bar{C}} \ket{\bar{\theta}_\omega}) \,\di U \,\geq\,0\,,
\end{align}
where ${\cU(\bar{C})}$ denotes the set of all unitaries on $\bar{C}$ and $\di U$ is the Haar measure defined on this set. With that, we rewrite the following integral, by a change of variables, as
\begin{align}
\int_{\theta_C\in B_{\epsilon_n}(\rho)} \tr_{\bar{C}}\big( \bar{\rho}^{\ket{\theta}}_n\big)\,q(\ket{\theta})\,\di \theta \,&=\,\int_{\omega \in B_{\epsilon_n}(\rho)} \di\mu(\omega)  \int_{\cU(\bar{C})}  \tr_{\bar{C}}\Big( \bar{\rho}^{(\id_C \otimes U_{\bar{C}})\ket{\bar{\theta}_\omega}}_n\Big)\,q(\id_C \otimes U_{\bar{C}}\ket{\bar{\theta}_\omega})\,\di U\  \\
\,&=\, \int_{\omega \in B_{\epsilon_n}(\rho)} q(\omega) \di\mu(\omega)  \int_{\cU(\bar{C})}  \tr_{\bar{C}}\Big( \bar{\rho}^{\id_C \otimes U_{\bar{C}}\ket{\bar{\theta}_\omega}}_n\Big)\,\frac{q(\id_C \otimes U_{\bar{C}}\ket{\bar{\theta}_\omega})}{q(\omega)} \,\di U \\
\,&=\, \int_{\omega \in B_{\epsilon_n}(\rho)} \sigma_n^{(\omega)}  \,q(\omega) \di\mu(\omega) \,,
\end{align}
where $\mu(\cdot)$ is the measure on the reduced space of density operators in $\St(\cH_C)$ induced by the Hilbert-Schmidt metric~\cite{Zyczkowski_2001, Christandl_2009}, and we have defined
\begin{align}
 \sigma_n^{(\omega)} \,:=\, \begin{cases} \int_{\cU(\bar{C})}  \tr_{\bar{C}}\big( \bar{\rho}^{\id_C \otimes U_{\bar{C}}\ket{\bar{\theta}_\omega}}_n\big)\,\frac{q(\id_C \otimes U_{\bar{C}}\ket{\bar{\theta}_\omega})}{q(\omega)} \,\di U  & q(\omega) > 0 \\ \omega^{\otimes n}   & q(\omega) = 0\end{cases} \,.
\end{align}
Note that since $\bar{\rho}^{\ket{\theta}}_n$ is $\binom{n}{r_n}$-almost-iid along $\ket{\theta}$ for any vector $\ket{\theta} \in \cS_1(C\bar{C})$, this implies that $ \tr_{\bar{C}}\big( \bar{\rho}^{(\id_C \otimes U_{\bar{C}})\ket{\bar{\theta}_\omega}}_n\big)$ is $\binom{n}{r_n}$-almost-iid along $\omega$. On the other hand, $\frac{q(\id_C \otimes U_{\bar{C}}\ket{\bar{\theta}_\omega})}{q(\omega)} \,\di U$ defines a probability measure on $\cU(\bar{C})$ by construction. Thus, by convexity of the set of almost-iid states, it follows that $\sigma^{(\omega)}_n$ is $\binom{n}{r_n}$-almost-iid along $\omega$ for all $\omega \in \St(\cH_C)$.\footnote{More precisely, it holds that $\sigma^{(\omega)}_n \in \overline{\mathrm{conv}( \St^n(\cH_C, \omega^{\otimes n-r_n}))} =  \St^n(\cH_C, \omega^{\otimes n-r_n}) $ due to~\cite[Proposition~1.2.12]{analysis_banach} and since $ \St^n(\cH_C, \omega^{\otimes n-r_n})$ is convex and compact (see~\cref{rmk_properties_almost_iid_states}).}

Finally, by setting $\di\nu_n(\omega):= q(\omega) \di\mu(\omega)$, which is a (not necessarily normalized) measure on $ \St(\cH_C)$, we overall find
\begin{align}
     \frac{1}{2}\,\bigg\|\,\tr_{k_n}\big[\rho^{(n+k_n)}\big]\,&-\, \int_{\omega \in B_{\epsilon_n}(\rho)} \sigma_n^{(\omega)} \,\di\nu_n(\omega)\, \bigg\|_1 \\
     &\leq \,\frac{1}{2}\,\bigg\| \tr_{k_n}\big[\rho^{(n+k_n)}\big] \,-\,\int_{\theta_C\in B_{\epsilon_n}(\rho)} \tr_{\bar{C}}\big( \rho^{\ket{\theta}}_n\big)\,\di \theta \bigg\|_1 \nonumber \\
     &\phantom{aaaa}\,+\, \frac{1}{2}\,\bigg\|\int_{\theta_C\in B_{\epsilon_n}(\rho)} \tr_{\bar{C}}\big( \rho^{\ket{\theta}}_n\big)\,\di \theta \,-\,\int_{\theta_C\in B_{\epsilon_n}(\rho)} \tr_{\bar{C}}\big( \bar{\rho}^{\ket{\theta}}_n\big)\,q(\ket{\theta})\,\di \theta \bigg\|_1 \\
     \nonumber \\
     \quad &\longrightarrow\ \textrm{ is subexponentially decreasing in $n$} \,,
\end{align}
which completes the proof.
\end{proof}

\begin{proposition} \label{prop_close-almost-iid}
   For $n\in \N$, let $r =\lfloor n^\beta\rfloor$, and $\epsilon = \frac{1}{n^\gamma}$ for some $\beta,\,\gamma \in (0,1)$. Let $\rho \in \St(\cH_C)$, $\omega \in B_{\epsilon}(\rho)$, and let $\sigma_n^{(\omega)}\in \St(\cH_C^{\ox n})$ be $\binom{n}{r}$-almost-iid along $\omega$.

   Then, for any $\kappa \in (0,1)$ with $\kappa < \min\{\gamma,1-\beta\}$, there exists a sequence of states $\sigma'_n \in\St(\cH_C^{\ox n})$ that are $\binom{n}{r'}$-almost-iid along $\rho$ with $r' = r+{\lceil (n-r)2\epsilon' \rceil}$ and $\epsilon' = \frac{1}{n^\kappa}$, such that 
   \begin{align}
       \tfrac{1}{2}\Big\|\sigma_n^{(\omega)} - \sigma'_n \Big\|_1 \,\leq\,g(n) \ \ \longrightarrow\ \textrm{ is subexponentially decreasing in $n$} \,.
   \end{align}
Moreover, the function $g(n)$ is independent of $\omega$ and only depends on the parameters $\beta,\,\gamma,\,\kappa$.
\end{proposition}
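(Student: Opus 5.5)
The plan is to work at the level of purifications. I will replace the single-copy purification of $\omega$ by one of $\rho$, and then coherently truncate the number of positions in which the two differ. This converts an almost-iid state along $\omega$ into an almost-iid state along $\rho$ with $\approx n^{1-\kappa}$ additional defects.

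\textbf{Step 1 (aligning the purifications).} Fix a purification $\ket{\theta_\rho}_{CE}$ of $\rho$. By Uhlmann's theorem there is a purification $\ket{\theta_\omega}_{CE}$ of $\omega$ with $|\braket{\theta_\rho|\theta_\omega}|^2=F(\rho,\omega)$. Fuchs--van de Graaf gives $F\ge(1-\epsilon)^2$. Hence we may write
\begin{align*}
\ket{\theta_\omega}=\sqrt{1-\eta}\,\ket{\theta_\rho}+\sqrt{\eta}\,\ket{\theta^\perp},\qquad \eta\le 2\epsilon=2n^{-\gamma},
\end{align*}
up to an irrelevant phase. By \cref{rmk_properties_almost_iid_states}(a), $\sigma_n^{(\omega)}$ has a permutation-invariant extension supported in $\mathrm{span}\,\cV^n_r(\cH_{CE},\ket{\theta_\omega})$, with this particular purification. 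Purifying once more, \cref{lem_purif_almost_iid} gives
\begin{align*}
\ket{\Theta}=\sum_{j\in\cT_n}\alpha_j\ket{\Psi_j}\ket{\tilde h_j},\qquad \ket{\Psi_j}=U_{\pi_j}\big(\ket{\theta_\omega}^{\ox n-r}\ox\ket{\Omega_j}\big).
\end{align*}

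\textbf{Step 2 (truncation of a single branch).} Let $T$ be the projector onto $\mathrm{span}\,\cV^n_{r'}(\cH_{CE},\ket{\theta_\rho})$, with $r'=r+\lceil (n-r)2\epsilon'\rceil$. Expand $\ket{\theta_\omega}^{\ox n-r}$ in the product basis generated by $\{\ket{\theta_\rho},\ket{\theta^\perp}\}$. Every term with at most $(n-r)2\epsilon'$ factors $\ket{\theta^\perp}$ lies in that span after attaching $\ket{\Omega_j}$ and applying $\pi_j$. Therefore
\begin{align*}
\|(\id-T)\ket{\Psi_j}\|^2\le \Pr\big[X\ge (n-r)2\epsilon'\big],\qquad X\sim\mathrm{Bin}(n-r,\eta).
\end{align*}
Because $\kappa<\gamma$, the threshold exceeds the mean $(n-r)\eta\le (n-r)2n^{-\gamma}$ by a factor $n^{\gamma-\kappa}\to\infty$. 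A multiplicative Chernoff bound then gives a tail $\le 2^{-c\,n^{1-\kappa}}$ for large $n$. This bound depends on $\omega$ only through $\eta\le 2n^{-\gamma}$, which gives the uniformity in $\omega$.

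\textbf{Step 3 (coherent sum, the main obstacle).} The branches must be handled together, since $T$ acts on the superposition. By Cauchy--Schwarz and $\sum_j|\alpha_j|^2=1$,
\begin{align*}
\|(\id-T)\ket{\Theta}\|\le\sum_j|\alpha_j|\,\|(\id-T)\ket{\Psi_j}\|\le\sqrt{|\cT_n|}\;2^{-c n^{1-\kappa}/2}.
\end{align*}
The delicate point is that $|\cT_n|$ is itself subexponentially large. I will bound it by
\begin{align*}
|\cT_n|\le\binom{n}{r}d_{CE}^{\,r}\le 2^{r\log(en/r)+r\log d_{CE}}=2^{O(n^\beta\log n)},
\end{align*}
rather than by the cruder $2^{nh(r/n)}$ estimate used in \cref{prop_almost-iid-in-OP}. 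With this bound, the condition $\kappa<1-\beta$ ensures $n^{1-\kappa}$ dominates $n^\beta\log n$, so the product is still subexponentially decreasing.

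\textbf{Step 4 (conclusion).} Define $\sigma'_n$ as the $C_1^n$-marginal of the normalized state $T\proj{\Theta}T/\|T\ket{\Theta}\|^2$. The gentle measurement lemma (\cref{lem_gentle-measurement}) together with Fuchs--van de Graaf gives $\frac12\|\proj{\Theta}-T\proj{\Theta}T/\|T\Theta\|^2\|_1\le\|(\id-T)\ket{\Theta}\|$. Data processing then yields the claimed bound $g(n)$, which depends only on $\beta,\gamma,\kappa$ and the dimensions. Finally, $T$ commutes with simultaneous permutations of the pairs $(C_i,E_i)$. The projected state therefore keeps property~\eqref{it_first_def} of \cref{def_almost_product_state_mixed}, since we may first symmetrize the outer purification as in \cite{renner_phd}. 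It also has support in $\mathrm{span}\,\cV^n_{r'}(\ket{\theta_\rho})$. Hence $\sigma'_n$ is $\binom{n}{r'}$-almost-iid along $\rho$.
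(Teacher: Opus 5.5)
Your proposal is correct and is essentially the paper's proof: Uhlmann-aligned purifications with $\eta\le 2\epsilon$, the binomial expansion of $\ket{\theta_\omega}^{\ox (n-r)}$ cut at $\lceil (n-r)2\epsilon'\rceil$ orthogonal factors with a tail bound that is uniform in $\omega$, the triangle inequality over the $|\cT_n|$ branches of the preferred-basis purification, and $|\cT_n|=2^{O(n^\beta\log n)}$ being beaten by $n^{1-\kappa}$ exactly when $\kappa<1-\beta$. (Note that $2^{nh(r/n)}$, which the paper uses directly here, is no worse than your $(en/r)^r$; only the extra step $h(x)\le 2\sqrt{x}$ in \cref{prop_almost-iid-in-OP} is crude.)

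The one variation is that you apply the projector $T$ onto $\mathrm{span}\,\cV^n_{r'}(\cH_{CE},\ket{\theta_\rho})$ to the whole purification. The paper instead replaces $\ket{\theta}^{\ox m}$ by a normalized truncated vector inside each $\ket{\Psi_i}$ and then applies $\cS_n$. Since $T$ commutes with every $U_\pi$ and $\tr_H\proj{\Theta}$ is already permutation invariant, property~\eqref{it_first_def} holds automatically, with no symmetrization of the outer purification needed. The error is also exactly $\|(\id-T)\ket{\Theta}\|$, so your Step~4 is slightly cleaner than the paper's.
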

\begin{proof}
Let $\ket{\rho}_{CE} \in \cH_{CE}$ be any purification of $\rho$ with $E\simeq C$. Then, by Uhlmann's theorem (combined with the Fuchs-van de Graaf inequalities), there exists a purification $\ket{\theta}_{CE} \in \cH_{CE}$ of $\omega$ such that
\begin{align} \label{prop_proof_trace-dist}
    \tfrac{1}{2}\Big\|\proj{\rho}_{CE} - \proj{\theta}_{CE} \Big\|_1 \,\leq\, \sqrt{2\epsilon}\,.
\end{align}
In the following, we may write
\begin{align} \label{prop_proof_decomp}
    \ket{\theta}_{CE} \,=\, a \ket{\rho}_{CE} + b \ket{\rho^\perp}_{CE}\,,\quad a,\,b \in \R^+ \quad \textrm{and} \quad a^2 + b^2 = 1\,,
\end{align}
where $\ket{\rho^\perp}_{CE}\in \cH_{CE} $ is chosen to satisfy $\braket{\rho^\perp|\rho} = 0$, and $a$ and $b$ can be assumed to be real and non-negative since the phases of $\ket{\theta}_{CE}$ and $\ket{\rho^\perp}_{CE}$ are irrelevant. Note that
\begin{align}
    b \,=\,\tfrac{1}{2}\Big\|\proj{\rho}_{CE} - \proj{\theta}_{CE} \Big\|_1 \overset{\textnormal{\Cshref{prop_proof_trace-dist}}}{\leq} \sqrt{2\epsilon} \\ \overset{\textnormal{\Cshref{prop_proof_decomp}}}{\implies} \quad a \geq \sqrt{1-2\epsilon}\,.
\end{align}
%By the definition of almost-iid states, there exists an orthonormal family of states $\big\{\ket{\Psi_i} \big\}_{i\in\cT}$ from $\cV(\cH^{\ox n}_{CE},\,\ket{\theta}_{CR}^{\ox n -r} )$, and an extension $\bar{\sigma}^{(\omega)}_{C^nE^n}$ of $\sigma_n^{(\omega)}$ that is of the form
%\begin{align} \label{prop_proof_def-almost-iid}
%    \bar{\sigma}^{(\omega)}_{C^nE^n} \,=\, \sum_{i,j} c_{i,j} \ket{\Psi_i}\bra{\Psi_j}\,,
%\end{align}
%with coefficients $c_{i,j} \in \C $. 
To simplify the notation, we will write $\ket{0} \equiv \ket{\rho}_{CE} $ and $\ket 1 \equiv \ket{\rho^\perp}_{CE} $ in what follows. In addition, for the (Bernoulli) probability distribution $Q^{(k,m)}$ specified by $Q^{(k,m)}(1) := \frac{k}{m} $ and $Q^{(k,m)}(0) := \frac{m-k}{m} $, we denote by \smash{$\Lambda^{Q^{(k,m)}}_m$} the corresponding type class, i.e., the set of all $m$-tuples ${\bf{x}} = (x_1,\dots,x_m)$ with $x_i \in \{0,1\}$ and with frequency distribution $\lambda_{\bf{x}} = Q^{(k,m)}$.

For $m:=n-r$, consider the tensor power state
\begin{align} \label{prop_proof_binomial1}
    \ket{\theta}_{CE}^{\ox m}\,=\,\big(a \ket{0} + b\ket{1} \big)^{\ox m} \,=\,\sum_{k=0}^m a^{m-k} b^k\left(\sum_{(x_1,\dots,x_m)\in \Lambda^{Q^{(k,m)}}_m} \ket{x_1}\ox \cdots \ox \ket{x_m}\right)\,.
\end{align}
The idea is now to truncate the sum in~\cref{prop_proof_binomial1} so as to limit the number of \emph{faulty} states $\ket 1$ in such a way that the resulting state is still close to the original state $ \ket{\theta}_{CE}^{\ox m}$. To do so, note that the index $k$ in the sum corresponds just to the number of states $\ket 1$ contained in the summand. Hence, the probability of measuring $k$ times the outcome ``$1$" in $\ket{\theta}_{CE}^{\ox m}$ is given by $\Pr[k] = \binom{m}{k}\big|a^{m-k} b^k\big|^2$, thus corresponding to a binomial distribution with mean $\mu = mb^2 \leq m2\epsilon$. This suggests truncating~\cref{prop_proof_binomial1} at a value $k$ that is strictly greater than the mean, say $\lceil m2\epsilon' \rceil$ for some $\epsilon' > \epsilon$. More concretely, let $\kappa \in (0,\gamma)$ and $\epsilon' := \frac{1}{n^\kappa}$. In the following, we consider $n\in \N$ large enough such that $2\epsilon' \leq 1$. Then, we define
\begin{align}
    \ket{\theta'^{(m)}} \,:=\, \frac{1}{\sqrt{\cN}}\,\sum_{k=0}^{\lceil m2\epsilon' \rceil} a^{m-k} b^k\left(\sum_{(x_1,\dots,x_m)\in \Lambda^{Q^{(k,m)}}_m} \ket{x_1}\ox \cdots \ox \ket{x_m}\right)\,.
\end{align}
with normalization
\begin{align}
    \cN \,=\, \sum_{k=0}^{\lceil m2\epsilon' \rceil} {(a^2)}^{m-k} (b^2)^k \binom{m}{k}\,.
\end{align}

Then, it holds that
\begin{align}
    \Big| \braket{\theta'^{(m)}|\theta^{\ox m}} \Big|^2 \,=\,\left|\frac{1}{\sqrt{\cN}}\,\sum_{k=0}^{\lceil m2\epsilon' \rceil} {(a^2)}^{m-k} (b^2)^k \binom{m}{k}  \right|^2 \,=\,\cN\,,
    %\quad \overset{m\to \infty}{\longrightarrow} \quad 1\,,
\end{align}
and therefore
\begin{align}
    \tfrac{1}{2}\Big\|\proj{\theta'^{(m)}} - \proj{\theta}^{\ox m} \Big\|_1 \,=\,\sqrt{1-\Big| \braket{\theta'^{(m)}|\theta^{\ox m}} \Big|^2} \,=\,\sqrt{1-\cN}\,.
    %\quad \overset{m\to \infty}{\longrightarrow} \quad 0\,.
\end{align}
Next, we need to bound the value of $\cN$. Fortunately, the quantity $\cN$ also appears when studying the sum of independent Bernoulli random variables. In fact, let $X_1,\dots,X_m$ be independent random variables that are distributed according to the Bernoulli distribution with probability $p= b^2$, and let $S_m := X_1+\cdots +X_m$. Then we have
\begin{align} \label{proof_prop_prob-Ber}
    \Pr\big[S_m\leq \lceil m2\epsilon' \rceil \big] \,=\, \cN \quad \implies \quad 1-\cN \,=\,\Pr\big[S_m > \lceil m2\epsilon' \rceil \big] \,\leq\, \Pr\big[S_m\geq \ m2\epsilon' \big] \,.
\end{align}
Now we can use Hoeffding's inequality~\cite[Theorem 1]{Hoeffding1963} to bound the last term in~\cref{proof_prop_prob-Ber}  as follows,
\begin{align} \label{proof_prop_Hoeffding0}
    \Pr\big[S_m\geq \ m2\epsilon' \big] \,\leq\, 2^{-m\,\Lambda(2\epsilon',b^2)}\,,
\end{align}
where
\begin{align} 
    \Lambda\big(q,p\big)\,:=\, q \log\left(\frac{q}{p} \right) + (1-q)\log\left(\frac{1-q}{1-p} \right)
\end{align}
corresponds to the Kullback–Leibler divergence between two Bernoulli distributions with probability $q$ and $p$, respectively. Furthermore, since $\Lambda\big(q,x\big)$ is monotonically decreasing in $x$ for $x\in(0,q)$, we find $\Lambda\big(2\epsilon',b^2\big) \geq \Lambda\big(2\epsilon',2\epsilon\big)$, and therefore
\begin{align} \label{proof_prop_Hoeffding}
     1-\cN \,\leq\, 2^{-m\,\Lambda(2\epsilon',2\epsilon)}\,.
\end{align}
The bound in~\cref{proof_prop_Hoeffding} will converge to zero (with increasing $n$) quickly enough for our purposes, as we shall see shortly.

Recall that by the definition of almost-iid states, there exists an orthonormal family of states $\big\{\ket{\Psi_i} \big\}_{i\in\cT}$ from $\cV^n_r(\cH_{CE},\,\ket{\theta}_{CE} )$, and an extension $\bar{\sigma}^{(\omega)}_{C^nE^n}$ of $\sigma_n^{(\omega)}$ that is of the form
\begin{align} \label{prop_proof_def-almost-iid}
    \bar{\sigma}^{(\omega)}_{C^nE^n} \,=\, \sum_{i,j \in\cT} c_{i,j} \ket{\Psi_i}\!\bra{\Psi_j}\,,
\end{align}
with coefficients $c_{i,j} \in \C $. Next, we employ~\cref{lem_purif_almost_iid} to construct a (normalized) purification $\ket{\Sigma^{(\omega)}}_{C^nE^n H}$ of $\bar{\sigma}^{(\omega)}_{C^nE^n} $ specified by
\begin{align}  \label{prop_proof_def-almost-iid-pure}
\ket{\Sigma^{(\omega)}}_{C^nE^n H} \,:=\, \sum_{i \in \cT} \alpha_i \ket{\Psi_i}_{C^nE^n} \ket{h_i}_H
\end{align}
where $\alpha_i \in \C$ such that $\sum_{i \in \cT} |\alpha_i|^2 = 1$ and $ \{\ket{h_i}_H\}_{i\in \cT}$ are normalized (but not necessarily orthogonal) states.

%Finally, we construct a new state $\sigma'_{C^nE^n} \in\St(\cH_{CE}^{\ox n})$ by taking~\cref{prop_proof_def-almost-iid} and rotating the subsystems in $\ket{\Psi_i}$ that are of the form $\ket{\theta}^{\ox m}$ into $\ket{\theta'^{(m)}}$ for all $i\in \cT$. Due to~\cref{lem_basis-unitary}, there exists a unitary map $U$ on $\cH_{CE}$ such that $U\ket{\theta} = \ket{\xi}$ and $\norm{\id - U}_{\infty} = \norm{\ket{\theta}-\ket{\xi}}_2$, where $\norm{\cdot}_2$ and \smash{$\norm{A}_\infty := \sqrt{\lambda_{\max}(A^\dagger A)}$} denote the Euclidean norm and the corresponding operator norm, respectively.

Next, we construct a new (unnormalized) vector $\ket{\Sigma'}_{C^nE^n H} \in \cH^{\otimes n}_{CE} \otimes \cH_H$ by taking~\cref{prop_proof_def-almost-iid-pure} and replacing all subsystems in $\ket{\Psi_i}$ that are of the form $\ket{\theta}^{\ox m}$ by $\ket{\theta'^{(m)}}$ for all $i\in \cT$. The resulting vectors are denoted by $\ket{\Psi'_i}$. Note that this is possible since $\ket{\theta}^{\ox m}$ is pure and, therefore, occurs as a tensor product with the remaining subsystems in $\ket{\Psi_i}$. Formally,
\begin{align}
%    \sigma'_{C^nE^n} \,:=\,  \sum_{i,j} c_{i,j} \ket{\Psi'_i}\bra{\Psi'_j}\,,\quad \ket{\Psi'_i}\in \Big\{\pi\big(\ket{\theta'^{(m)}} \otimes \ket{\Omega^{(n -m)}}\big)\ :\  \pi \in \cS_n,\, \ket{\Omega^{(n -m)}} \in \cH_{CE}^{\otimes n -m} \Big\} \quad \forall\,i\in \cT\,.
\ket{\Sigma'}_{C^nE^n H} \,:=\, \sum_{i \in \cT} \alpha_i \ket{\Psi'_i}_{C^nE^n} \ket{h_i}_H    \,.  % \quad \ket{\Psi'_i}\in \cV(\cH^{\ox n}_{CE},\,\ket{\theta'^{(m)}}_{CE} ) \quad \forall\,i\in \cT\,.
\end{align}
Note that the states $ \ket{\Psi'_i}$ are normalized and satisfy $\braket{\Psi'_i|\Psi_i} = \braket{\theta'^{(m)}|\theta^{\ox m}}= \sqrt{\cN}$ for all $i\in \cT$. By construction, $\ket{\theta'^{(m)}}$ is symmetric and $\binom{m}{{\lceil m2\epsilon' \rceil}}$-almost-iid along $\ket{0} \equiv \ket{\rho}_{CE}$. Hence, we have that 
\begin{align} \label{proof_prop_supp1}
 \ket{\Psi'_i} \in \mathrm{span}\,  \cV^n_{r + \lceil m2\epsilon' \rceil }\big(\cH_{CE},\,\ket{\rho}_{CE} \big) \quad \forall\, i\in \cT\,.
 \end{align}
%$\binom{n}{r+{\lceil m2\epsilon' \rceil}}$-almost-iid in $\ket{\rho}_{CE}$ for all $i\in \cT$. 
Furthermore, for $\norm{\cdot}_2$ denoting the Euclidean norm, we evaluate
\begin{align}
\big\| \ket{\Sigma'}_{C^nE^n H}- \ket{\Sigma^{(\omega)}}_{C^nE^n H} \big\|_2 \,&=\, \norm{\sum_{i\in\cT} \alpha_i \ket{h_i}_H \Big(\ket{\Psi'_i}_{C^nE^n} - \ket{\Psi_i}_{C^nE^n} \Big)}_2 \\
\overset{\textnormal{triangle}}&{\leq} \, \sum_{i\in\cT} | \alpha_i | \,  \big\| \ket{\Psi'_i}_{C^nE^n} - \ket{\Psi_i}_{C^nE^n} \big\|_2 \\
\overset{\textnormal{\Cshref{lem_norm-identities}}}&{=} \,  \sum_{i\in\cT} | \alpha_i |  \sqrt{2\big(1-\mathrm{Re}(\braket{\Psi_i|\Psi'_i})\big)} \\
%&\leq\, \sum_{i\in\cT} | \alpha_i |  \sqrt{2\big(1-|\braket{\Psi_i|\Psi'_i}|^2\big)} \\
&\leq \, |\cT|  \sqrt{2(1-\cN)} \\
 \overset{\textnormal{\Cshref{proof_prop_Hoeffding}}}&{\leq}\, \sqrt{2} |\cT| \sqrt{ 2^{-m\,\Lambda(2\epsilon',2\epsilon)}}\,, \label{proof_prop_bound1}
\end{align}
where we have also used that $\ket{\Psi_i}$, $\ket{\Psi'_i}$ and  $\ket{h_i}_H$ are all normalized and that $0\leq\braket{\Psi'_i|\Psi_i} \leq 1 $ for all $i\in \cT$.

Finally, we define the state
\begin{align}
 \sigma'_{C^nE^n}  \,:=\, \cS_n \left( \tr_{H}\left[\tfrac{1}{|\braket{\Sigma'|\Sigma'}|} \proj{\Sigma'}_{C^nE^n H} \right]  \right) \,,
\end{align}
where $\cS_n$ denotes the symmetrization map introduced in~\cref{eq_symmetrization-map}. Notice that~\cref{proof_prop_bound1} combined with the estimate in~\cref{proof_prop_bound3} given below justify that $\braket{\Sigma'|\Sigma'} >0$ for large enough values of $n$. By construction, $ \sigma'_{C^nE^n} $ is a valid normalized state and permutation invariant. In addition, it follows from~\cref{lem_purif_almost_iid} that the support of $ \sigma'_{C^nE^n} $ is contained in the space spanned by (permutations of) the states $ \{\ket{\Psi'_i}\}_{i\in\cT}$. As a consequence of that and~\cref{proof_prop_supp1}, we have
\begin{align}
\supp(\sigma'_{C^nE^n}) \subseteq \,  \mathrm{span}\,  \cV^n_{r + \lceil m2\epsilon' \rceil }\big(\cH_{CE},\,\ket{\rho}_{CE} \big) \,.
\end{align}
% As the support of $ \sigma'_{C^nE^n} $ is contained in the space spanned by the states $ \ket{\Psi'_i}$,
Therefore, $ \sigma'_{C^nE^n} $ is $\binom{n}{r+{\lceil m2\epsilon' \rceil}}$-almost-iid along $\ket{\rho}_{CE}$. Hence, we find that $ \sigma'_n := \tr_{E^n}[ \sigma'_{C^nE^n} ]$ is $\binom{n}{r'}$-almost-iid along $\rho$, with
\begin{align}
 r' \,:=\,  r+{\lceil (n-r)2\epsilon' \rceil}\,.
    % r' \,:=\, \lceil 2n\epsilon' \rceil +\lceil r(1-2\epsilon') \rceil + 1 \,\geq\, r+{\lceil m2\epsilon' \rceil}\,.
\end{align}
Furthermore, we can bound
\begin{align}
 \tfrac{1}{2}\Big\| \sigma'_n - \sigma_n^{(\omega)} \Big\|_1 \,\overset{\textnormal{DPI}}&{\leq}  \, \tfrac{1}{2}\Big\| \tfrac{1}{|\braket{\Sigma'|\Sigma'}|} \proj{\Sigma'}_{C^nE^n H} - \proj{\Sigma^{(\omega)}}_{C^nE^n H} \Big\|_1 \\
 \overset{\textnormal{\Cshref{lem_norm-identities} \eqref{lem_norm-identities-it3}}}&{\leq}  \,  \Big\| \tfrac{1}{\sqrt{|\braket{\Sigma'|\Sigma'}|}} \ket{\Sigma'}_{C^nE^n H} - \ket{\Sigma^{(\omega)}}_{C^nE^n H} \Big\|_2 \\
  \overset{\textnormal{triangle}}&{\leq}  \,   \Big\| \Big(\tfrac{1}{\sqrt{|\braket{\Sigma'|\Sigma'}|}} -1 \Big) \ket{\Sigma'}_{C^nE^n H}   \Big\|_2 +   \Big\| \ket{\Sigma'}_{C^nE^n H} - \ket{\Sigma^{(\omega)}}_{C^nE^n H} \Big\|_2 \\
  &= \,  \Big|1 -  \sqrt{|\braket{\Sigma'|\Sigma'}|} \Big|  +   \Big\| \ket{\Sigma'}_{C^nE^n H} - \ket{\Sigma^{(\omega)}}_{C^nE^n H} \Big\|_2 \\
   \,\overset{\textnormal{triangle}}&{\leq}  \, 2  \Big\| \ket{\Sigma'}_{C^nE^n H} - \ket{\Sigma^{(\omega)}}_{C^nE^n H} \Big\|_2 \\
   \,\overset{\textnormal{\Cshref{proof_prop_bound1}}}&{\leq} \, 2 \sqrt{2}   |\cT| \sqrt{ 2^{-m\,\Lambda(2\epsilon',2\epsilon)}}\,, \label{proof_prop_bound2}
\end{align}
where we also used that $\bar{\sigma}^{(\omega)}_{C^nE^n} $ is permutation invariant by assumption in the first step, and that $ \ket{\Sigma^{(\omega)}}_{C^nE^n H}$ is normalized in the penultimate step.

It remains to be shown that the bound in~\cref{proof_prop_bound2} vanishes subexponentially fast in $n$. Recalling that $m=n-r$ and
\begin{align}
    |\cT| \,\leq\, 2^{nh(r/n)}d_{CE}^{r} \,=\,\exp_2\Big(-r\log(\tfrac{r}{n})- n(1-\tfrac{r}{n})\log(1-\tfrac{r}{n}) +r\log(d_{CE})\Big)\,,
\end{align}
the bound in~\cref{proof_prop_bound2} becomes
\begin{align}
    \tfrac{1}{2}\Big\| \sigma'_n - \sigma_n^{(\omega)} \Big\|_1 \,&\leq  \,2\sqrt{2}\,\exp_2\Big(-\big[n\epsilon'\log(\tfrac{\epsilon'}{\epsilon})
    + \tfrac{1}{2}n(1-2\epsilon')\log(\tfrac{1-2\epsilon'}{1-2\epsilon}) \nonumber \\ 
    & \hphantom{\leq\,2\sqrt{2}\,\exp\Big(-\big[\ }
    - r\epsilon'\log(\tfrac{\epsilon'}{\epsilon}) - \tfrac{1}{2}r(1-2\epsilon')\log(\tfrac{1-2\epsilon'}{1-2\epsilon})  \nonumber \\
    &\hphantom{\leq\,2\sqrt{2}\,\exp\Big(-\big[\ } +r\log(\tfrac{r}{n})+ n(1-\tfrac{r}{n})\log(1-\tfrac{r}{n}) -r\log(d_{CE}) 
    \big] \Big) \\
    &\sim \,2\sqrt{2}\,\exp_2\Big(-\big[n^{1-\kappa} \log(n^{\gamma - \kappa})
    +  \tfrac{1}{2}n(1-\tfrac{2}{n^\kappa})\log\big(\tfrac{1-2/n^\kappa}{1-2/n^\gamma}\big) \nonumber \\ 
    & \hphantom{\leq\,2\sqrt{2}\,\exp\Big(-\big[\ }
    - n^{\beta-\kappa}\log(n^{\gamma - \kappa}) -  \tfrac{1}{2}n^\beta(1-\tfrac{2}{n^\kappa})\log\big(\tfrac{1-2/n^\kappa}{1-2/n^\gamma}\big)  \nonumber \\
    &\hphantom{\leq\,2\sqrt{2}\,\exp\Big(-\big[\ } -n^\beta \log\big({n^{1-\beta}})+ n\big(1-\tfrac{1}{n^{1-\beta}}\big) \log\big(1-\tfrac{1}{n^{1-\beta}}\big) -n^\beta \log(d_{CE}) \big] \Big) \\
    \overset{n \textrm{ large}}&{\sim} \,2\sqrt{2}\,\exp_2\Big(-\big[n^{1-\kappa} \log(n^{\gamma - \kappa})
    - n^{1-\kappa} + n^{1-\gamma} \nonumber \\ 
    & \hphantom{\leq\,2\sqrt{2}\,\exp\Big(-\big[\ ++ }
    - n^{\beta-\kappa}\log(n^{\gamma - \kappa}) + n^{\beta-\kappa} - n^{\beta - \gamma}  \nonumber \\
    &\hphantom{\leq\,2\sqrt{2}\,\exp\Big(-\big[\ ++} -n^\beta\log\big({n^{1-\beta}})- n^\beta -n^\beta \log(d_{CE}) \big] \Big) \\
    &=\,2\sqrt{2}\,\exp_2\Big(- n^{1-\kappa} \log(n^{\gamma - \kappa})\big[ 1+o(1) \big] \Big) \quad \overset{n\to \infty}{\longrightarrow} \quad 0 \,, \label{proof_prop_bound3}
\end{align}
where we used that $\epsilon' = \frac{1}{n^\kappa}$, $\epsilon = \frac{1}{n^\gamma}$, and that $\kappa < \min\{\gamma,1-\beta\}$. Since the bound above is subexponentially decreasing in $n$ and independent of the state $\omega$, this completes the proof.
\end{proof}

We now turn to our main theorem, which is obtained by combining~\cref{prop_modified-deFinetti,prop_close-almost-iid}. Informally, the theorem states that every sequence of states that belongs to the operational set, after tracing out a small number of subsystems, can be approximated by a sequence of almost-iid states with only a few defects. To state the theorem precisely and concisely, we recall the notation introduced in~\cref{not_channel-seq,not_trace-ops,not_subpoly-almost-iid,def_eq-rel}.
\begin{theorem}  \label{thm_almost-iid-dF}
Let $\rho \in \St(\cH)$ and \texttt{tom($\cdot$)} denote a tomography protocol satisfying~\cref{def_tom12,def_tom34}. Let $\pmb{\rho} \in \OP(\texttt{tom($\cdot$)};\rho)$ be permutation invariant. Then, for any $\alpha \in (0,1)$, we find
 \begin{align} 
 	\pmb{\tr}_\alpha\big( \pmb{\rho}\big)\, \simeq\, \pmb{\tau} \,\in\, \mathcal{A}(\rho)\,.
 \end{align}
Moreover, this relation is independent of the convention adopted in~\cref{not_channel-seq} for selecting the output state sequence.
\end{theorem}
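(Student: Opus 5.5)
The plan is to combine \cref{prop_modified-deFinetti} and \cref{prop_close-almost-iid}, and then fix the indexing so that it matches the output convention of $\pmb{\tr}_\alpha$.

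\textbf{Step 1: parameters.} Fix $\alpha\in(0,1)$. Choose $\alpha'\in(0,\alpha)$ and $\beta\in(0,1)$ with $\alpha'+\beta>1$, and set $r_m:=\lfloor m^\beta\rfloor$. From \cref{prop_modified-deFinetti} we get $\gamma>0$ and $\epsilon_m=m^{-\gamma}$, with $\gamma<1$ without loss of generality. Then pick $\kappa<\min\{\gamma,1-\beta\}$.

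\textbf{Step 2: matching indices.} The $m$-th output element of $\pmb{\tr}_\alpha(\pmb{\rho})$, with $m=n-\lceil n^\alpha\rceil$, is $\tr_{\lceil n^\alpha\rceil}[\rho^{(n)}]$. For $n$ large, $k'_m:=\lceil m^{\alpha'}\rceil\le\lceil n^\alpha\rceil$. Permutation invariance then gives
\[
\tr_{\lceil n^\alpha\rceil}[\rho^{(n)}]=\tr_{k'_m}\big[\hat\rho^{(m+k'_m)}\big],
\qquad
\hat\rho^{(m+k'_m)}:=\tr_{\lceil n^\alpha\rceil-k'_m}[\rho^{(n)}].
\]
By the consistency-under-sampling property (\cref{def_tom34}, \cref{def_tom34-3}), the sequence $\hat{\pmb\rho}$ is again a permutation-invariant member of $\OP(\texttt{tom($\cdot$)};\rho)$. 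I will use that this property supplies a single poly-logarithmic sample complexity, so all constants are uniform.

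Alternatively, one can note that the proof of \cref{prop_modified-deFinetti} only uses $n^{\alpha'}\lesssim k_n\le\poly(n)$, so it applies directly with $k_n=\lceil n^\alpha\rceil$. Either way, \cref{prop_modified-deFinetti} yields a measure $\nu_m$ on $B_{\epsilon_m}(\rho)$ and states $\sigma_m^{(\omega)}$ that are $\binom{m}{r_m}$-almost-iid along $\omega$, such that the output is subexponentially close to $\int\sigma_m^{(\omega)}\,\di\nu_m(\omega)$.

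\textbf{Step 3: moving the centre to $\rho$.} Taking traces, $|\nu_m(B_{\epsilon_m}(\rho))-1|$ is subexponentially small, so normalizing $\nu_m$ to a probability measure costs only a subexponential error. Apply \cref{prop_close-almost-iid} to each $\sigma_m^{(\omega)}$ with $\omega\in B_{\epsilon_m}(\rho)$. This gives $\sigma_m'^{(\omega)}$ that is $\binom{m}{r'_m}$-almost-iid along $\rho$, with error $g(m)$ independent of $\omega$, where
\[
r'_m\le m^\beta+2m^{1-\kappa}+1\le\lceil L m^{\kappa'}\rceil,\qquad \kappa':=\max\{\beta,1-\kappa\}<1 .
\]

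\textbf{Step 4: the candidate $\pmb\tau$.} Define $\tau^{(m)}:=\int\sigma_m'^{(\omega)}\,\di\bar\nu_m(\omega)$, where $\bar\nu_m$ is the normalized measure. This state is $\binom{m}{r'_m}$-almost-iid along $\rho$ by convexity and compactness (\cref{rmk_properties_almost_iid_states}), via the same closed-convex-hull argument as in the footnote of \cref{prop_modified-deFinetti}. Hence $\pmb\tau\in\mathcal A(\rho)$, with finitely many small $m$ set arbitrarily as \cref{not_subpoly-almost-iid} allows. The triangle inequality, the convexity bound $\|\int(\sigma-\sigma')\,\di\bar\nu\|_1\le\sup_\omega\|\sigma-\sigma'\|_1\le 2g(m)$, and \cref{lem_sums-of-subexp} together give $\pmb{\tr}_\alpha(\pmb\rho)\simeq\pmb\tau$. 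A subexponential bound in $n$ transfers to one in $m$ because $m\ge n/2$.

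\textbf{Step 5: output convention.} Under \cref{not_channel-seq}, sizes $m$ that are not attained are produced by tracing out further systems from the next attained size, and repeated sizes use one representative. Tracing preserves trace-distance bounds by data processing, and it preserves almost-iid-ness with power-law defects by \cref{lem_subpoly-almost-iid}. So the relation does not depend on the convention.

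\textbf{Main obstacle.} I expect the hardest part to be the uniformity bookkeeping in Steps 2--4. The index shift must keep the sample-complexity constants of the resampled sequence independent of the chosen representatives. The map $\omega\mapsto\sigma_m'^{(\omega)}$ must be measurable, which I would get by choosing the Uhlmann purification and the truncation in \cref{prop_close-almost-iid} via fixed measurable selections. Finally, the error $g$ must be uniform in $\omega$, which \cref{prop_close-almost-iid} explicitly guarantees.
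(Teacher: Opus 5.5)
Your proof is correct. Its analytic core is the same as the first half of the paper's proof: \cref{prop_modified-deFinetti}, renormalising $\nu_m$, applying \cref{prop_close-almost-iid} uniformly in $\omega$, then convexity and compactness. Where you genuinely differ is the index matching.

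The paper applies \cref{prop_modified-deFinetti} to $\pmb{\rho}$ itself. This only controls $\tr_{k_\ell}[\rho^{(\ell+k_\ell)}]$, i.e.\ inputs $n\in S_\alpha=\{\ell+\lceil\ell^\alpha\rceil\}$, and $S_\alpha$ misses infinitely many $n$. The paper therefore:
\begin{enumerate}
\item reruns the whole construction for two further exponents $\frac1s,\frac1{2s}\le\alpha$;
\item proves the covering statement $S_{1/s}\cup S_{1/(2s)}=\N\setminus\{1,3\}$ (\cref{lem_number-theory});
\item traces out the surplus $k_n-k^j_\ell$ systems and uses \cref{lem_subpoly-almost-iid} to restore a power-law defect count;
\item glues three sequences via a maximum of subexponential bounds.
\end{enumerate}
What that buys is that every de Finetti step acts on the original sequence $\pmb{\rho}$, with only the exponent changing.

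You instead absorb the surplus partial trace into a re-indexed sequence $\hat\rho^{(m+k'_m)}:=\tr_{\lceil n^\alpha\rceil-k'_m}[\rho^{(n)}]$. You then invoke consistency under random sampling once more to place $\hat{\pmb{\rho}}$ in $\OP(\texttt{tom($\cdot$)};\rho)$. A single black-box application of \cref{prop_modified-deFinetti} then covers every output size $m$ with exact index alignment. There is no number theory and no second defect-count adjustment. This costs nothing in hypotheses, since \cref{def_tom34-3} is already used inside \cref{prop_modified-deFinetti}.

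It also settles the convention question automatically, because it works for whichever representative $n$ is chosen for each $m$. Note that $n\mapsto n-\lceil n^\alpha\rceil$ has increments in $\{0,1\}$, so no output sizes are missing and your Step~5 discussion of gaps is unnecessary. Keep this re-indexing route rather than your ``alternative'': rerunning the proposition's proof with $k_m=\lceil n^\alpha\rceil$ would require reopening its internals.

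Three small points:
\begin{enumerate}
\item \cref{def_tom34-3} applies to a full sequence, so set $\hat\rho^{(N)}:=\rho^{(N)}$ (i.e.\ $s_N=0$) for $N\notin\{m+k'_m\}$. This is consistent because $m\mapsto m+k'_m$ is strictly increasing.
\item The auxiliary exponent $\alpha'<\alpha$ is unnecessary, since $m\le n$ already gives $\lceil m^\alpha\rceil\le\lceil n^\alpha\rceil$ for all $n$.
\item Your measurability concern about $\omega\mapsto\sigma_m'^{(\omega)}$ applies equally to the paper's proof, which does not address it.
\end{enumerate}
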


\begin{proof}
First, recall that $k_n = \lceil n^\alpha \rceil$ for any $n\in \N$, and consider the (slightly modified) sequence of states $\big(\tr_{k_n}[\rho^{(n+k_n)}]\big)_n$. Due to~\cref{prop_modified-deFinetti}, there exist $\beta,\gamma>0$, a measure $\nu_n$ on $\St(\cH)$ and, for each $\omega \in \St(\cH)$, a sequence of states \smash{$\sigma_n^{(\omega)}$} that are $\binom{n}{r_n}$-almost-iid along $\omega$ with $r_n = \lfloor n^\beta \rfloor$, such that, for $\epsilon_n = \frac{1}{n^\gamma}$, we have
\begin{align} \label{thm_proof-it0}
    \tr_{k_n}\big[\rho^{(n+k_n)}\big] \, \simeq\, \int_{\omega \in B_{\epsilon_n}(\rho)} \sigma_n^{(\omega)} \,\di\nu_n(\omega) \,.
\end{align}
Note that the measure $\nu_n$ can be rescaled to a probability measure $\tilde{\nu}_n$ on $ B_{\epsilon_n}(\rho)$ since, for  $c_n :=  \int_{\omega \in B_{\epsilon_n}(\rho)} \di\nu_n(\omega)$, we evaluate
\begin{align}
\Big\|  \int_{\omega \in B_{\epsilon_n}(\rho)} \sigma_n^{(\omega)} \,\di\nu_n(\omega) -  \int_{\omega \in B_{\epsilon_n}(\rho)} \sigma_n^{(\omega)} \,\tfrac{\di\nu_n(\omega)}{c_n} \Big\|_1 \,&=\, \big| 1-\tfrac{1}{c_n} \big|\,\tr\Big[  \int_{\omega \in B_{\epsilon_n}(\rho)} \sigma_n^{(\omega)} \,\di\nu_n(\omega)\Big] \\
\,&=\,  \big| c_n -1 \big|  \\
\,\overset{\textnormal{DPI}}&{\leq} \,  \Big\|  \tr_{k_n}\big[\rho^{(n+k_n)}\big] -  \int_{\omega \in B_{\epsilon_n}(\rho)} \sigma_n^{(\omega)} \,\di\nu_n(\omega)\Big\|_1 \label{thm_proof-it00}
\end{align}
%\begin{align}
%\Big\|  \tr_{k_n}\big[\rho^{(n+k_n)}\big] -  \int_{\omega \in B_{\epsilon_n}(\rho)} \sigma_n^{(\omega)} \,\frac{\nu_n(\omega)}{c_n}\Big\|_1 \,\overset{\textnormal{triangle}}&{\leq} \, \Big\|  \tr_{k_n}\big[\rho^{(n+k_n)}\big] -  \int_{\omega \in B_{\epsilon_n}(\rho)} \sigma_n^{(\omega)} \,\nu_n(\omega)\Big\|_1 \nonumber  \\ &\qquad \qquad \qquad \,+\, \Big\| \left(1-\frac{1}{c_n}\right) \int_{\omega \in B_{\epsilon_n}(\rho)} \sigma_n^{(\omega)} \,\nu_n(\omega\Big\|_1 \\
%\,&=\, \Big\|  \tr_{k_n}\big[\rho^{(n+k_n)}\big] -  \int_{\omega \in B_{\epsilon_n}(\rho)} \sigma_n^{(\omega)} \,\nu_n(\omega)\Big\|_1 + \big| c_n -1 \big| \\
%\,\overset{\textnormal{DPI}}&{\leq} \, 2 \Big\|  \tr_{k_n}\big[\rho^{(n+k_n)}\big] -  \int_{\omega \in B_{\epsilon_n}(\rho)} \sigma_n^{(\omega)} \,\nu_n(\omega)\Big\|_1\,,
%\end{align}
Notice that since $\lim_{n\to\infty} c_n = 1$ due to~\cref{thm_proof-it0,thm_proof-it00}, we have $c_n >0$ eventually. Hence, for $\di\tilde{\nu}_n(\omega) := \frac{\di\nu_n(\omega)}{c_n}$, we find because of~\cref{thm_proof-it0} that
\begin{align} \label{thm_proof-it1}
    \tr_{k_n}\big[\rho^{(n+k_n)}\big]  \,\simeq\, \int_{\omega \in B_{\epsilon_n}(\rho)} \sigma_n^{(\omega)} \,\di\nu_n(\omega) \, \simeq\, \int_{\omega \in B_{\epsilon_n}(\rho)} \sigma_n^{(\omega)} \,\di\tilde{\nu}_n(\omega) \,.
\end{align}

Next, fix some $\kappa \in (0,1)$ with $\kappa < \min\{\gamma,1-\beta\}$. Then, by~\cref{prop_close-almost-iid}, for any state $ \sigma_n^{(\omega)}$ there exists a state $\tilde{\sigma}^{(\omega)}_n \in\St(\cH^{\ox n})$ that is $\binom{n}{r'_n}$-almost-iid along $\rho$ with $r'_n = r_n+{\lceil (n-r_n)2\epsilon'_n \rceil}$ and $\epsilon'_n = \frac{1}{n^\kappa}$, such that 
\begin{align}  \label{thm_proof-it2}
   \tfrac{1}{2}\Big\| \sigma_n^{(\omega)} - \tilde{\sigma}^{(\omega)}_n \Big\|_1 \,\leq\,g(n) \,,
\end{align}
where $g(n)$ is a subexponentially decreasing function and independent of $\omega$. Accordingly, the state $\tau^{(n)}$ defined by
\begin{align}  \label{thm_proof-it3}
\tau^{(n)} \,:=\,  \int_{\omega \in B_{\epsilon_n}(\rho)} \tilde{\sigma}^{(\omega)}_n \,\di\tilde{\nu}_n(\omega)\,
\end{align}
is also $\binom{n}{r'_n}$-almost-iid along $\rho$ because of convexity of the set of almost-iid states.\footnote{More precisely, it holds that $\tau^{(n)} \in \overline{\mathrm{conv}( \St^n(\cH, \rho^{\otimes n-r'}))} =  \St^n(\cH, \rho^{\otimes n-r'}) $ due to~\cite[Proposition~1.2.12]{analysis_banach} and since $ \St^n(\cH, \rho^{\otimes n-r'})$ is convex and compact (see~\cref{rmk_properties_almost_iid_states}).} Hence, since
\begin{align} 
 \left\|\,\int_{\omega \in B_{\epsilon_n}(\rho)} \sigma_n^{(\omega)} \,\di\tilde{\nu}_n(\omega)\, \,-\, \tau^{(n)}\,  \right\|_1 \,\overset{\textnormal{\Cshref{thm_proof-it3}}}&{\leq}\, \int_{\omega \in B_{\epsilon_n}(\rho)}  \left\|\, \sigma_n^{(\omega)} -    \tilde{\sigma}^{(\omega)}_n \right\|_1 \,\di\tilde{\nu}_n(\omega) \\
 &\leq\, \sup_{\omega \in B_{\epsilon_n}(\rho)}  \left\|\, \sigma_n^{(\omega)} -    \tilde{\sigma}^{(\omega)}_n \right\|_1 \\
 \overset{\textnormal{\Cshref{thm_proof-it2}}}&{\longrightarrow}\ \textrm{ is subexponentially decreasing in $n$} \,,  \label{thm_proof-it4}
\end{align}
we find
\begin{align} \label{thm_proof-it5}
 \tr_{k_n}\big[\rho^{(n+k_n)}\big] \, \overset{\textnormal{\Cshref{thm_proof-it1},\eqref{thm_proof-it4}}}{\simeq}\, \tau^{(n)} \,.
\end{align}
Furthermore, since
\begin{align}
r'_n =  r_n+{\lceil (n-r_n)2\epsilon'_n \rceil} = \lfloor n^\beta \rfloor + \lceil (n-\lfloor n^\beta \rfloor)\tfrac{2}{n^\kappa}\rceil \leq n^\beta + \frac{2(n-n^\beta+1)}{n^\kappa} + 1 \overset{\textnormal{\Cshref{lem_poly-1}}}{\leq} Ln^{1-\kappa}
\end{align}
for some constant $L\geq 0$, we have $\pmb{\tau} \in \cA(\rho)$.

\medskip
\noindent Now, we consider the states $ \tr_{k_n}\big[ \rho^{(n)}\big]$ for $n \in \N$. In order to apply~\cref{thm_proof-it5}, we need to match the corresponding indices. To do so, consider the following sets of integers,
\begin{gather}
S_\alpha := \big\{n \in \N \,:\,  n = \ell + k_\ell\,, \  \ell \in \N  \big\}\,, \\
 S_{\frac{1}{s}} := \big\{n \in \N \,:\,  n = \ell + \lceil \ell^{\frac{1}{s}} \rceil \,, \  \ell \in \N  \big\}\,, \quad  S_{\frac{1}{2s}} := \big\{n \in \N \,:\,  n = \ell + \lceil \ell^{\frac{1}{2s}} \rceil \,, \  \ell \in \N  \big\}\,,
\end{gather}
where $s\in \N$ is chosen such that $\frac{1}{s} \leq \alpha$. It turns out that there exist infinitely many integer values that are \emph{not} contained in $S_\alpha$. However, it is shown in~\cref{lem_number-theory} that $S_{\frac{1}{s}} \cup S_{\frac{1}{2s}} = \N \setminus \{1,3 \} $. Next, recall that by convention (see~\cref{not_channel-seq}), we would only need to consider integers $n\equiv n(m)$ that satisfy $n(m)=\min\{n\in \N \,:\,n-k_n = m\}$ for any $m \in \N$. However, the remaining arguments remain valid regardless of the choice of representative $n^*_m \in \{ n\in \N \,:\,n-k_n = m\}$.

For $n \in S_\alpha$, we may write $n = \ell + k_\ell$ for some (unique) $\ell \in \N$ and  $\Delta := k_n - k_\ell \geq 0$. Due to~\cref{thm_proof-it5}, we find
\begin{align}  \label{thm_proof-it6}
\norm{\tr_{k_n}\big[ \rho^{(n)}\big] - \tr_{\Delta}\big[ \tau^{(\ell)}\big]  }_1 \,\overset{\textnormal{DPI}}{\leq} \, \norm{\tr_{k_\ell}\big[ \rho^{(n)}\big] -  \tau^{(\ell)}  }_1 \,\overset{\textnormal{\Cshref{thm_proof-it5}}}{\leq}\, f(\ell)\,,
\end{align}
where $f:\N \to \R$ denotes a subexponentially decreasing function. Furthermore, as $\pmb{\tau} \in \cA(\rho)$, it follows from~\cref{lem_subpoly-almost-iid} that 
\begin{align}  \label{thm_proof-it7}
\bar{\tau}^{(n-k_n)} := \tr_{\Delta}\big[ \tau^{(\ell)}\big] \ \textrm{  is $\binom{n-k_n}{\big\lceil \bar{L}(n-k_n)^{1-\kappa} \big\rceil}$-almost-iid along $\rho$}
\end{align}
for some constant (in $n$) $\bar{L}\geq 0$.
% $\tr_{\Delta}\big[ \tau^{(\ell)}\big] $ is $\binom{n-k_n}{r'_n}$-almost-iid along $\rho$ as mentioned in~\cref{rmk_properties_almost_iid_states}. 

%the state sequence $\bar{\pmb{\tau}}$, defined by
%\begin{align}
%\bar{\pmb{\tau}} \,:=\,\big(\tr_{\Delta}\big[ \tau^{(\ell)}\big]  \big)_{m \in \N}
%\end{align}
%also satisfies $\bar{\pmb{\tau}} \in \cA(\rho)$.

For $\bar{n} \notin S_\alpha$ and $\bar{n} \geq 4$, we still have $\bar{n} \in S_{\frac{1}{s}}$ or $\bar{n} \in S_{\frac{1}{2s}}$. Hence, by repeating the whole argument above for $k^j_\ell := \lceil \ell^j\rceil$ with $j\in \{\tfrac{1}{s},\tfrac{1}{2s}\}$, there exist sequences of states $\pmb{\tau}_j$ such that
\begin{align} \label{thm_proof-it8}
\norm{\tr_{k_{\bar{n}}} \big[ \rho^{(\bar{n})}\big] - \tr_{k_{\bar{n}}-k^j_{\bar{n}}} \big[\tr_{\Delta_j}\big[ \tau_j^{(\bar{\ell})}\big]\big]  }_1 \,\overset{\textnormal{DPI}}{\leq} \, \norm{\tr_{k^j_{\bar{n}}} \big[ \rho^{(\bar{n})}\big] - \tr_{\Delta_j}\big[ \tau_j^{(\bar{\ell})}\big] }_1  
 \,\overset{\textnormal{\Cshref{thm_proof-it6}}}{\leq}\, f_j(\bar{\ell})\,,
\end{align}
where $\bar{n} = \bar{\ell} + k^j_{\bar{\ell}}$, $\Delta_j  := k^j_{\bar{n}} - k^j_{\bar{\ell}} \geq 0$, and $f_j :\N \to \R$ denotes a subexponentially decreasing function. Again, it follows from~\cref{lem_subpoly-almost-iid} that
\begin{align}  \label{thm_proof-it9}
\bar{\tau}_j^{(\bar{n}-k_{\bar{n}})} :=\tr_{k_{\bar{n}}-k^j_{\bar{n}}} \big[\tr_{\Delta_j}\big[ \tau_j^{(\bar{\ell})}\big]\big]  \ \textrm{  is $\binom{\bar{n}-k_{\bar{n}}}{\big\lceil \bar{L}_j(\bar{n}-k_{\bar{n}})^{1-\kappa_j} \big\rceil}$-almost-iid along $\rho$}
\end{align}
with constants $\bar{L}_j \geq 0$ and $\kappa_j \in (0,1)$.

Finally, we can combine all of the above by defining $\tilde{L} := \max\big\{\bar{L}, \bar{L}_{\frac{1}{s}},  \bar{L}_{\frac{1}{2s}} \big\}$, $\tilde{\kappa} := \min\big\{\kappa, \kappa_{\frac{1}{s}},  \kappa_{\frac{1}{2s}} \big\}$, and $\tilde{f}(n) := \max\big\{f(n), f_{\frac{1}{s}}(n), f_{\frac{1}{2s}}(n) \big\}$ for all $n\in \N$. Furthermore, we define the sequence $\bar{\pmb{\tau}}$ as follows: Given an output blocklength $m\in \N$, pick $n \equiv n(m)$, $j\equiv j(n) \in \{\alpha,\tfrac{1}{s},\tfrac{1}{2s}\}$ such that $n\in S_j$, and $\ell \equiv \ell(n)$ such that $n=\ell+k_\ell^{j}$, and set $ \Delta_n \,:=\, k_n-k_\ell^{j}$. Then, we define
\begin{align}
\bar{\tau}^{(m)} \,:=\, \tr_{\Delta_{n}}\big[ \tau_j^{(\ell)}\big]
\end{align}
for all $m\in \N$ analogously to~\cref{thm_proof-it7,thm_proof-it9}, implying that $\bar{\tau}^{(m)}$ is $\binom{m}{\lceil \tilde{L}m^{1-\tilde{\kappa}}\rceil}$-almost-iid along $\rho$, and hence, $\bar{\pmb{\tau}} \in \cA(\rho)$. Furthermore, because of~\cref{thm_proof-it6,thm_proof-it8}, we have
\begin{align}  \label{thm_proof-it10}
\norm{\tr_{k_n}\big[ \rho^{(n)}\big] -  \bar{\tau}^{(n-k_n)} }_1  \,\leq\, \tilde{f}(\ell(n)) \,\overset{\textnormal{\Cshref{lem_max-subexp} and \Cshref{def_subexp-equivalence}}}{\leq} \, K\,2^{-\ell(n)^\nu} \,\overset{k_n \geq k_\ell}{\leq} K\,2^{-(n-k_n)^\nu}
\end{align}
for some constants $K\geq 0$, $\nu \in (0,1)$. The bound in~\cref{thm_proof-it10} is a subexponentially decreasing function due to~\cref{ex_subexp-function}, which completes the proof.
\end{proof}

\begin{example} \label{example-dF-thrm}
Let $\tau := \frac{1}{2}\id_2$. Consider the state sequence $\pmb{\xi} = \big(\xi^{(n)} \big)_{n\in \N}$ defined by
\begin{align}
\xi^{(n)} \,:=\, \begin{dcases} \frac{1}{\binom{n}{n/2}}\sum_{\vec{z}\in \{0,1\}^n\,:\, |\vec{z}| = \frac{n}{2}} \proj{\vec{z}}, & n \textrm{ even}, \\ \tr_1\big[\xi^{(n+1)}\big], & n \textrm{ odd}, \end{dcases}
\end{align}
for all $n\in \N$, where $\tr_1$ denotes tracing out the last subsystem. For even $n$, this corresponds to a uniform (classical) mixture of bit strings with an equal number of zeros and ones. In~\cite[Proposition 25]{GPL26}, it was shown that the states $\xi^{(n)}$ cannot be almost-iid along the maximally mixed state $\tau$ with a sublinear number of defects $r_n = o(n)$, i.e., $\pmb{\xi} \notin \cA(\tau)$. However,~\cref{thm_almost-iid-dF} implies for any $\alpha \in (0,1)$ that
 \begin{align} 
 \pmb{\tr}_\alpha\big( \pmb{\xi}\big)\, \in\, B\big(\mathcal{A}(\tau)\big)\,.
 \end{align}
\end{example}
\begin{proof}
We show that the state sequence $\pmb{\xi}$ belongs to the operational set with respect to the maximally mixed state $\tau$. The intuition is simple. Under standard local tomography, as described in~\cref{app_standard-tom-protocols}, measurements in the $x$- and $y$-bases yield independent, unbiased outcomes. For even block lengths, measurements in the $z$-basis sample without replacement from a bit string containing equally many zeros and ones, so their outcome frequencies also concentrate around $\frac{1}{2}$ (even more tightly than for the iid state $\tau^{\ox n}$). Thus, the observed frequencies in all three bases lead to an estimate close to the maximally mixed state $\tau$. More quantitatively, 
one has the operator inequality
\begin{align} \label{example-dF-thrm-it1}
\xi^{(n)} \,\leq\, C_n \tau^{\ox n}\,, \qquad C_n \,:=\, \frac{2^n}{\binom{n}{n/2}} \,=\,\sqrt{\frac{\pi n}{2}} + o(1)\,,
\end{align}
for even $n$, and consequently, after taking the partial trace, an analogous inequality holds also for odd $n$. Crucially, the quantity $C_n$ scales only polynomially in $n$. Therefore, for any $n\in \N$ and $\epsilon \in (0,1)$, we can bound
\begin{align}
 \Pr\big[\texttt{tom}\big(\xi^{(n)}\big) \notin B_\epsilon(\tau)\big] \,=\,  \tr\big[\Pi_{\eps}^n \xi^{(n)} \big] \,\overset{\textnormal{\Cshref{example-dF-thrm-it1}}}{\leq} \, C_n\, \tr\big[\Pi_{\eps}^n \tau^{\ox n} \big] \,=\, C_n\Pr\big[\texttt{tom}\big(\tau^{\ox n}\big) \notin B_\epsilon(\tau)\big]\,,
\end{align}
where  \texttt{tom($\cdot$)} denotes the tomography protocol and $\Pi_{\eps}^n $ the corresponding POVM element for the above failure event. Due to~\cref{def_tom12-2} in~\cref{def_tom12},~\cref{lem_sample-subexp} and~\cref{ex_subexp-function-eps}, it follows that 
\begin{align}
\pmb{\xi}\in \OP(\texttt{tom($\cdot$)};\tau)\,.
\end{align} 
Since $\pmb{\xi}$ is further permutation invariant,~\cref{thm_almost-iid-dF} then finally proves the desired claim.
\end{proof}

Finally, we remark that in the derivation of~\cref{prop_modified-deFinetti,prop_close-almost-iid,thm_almost-iid-dF}, we never used the assumption in~\cref{def_tom34-4} of~\cref{def_tom34}. In fact,~\cref{thm_almost-iid-dF,lem_OP-set-robust} together would imply that any almost-iid state sequence $\pmb{\tau} \in \cA(\rho)$ that results from~\cref{thm_almost-iid-dF} must satisfy $\pmb{\tau} \in \OP(\texttt{tom($\cdot$)};\rho)$. However, it is a priori \emph{not} clear whether all state sequences contained in $\cA(\rho)$ can arise in that way and thus, one cannot conclude $\cA(\rho) \subseteq \OP(\texttt{tom($\cdot$)};\rho)$. Assuming~\cref{def_tom34-4} in~\cref{def_tom34} then closes this gap, and is further necessary in the study of asymptotic interconversion rates discussed in~\cref{sec:OP-rates}.
%------------------------------%------------------------------
%%%%%%%%%%%%%%%%%%%%%%%%%%%%%%%%%%%%%%%%%%%%%%%%%%%%%%%%%%%%%%%%%%%%%%%

%%%%%%%%%%%%%%%%%%%%%%%%%%%%%%%%%%%%%%%%%%%%%%%%%%%%%%%%%%%%%%%%%%%%
\newpage
\section{Standard tomography protocols fulfil the basic assumptions} \label{app_standard-tom-protocols}
There is a vast literature on the field of quantum state estimation (e.g., see~\cite{Paris2004}) and accordingly, many different tomography protocols or equivalent hypothesis testing approaches have been developed and studied to date. These include (but are not restricted to) standard approaches relying on an iid structure with a fixed~\cite{Nielsen_Chuang_2010-chap8, Vogel-Risken-1989, Aaronson-2007, Blume-Kohout_2010, Cramer2010, Adamson-Steinberg-2010, Toth-2010, Flammia2011, Flammia2012, Kalev2015, Wang-Scholz-Renner-2018} or adaptive~\cite{Huszar-Houlsby-2012, Ferrie-2014} choice of the measurement settings involving maximum likelihood estimation (MLE) or Bayesian estimation techniques, approaches based on neural network methods~\cite{Quek2021, Lange2023}, or schemes relying on joint measurements across multiple copies~\cite{Keyl2006, ODonnell2016, Haah-Harrow-2017}, and techniques generalizing quantum state estimation and learning beyond the iid setting~\cite{Christandl-Renner-2012, Fawzi-Kueng-2024, depalma2025}.

In this section, we assume a particular structure for a tomography protocol \texttt{tom($\cdot$)} that is employed by many standard approaches, and argue that, with this structure, our requirements for a sensible tomography protocol (\cref{def_tom12,def_tom34}) are indeed satisfied. % \red{We focus in particular on the assumptions in~\cref{def_tom34}, as these rely on the operational set introduced in~\cref{def_OP-set,def_eff-OP-set}, which is novel}. 
In addition, this provides a simpler handle to check for a generic protocol whether it satisfies~\cref{def_tom12,def_tom34} since the assumptions defining this particular structure (described below) are usually easier to check than the requirements listed in~\cref{def_tom12,def_tom34}. For instance,~\cref{def_tom34-3} in~\cref{def_tom34} relies on the operational set introduced in~\cref{def_OP-set} which is novel, while~\cref{def_tom12-2} in~\cref{def_tom12} %depends strongly on the precise structure of the protocol and on the classical data post-processing. Note that as this requirement 
is formulated for an iid input state sequence and therefore, it is typically analysed (and proven to hold) already in the existing literature.

In the following, the structure of the tomography protocol \texttt{tom($\cdot$)} is fixed by additional assumptions on the POVM elements and the classical data post-processing, as described in~\cref{app_assumption-POVM,app_assumption-PP} below. Then, it will be shown that~\cref{app_assumption-POVM,app_assumption-PP} imply~\cref{def_tom12,def_tom34}.

\medskip
\noindent \textbf{POVM elements:}
In general, a protocol \texttt{tom($\cdot$)} (according to~\cref{def_tom12}) is described by a POVM, denoted by $\{T^n_{\hat{\rho}} \}_{\hat{\rho}}$, which acts on $n$ copies, and where $\hat{\rho}$ labels any possible output single-copy state in $\St(\cH)$. In most of the above mentioned approaches, the POVM elements $T^n_{\hat{\rho}}$ have a tensor product structure over the $n$ copies (one important exception is given by EYD (empirical Young diagram) based protocols~\cite{Keyl2006, ODonnell2016, Haah-Harrow-2017}). More precisely, a (possibly adaptive) protocol is often described by a collection of measurement configurations, labelled by $\alpha \in \cA$, where each measurement configuration $\alpha$ specifies itself a POVM on a single-copy system, \smash{$\big\{ M^{\alpha}_x \big\}_{x \in \cX_\alpha}$}, and $x \in \cX_\alpha$ denote possible outcomes of a single-copy measurement. Accordingly, a multi-copy POVM element with tensor product structure then takes the form 
\begin{align} \label{eq_product-povm-1}
T^n_{(x_1,\dots,x_n)} =M^{\alpha_1}_{x_1}\otimes M^{\alpha_2(x_1)}_{x_2} \otimes \cdots \otimes   M^{\alpha_n(x_1,\dots,x_{n-1})}_{x_n}\,,
\end{align}
where the choice of the measurement configurations $\alpha_i \in \cA$ for $i \in [n]$ may depend on the previously collected data, $\alpha_i = \alpha_i(x_1,\dots,x_{i-1})$. The single-copy state $\hat{\rho}$ is then determined from the data $\vec{x} = (x_1,\dots,x_n)$ by some classical post-processing or estimation technique (as further explained below). % Here, we assume the sets $\cA$ and $\cX_\alpha$ to be fixed and of finite size.

In the remainder of this section, we will treat only the simpler case of a non-adaptive protocol, i.e., when the measurement setting is fixed from the beginning such that $|\cA|=1$. In fact, the whole point of an adaptive protocol is to improve protocols with a fixed choice of POVMs. It is thus reasonable to assume that the adaptivity can only make the tomography procedure more efficient, as otherwise one could just use a fixed measurement setting. Hence, if our requirements for a sensible tomography protocol (\cref{def_tom12,def_tom34}) are satisfied in the non-adaptive case (which we are going to show), then it is reasonable to demand that these properties should continue to hold for adaptive protocols as well.

In addition, we impose a \textbf{permutation invariance} on the POVM elements 
%without loss of generality. 
More precisely, consider the symmetrization map $\cS_n$ defined by \(    \mathcal{S}_n :X\mapsto \frac{1}{n!} \sum_{\pi\in S_n}U_\pi X U^\dagger_\pi\, \) for any linear operator $X \in \mathrm{L}(\cH^{\ox n})$, where $S_n$ denotes the symmetric group and $U_\pi$ is the standard unitary representation of $\pi \in S_n$ on the space $\cH^{\ox n}$. Then, the probability of observing the data $\vec{x}$ is given by
\begin{align}
 \Pr_{\rho^{\otimes n}}\big[\vec{x} \big] \equiv \tr\Big[T^n_{(x_1,\dots,x_n)} \rho^{\otimes n} \Big] = \tr\Big[T^n_{(x_1,\dots,x_n)} \cS_n\big(\rho^{\otimes n}\big) \Big] = \tr \Big[ \cS_n\big(T^n_{(x_1,\dots,x_n)} \big) \rho^{\otimes n} \Big]\,,
\end{align}
where the second equality holds as \smash{$\rho^{\otimes n}$} is permutation invariant, and the last step follows due to the cyclicity of the trace and since $\sum_{\pi\in S_n}U_\pi X U^\dagger_\pi =  \sum_{\pi\in S_n}U^\dagger_\pi X U_\pi $ for any linear operator $X$. That is, the symmetrized POVM with elements $\big\{ \cS_n\big(T^n_{\vec{x}}\big)\big\}_{\vec{x}}$ yield the same probability of observing the data $\vec{x}$ as the original POVM if the input state is iid. Since tomography procedures are usually designed for iid states, it is natural to assume the permutation invariant structure
\begin{align} \label{eq_product-povm-2}
T^n_{(x_1,\dots,x_n)} = \frac{1}{n!} \sum_{\pi\in S_n}U_\pi \big( M_{x_1}\otimes M_{x_2} \otimes \cdots \otimes   M_{x_n} \big)   U^\dagger_\pi\,.
\end{align}
Intuitively,~\cref{eq_product-povm-2} corresponds to first picking one system of the multi-copy state at random before performing the measurements. Note that~\cref{eq_product-povm-2} indeed defines a POVM since \smash{$T^n_{\vec{x}} \geq 0 $} for all $\vec{x}\in \cX^n $ and \smash{$\sum_{\vec{x}} T^n_{\vec{x}} = \id^{\otimes n}$}.

Finally, we require the (single-copy) POVM defining the tomography protocol, \smash{$ \cM = \{ M_x \}_{x \in \cX}$}, to be \textbf{informationally complete}, and we assume the set $\cX$ to be finite, $|\cX| < \infty$.\footnote{This is justified for finite-dimensional Hilbert spaces $\cH$, see for instance~\cite{Flammia2005}.} In summary, we have:
\begin{assumption}[POVM elements] \label{app_assumption-POVM}
Let $\cM = \big\{ M_x \big\}_{x \in \cX}$ denote a (single-copy) POVM. We define a corresponding (multi-copy) POVM, denoted by $\{T^n_{\vec{x}} \}_{\vec{x}\in \cX^n}$ where $\vec{x}$ denotes the observed data, with the following assumptions:
\begin{enumerate}
\item The set of outcomes is finite, $|\cX| < \infty$, and $\cM$ is informationally complete.
\item The POVM elements $T^n_{\vec{x}}$ with $\vec{x} = (x_1,\dots,x_n)$ have the tensor-product structure
\begin{align} \label{eq_product-povm-3}
T^n_{(x_1,\dots,x_n)} = \frac{1}{n!} \sum_{\pi\in S_n}U_\pi \big( M_{x_1}\otimes M_{x_2} \otimes \cdots \otimes   M_{x_n} \big)   U^\dagger_\pi\,.
\end{align}
\end{enumerate}
\end{assumption}
\phantom{abc}
\\ 
\textbf{Classical data post-processing:}
On the other hand, the structure of the function $\vec{x} \mapsto \hat{\rho}(\vec{x})$ that assigns the guessed single-copy state $\hat{\rho}$ from the given data $\vec{x} = (x_1,\dots,x_n)$ is equally relevant for the analysis of the protocol's performance. In many approaches, this mapping only depends on the empirical frequencies that are constructed from the data $\vec{x}$. More precisely, let $\vec{x} = (x_1,\dots,x_n) $ be the observed data of length $n\in \N$. The empirical frequency distribution $\lambda_{\vec{x}}$ of $\vec{x}$ is defined by 
\begin{align}
\lambda_{\vec{x}} (y) = \frac{1}{n} \sum_{i=1}^n \id[x_i = y]
\end{align}
for $y \in \cX$, and where $\id[\cdot]$ denotes the indicator function. Whenever the classical post-processing only depends on the empirical frequency distributions of the data, the mapping $\vec{x} \mapsto \hat{\rho}(\vec{x})$ shows the following schematic processing structure:
\begin{align} \label{eq_data-processing-scheme}
\vec{x}   \ \mapsto \   \lambda_{\vec{x}}  \ \mapsto \ \hat{\rho}\big(\lambda_{\vec{x}} \big) \,\in\,\St(\cH) \,.
\end{align}

On the other hand, it needs to be assured that the mapping \smash{$\lambda_{\vec{x}}  \ \mapsto \ \hat{\rho}\big(\lambda_{\vec{x}} \big)$} outputs a state $\hat{\rho}$ that is consistent with the empirical frequency distributions. Said differently, if $\hat{\rho}$ were measured w.r.t.~the POVM \smash{$\big\{ M_x \big\}_{x \in \cX}$}, we should approximately 
%\footnote{One needs to allow for approximations as not all frequencies $ \lambda^\alpha_{\vec{x}}$ (for finite $n$) are compatible with a valid (positive semi-definite) density matrix.}
retrieve the empirical frequency distribution by means of the Born rule. Formally, for some given data $\vec{x}$ (of length $n$) and guessed state $\hat{\rho}(\vec{x})$, define the distribution $\mu_{\hat{\rho}}$ such that 
% for any set of distributions \smash{$\big\{ \lambda^\alpha \big\}_{\alpha \in \cA}$} and guessed state \smash{$\hat{\rho} = \hat{\rho}\big(\big\{ \lambda^\alpha \big\} \big)$}
\begin{align}  \label{eq_inverse-freq-map}
\tr \big[ M_y \hat{\rho}(\vec{x}) \big] \,=\, \mu_{\hat{\rho}}(y) \qquad  \forall\, y\in \cX\,.
\end{align}
Then, one may wish to obtain $ \lambda_{\vec{x}}= \mu_{\hat{\rho}}$. However, such an equality may not hold in practice as there exist empirical frequency distributions $ \lambda_{\vec{x}}$ that cannot arise from any valid (positive semi-definite) density matrix $\hat{\rho}$ by means of~\cref{eq_inverse-freq-map}. Therefore, we need to allow for an approximate relation. More precisely, we require that there exists an efficient sample complexity $\bar{N}_0$ such that
\begin{align}  \label{eq_inverted-born-rule}
 \forall\, n\geq \bar{N}_0(\epsilon,\delta)\,:\quad  \Pr_{\sigma^{\otimes n}}\Big[ \big\|  \lambda_{\vec{x}} - \mu_{\hat{\rho}}\big\|_1 >  \epsilon \Big] \,\leq\, \delta
\end{align}
for all $(\epsilon,\delta) \in (0,1)\times (0,\tfrac{1}{2}]$, and where the probability is taken over the data $\vec{x}$. As indicated by the notation above, $\vec{x}$ is assumed to be distributed according to a distribution resulting from measuring any iid state $\sigma^{\otimes n}$ of some underlying (``true") state $\sigma \in \St(\cH)$. Said differently, $\vec{x}$ is sampled iid from the classical distribution \smash{$P_X$} where \smash{$P_X(x) := \tr[M_x \sigma]$ for $x\in \cX$}. The above assumption is justified since for many tomography procedures, the distributions \smash{$ \lambda_{\vec{x}}$} and \smash{$ \mu_{\hat{\rho}}$} will actually coincide for a large enough number of samples $n$.~\cref{eq_inverted-born-rule} then constitutes even a weaker requirement. % Furthermore, in the case of a non-adaptive protocol, $|\ca| = 1$, implies~\cref{eq_lipschitz-continuity}
%\red{Also, this is exactly what~\cref{eq_tom-def1} is capturing in more general terms. Indeed, if the tomographic model is informationally complete,~\cref{eq_inverted-born-rule,eq_tom-def1} turn out to be equivalent, see~\cref{app_item2}. However, the formulation in~\cref{eq_inverted-born-rule} is based on the above structure while~\cref{eq_tom-def1} can be expressed for more general tomography procedures. } 
In summary, we have:
\begin{assumption}[Classical data post-processing] \label{app_assumption-PP}
Let \smash{$\cM = \big\{ M_x \big\}_{x \in \cX}$} denote a (single-copy) POVM as in~\cref{app_assumption-POVM}. The classical data post-processing defines a map $\vec{x}\mapsto \hat{\rho}(\vec{x})$ that assigns an observed data $\vec{x}\in \cX^n$ to some guessed (single-copy) state $\hat{\rho}(\vec{x}) \in \St(\cH)$. We assume the following structure:
\begin{enumerate}
\item The map only depends on the empirical frequency of the data: $\vec{x}   \ \mapsto \   \lambda_{\vec{x}}  \ \mapsto \ \hat{\rho}\big(\lambda_{\vec{x}} \big)$.
\item The map is consistent with the Born rule: For any $ \sigma \in \St(\cH)$, let $\mu_{\sigma}$ be a probability distribution defined by
\begin{align}  \label{eq_inverse-freq-map--}
 \mu_{\sigma}(y) \,=\, \tr \big[ M_y \sigma \big] \qquad  \forall\, y\in \cX \,.
\end{align}
Then there exists an efficient sample complexity $\bar{N}_0$ such that
\begin{align} \label{eq_inverted-born-rule--}
\forall\,(\epsilon,\delta) \in (0,1)\times (0,\tfrac{1}{2}]\  \forall\, n\geq \bar{N}_0(\epsilon,\delta)  \ \forall\,\sigma \in \St(\cH)  \,:\  \Pr_{\sigma^{\otimes n}}\Big[ \big\|  \lambda_{\vec{x}} - \mu_{\hat{\rho}(\vec{x})}\big\|_1 >  \epsilon \Big] \,\leq\, \delta
\end{align}
where the notation $\Pr_{\sigma^{\otimes n}}$ indicates that $\vec{x}$ is sampled according to the iid distribution $(P_X)^n$ defined by $P_X(x) := \tr[M_x \sigma]$ for $\sigma\in \St(\cH) $.
\end{enumerate}
\end{assumption}
\phantom{abc}
\\
\textbf{Lipschitz continuity:} Additionally, we point out that~\cref{app_assumption-POVM,app_assumption-PP} imply a type of Lipschitz continuity of the mapping \smash{$\lambda_{\vec{x}}  \ \mapsto \ \hat{\rho}\big(\lambda_{\vec{x}} \big)$}. To see this, consider first the following related statement.
\begin{lemma}[Lipschitz continuity] \label{lem_lipschitz-continuity}
Let \smash{$ \cM = \{ M_x \}_{x \in \cX}$} denote a POVM that is informationally complete and such that $|\cX| < \infty$. For any $ \sigma \in \St(\cH)$, let $\mu_{\sigma}$ be a probability distribution defined by $\mu_{\sigma}(x) \,=\, \tr \big[ M_x \sigma \big]$ for all $ x\in \cX$. Then, for any two states $\sigma,\tau \in \St(\cH)$, it holds
\begin{align}
 \big\| \mu_{\sigma} - \mu_{\tau}  \big\|_1  \,\leq\,   \big\| \sigma - \tau \big\|_1\, \leq\,C \big\| \mu_{\sigma} - \mu_{\tau}  \big\|_1  \,.
\end{align}
for some constant $C> 0$.
\end{lemma}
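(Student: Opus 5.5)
The lower bound $\|\mu_\sigma-\mu_\tau\|_1\le\|\sigma-\tau\|_1$ is the data-processing inequality for the measurement channel $X\mapsto\sum_x \tr[M_xX]\proj{x}$. Concretely, set $\Delta:=\sigma-\tau$ and write $\Delta=\Delta_+-\Delta_-$ with $\Delta_\pm\ge0$. Then
\begin{align*}
\sum_x|\tr[M_x\Delta]| \,\le\, \sum_x \tr[M_x(\Delta_++\Delta_-)] \,=\, \tr[\Delta_+]+\tr[\Delta_-] \,=\, \|\Delta\|_1\,,
\end{align*}
using $M_x\ge0$ and $\sum_xM_x=\id$.

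For the upper bound, I would use informational completeness, which means the linear map $\Phi:\mathrm{Herm}(\cH)\to\R^{|\cX|}$, $X\mapsto(\tr[M_xX])_{x\in\cX}$, is injective on Hermitian operators. Since $\cH$ is finite-dimensional and $|\cX|<\infty$, $\Phi$ is a linear injection between finite-dimensional real normed spaces. It therefore has a linear left inverse $\Phi^{-1}$ on its image, and any linear map between finite-dimensional spaces is bounded. Taking the trace norm on the domain and the $\ell_1$ norm on the codomain gives $\|X\|_1\le C\|\Phi(X)\|_1$ for all Hermitian $X$, with $C:=\sup\{\|X\|_1:\|\Phi(X)\|_1=1\}$.

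To see that $C$ is finite and positive, note that the unit sphere $\{X\in\mathrm{Herm}(\cH):\|X\|_1=1\}$ is compact. The continuous function $X\mapsto\|\Phi(X)\|_1$ attains a minimum $c$ on it, and $c>0$ by injectivity. Homogeneity then gives $\|X\|_1\le c^{-1}\|\Phi(X)\|_1$, so $C=1/c$ works. Applying this with $X=\sigma-\tau$ and using linearity, $\Phi(\sigma-\tau)=\mu_\sigma-\mu_\tau$, yields the claim.

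The only point that needs care is the precise use of informational completeness. If the definition only says the $\mu_\sigma$ determine states, one has to extend injectivity to all Hermitian operators. For traceless Hermitian $X$, which covers $\sigma-\tau$ and is all we need, write $X=\lambda(\sigma'-\tau')$ with states $\sigma',\tau'$ and $\lambda\ge0$. Then $\Phi(X)=0$ forces $\mu_{\sigma'}=\mu_{\tau'}$, hence $\sigma'=\tau'$ and $X=0$. Restricting the compactness argument to the traceless subspace completes the proof.
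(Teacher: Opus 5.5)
Your proposal is correct and follows essentially the same route as the paper: the lower bound comes from the decomposition $\sigma-\tau=\Delta_+-\Delta_-$ together with $\sum_x M_x=\id$, and the upper bound comes from injectivity of the linear map $X\mapsto(\tr[M_xX])_x$, which makes its inverse on the image a bounded linear map. Your compactness argument for the constant and your remark deriving injectivity on traceless Hermitian operators from informational completeness on states are harmless refinements of steps the paper simply asserts.
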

\begin{proof}
Define the function $f: \mathrm{Herm(\cH)} \to \R^{|\cX|}$ by \smash{$f(Y) =  \big(\tr[M_x Y]\big)_{x\in \cX} $}. Then $f$ is a linear map between finite dimensional vector spaces, hence also bounded. On the one hand, this implies that there exists a constant $\tilde{C}\geq0$ such that
\begin{align}
\big\| f(Y) \big\|_1 \,\leq\,  \tilde{C}  \| Y \|_1  \,.
\end{align}
for any $Y \in \mathrm{Herm(\cH)} $. More precisely, one can choose $\tilde{C} = 1$. This follows since any $Y \in \mathrm{Herm(\cH)} $ admits a decomposition $Y = Y_+ - Y_-$ with $Y_+,\,Y_- \geq 0$ and $Y_+Y_- = 0$. Consequently, $\|Y\|_1 =\tr(Y_+) + \tr(Y_-) $ and hence, one obtains
\begin{align}
\big\| f(Y) \big\|_1\, =\, \sum_{x\in\cX}\big|\tr[M_x Y]\big|  \,\overset{\substack{\mathrm{triangle}\\ \mathrm{ineq.}}}{\leq}\, \sum_{x\in\cX}\tr[M_x Y_+] + \tr[M_x Y_-]\, \overset{\mathrm{POVM}}{=}\, \tr[ Y_+] + \tr[ Y_-] \,=\, \|Y\|_1 \,.
\end{align} 
 Setting $Y = \sigma - \tau$ for any two states $\sigma,\tau \in \St(\cH)$ then yields
 \begin{align} \label{eq_lipschitz-2}
  \big\| \mu_{\sigma} - \mu_{\tau}  \big\|_1 \,=\,  \big\| f(\sigma-  \tau) \big\|_1 \,\leq\,   \big\| \sigma - \tau \big\|_1\,.
\end{align} 
On the other hand, since the POVM \smash{$ \{ M_x \}_{x \in \cX}$} is assumed to be informationally complete, the map $f$ is injective. Hence, this map has an inverse defined on its image, $f^{-1}: \mathrm{Im}(f) \to \mathrm{Herm(\cH)} $, that is linear and bounded. In particular, there exists a positive constant $C>0$ such that for any $Y \in \mathrm{Herm(\cH)} $ we obtain
\begin{align}
\| Y \|_1 \, = \, \big\|f^{-1}(f(Y)) \big\|_1 \,\leq\, C \big\| f(Y) \big\|_1 \,.
\end{align}
% Setting $Y = \hat{\rho} - \hat{\rho}'$ yields
Again setting $Y = \sigma - \tau$ for any two states $\sigma,\tau \in \St(\cH)$ yields
\begin{align} \label{eq_lipschitz-1}
  \big\| \sigma - \tau \big\|_1 \,\leq\,  C \big\| f(\sigma-  \tau) \big\|_1 = C \big\| \mu_{\sigma} - \mu_{\tau}  \big\|_1 
\end{align} 
for \smash{$\mu_{\sigma} := f(\sigma)$} and \smash{$\mu_{\tau} := f(\tau)$}. 
\end{proof}

\begin{lemma}[Lipschitz continuity with high probability] \label{lem_lipschitz-continuity-whp}
Let \smash{$ \cM = \{ M_x \}_{x \in \cX}$} be a (single-copy) POVM, $\sigma \in \St(\cH)$, and $P_X(x) := \tr[M_x \sigma]$ denote the classical distribution of the outcomes of $\cM$ applied to $\sigma$, respectively.  Let $\vec{x}\mapsto \hat{\rho}(\vec{x})$ be a classical data post-processing map. If~\cref{app_assumption-POVM,app_assumption-PP} are satisfied, then there exists a constant $C> 0$ such that for any $n\in \N$ and $\epsilon \in (0,1)$, it holds
\begin{align}
\Pr  \Big[  \big\| \hat{\rho}(\vec{x}) - \hat{\rho}(\vec{x}') \big\|_1 > C &\big\| \lambda_{\vec{x}} - \lambda_{\vec{x}'} \big\|_1 + \epsilon \Big] \,\leq\, g_{\epsilon}(n)\,,
\end{align}
where $\vec{x}$ and $\vec{x}'$ are sampled independently from the iid distribution $(P_X)^n$ and $ g_{\epsilon}(n)$ is an exponentially decreasing function.
\end{lemma}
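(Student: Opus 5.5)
The plan is to split the bad event into two pieces and use a union bound, each piece being controlled by one of the two assumptions. Write $\mu_{\hat\rho(\vec x)}$ for the Born-rule distribution of the estimate, as in~\cref{app_assumption-PP}. By~\cref{lem_lipschitz-continuity}, deterministically
\begin{align}
\big\|\hat\rho(\vec x)-\hat\rho(\vec x')\big\|_1 \,\leq\, C\,\big\|\mu_{\hat\rho(\vec x)}-\mu_{\hat\rho(\vec x')}\big\|_1 \,\leq\, C\Big(\big\|\mu_{\hat\rho(\vec x)}-\lambda_{\vec x}\big\|_1+\big\|\lambda_{\vec x}-\lambda_{\vec x'}\big\|_1+\big\|\lambda_{\vec x'}-\mu_{\hat\rho(\vec x')}\big\|_1\Big)\,,
\end{align}
where $C$ is the constant from~\cref{lem_lipschitz-continuity}. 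This constant depends only on $\cM$, which is informationally complete with $|\cX|<\infty$ by~\cref{app_assumption-POVM}. Hence, if the event in the statement occurs, then at least one of $\|\lambda_{\vec x}-\mu_{\hat\rho(\vec x)}\|_1$ and $\|\lambda_{\vec x'}-\mu_{\hat\rho(\vec x')}\|_1$ must exceed $\epsilon/(2C)$. Since $\vec x$ and $\vec x'$ have the same distribution $(P_X)^n$, the union bound gives
\begin{align}
\Pr\big[\cdots\big] \,\leq\, 2\,\Pr_{\sigma^{\ox n}}\Big[\big\|\lambda_{\vec x}-\mu_{\hat\rho(\vec x)}\big\|_1>\tfrac{\epsilon}{2C}\Big]\,.
\end{align}
Independence of $\vec x$ and $\vec x'$ is not even needed for this step.

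It remains to show that the right-hand side is exponentially decreasing in $n$ for each fixed $\epsilon$. Here I use~\cref{app_assumption-PP}, item~2, which supplies an efficient sample complexity $\bar N_0$. By~\cref{lem_sample-subexp}, direction (ii)$\Rightarrow$(i), with $\alpha=1$ because $\bar N_0$ is efficient, we obtain
\begin{align}
\Pr_{\sigma^{\ox n}}\Big[\big\|\lambda_{\vec x}-\mu_{\hat\rho(\vec x)}\big\|_1>\epsilon\Big] \,\leq\, K\,2^{-nL\epsilon^{\mu}}
\end{align}
for all $n$ and $\epsilon$. The boundedness hypothesis needed there is trivial, since the quantity is a probability and hence at most $1$. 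The same holds with $\epsilon$ replaced by $\epsilon/(2C)$. We then set $g_\epsilon(n):=2K\,2^{-nL(\epsilon/2C)^\mu}$, which for fixed $\epsilon$ is exponentially decreasing in $n$; the factor $2$ is harmless by~\cref{lem_sums-of-subexp}.

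The only delicate point I expect is this application of~\cref{lem_sample-subexp}. Its proof for (ii)$\Rightarrow$(i) takes $\delta_{n,\epsilon}=2^{-(\epsilon^q n/C)^{1/s}}$ with $s=1$ in the efficient case, and then handles the small-$n$ regime through the bound $f\leq K$. I should check that this indeed yields linear dependence on $n$ in the exponent, and that the resulting constants are uniform in $\sigma$. Uniformity follows because~\cref{eq_inverted-born-rule--} holds for all $\sigma$ with a single $\bar N_0$. Everything else is the triangle inequality, the union bound, and~\cref{lem_lipschitz-continuity}.
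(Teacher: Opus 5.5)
Your proposal is correct and follows essentially the same route as the paper: \cref{lem_lipschitz-continuity} combined with the triangle inequality through $\lambda_{\vec x}$ and $\lambda_{\vec x'}$, a union bound giving $2\,\delta_{n,\epsilon/(2C)}$, and \cref{lem_sample-subexp} with $\alpha=1$ applied to the efficient sample complexity of \cref{app_assumption-PP}. Your side remark is also accurate: only the marginal distributions of $\vec x$ and $\vec x'$ enter the union bound, so their independence is never used.
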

\begin{proof}
%Now let \smash{$ \cM = \{ M_x \}_{x \in \cX}$} be the POVM from~\cref{app_assumption-POVM}, $\sigma \in \St(\cH)$, and $P_X(x) := \tr[M_x \sigma]$ denote the classical distribution of the outcomes of $\cM$ applied to $\sigma$, respectively. 
Let $\vec{x}$ and $\vec{x}'$ be two data sets of length $n\in\N$ that are distributed according to the iid distribution $(P_X)^n$. Let $\hat{\rho} = \hat{\rho}(\lambda_{\vec{x}})$ and $\hat{\rho}' = \hat{\rho}'(\lambda_{\vec{x}'})$ denote the guessed single-copy states w.r.t.~to the data sets $\vec{x}$ and $\vec{x}'$, respectively. By~\cref{app_assumption-PP,lem_sample-subexp}, there exist constants $K,L >0$ and $q \geq 0$ such that
\begin{align}
\Pr_{\vec{x}} \Big[ \big\|  \lambda_{\vec{x}} -\mu_{\hat{\rho}(\vec{x})}  \big\|_1 >  \epsilon \Big] \, \leq \, K\,2^{-{L}\epsilon^q n} =: {\delta}_{n,\epsilon} 
\end{align}
%Recall that the efficient sample complexity in~\cref{app_assumption-PP}, by definition, satisfies $\bar{N}_0(\epsilon,\delta) \leq K\left(\tfrac{1}{\epsilon}\right)^q \log\left(\tfrac{1}{\delta}\right)$ for some constants $K,\, q >0$. In particular, for \smash{${\delta}_{n,\epsilon} := 2^{-\frac{1}{K}\epsilon^q n}$}, \red{TODO: choose $n$ large enough such that $\delta \leq 1/2$.} it follows that \smash{$\bar{N}_0(\epsilon, {\delta}_{n,\epsilon}) \leq n$}. 
for any $\epsilon \in (0,1)$. Consequently, by combining the above with~\cref{lem_lipschitz-continuity} for $ \sigma = \hat{\rho}, \  \tau = \hat{\rho}' $, and by noting that $\|  \mu_{\hat{\rho}} - \mu_{\hat{\rho}'}  \|_1 \leq  \| \mu_{\hat{\rho}}- \lambda_{\vec{x}}  \|_1 + \| \lambda_{\vec{x}} - \lambda_{\vec{x}'}\|_1 + \| \lambda_{\vec{x}'} -  \mu_{\hat{\rho}'} \|_1 $ due to the triangle inequality, we find
\begin{align}
\Pr  \Big[  \big\| \hat{\rho} - \hat{\rho}'  \big\|_1 > C &\big\| \lambda_{\vec{x}} - \lambda_{\vec{x}'} \big\|_1 + \epsilon \Big] \, \nonumber \\
& \leq \, \Pr \Big[  \big\| \hat{\rho} - \hat{\rho}'  \big\|_1 > C \big( \big\| \mu_{\hat{\rho}} - \mu_{\hat{\rho}'}  \big\|_1 -  \big\| \mu_{\hat{\rho}}- \lambda_{\vec{x}}  \big\|_1 -  \big\|\lambda_{\vec{x}'} -  \mu_{\hat{\rho}'} \big\|_1 \big) + \epsilon \Big] \\
\overset{\textnormal{\Cshref{lem_lipschitz-continuity}}}&{\leq} \, \Pr \Big[   \big\| \lambda_{\vec{x}} - \mu_{\hat{\rho}}  \big\|_1 +  \big\|\lambda_{\vec{x}'} -  \mu_{\hat{\rho}'} \big\|_1 > \tfrac{\epsilon}{C} \Big] \\
& \leq \,\Pr \Big[  \big\| \lambda_{\vec{x}} - \mu_{\hat{\rho}}  \big\|_1 >  \tfrac{\epsilon}{2C} \quad \textrm{or}\quad  \big\|\lambda_{\vec{x}'} -  \mu_{\hat{\rho}'} \big\|_1 >   \tfrac{\epsilon}{2C} \Big] \\ 
 \overset{{\mathrm{union}\, \mathrm{bound}}}&{\leq} \, \Pr \Big[  \big\| \lambda_{\vec{x}} - \mu_{\hat{\rho}}  \big\|_1 >  \tfrac{\epsilon}{2C}  \Big] \,+\,  \Pr \Big[  \big\|\lambda_{\vec{x}'} -  \mu_{\hat{\rho}'} \big\|_1 >  \tfrac{\epsilon}{2C}  \Big] \\
\overset{\textnormal{\Cshref{eq_inverted-born-rule--}}}&{\leq}\,  2\,{\delta}_{n,\frac{\epsilon}{2C}}
\end{align}
for any $\epsilon \in (0,1)$. Finally, it follows from~\cref{def_subexp-equivalence} that $g_\epsilon(n):=   2\,{\delta}_{n,\frac{\epsilon}{2C}}$ is an exponentially decreasing function.
\end{proof}
In other words, this shows that $ \| \hat{\rho} - \hat{\rho}'  \|_1 \leq C \| \lambda_{\vec{x}} - \lambda_{\vec{x}'} \|_1 + \epsilon$ for any $\epsilon \in (0,1)$ with high probability. Intuitively, this means that if the two data sets ${\vec{x}}$ and ${\vec{x}'}$ differ in $k$ samples, then the trace distance between the guessed states $\hat{\rho}$ and $\hat{\rho}'$ (w.r.t.~the trace distance) is roughly upper bounded by \smash{$C\frac{k}{n}$}. Indeed, the above argument is the basis for the stability of many tomography protocols, and the constant $C$ is essentially determined by the smallest singular value of the linear map $f$.

%On the other hand, since $f$ is itself a linear bounded function, one also has the relation

\subsection{Proof of~\cref{def_tom12-2}} \label{app_item2}
This requirement is the defining property of any sensible tomography protocol if the (multi-copy) input state has a tensor power structure (iid). 
% Certifying this property, especially determining the sample complexity, strongly depends on the details of the protocol. Here, we show that~\cref{app_assumption-POVM,app_assumption-PP} imply~\cref{def_tom12-2}.
%Therefore, we do not aim to give a general proof that holds for all of the protocols, but we argue that~\cref{eq_product-povm-2,eq_data-processing-scheme,eq_inverted-born-rule} are sufficient for~\cref{def_tom12-2} to be satisfied.
Let $\rho \in \St(\cH)$ and let the input state sequence be given by \smash{$( \rho^{\otimes n} )_{n\in \N}$}. Since the classical post-processing scheme in~\cref{app_assumption-PP} relies only on the empirical frequency of the data and how fast it converges to the expected value (in the sense of~\cref{eq_inverted-born-rule--}), the sample complexity is determined by this convergence, and the latter is controlled by Hoeffding's inequality~\cite{Hoeffding1963,cover}. Formally, let \smash{$ \cM = \{ M_x \}_{x \in \cX}$} and $P_X(x) := \tr[M_x \rho]$ denote the POVM and the classical distribution of the outcomes of $\cM$ applied to $\rho$, respectively. Let $n\in \N$ and $\vec{x}$ be the outcome of the multi-copy POVM described in~\cref{app_assumption-POVM} applied to $ \rho^{\otimes n}$. By~\cref{prop_typical-sequences}~\eqref{lem_typical_itm2}, we have for any $\epsilon >0$ that
\begin{align}
\Pr_{\vec{x}} \Big[ \big\|  \lambda_{\vec{x}} - P_X \big\|_1 >  \epsilon \Big] \, \leq \, 2^{-n\big(\frac{\epsilon^2}{2 \ln{2}} - |\cX|\frac{\log{(n+1)}}{n} \big)} \,=: \delta_{n,\epsilon}\,.
\end{align}
On the other hand, let \smash{$\hat{\rho} = \hat{\rho}(\lambda_{\vec{x}})$} be the guessed state with associated distribution $\mu_{\hat{\rho}}$ according to~\cref{eq_inverse-freq-map--}. Because the sample complexity in~\cref{eq_inverted-born-rule--} is assumed to be efficient, it follows from~\cref{lem_sample-subexp} that there exist constants $K,L >0,\, q \geq 0$ such that
\begin{align} \label{eq_it2-assumption-rewritten}
\Pr_{\vec{x}} \Big[ \big\|  \lambda_{\vec{x}} -\mu_{\hat{\rho}(\vec{x})}  \big\|_1 >  \epsilon \Big] \, \leq \, K\,2^{-{L}\epsilon^q n} =: \tilde{\delta}_{n,\epsilon}
\end{align}
for the data $\vec{x}$ of length $n$.
%The efficient sample complexity in~\cref{eq_inverted-born-rule}, by definition, satisfies $\bar{N}_0(\epsilon,\delta) \leq K\left(\tfrac{1}{\epsilon}\right)^q \log\left(\tfrac{1}{\delta}\right)$ for some constants $K,\, q >0$. Therefore, for $\tilde{\delta}_{n,\epsilon} := 2^{-\frac{1}{K}\epsilon^q n}$, it follows that $\bar{N}_0(\epsilon, \tilde{\delta}_{n,\epsilon}) \leq n$ and hence,
%\begin{align} \label{eq_it2-assumption-rewritten}
%\Pr_{\vec{x}} \Big[ \big\|  \lambda_{\vec{x}} -\mu_{\hat{\rho}}  \big\|_1 >  \epsilon \Big] \, \leq \, \tilde{\delta}_{n,\epsilon}
%\end{align}
%for the data $\vec{x}$ of length $n$. 
By the triangle inequality \smash{$\| \mu_{\hat{\rho}} - P_X \|_1 \leq \| \mu_{\hat{\rho}} - \lambda_{\vec{x}}  \|_1 + \|  \lambda_{\vec{x}} - P_X \|_1 $}, we then obtain
\begin{align}
\Pr_{\vec{x}} \Big[ \big\| \mu_{\hat{\rho}(\vec{x})} - P_X \big\|_1 >  \epsilon \Big] \, &\leq \,\Pr_{\vec{x}} \Big[ \big\|\mu_{\hat{\rho}(\vec{x})} - \lambda_{\vec{x}}  \big\|_1 +  \big\| \lambda_{\vec{x}} - P_X  \big\|_1 >  \epsilon \Big]  \\
& \leq \,\Pr_{\vec{x}} \Big[ \big\|\mu_{\hat{\rho}(\vec{x})} - \lambda_{\vec{x}}  \big\|_1 >  \tfrac{\epsilon}{2} \quad \textrm{or}\quad  \big\| \lambda_{\vec{x}} - P_X  \big\|_1 >   \tfrac{\epsilon}{2} \Big] \\ 
 \overset{{\mathrm{union}\, \mathrm{bound}}}&{\leq} \, \Pr_{\vec{x}} \Big[ \big\|\mu_{\hat{\rho}(\vec{x})} - \lambda_{\vec{x}}  \big\|_1 >  \tfrac{\epsilon}{2}  \Big] \,+\,  \Pr_{\vec{x}} \Big[  \big\| \lambda_{\vec{x}} - P_X  \big\|_1 >  \tfrac{\epsilon}{2}  \Big] \\
& \leq \, \tilde{\delta}_{n,\frac{\epsilon}{2}} + {\delta}_{n,\frac{\epsilon}{2}}\,. \label{eq_it2-ineq3}
\end{align}
Now we may use the above to show that the guessed state $\hat{\rho}$ is close to the true state $\rho$ with high probability. To that end, first note that $\mu_{\hat{\rho}}(x) = \tr[M_x \hat{\rho}]$ and $P_X(x) = \tr[M_x \rho]$ for all $x \in \cX$ by definition. Due to~\cref{lem_lipschitz-continuity}, there exists a constant $C\geq0$ such that
\begin{align} \label{eq_it2-lower-bound}
  \big\| \hat{\rho}(\vec{x}) - \rho \big\|_1 \,\leq\, C \big\| \mu_{\hat{\rho}(\vec{x})} - P_X \big\|_1 \,.
\end{align} 
Combining~\cref{eq_it2-ineq3,eq_it2-lower-bound}, we finally obtain
\begin{align}
 \Pr_{\vec{x}} \Big[  \big\| \hat{\rho}(\vec{x}) - \rho  \big\|_1 >  \epsilon \Big] \,\leq \, \Pr_{\vec{x}} \Big[ \big\| \mu_{\hat{\rho}(\vec{x})} - P_X \big\|_1 >  \frac{\epsilon}{C} \Big]\, \leq \, \tilde{\delta}_{n,\frac{\epsilon}{2C}} + {\delta}_{n,\frac{\epsilon}{2C}}\,. \label{eq_it2-ineq4}
\end{align}
By~\cref{ex_subexp-function-eps}, there exists a function ${N}_0 : (0,1) \to \mathbb{N} $ such that \smash{${N}_0(\epsilon) \leq C \big( \frac{1}{\epsilon}\big)^{3}$} with $C>0$, and such that for all $n\geq {N}_0(\epsilon)$, we can bound \smash{$ {\delta}_{n,\frac{\epsilon}{2C}} \leq 2^{-nL'\epsilon^2}$} for some constant $L'>0$. Hence, the bound in~\cref{eq_it2-ineq4} is an exponentially decreasing function for any fixed $\epsilon \in (0,1)$ as a consequence of~\cref{lem_sums-of-subexp}. More precisely, it is visible from the proof of~\cref{lem_sums-of-subexp} (see also~\cref{ex_sums-of-subexp-eps}) that, when keeping the $\epsilon$-dependence explicit,~\cref{eq_it2-ineq4} is bounded for all $n\geq {N}_0(\epsilon)$ by 
%there exists $N_0 \in \N$ such that for all $n\geq N_0$,
%Note that $ \tilde{\delta}_{n,\frac{\epsilon}{2C}}$ and $ {\delta}_{n,\frac{\epsilon}{2C}}$ are exponentially decreasing functions for any $\epsilon \in (0,1)$ due to~\cref{ex_subexp-function}. \GM{ATTENTION: Relationsship between $n$ and $\epsilon$. Still polynomial, but to be made precise!} 
\begin{align}
 \Pr_{\vec{x}} \Big[  \big\| \hat{\rho}(\vec{x}) - \rho  \big\|_1 >  \epsilon \Big] \,\leq \, \tilde{K} 2^{- n \epsilon^{\bar{q}} \tilde{L}}
\end{align}
for some constants $\tilde{K}, \tilde{L}>0$ and where $\bar{q} := \max\{2,q\}$. Hence, by~\cref{lem_sample-subexp} there exists an efficient sample complexity $N'_0(\epsilon, \delta)$ such that 
%\smash{$ \tilde{K} 2^{- n \epsilon^{\bar{q}} \tilde{L}} \leq \delta $} for all $(\epsilon,\delta) \in (0,1)\times (0,\tfrac{1}{2}]$ and $n \geq N'_0(\epsilon, \delta) $. Finally, setting $\bar{N}_0(\epsilon, \delta) := \max\{N_0(\epsilon)\log(\frac{1}{\delta}), N'_0(\epsilon, \delta) \} \geq N_0(\epsilon)$, which is an efficient sample complexity, we find
\begin{align}
 \forall\,(\epsilon,\delta) \in (0,1)\times (0,\tfrac{1}{2}] \ \ \forall\, n\geq \bar{N}_0(\epsilon,\delta)\,:\   \Pr_{\vec{x}} \Big[  \big\| \hat{\rho}(\vec{x}) - \rho  \big\|_1 >  \epsilon \Big]\, \leq \, \delta\,, \label{eq_it2-ineq5}
\end{align}
where we set $\bar{N}_0(\epsilon, \delta) := \max\{N_0(\epsilon)\log(\frac{1}{\delta}), N'_0(\epsilon, \delta) \} \geq N_0(\epsilon)$. Since $\bar{N}_0(\epsilon, \delta)$ is again an efficient sample complexity, and its form is independent of $\rho$ (since $\delta_{n,\epsilon}$ and $\tilde{\delta}_{n,\epsilon} $ are both independent of $\rho$), this proves~\cref{def_tom12-2}.

\subsection{Proof of~\cref{def_tom34-3}} \label{app_item3}
Let $\rho \in \St(\cH)$, $\pmb{\rho} \in \OP(\texttt{tom($\cdot$)};\rho)$ and $(m_n)_{n\in\N}$ be any sequence of non-negative integers $m_n \in \N_0 $. Define the reduced state sequence $\bar{\pmb{\rho}}$ by 
\begin{align} \label{eq_it3-def-seq}
     \bar{\rho}^{(n)}:=  \tr_{m_n}\big[\mathcal{S}_{n+m_n}(\rho^{(n+m_n)})\big]\,,
\end{align}
where $\cS$ is the symmetrization map from~\cref{eq_symmetrization-map} and $\tr_{m_n}$ denotes the partial trace over (any) $m_n$ subsystems. Here, we want to show that $\pmb{\rho} \in \OP(\texttt{tom($\cdot$)};\rho)$ implies $\bar{\pmb{\rho}} \in \OP(\texttt{tom($\cdot$)};\rho)$. The corresponding proof will be more involved than in~\cref{def_tom12-2} since we do not know anything about the structure of $\pmb{\rho}$ a priori. The proof idea is to relate the probabilities \smash{$ \Pr\big[\texttt{tom}\big(\rho^{(N)}\big) \notin B_\epsilon(\rho)\big]$} and \smash{$  \Pr\big[\texttt{tom}\big(\bar{\rho}^{(n)}\big) \notin B_\epsilon(\rho)\big]$} by converting the expressions into a classical statement about the empirical frequency distributions of the observed data, and this conversion is done via~\cref{lem_lipschitz-continuity} together with~\cref{app_assumption-PP}.

As a warm up, observe that due to the permutation invariance of the (multi-copy) POVM (see~\cref{app_assumption-POVM}) we have for the data outcome probabilities
\begin{align} \label{eq_it3-PI}
\Pr_{ \cS_n( \rho^{(n)})}\big[\vec{x}_n \big] \,\equiv \,\tr\Big[ T^n_{(x_1,\dots,x_n)}\, \cS_n\big( \rho^{(n)}\big) \Big]\,\overset{\mathrm{cycl.}}{=}\,\tr\big[ \cS_n\Big(  T^n_{(x_1,\dots,x_n)}\big)\,\rho^{(n)} \Big] \, \overset{\mathrm{PI}}{=}\, \tr\Big[ T^n_{(x_1,\dots,x_n)}\,\rho^{(n)} \Big] \,\equiv\, \Pr_{  \rho^{(n)}}\big[\vec{x}_n \big] 
%\tr\Big[ T^n_{(x_1,\dots,x_n)}\, \cS_n\big( \rho^{(n)}\big) \Big] \,&=\, \frac{1}{n!} \sum_{\pi \in S_n} \tr\Big[ T^n_{(x_1,\dots,x_n)}\, U_\pi \rho^{(n)} U^\dagger_\pi \Big] \\
% \,\overset{\mathrm{cyclicity}}&{=}\,\frac{1}{n!} \sum_{\pi \in S_n} \tr\Big[U^\dagger_\pi  T^n_{(x_1,\dots,x_n)} U_\pi \, \rho^{(n)} \Big] \\
% \,&=\, \tr\big[ \cS_n\Big(  T^n_{(x_1,\dots,x_n)}\big)\,\rho^{(n)} \Big]  \\
%  \,&=\, \tr\Big[ T^n_{(x_1,\dots,x_n)}\,\rho^{(n)} \Big]
\end{align}
for any $n \in \N$ and $\vec{x}_n =(x_1,\dots,x_n)  \in \cX^n$. This means that the probability of any event is the same whether the data $\vec{x}_n $ is sampled w.r.t.~\smash{$ \cS_n( \rho^{(n)})$ or $ \rho^{(n)}$}. In particular, this directly implies that $\cS(\pmb{\rho}) \in \OP(\texttt{tom($\cdot$)};\rho) $ if $\pmb{\rho}  \in \OP(\texttt{tom($\cdot$)};\rho) $, which proves~\cref{def_tom34-3} for the special case where $m_n = 0$ for all $n \in \N$. 

Now let $n \in \N$ be fixed. We can relate the probabilities of observing some data $\vec{x}$ with respect to the two state sequences $\pmb{\rho}$ and $\bar{\pmb{\rho}}$ due to~\cref{app_assumption-POVM} as follows:
\begin{align}
 \Pr_{\bar{\rho}^{(n)}}\big[\vec{x}_n \big] \,&= \, \tr\Big[ T^n_{(x_1,\dots,x_n)}\,\bar{\rho}^{(n)} \Big] \\
 \,\overset{\textnormal{\Cshref{eq_it3-def-seq}}}&{=}\, \tr_n \tr_{m_n} \Big[T^n_{(x_1,\dots,x_n)} \otimes \id_{m_n}\,\cS_{n+m_n}\big(\rho^{(n+m_n)} \big) \Big] \\
 \,\overset{\mathrm{cyclicity}}&{=}\,  \tr_n \tr_{m_n} \Big[\cS_{n+m_n}\big(T^n_{(x_1,\dots,x_n)} \otimes \id_{m_n} \big)\,\rho^{(n+m_n)} \Big] \\
 \,&=\, \sum_{x_{n+1},\dots,x_{n+m_n}} \tr_n \tr_{m_n} \Big[\cS_{n+m_n}\Big(T^n_{(x_1,\dots,x_n)} \otimes T^{m_n}_{(x_{n+1},\dots,x_{n+m_n})} \Big)\,\rho^{(n+m_n)} \Big]  \\
 \,\overset{\textnormal{\Cshref{eq_product-povm-3}}}&{=}\,  \sum_{x_{n+1},\dots,x_{n+m_n}} \tr_n \tr_{m_n} \Big[T^{n+m_n}_{(x_{1},\dots,x_{n+m_n})} \,\rho^{(n+m_n)} \Big]  \\
 \,&=\,  \sum_{x_{n+1},\dots,x_{N}}  \Pr_{ {\rho}^{(N)}}\big[\vec{x}_N  \big] \,, \label{eq_it3-prob1}
\end{align}
where we have set $N := n + m_n$ and $\vec{x}_N := (x_{1},\dots,x_{n}, x_{n+1},\dots,x_{N})$ in the last step. This shows that \smash{$ \Pr_{\bar{\rho}^{(n)}}[\vec{x}_n ]$} corresponds to the marginal distribution of \smash{$ \Pr_{ {\rho}^{(N)}}[\vec{x}_N ]$}.

Next, we study the statistics induced by $\bar{\pmb{\rho}}$ at the level of the empirical frequency distributions, and show that it is determined by the statistics of $\pmb{\rho}$ with high probability. Let $P_X(x) := \tr[M_x \rho]$ denote the classical distribution of the outcomes of (the single-copy POVM) $\cM$ applied to $\rho$, and consider the set \smash{$F^{n}_{\epsilon} := \big\{ \vec{x}_n \,: \,\| \lambda_{\vec{x}_n} - P_X \|_1 > \epsilon \big\}$} for any $\epsilon \in (0,1)$, where $\lambda_{\vec{x}_n}$ denotes the empirical frequency distribution w.r.t.~$\vec{x}_n$. Then, the probability that \smash{$\vec{X}_n \in F^{n}_{\epsilon}  $} can be written as
\begin{align} \label{eq_it3-freq-steps-1}
% = \Pr\big[\vec{x}_n \in F^{n}_{\epsilon}  \, |\, \bar{\rho}^{(n)} \big]
 \Pr_{\bar{\rho}^{(n)} }\big[\vec{X}_n \in F^{n}_{\epsilon}  \big]  \,=\, \sum_{x_1, \dots,x_n} \id\big[\vec{x}_n \in F^{n}_{\epsilon}  \big] \,  \Pr_{\bar{\rho}^{(n)}}\big[\vec{x}_n \big]  \,\overset{\textnormal{\Cshref{eq_it3-prob1}}}{=}\,  \sum_{x_1, \dots,x_N}  \id\big[\vec{x}_n \in F^{n}_{\epsilon}  \big] \,  \Pr_{ {\rho}^{(N)} }\big[\vec{x}_N \big]\,.
 \end{align}
Note that the probability distribution \smash{$\Pr_{ {\rho}^{(N)}}[\vec{x}_N  ] $} is permutation invariant since the operator \smash{$T^{n+m_n}_{(x_{1},\dots,x_{N})}$} is permutation invariant. Thus, for any index subset $I = \{i_1,\dots,i_n \} \subset \{1,\dots,N\}$ of length $|I|=n \leq N$ (with ordering $i_1 <\dots < i_n$), we may rewrite the above as
\begin{align}
\sum_{x_1, \dots,x_N}  \id\big[\vec{x}_n \in F^{n}_{\epsilon}  \big] \,  \Pr_{ {\rho}^{(N)} }\big[\vec{x}_N \big] \,&=\,\sum_{x_1, \dots,x_N} \frac{1}{\binom{N}{n}} \sum_{I \,:\,| I | = n}  \id\big[\vec{x}_I \in F^{n}_{\epsilon} \,\big|\, \vec{x}_N  \big] \,  \Pr_{ {\rho}^{(N)} }\big[\vec{x}_N \big] \\
 \,&=\, \sum_{x_1, \dots,x_N} \Pr_{I \small \sim \mathrm{unif.}\,:\, |I| = n  }\Big[\vec{x}_I \in F^{n}_{\epsilon} \,\big|\, \vec{x}_N \Big]    \,  \Pr_{ {\rho}^{(N)} }\big[\vec{x}_N \big] \,,
\end{align}
where we have introduced the notation $\vec{x}_I := (x_{i_1},\dots,x_{i_n})$ and $\Pr_{I \small \sim \mathrm{unif.}\,:\, |I| = n  }$ indicating that the index subset $I \subset  \{1,\dots,N\}$ of length $|I| = n$ is sampled uniformly at random. Now observe that since the set $F^n_\epsilon$ is agnostic of the sequence ordering, sampling index subsets $I$ is equivalent to sampling elements $\vec{y}_n = (y_1,\dots, y_n)$ from a fixed set $\vec{x}_N = (x_1,\dots, x_N)$ at random without replacement. To see this, we evaluate for any fixed sequence $\vec{x}_N$ 
\begin{align}
  \Pr_{I \small \sim \mathrm{unif.}\,:\, |I| = n  }\Big[\vec{x}_I \in & F^{n}_{\epsilon} \,\big|\, \vec{x}_N \Big] \,=\,\frac{1}{\binom{N}{n}} \sum_{I \,:\,| I | = n}  \id\big[\vec{x}_I \in F^{n}_{\epsilon}\, \big|\, \vec{x}_N  \big] 
  \,=\, \frac{1}{\binom{N}{n}} \sum_{y_1<\dots< y_n}  \id\big[\vec{y}_n \in F^{n}_{\epsilon}\, \big|\, \vec{x}_N  \big] \\
  \,&=\, \tfrac{(N-n)!}{N!}\,n!  \sum_{y_1<\dots< y_n}  \id\big[\vec{y}_n \in F^{n}_{\epsilon}\, \big|\, \vec{x}_N  \big] 
  \,\overset{(\star)}{=}\,  \tfrac{(N-n)!}{N!}  \sum_{y_1,\dots, y_n}  \id\big[\vec{y}_n \in F^{n}_{\epsilon}\, \big|\, \vec{x}_N  \big]  \\
  \,&=\, \sum_{y_1,\dots, y_n}  \id\big[\vec{y}_n \in F^{n}_{\epsilon}\, \big|\, \vec{x}_N  \big] \,\Pr_{\substack{\vec{Y}_n \small\sim \mathrm{random} \\ \mathrm{w.o. \ rep.}}}\big[\vec{y}_n\, \big|\, \vec{x}_N \big] \,=\,\Pr_{\substack{\vec{Y}_n \small\sim \mathrm{random} \\ \mathrm{w.o.\ rep.}}}\Big[\vec{Y}_n \in F^{n}_{\epsilon}\, \big|\, \vec{x}_N  \Big],
\end{align}
where we used in step $(\star)$ that $F^n_\epsilon$ is ordering agnostic. In the following, we will write $ \Pr_{\vec{Y}_n } \equiv \Pr_{\substack{\vec{Y}_n \small\sim \mathrm{random} \\ \mathrm{w.o. \ rep.}}} $ to simplify notation. 

Consider the event $B_{\bar{\epsilon}} := \big\{ \vec{y}_n \,: \,\| \lambda_{\vec{y}_n} -  \lambda_{\vec{x}_N} \|_1 \leq \bar{\epsilon} \big\}$ for some $\bar{\epsilon} \in (0,\epsilon)$. Then, we use the law of total probability to bound
\begin{align}
 \Pr_{\vec{Y}_n } \Big[\vec{Y}_n \in F^{n}_{\epsilon}\, \big|\, \vec{x}_N  \Big] \,&=\,  \Pr_{\vec{Y}_n } \Big[\vec{Y}_n \in F^{n}_{\epsilon}\, \big|\, B_{\bar{\epsilon}},\, \vec{x}_N  \Big] \Pr_{\vec{Y}_n } \Big[ B_{\bar{\epsilon}}\,\big|\, \vec{x}_N  \Big] + \Pr_{\vec{Y}_n } \Big[\vec{Y}_n \in F^{n}_{\epsilon}\, \big|\, B^c_{\bar{\epsilon}},\, \vec{x}_N  \Big] \Pr_{\vec{Y}_n } \Big[ B^c_{\bar{\epsilon}}\,\big|\, \vec{x}_N  \Big] \\
 \,&\leq\,  \Pr_{\vec{Y}_n } \Big[\vec{Y}_n \in F^{n}_{\epsilon}\, \big|\, B_{\bar{\epsilon}},\, \vec{x}_N  \Big]  + \Pr_{\vec{Y}_n } \Big[ B^c_{\bar{\epsilon}}\,\big|\, \vec{x}_N  \Big] \,. \label{eq_it3-lawtotalprob-1}
\end{align}
For the first term in~\cref{eq_it3-lawtotalprob-1}, we use that for all $\vec{y}_n \in  F^{n}_{\epsilon} \cap B_{\bar{\epsilon}}$, we get $\|\lambda_{\vec{x}_N} - P_X \|_1 \geq \|\lambda_{\vec{y}_n} - P_X \|_1 - \|\lambda_{\vec{x}_N} - \lambda_{\vec{y}_n} \|_1 > \epsilon - \bar{\epsilon}$ due to the triangle inequality, therefore yielding
\begin{align}
 \Pr_{\vec{Y}_n } \Big[\vec{Y}_n \in F^{n}_{\epsilon}\, \big|\, B_{\bar{\epsilon}},\, \vec{x}_N  \Big] \,\leq\,  \Pr_{\vec{Y}_n } \Big[\|\lambda_{\vec{x}_N} - P_X \|_1 > \epsilon - \bar{\epsilon} \, \big|\,  \vec{x}_N  \Big] \,=\, \id\Big[\|\lambda_{\vec{x}_N} - P_X \|_1 > \epsilon - \bar{\epsilon} \Big]\,,
\end{align}
while for the second term in~\cref{eq_it3-lawtotalprob-1}, we can directly use~\cref{prop_typical-sequences}~\eqref{lem_typical_itm3} to find that
\begin{align} \label{eq_it3-freq-steps--1}
\Pr_{\vec{Y}_n } \Big[ B^c_{\bar{\epsilon}}\,|\, \vec{x}_N  \Big]  =\Pr_{\vec{Y}_n } \Big[ \| \lambda_{\vec{y}_n} -  \lambda_{\vec{x}_N} \|_1 > \bar{\epsilon}\,\big|\, \vec{x}_N  \Big] \, \leq \, 2^{-n\big(\frac{\bar{\epsilon}^2}{2 \ln{2}} - |\cX|\frac{\log{(n+1)}}{n} \big)} \,=:\, \delta_{n,\bar{\epsilon}}\,.
\end{align}
Note that the quantity $\delta_{n,\epsilon}$ is independent of $m_n$. By combining all the steps~\eqref{eq_it3-freq-steps-1}-\eqref{eq_it3-freq-steps--1} above, we obtain
\begin{align} \label{eq_it3-intermediate-1}
 \Pr_{\bar{\rho}^{(n)} }\big[\vec{X}_n \in F^{n}_{\epsilon}  \big] \,=\,\sum_{x_1, \dots,x_N} \Pr_{\vec{Y}_n }\Big[\vec{Y}_n \in F^{n}_{\epsilon}\, \big|\, \vec{x}_N  \Big]   \,  \Pr_{ {\rho}^{(N)} }\big[\vec{x}_N \big] \,\leq\,  \Pr_{ {\rho}^{(N)} } \big[\vec{X}_N \in F^{N}_{\epsilon - \bar{\epsilon}}  \big]+  \delta_{n,\bar{\epsilon}} \,.
\end{align}

Now we need to take into account how the empirical frequencies are mapped to the guessed (single-copy) states. To do so, we will make use of a post-selection de Finetti technique~\cite{Christandl_2009} together with~\cref{app_assumption-PP}. The post-selection de Finetti theorem states that there exists a measure $\mu$ on the space of single-copy density operators on $\St(\cH)$ such that for any permutation invariant state $\omega^{(n)} \in \St(\cH^{\otimes n})$, we have
\begin{align} \label{eq_it3-postselection-dF}
\omega^{(n)} \,\leq\, c_{n,d} \int_{\St(\cH)} \sigma^{\otimes n} \di\mu(\sigma)
\end{align}
with \smash{$c_{n,d} = \binom{n+d^2 -1}{n} \leq (n+1)^{d^2-1}$} for $d = \dim{\cH}$. As a consequence, we can bound the probability of any event $A$ with respect to $\omega^{(n)}$ by the probability of $A$ with respect to some iid state $ \sigma^{\otimes n}$. Formally, consider the operator $M_A := \sum_{\vec{x}_n \in A} T^n_{(x_1,\dots,x_n)}$ which satisfies $0\leq M_A \leq \id$. Then, one finds
\begin{align}
\Pr_{\omega^{(n)}}\big[\vec{X}_n \in A\big] \,&= \, \sum_{\vec{x}_n}\id\big[\vec{x}_n \in A\big] \Pr_{\omega^{(n)}}\big[ \vec{x}_n \big] = \sum_{\vec{x}_n \in A} \tr\big[ T^n_{(x_1,\dots,x_n)}\,\omega^{(n)} \big] \,=\, \tr\big[M_A\,\omega^{(n)} \big] \\
\,\overset{\textnormal{\Cshref{eq_it3-postselection-dF}}}&{\leq} c_{n,d} \int_{\St(\cH)} \tr\big[M_A\,\sigma^{\otimes n} \big]  \di\mu(\sigma) \,\leq\,  (n+1)^{d^2-1} \sup_{\sigma \in \St(\cH)}  \tr\big[M_A\,\sigma^{\otimes n} \big] \\
\,&=\, (n+1)^{d^2-1} \sup_{\sigma \in \St(\cH)} \Pr_{\sigma^{\otimes n}}\big[\vec{X}_n \in A\big]\,. \label{eq_it3-postselection-dF-p}
\end{align}

Let $\epsilon \in (0,1)$ and  ${\vec{x}}  \,\mapsto \,\lambda_{\vec{x}}  \, \mapsto \, \hat{\rho}(\lambda_{\vec{x}} )$ be the map that assigns a guessed state $\hat{\rho}$ to an observed data set $\vec{x}$ via its empirical frequency distribution $\lambda_{\vec{x}}$. Then, we can use~\cref{lem_lipschitz-continuity} to bound
\begin{align}
 \Pr_{\bar{\rho}^{(n)} }\Big[\tfrac{1}{2}\big\| \hat{\rho}(\vec{x}_n) - \rho \big\|_1 > \epsilon \Big] \,\overset{\textnormal{\Cshref{lem_lipschitz-continuity}}}&{\leq}\, \Pr_{\bar{\rho}^{(n)} }\Big[\tfrac{1}{2}\big\| \mu_{\hat{\rho}} - P_X \big\|_1 > \tfrac{\epsilon}{C} \Big] \,\\
 \overset{{\mathrm{triangle}\, \mathrm{ineq.}}}&{\leq}\, \Pr_{\bar{\rho}^{(n)} }\Big[\tfrac{1}{2}\big\| \mu_{\hat{\rho}} - \lambda_{\vec{x}_n} \big\|_1 + \tfrac{1}{2}\big\| \lambda_{\vec{x}_n} - P_X \big\|_1 > \tfrac{\epsilon}{C} \Big]  \\
 \, \overset{{\mathrm{union}\, \mathrm{bound}}}&{\leq} \, \Pr_{\bar{\rho}^{(n)} }\Big[\tfrac{1}{2}\big\| \mu_{\hat{\rho}} - \lambda_{\vec{x}_n} \big\|_1 > \tfrac{\epsilon}{2C} \Big] + \Pr_{\bar{\rho}^{(n)} }\Big[\tfrac{1}{2}\big\| \lambda_{\vec{x}_n} - P_X \big\|_1 > \tfrac{\epsilon}{2C} \Big] \\
 \,&=\,  \Pr_{\bar{\rho}^{(n)} }\Big[\big\| \mu_{\hat{\rho}} - \lambda_{\vec{x}_n} \big\|_1 > \tfrac{\epsilon}{C} \Big] +  \Pr_{\bar{\rho}^{(n)} }\Big[\vec{X}_n \in F^{n}_{\frac{\epsilon}{C}}  \Big]\,,  \label{eq_it3-intermediate-2}
 \end{align}
for $C\geq0$ and $\mu_{\hat{\rho}}$ defined as in~\cref{eq_inverse-freq-map--}. Since $\bar{\rho}^{(n)}$ is permutation invariant by definition, one finds for the first term in~\cref{eq_it3-intermediate-2} that
\begin{align}
 \Pr_{\bar{\rho}^{(n)} }\Big[\big\| \mu_{\hat{\rho}} - \lambda_{\vec{x}_n} \big\|_1 > \tfrac{\epsilon}{C} \Big] \overset{\textnormal{\Cshref{eq_it3-postselection-dF-p}}}&{\leq} (n+1)^{d^2-1} \sup_{\sigma \in \St(\cH)} \Pr_{\sigma^{\otimes n}} \Big[\big\| \mu_{\hat{\rho}} - \lambda_{\vec{x}_n} \big\|_1 > \tfrac{\epsilon}{C} \Big]  \\
 \overset{\textnormal{\Cshref{eq_inverted-born-rule--,lem_sample-subexp}}}&{\leq} (n+1)^{d^2-1} \tilde{K} \,2^{-\tilde{L} (\frac{\epsilon}{C})^{\tilde{q}} n} \,=: \,\tilde{\delta}_{n,\frac{\epsilon}{C}} \label{eq_it3-postselection-delta}
\end{align}
for some constants $\tilde{K},\tilde{L}>0,\,q\geq 0$. Note that the quantity $\tilde{\delta}_{n,\epsilon}$ is independent of $\bar{\rho}$ and $m_n$. The second term in~\cref{eq_it3-intermediate-2} was already bounded in~\cref{eq_it3-intermediate-1}, and we obtain
\begin{align}
 \Pr_{\bar{\rho}^{(n)} }\Big[\vec{X}_n \in F^{n}_{\frac{\epsilon}{C}}  \Big] \,\overset{\textnormal{\Cshref{eq_it3-intermediate-1}}}&{\leq}\,\Pr_{ {\rho}^{(N)} } \big[\vec{X}_N \in F^{N}_{\frac{\epsilon}{C} - \bar{\epsilon}}  \big]+  \delta_{n,\bar{\epsilon}} \\
 \overset{{\mathrm{triangle}\, \mathrm{ineq.}}}&{\leq}\,  \Pr_{{\rho}^{(N)} }\Big[\tfrac{1}{2}\big\|\lambda_{\vec{x}_N} -  \mu_{\hat{\rho}(\vec{x}_N)} \big\|_1 + \tfrac{1}{2}\big\|  \mu_{\hat{\rho}(\vec{x}_N)} - P_X \big\|_1 > \tfrac{\epsilon}{2C}-\tfrac{\bar{\epsilon}}{2} \Big] + \delta_{n,\bar{\epsilon}}  \\
  \overset{{\mathrm{union}\, \mathrm{bound}}}&{\leq}\,  \Pr_{{\rho}^{(N)} }\Big[\big\|\lambda_{\vec{x}_N} -  \mu_{\hat{\rho}(\vec{x}_N)} \big\|_1 > \tfrac{\epsilon}{2C}-\tfrac{\bar{\epsilon}}{2} \Big] +  \Pr_{{\rho}^{(N)} }\Big[\big\|  \mu_{\hat{\rho}(\vec{x}_N)} - P_X \big\|_1 > \tfrac{\epsilon}{2C}-\tfrac{\bar{\epsilon}}{2} \Big]+ \delta_{n,\bar{\epsilon}}\, \label{eq_it3-intermediate-3}
%    \overset{\textnormal{\Cshref{eq_it3-postselection-dF-p,eq_lipschitz-2}}}&{\leq} (N+1)^{d^2-1} \sup_{\sigma \in \St(\cH)} \Pr_{\sigma^{\otimes N}} \Big[\big\|\lambda_{\vec{x}_N} -  \mu_{\hat{\rho}(\vec{x}_N)} \big\|_1 > \tfrac{\epsilon}{C}-\bar{\epsilon}  \Big]  \nonumber \\
%&\hspace{5cm}  + \Pr_{{\rho}^{(N)} }\Big[\big\|  \hat{\rho}(\vec{x}_N) - \rho \big\|_1 > \tfrac{\epsilon}{C}-\bar{\epsilon} \Big] + \delta_{n,\bar{\epsilon}} \\
% \overset{\textnormal{\Cshref{eq_it3-postselection-delta}}}&{\leq} \,\tilde{\delta}_{N,\frac{\epsilon}{C}-\bar{\epsilon}} \,+\,  \Pr_{{\rho}^{(N)} }\Big[\big\|  \hat{\rho}(\vec{x}_N) - \rho \big\|_1 > \tfrac{\epsilon}{C}-\bar{\epsilon} \Big]
%\,+\, \delta_{n,\bar{\epsilon}} \,.  \label{eq_it3-intermediate-3}
%\end{align}
\end{align}
for any $\bar{\epsilon} \in (0,\frac{\epsilon}{C})$. Again due to the post-selection de Finetti result, the first term in~\cref{eq_it3-intermediate-3} can be bounded as
 \begin{align}
 \Pr_{{\rho}^{(N)} }\Big[\big\|\lambda_{\vec{x}_N} -  \mu_{\hat{\rho}(\vec{x}_N)} \big\|_1 > \tfrac{\epsilon}{2C}-\tfrac{\bar{\epsilon}}{2} \Big] \,\overset{\textnormal{\Cshref{eq_it3-PI}}}&{=}  \Pr_{\cS_N({\rho}^{(N)}) }\Big[\big\|\lambda_{\vec{x}_N} -  \mu_{\hat{\rho}(\vec{x}_N)} \big\|_1 >\tfrac{\epsilon}{2C}-\tfrac{\bar{\epsilon}}{2}  \Big]  \\
 \overset{\textnormal{\Cshref{eq_it3-postselection-dF-p}}}&{\leq}\, (N+1)^{d^2-1} \sup_{\sigma \in \St(\cH)} \Pr_{\sigma^{\otimes N}} \Big[\big\|\lambda_{\vec{x}_N} -  \mu_{\hat{\rho}(\vec{x}_N)} \big\|_1 > \tfrac{\epsilon}{2C}-\tfrac{\bar{\epsilon}}{2}  \Big]  \\
 \, \overset{\textnormal{\Cshref{eq_it3-postselection-delta}}}&{\leq} \,\tilde{\delta}_{N,\frac{\epsilon}{2C}-\frac{\bar{\epsilon}}{2}}\,,
 \end{align}
 while for the second term in~\cref{eq_it3-intermediate-3} we find
 \begin{align}  \label{eq_it3-intermediate-4}
 \Pr_{{\rho}^{(N)} }\Big[\big\|  \mu_{\hat{\rho}(\vec{x}_N)} - P_X \big\|_1 > \tfrac{\epsilon}{2C}-\tfrac{\bar{\epsilon}}{2} \Big] \,  \overset{\textnormal{\Cshref{lem_lipschitz-continuity}}}&{\leq} \,\Pr_{{\rho}^{(N)} }\Big[\big\|  \hat{\rho}(\vec{x}_N) - \rho \big\|_1 > \tfrac{\epsilon}{2C}-\tfrac{\bar{\epsilon}}{2} \Big] \,.
 \end{align}
By combining the steps~\eqref{eq_it3-intermediate-2}-\eqref{eq_it3-intermediate-4}, we finally find
\begin{align} \label{eq_it3-final}
\Pr_{\bar{\rho}^{(n)} }\Big[\tfrac{1}{2}\big\| \hat{\rho}(\vec{x}_n) - \rho \big\|_1 > \epsilon \Big]  \,\leq\,  \Pr_{{\rho}^{(N)} }\Big[\big\|  \hat{\rho}(\vec{x}_N) - \rho \big\|_1 > \tfrac{\epsilon}{2C}-\tfrac{\bar{\epsilon}}{2} \Big] \,+\, \tilde{\delta}_{n,\frac{\epsilon}{C}} \,+\,\tilde{\delta}_{N,\frac{\epsilon}{2C}-\frac{\bar{\epsilon}}{2}} \,+\,\delta_{n,\bar{\epsilon}} \,.
\end{align}
for any $\bar{\epsilon} \in (0,\frac{\epsilon}{C})$.

As a final step, we can argue with~\cref{eq_it3-final} that $\pmb{\rho} \in \OP(\texttt{tom($\cdot$)};\rho)$ implies $\bar{\pmb{\rho}} \in \OP(\texttt{tom($\cdot$)};\rho)$. By assumption, there exists a poly-log.~sample complexity $N_0(\epsilon, \delta)$ such that
\begin{align}
\forall\, N\geq N_0(\epsilon,\delta)\,:\  \Pr\big[\texttt{tom}\big(\rho^{(N)}\big) \notin B_\epsilon(\rho)\big] \leq \delta \quad \iff \quad  \Pr_{{\rho}^{(N)} }\Big[\tfrac{1}{2}\big\|  \hat{\rho}(\vec{x}_N) - \rho \big\|_1 > {\epsilon} \Big]  \leq \delta
\end{align}
for all $(\epsilon,\delta) \in (0,1)\times (0,\tfrac{1}{2}]$. Because of~\cref{lem_sample-subexp}, this implies that there exist constants $K,L>0,\,\mu \geq0$ and $\alpha \in (0,1]$ such that  
\begin{align} \label{eq_it3-assumption}
 \Pr_{{\rho}^{(N)} }\Big[\tfrac{1}{2}\big\|  \hat{\rho}(\vec{x}_N) - \rho \big\|_1 > {\epsilon} \Big]  \leq  K \,2^{-N^\alpha \epsilon^\mu L} =: \delta'_{N,\epsilon}
\end{align}
for all $\epsilon \in (0,1)$ and $N\in \N$. Hence, by choosing $\bar{\epsilon}:=\frac{\epsilon}{2C}$, we find
\begin{align}
\Pr_{\bar{\rho}^{(n)} }\Big[\tfrac{1}{2}\big\| \hat{\rho}(\vec{x}_n) - \rho \big\|_1 > \epsilon \Big] \,\overset{\textnormal{\Cshref{eq_it3-final,eq_it3-assumption}}}{\leq} \,  \delta'_{N,\frac{\epsilon}{8C}} \,+\, \tilde{\delta}_{n,\frac{\epsilon}{C}} \,+\,\tilde{\delta}_{N,\frac{\epsilon}{4C}} \,+\,\delta_{n,\frac{\epsilon}{2C}} 
\end{align}
for any $n \leq N$. Recall that $N = n+ m_n$, and note that there exists a function $N'_0 : (0,1) \to \mathbb{N} $ such that \smash{$N'_0(\epsilon) \leq C \big( \frac{1}{\epsilon}\big)^{q}$} with $C,q>0$, and such that for all $n\geq N'_0(\epsilon)$, the functions $\delta'_{N,\frac{\epsilon}{8C}}, \, \tilde{\delta}_{n,\frac{\epsilon}{C}}, \,\tilde{\delta}_{N,\frac{\epsilon}{4C}}, \,\delta_{n,\frac{\epsilon}{2C}} $ are subexponentially decreasing in $n$ due to~\cref{def_subexp-equivalence,lem_sums-of-subexp,ex_subexp-function-eps}. Hence, also the sum \smash{$h(n):=  \delta'_{N,\frac{\epsilon}{8C}} \,+\, \tilde{\delta}_{n,\frac{\epsilon}{C}} \,+\,\tilde{\delta}_{N,\frac{\epsilon}{4C}} \,+\,\delta_{n,\frac{\epsilon}{2C}} $} is a subexponentially decreasing function for all $n\geq N'_0(\epsilon)$ by~\cref{ex_sums-of-subexp-eps}. Furthermore, note that both $ \delta'_{N,\frac{\epsilon}{8C}} $ and $\tilde{\delta}_{N,\frac{\epsilon}{4C}}$ can be bounded by subexponentially decreasing functions that depend only on the numerical value $n$ instead of $N$ (while $\delta'_{N,\epsilon}$ may still depend on the original sequence $\pmb{\rho}$, which is allowed). Consequently, there exists a poly-log.~sample complexity \smash{$\tilde{N}_0(\epsilon,\delta)$} that is independent of the chosen sequence \smash{$(m_n)_{n\in \N}$}, and such that 
\begin{align}
 \forall\,(\epsilon,\delta) \in (0,1)\times (0,\tfrac{1}{2}] \ \ \forall\, n\geq \tilde{N}_0(\epsilon,\delta)\,:\  \Pr_{\bar{\rho}^{(n)} }\Big[\tfrac{1}{2}\big\| \hat{\rho}(\vec{x}_n) - \rho \big\|_1 > \epsilon \Big]\, \leq \, \delta\,, 
 \end{align}
because of~\cref{lem_sample-subexp}. This implies $\bar{\pmb{\rho}} \in \OP^{( \tilde{N}_0)}(\texttt{tom($\cdot$)};\rho)$, which proves~\cref{def_tom34-3}.

\subsection{Proof of~\cref{def_tom34-4}} \label{app_item4}
Let $(r_n)_{n\in\N}$ be any sequence of non-negative integers $r_n \in \N_0 $ such that $r_n \leq n$ and $r_n = o(n^\beta)$ for some $\beta \in (0,1)$.  Let $\rho \in \St(\cH)$, $\Omega^{(r_n)} \in \St(\cH_C^{\otimes r_n})$, and let $\bar{\pi}\in S_n$ be any fixed permutation. We define the state sequence  $\tilde{\pmb{\rho}}$ by 
\begin{align}
     \tilde{\rho}^{(n)}:= U_{\bar{\pi}}\big(\rho^{\otimes n-r_n} \otimes  \Omega^{(r_n)} \big) U_{\bar{\pi}}^\dagger\,,
\end{align}
and we want to show that $\tilde{\pmb{\rho}}  \in \OP^{\textrm{(eff.)}}(\texttt{tom($\cdot$)};\rho)$. Conceptually, the following proof corresponds to a rough adaptation of parts of the proof of~\cite[Theorem 4.5.2.]{renner_phd}. In the following, we write $r = r_n$ to simplify notation.

Let $P_X(x) := \tr[M_x \rho]$ denote the classical distribution of the outcomes of (the single-copy POVM) $\cM$ applied to $\rho$. First, for any $n\in\N$ and $\epsilon \in (0,1)$, we bound
\begin{align}
 \Pr_{\tilde{\rho}^{(n)} }\Big[\tfrac{1}{2}\big\| \hat{\rho}(\vec{x}_n) - \rho \big\|_1 > \epsilon \Big]\, \overset{\textnormal{\Cshref{lem_lipschitz-continuity}}}&{\leq}\, \Pr_{\tilde{\rho}^{(n)} }\Big[\tfrac{1}{2}\big\| \mu_{\hat{\rho}(\vec{x}_n)} - P_X \big\|_1 > \tfrac{\epsilon}{C} \Big] \, \label{eq_it4-steps-1}\\
  \overset{{\mathrm{triangle}\, \mathrm{ineq.}}}&{\leq}\, \Pr_{\tilde{\rho}^{(n)} }\Big[\tfrac{1}{2}\big\| \mu_{\hat{\rho}(\vec{x}_n)} - \lambda_{\vec{x}_n} \big\|_1 + \tfrac{1}{2}\big\| \lambda_{\vec{x}_n} - P_X \big\|_1 > \tfrac{\epsilon}{C} \Big]  \\
 \, \overset{{\mathrm{union}\, \mathrm{bound}}}&{\leq} \, \Pr_{\tilde{\rho}^{(n)} }\Big[\big\| \mu_{\hat{\rho}(\vec{x}_n)} - \lambda_{\vec{x}_n} \big\|_1 > \tfrac{\epsilon}{C} \Big] + \Pr_{\tilde{\rho}^{(n)} }\Big[\big\| \lambda_{\vec{x}_n} - P_X \big\|_1 > \tfrac{\epsilon}{C} \Big]  \label{eq_it4-intermediate-1}
 \end{align}
 As before, we use a post-selection de Finetti technique~\cite{Christandl_2009} to bound the first term in~\cref{eq_it4-intermediate-1} as
 \begin{align}
 \Pr_{\tilde{\rho}^{(n)} }\Big[\big\| \mu_{\hat{\rho}(\vec{x}_n)} - \lambda_{\vec{x}_n} \big\|_1 > \tfrac{\epsilon}{C} \Big] \,\overset{\textnormal{\Cshref{eq_it3-PI}}}&{=}  \Pr_{\cS_n(\tilde{\rho}^{(n)}) }\Big[\big\| \mu_{\hat{\rho}(\vec{x}_n)} - \lambda_{\vec{x}_n}  \big\|_1 > \tfrac{\epsilon}{C} \Big]  \\
 \overset{\textnormal{\Cshref{eq_it3-postselection-dF-p}}}&{\leq}\, (n+1)^{d^2-1} \sup_{\sigma \in \St(\cH)} \Pr_{\sigma^{\otimes n}} \Big[\big\| \mu_{\hat{\rho}(\vec{x}_n)} - \lambda_{\vec{x}_n} \big\|_1 > \tfrac{\epsilon}{C}  \Big]  \\
 \, \overset{\textnormal{\Cshref{eq_it3-postselection-delta}}}&{\leq} \,\tilde{\delta}_{n,\frac{\epsilon}{C}}\,,
 \end{align}
where $\cS_n$ denotes the symmetrization map. Note that the quantity $\tilde{\delta}_{n,\frac{\epsilon}{C}}$ is completely independent of $\tilde{\pmb{\rho}}$. The second term in~\cref{eq_it4-intermediate-1} can be bounded by exploiting the structure of $\tilde{\pmb{\rho}}$ and~\cref{def_tom12-2}. More precisely, due to the permutation invariance of the POVM elements \smash{$T^{(n)}_{\vec{x}_n}$}, we have
\begin{align}
\Pr_{\tilde{\rho}^{(n)} }\big[\vec{x}_n \big] \,&=\, \tr\big[ T^n_{(x_1,\dots,x_n)}\,\tilde{\rho}^{(n)}\big]  \,\overset{\textnormal{PI}}{=}\, \tr\big[ T^n_{(x_1,\dots,x_n)}\,\big(\rho^{\otimes n-r} \otimes  \Omega^{(r)}\big)\big] \\\,&=\,\tfrac{1}{n!}\sum_{\pi \in S_n} \tr\big[U_\pi ( M_{x_1} \otimes \cdots \otimes M_{x_n}) U^\dagger_\pi \big(\rho^{\otimes n-r} \otimes  \Omega^{(r)}\big)\big] \\
\,&=\, \frac{1}{\binom{n}{r}} \sum_{I \,:\, |I| = n-r} \tr\big[M_{\vec{x}_I}\rho^{\otimes n-r}  \big]\,\tr\big[\cS_r(M_{\vec{x}_{I^c}})  \Omega^{(r)} \big]\,, \label{eq_it4-intermediate-2}
\end{align}
where in the last step, we reexpressed the summation over all permutations as first summing over all choices of index subsets $I \subset \{ 1,\dots,n\}$ of length $|I|=n-r$ and then separately permuting in the respective intervals. Consequently, since
\begin{align} \label{eq_it4-triangle}
\big\|\lambda_{\vec{x}_n} - P_X \Big\|_1 \,\leq\, \big\|\lambda_{\vec{x}_n} - \lambda_{\vec{x}_I} \Big\|_1 + \big\|\lambda_{\vec{x}_I} - P_X \Big\|_1 \,\leq\, \frac{2r}{n} +  \big\|\lambda_{\vec{x}_I} - P_X \Big\|_1 
\end{align}
due to the triangle inequality, where $\vec{x}_I := (x_{i_1},\dots,x_{i_{n-r}})$ for any index subset $I = \{i_1,\dots,i_{n-r}\}$, we obtain
\begin{align}
\Pr_{\tilde{\rho}^{(n)} }\Big[\big\| \lambda_{\vec{x}_n}& - P_X \big\|_1 > \tfrac{\epsilon}{C} \Big] \,=\,\sum_{x_1,\dots,x_n} \id\Big[\big\|\lambda_{\vec{x}_n} - P_X \Big\|_1 > \tfrac{\epsilon}{C} \big] \, \Pr_{\tilde{\rho}^{(n)} }\big[\vec{x}_n \big] \\
\overset{\textnormal{\Cshref{eq_it4-intermediate-2}}}&{=}\,  \frac{1}{\binom{n}{r}} \sum_{I \,:\, |I| = n-r}\sum_{\vec{x}_n}  \id\Big[\big\|\lambda_{\vec{x}_n} - P_X \big\|_1 > \tfrac{\epsilon}{C} \Big] \, \tr\big[M_{\vec{x}_I}\rho^{\otimes n-r}  \big]\,\tr\big[\cS_r(M_{\vec{x}_{I^c}})  \Omega^{(r)} \big] \\
\overset{\textnormal{\Cshref{eq_it4-triangle}}}&{\leq}\,  \frac{1}{\binom{n}{r}} \sum_{I \,:\, |I| = n-r}\,\sum_{\vec{x}_n}  \id\Big[\big\|\lambda_{\vec{x}_{I}} - P_X \big\|_1 > \tfrac{\epsilon}{C} - \tfrac{2r}{n} \Big] \, \tr\big[M_{\vec{x}_I}\rho^{\otimes n-r}  \big]\,\tr\big[\cS_r(M_{\vec{x}_{I^c}})  \Omega^{(r)} \big] \\
\,&=\, \frac{1}{\binom{n}{r}} \sum_{I \,:\, |I| = n-r} \,\sum_{\vec{x}_I}  \id\Big[\big\|\lambda_{\vec{x}_{I}} - P_X \big\|_1 > \tfrac{\epsilon}{C} - \tfrac{2r}{n} \Big] \, \tr\big[M_{\vec{x}_I}\rho^{\otimes n-r}  \big] \sum_{\vec{x}_{I^c}}\tr\big[\cS_r(M_{\vec{x}_{I^c}})  \Omega^{(r)} \big] \\
\overset{\textnormal{POVM}}&{=}\, \frac{1}{\binom{n}{r}} \sum_{I \,:\, |I| = n-r} \,\sum_{\vec{x}_I}  \id\Big[\big\|\lambda_{\vec{x}_{I}} - P_X \big\|_1 > \tfrac{\epsilon}{C} - \tfrac{2r}{n} \Big] \, \tr\big[M_{\vec{x}_I}\rho^{\otimes n-r}  \big]  \\
\,&=\,\sum_{x_1,\dots,x_{n-r}}  \id\Big[\big\|\lambda_{\vec{x}_{n-r}} - P_X \big\|_1 > \tfrac{\epsilon}{C} - \tfrac{2r}{n} \Big] \, \tr\big[( M_{x_1} \otimes \cdots \otimes M_{x_{n-r}})\rho^{\otimes n-r}  \big] \\
\,&=\,\sum_{x_1,\dots,x_{n-r}}  \id\Big[\big\|\lambda_{\vec{x}_{n-r}} - P_X \big\|_1 > \tfrac{\epsilon}{C} - \tfrac{2r}{n} \Big] \, \tr\big[\cS_{n-r}( M_{x_1} \otimes \cdots \otimes M_{x_{n-r}})\rho^{\otimes n-r}  \big] \\
\,&=\, \Pr_{{\rho}^{\otimes n-r} }\Big[\big\|\lambda_{\vec{x}_{n-r}} - P_X \big\|_1 > \tfrac{\epsilon}{C} - \tfrac{2r}{n} \Big]\,.
\end{align}
%In the following, we consider $n$ large enough such that \smash{$\frac{\epsilon}{C} - \frac{2r}{n} >0$} which is possible since $r=o(n)$ by assumption. \GM{ATTENTION: Relationsship between $n$ and $\epsilon$. Still polynomial, but to be made precise!} 
Next, we relate the above probability to the corresponding event probability involving the guessed state,
\begin{align}
 \Pr_{{\rho}^{\otimes n-r} }\Big[\big\|\lambda_{\vec{x}_{n-r}} - &P_X \big\|_1 > \tfrac{\epsilon}{C} - \tfrac{2r}{n} \Big] \,\overset{\textnormal{triangle ineq.}}{\leq} \,  \Pr_{{\rho}^{\otimes n-r} }\Big[\big\|\lambda_{\vec{x}_{n-r}} - \mu_{\hat{\rho}} \big\|_1 + \big\| \mu_{\hat{\rho}}- P_X \big\|_1 > \tfrac{\epsilon}{C} - \tfrac{2r}{n} \Big]  \\
 \,\overset{\textnormal{union bound}}&{\leq} \,  \Pr_{{\rho}^{\otimes n-r} }\Big[\big\|\lambda_{\vec{x}_{n-r}} - \mu_{\hat{\rho}} \big\|_1  > \tfrac{\epsilon}{2C} - \tfrac{r}{n} \Big] +  \Pr_{{\rho}^{\otimes n-r} }\Big[  \big\| \mu_{\hat{\rho}}- P_X \big\|_1 > \tfrac{\epsilon}{2C} - \tfrac{r}{n} \Big]  \\
 \, \overset{\textnormal{\Cshref{eq_inverted-born-rule--,lem_sample-subexp,lem_lipschitz-continuity}}}&{\leq} \, K\,2^{-L (\frac{\epsilon}{2C} -\frac{r}{n})^{{q}} (n-r)} +  \Pr_{{\rho}^{\otimes n-r} }\Big[  \tfrac{1}{2}\big\| {\hat{\rho}}(\vec{x}_{n-r})- \rho \big\|_1 > \tfrac{\epsilon}{4C} - \tfrac{r}{2n} \Big]
\end{align}
for some constants $K,L,q>0$. Because of~\cref{def_tom12-2} and~\cref{lem_sample-subexp}, we know that
\begin{align} \label{eq_it4-steps--1}
 \Pr_{{\rho}^{\otimes n-r} }\Big[  \tfrac{1}{2}\big\| {\hat{\rho}}(\vec{x}_{n-r})- \rho \big\|_1 > \tfrac{\epsilon}{4C} - \tfrac{r}{2n} \Big] \,\leq\, \bar{K}\,2^{-\bar{L} (\frac{\epsilon}{4C} -\frac{r}{2n})^{\bar{q}} (n-r)}
\end{align}
for some constants $\bar{K},\bar{L},\bar{q}>0$. By combining all the steps~\eqref{eq_it4-steps-1}-\eqref{eq_it4-steps--1}, we finally obtain
\begin{align}
 \Pr_{\tilde{\rho}^{(n)} }\Big[\tfrac{1}{2}\big\| \hat{\rho}(\vec{x}_n) - \rho \big\|_1 > \epsilon \Big] \,\leq\,\tilde{\delta}_{n,\frac{\epsilon}{C}}+ K 2^{-{L} (\frac{\epsilon}{2C} -\frac{r}{n})^{{q}} (n-r)}  + \bar{K}\,2^{-\bar{L} (\frac{\epsilon}{4C} -\frac{r}{2n})^{\bar{q}} (n-r)} \,=:\, h(n)\,.
\end{align}
Note that while the quantity $h(n)$ depends on $r$, it is independent of $\Omega^{(r_n)}$ and $\bar{\pi}$. Since $r=o(n^\beta)$ with $\beta \in (0,1)$, we have $r \leq Ln^\beta$ for some $L>0$, and we can use~\cref{lem_sample-subexp-function-eps} to bound the exponential terms in $h(n)$. More precisely, we first choose $N_0(\epsilon) \in \N$ large enough such that
\begin{align}
\frac{\epsilon}{2C} -\frac{r}{n} \,\overset{!}{\geq}\, \frac{\epsilon}{4C} \quad \iff \quad  \frac{\epsilon}{4C} \,\overset{!}{\geq}\,  \frac{r}{n}\,, \qquad n-r  \,\overset{!}{\geq}\, \frac{n}{2}
\end{align}
for all $n \geq N_0(\epsilon)$, yielding a polynomial scaling of the form \smash{$N_0(\epsilon) \leq \tilde{C} \big(\frac{1}{\epsilon}\big)^{\frac{1}{1-\beta}}$} for some constant $\tilde{C}>0$ (and we can assume \smash{${q}\geq1,\,\bar{q}\geq 1$} without loss of generality). Combining this with~\cref{def_subexp-equivalence,lem_sums-of-subexp,ex_subexp-function-eps,ex_sums-of-subexp-eps} again shows that $h(n)$ is an exponentially decreasing function for all $n \geq N_0(\epsilon)$.
 %it follows by~\cref{lem_sample-subexp-function-eps} that the sublinear parts in the exponentials of $h(n)$ can be absorbed into the dominant exponential part. 
Hence, by~\cref{lem_sample-subexp} there exists an efficient sample complexity $\bar{N}_0(\epsilon,\delta)$ such that
\begin{align}
 \forall\,(\epsilon,\delta) \in (0,1)\times (0,\tfrac{1}{2}] \ \ \forall\, n\geq \bar{N}_0(\epsilon,\delta)\,:\   \Pr_{\tilde{\rho}^{(n)} }\Big[\tfrac{1}{2}\big\| \hat{\rho}(\vec{x}_n) - \rho \big\|_1 > \epsilon \Big] \, \leq \, \delta\,, 
 \end{align}
implying $\tilde{\pmb{\rho}}  \in \OP^{(\bar{N}_0)}(\texttt{tom($\cdot$)};\rho)$. Since the sample complexity $\bar{N}_0$ is further independent of $\Omega^{(r_n)}$ and $\bar{\pi}$, this proves~\cref{def_tom34-4}.

%%Consequently,

%\cite{cover}

%In a tomography protocol, the assignment $\vec{x} \mapsto \hat{\rho}(\vec{x})$ should of course yield a state $ \hat{\rho}$ that is consistent with the observed data $\vec{x}$. This means that for any (multi-copy) POVM $\{ M_{\vec{x}} \}_{\vec{x}\in \cX^n}$, it is required that for large enough number of samples $n$ it holds
%\begin{align}
% \big| \tr\big( M_{\vec{x}}  \hat{\rho}^{\otimes n} \big) -  \tr\big( M_{\vec{x}}  \rho^{\otimes n} \big) \big|
%\end{align}

% , in loose terms, correspond to inverting Born's rule. This means that for any POVM, .... This is exactly what~\cref{eq_tom-def1} is capturing. The hard thing to check is the scaling of the sample complexity. 

%%Now, we aim to show that~\cref{def_tom34-3,def_tom34-4} are satisfied for tomography protocols \texttt{tom($\cdot$)} that meet the requirements of~\cref{eq_product-povm-2,eq_data-processing-scheme,eq_lipschitz-continuity}.
%%
%%The simplest protocol to have in mind is the standard Pauli measurements protocol.
%%%%%%%%%%%%%%%%%%%%%%%%%%%%%%%%%%%%%%%%%%%%%%%%%%%%%%%%%%%%%%%%%%%%%%%
%%%%%%%%%%%%%%%%%%%%%%%%%%%%%%%%%%%%%%%%%%%%%%%%%%%%%%%%%%%%%%%%%%%%%%%

%------------------------------%------------------------------
%------------------------------%------------------------------ RATES
\newpage
\section{Operational state transformation rates} \label{sec:OP-rates}
%------------------------- RATES definition
Let $\rho \in \St(\cH_C)$ and $\sigma\in \St(\cH_D)$ be density operators. In this section, we consider the task of converting $n$ copies of $\rho$ into $m$ copies of $\sigma$ under a restricted set of operations $\F$ (e.g., the set of $\LOCC$ operations), 
\begin{align}
    \rho^{(n)} \overset{\F}{\longrightarrow} \sigma^{(m)}\,,
\end{align}
where the goal is to determine the optimal conversion rate, defined as $R := m/n$. Such conversion rates play a central role in the general framework of quantum resource theories, which encompasses, among others, entanglement theory. Applications of quantum resource theories range from quantum communication and computation to quantum foundations; see~\cite{Gour_2025} for a modern review. In the following, we adopt the framework introduced in~\cite{Gour_2025} to define the set of free operations $\F$.
\begin{definition-2}[Quantum Resource Theories~{\cite[Chapter 9.1]{Gour_2025}}] \label{def_QRTs}
Let $\F$ be a mapping that assigns to any two input/output physical systems $C$ and $D$, with corresponding Hilbert spaces $\cH_C$ and $\cH_D$, a unique set of CPTP operations denoted by $\F\big(C \to D \big) \subseteq \textrm{CPTP}\big(C \to D \big)$. The mapping $\F$ is called a \emph{quantum resource theory} (QRT) if it satisfies the following conditions:
\begin{enumerate}
\item \emph{Doing nothing is free.} For any physical system $C$, the identity channel $\mathrm{id}_C\in \F\big(C \to C \big)$.
\item \emph{Concatenation is free.} For any three physical systems, $A$, $B$, and $C$, if $\cE \in  \F\big(A \to B \big)$ and $\cN \in \F\big(B \to C \big)$ then $\cN \circ \cE \in  \F\big(A \to C \big)$.
\item \emph{Discarding a system is free.} For any system $C$, it holds $ \F\big(C \to 1 \big) = \textrm{CPTP} \big(C \to 1 \big) = \big\{\tr\big\}$.
\end{enumerate}
In addition, $\F$ is said to \emph{admit a tensor-product structure} if it fulfils these additional criteria:
\begin{enumerate}
\setcounter{enumi}{3}
\item \emph{Completely free operations:} For any three systems $A$, $B$, and $C$, and a channel $\cE \in \F\big(A \to B \big)$, it holds that $\cE^{A \to B} \otimes \mathrm{id}_C \in  \F\big(AC \to BC \big)$.
\item \emph{Freedom of relabeling:} For any integer $n$, a free channel $\cN \in  \F\big(C^n \to D^n \big)$, and permutation channel $\cP_\pi$ corresponding to a permutation $\pi$ on $n$ elements, the composition $\cP_\pi^{D^n} \circ \cN \circ \cP_{\pi^{-1}}^{C^n} \in \F\big(C^n \to D^n \big)$.
\end{enumerate}
The above requirements are standard for many QRTs~\cite{Gour_2025}. For our work, we require two (less standard but broadly used) additional properties:
\begin{enumerate}
\item[$5^\prime$.] For any system $C$ and any integer $n$, the permutation channel $\cP_\pi$ corresponding to a permutation $\pi$ on $n$ elements is a free operation, $\cP_\pi \in  \F\big(C^n \to C^n \big)$. Note that conditions $5'$ and $2$ imply condition $5$.
\item[6.] For any two physical systems $C$ and $D$, the set $\F\big(C \to D \big)$ is convex.
\end{enumerate}
The set $ \F\big(C \to D \big)$ is called \emph{the set of free operations} from system $C$ to system $D$, and the set $\F(C) :=  \F\big(1 \to C \big)$ is identified as \emph{the set of free states}.
\end{definition-2}
The conditions listed in~\cref{def_QRTs} have a few immediate implications~\cite{Gour_2025}: \emph{(i)} If a free channel $\cE \in  \F\big(C \to D \big)$ acts on a free state $\rho \in \cF(C)$, the resulting state $\cE(\rho)$ is a free quantum state in $\F(D)$ (this property is also referred to as \emph{the Golden Rule of QRTs}). \emph{(ii)} The preparation of and conversion to a free state is a free operation. \emph{(iii)} The partial trace is a free operation. \emph{(iv)} The tensor product of two free states or operations is again a free state or operation, respectively.  \emph{(v)} Appending a free state is a free channel. \emph{(vi)} For any integer $n$, the symmetrization map $\cS_n$, defined in~\cref{eq_symmetrization-map}, is a free operation. This follows from conditions $5'$ and $6$. Finally, we note that the set of $\LOCC$ operations satisfies all conditions listed in~\cref{def_QRTs}.\footnote{In that case, we have $C = A\otimes B$ and $U_{\pi,C^n} =U_{\pi,A^n} \otimes U_{\pi,B^n} $.}

\begin{notation}[Free operations with rate $r$] \label{not_free-ops-r}
Let $r > 0$. For a sequence of free operations $\big( \cP^{(r)}_n\big)_{n\in\N}$ with $\cP^{(r)}_n \in \F \big(C^n \to D^{m_n}\big)$, where $(m_n)_{n\in \N}$ is a %monotonically increasing 
sequence of integers such that $\lim_{n\to \infty}\frac{m_n}{n} = r$, we write
\begin{align}
{\pmb{\mathcal{P}}}^{(r)} := \big( \cP^{(r)}_n\big)_{n\in\N} \in \F \,.
 \end{align}
\end{notation}
\noindent For instance, note that traditional (iid) asymptotic transformation rates (such as entanglement cost or distillation, e.g.~see~\cite{chit19,Gour_2025}) may be expressed in our notation as
\begin{align} \label{iid_rate}
R_\F(\rho \to \sigma) \,=\, \sup\Big\{\, r>0\ \big| \ \exists\ {\pmb{\mathcal{P}}}^{(r)} \in \F \,:\ {\pmb{\mathcal{P}}}^{(r)}\big( \pmb{\rho}^{\mathrm{iid}}\big) \, \simeq \,\pmb{\sigma}^{\mathrm{iid}} \Big\} \,,
\end{align}
where we have defined the iid state sequences \smash{$\pmb{\rho}^{\mathrm{iid}} :=\big(\rho^{\otimes n}\big)_{n\in \N} $ and  $\pmb{\sigma}^{\mathrm{iid}} :=\big(\sigma^{\otimes n}\big)_{n\in \N} \,$}. This is discussed in more detail in~\cref{sec:rel-traditional-rates,sec:LOCC_rates}.

Now we can use~\cref{def_OP-set,def_tom12,def_tom34,def_QRTs} to introduce an operational asymptotic transformation rate.
\begin{definition}[Operational asymptotic transformation rate]\label{def_OP-rate}
Let \texttt{tom($\cdot$)} denote a tomography protocol satisfying~\cref{def_tom12,def_tom34}. For any density operators $\rho \in \St(\cH_C)$ and $\sigma\in \St(\cH_D)$ and any fixed set of free operations $\F$ satisfying~\cref{def_QRTs}, we define \emph{the operational asymptotic transformation rate} by
\begin{align} \label{def:eq-OP_rate}
R^{ \texttt{tom}}_\F(\rho \to \sigma) \,:= \, \sup\Big\{\, r >0\ \big| \ \exists\ {\pmb{\mathcal{P}}}^{(r)} \in \F \,:\ {\pmb{\mathcal{P}}}^{(r)}\big(\OP(\texttt{tom($\cdot$)};\rho) \big) \,\subseteq \,\OP(\texttt{tom($\cdot$)};\sigma) \Big\}\,.
%    R^{ \texttt{tom}}_\F(\rho \to \sigma) \,:=& \, \sup\Big\{\, r\ \big| \ \exists\,\big( \cP_n\big)_{n\in\N}\,,\, \cP_n \in \F \big(C^n \to D^{\lceil rn\rceil}\big) \ \ \forall\, \big(\rho^{(n)}\big)_{n\in \N} \in \OP(\texttt{tom($\cdot$)};\rho)\,:\ \\
%    & \phantom{, \sup\Big\{\, r\ \big| \ \exists\,\big( \cP_n\big)_{n\in\N}\,,\, \cP_n \in \F}  \big(\cP_n\big(\rho^{(n)}\big)\big)_{n\in\N} \equiv \big(\sigma^{(\lceil rn\rceil)}\big)_{n\in\N} \in \OP(\texttt{tom($\cdot$)};\sigma) \Big \} \\
%    =& \, \sup\Big\{\, r\ \big| \ \exists\ {\pmb{\mathcal{P}}}^{(r)} \in \F \,:\ {\pmb{\mathcal{P}}}^{(r)}\big(\OP(\texttt{tom($\cdot$)};\rho) \big) \,\subseteq \,\OP(\texttt{tom($\cdot$)};\sigma) \Big\}\,.
\end{align}
If the set in~\cref{def:eq-OP_rate} is empty, we set $ R^{ \texttt{tom}}_\F(\rho \to \sigma)= 0$.
\end{definition}
\begin{remark} \label{rmk_op-rate-convention}
Operational asymptotic transformation rates, introduced in~\cref{def_OP-rate}, are independent of the convention adopted in~\cref{not_channel-seq} for selecting the output state sequence.
\end{remark}
\begin{proof}
Let  $\rho \in \St(\cH_C)$, $\sigma\in \St(\cH_D)$, and let $r>0$ be achievable for $R^{ \texttt{tom}}_\F(\rho \to \sigma)$, i.e., there exists ${\pmb{\mathcal{P}}}^{(r)} \in \F$ such that, for any $ \pmb{\rho} \in \OP(\texttt{tom($\cdot$)};\rho) $, it holds that $ {\pmb{\mathcal{P}}}^{(r)}(\pmb{\rho}) = \pmb{\sigma} \in  \OP(\texttt{tom($\cdot$)};\sigma)$. Moreover, for any $n\in \N$ we have $\cP^{(r)}_n \in \F(C^n \to D^{m_n})$ for some %monotonically increasing 
sequence of integers $(m_n)_{n\in\N}$ such that $\lim_{n\to\infty}\frac{m_n}{n} = r$. According to~\cref{not_channel-seq}, the output sequence $\pmb{\sigma}$ is defined via $\sigma^{(m)} := \tr_{m_{j(m)}- m}\big[\cP^{(r)}_{j(m)}(\rho^{(j(m))})\big]$, where the (minimal) input index representative $j(m)$ is specified in~\cref{eq_channel-selector}.
% $\bar{n}_m = \min\{n\in \N \,:\, m_n = m \}$ for the image values $m\in \{m_n \,:\, n\in \N \}$. 

Now, let $k(m) \in \{ n\in \N \,:\,m_n = m\}$ be an arbitrary representative for each $m \in \{m_n \,:\, n\in \N \}$. For the gaps $m\notin \{m_n \,:\, n\in \N \}$, let the convention still be fixed by $k(m):=\min\{n\in N: m_n\geq m\}$. A potential issue is that, whenever $k(m) \neq j(m)$, the output \smash{$ \cP^{(r)}_{k(m)}(\rho^{(k(m))})$} could, a priori, be an arbitrary state with no apparent relation to an operational state sequence. Nevertheless, one can always construct a new sequence of free operations $\tilde{\pmb{\cP}}\vphantom{\pmb{\cP}}^{(r)}$ with rate $r$ whose output sequence remains operational.
%such that $\tilde{\pmb{\sigma}} := \big(\tilde{\cP}^{(r)}_{k(m)}(\rho^{(k(m))}) \big)_{m\in\N} \in \OP(\texttt{tom($\cdot$)};\sigma)$. 

\medskip
\noindent 
To do so, let $k(m)$ denote any such convention and let $\pmb{\cP}^{(r)}$ be a successful free operation with rate $r>0$ under that convention. For any $m\in \N$, define the auxiliary free channel $H_m : C^{k(m)}\to D^m$ by $H_m := tr_{m_{k(m)}-m}\circ \cP_{k(m)}$. Then, the output sequence $\pmb{\sigma}$ is defined as $\sigma^{(m)} =H_m\big(\rho^{(k(m))} \big) $. Next, define a new output size
\begin{align}
q_n \,:=\,\max\{m\in\mathbb{N}:k(m)\le n\}\,,
\end{align}
which corresponds to the largest output obtainable from the originally selected channel that requires at most $n$ inputs. For sufficiently large $n\in\N$, the above set is nonempty. Accordingly, define the modified channel sequence $\tilde{\pmb{\cP}}\vphantom{\pmb{\cP}}^{(r)}$ by
\begin{align}
\tilde{{\cP}}\vphantom{{\cP}}^{(r)}_{n} \,:=\, \cS_{q_n}\circ H_{q_n}
\circ\tr_{n-k(q_n)}\circ\cS_n\,,
\end{align}
where $\pmb{\cS}$ is the symmetrization map. For the finitely many values $n\in \N$ where $q_n$ is not defined, the channel $\tilde{{\cP}}\vphantom{{\cP}}^{(r)}_{n}$ may be defined arbitrarily. Note that $q_n$ is nondecreasing and satisfies $\lim_{n\to\infty} \frac{q_n}{n} = r$. The latter follows since $\lim_{m\to\infty}\frac{k(m)}{m} = \frac{1}{r}$ combined with the relations $k(q_n) \leq n < k(q_n+1)$ and $\lim_{n\to\infty} q_n = \infty$. Hence, $\tilde{\pmb{\cP}}\vphantom{\pmb{\cP}}^{(r)}$ defines again a free operation with rate $r$. 

We next show that this modified protocol is successful under any other representative convention. Let $J:=\{q_n:n\in\N\}$ denote the set of attained output sizes of the modified protocol, and let $h(m)\in\{n\in\N:q_n=m\}$ represent the new convention for each $m\in J$. First, note that an attained size $m\in J$ occurs first at input size $k(m)$. Indeed, if $q_n=m$, then $k(m)\leq n$ by definition of $q_n$, and moreover, $q_{k(m)} = m$. Thus, $k(m)\leq h(m)$ for all $m\in J$. In addition, $m\mapsto k(m)$ defines an injective map on $J$. Consequently, the state sequence $\pmb{\omega}$, defined by
\begin{align}
\omega^{(\ell)} &\,:=\,\tr_{s_\ell}\Big[ \cS_{\ell+s_\ell}\big(\rho^{(\ell+s_\ell)}\big)\Big], \qquad s_\ell \, := \,\begin{cases} h(m)-k(m), & \ell=k(m)\textrm{ for }m\in J,\\ 0, & \textrm{else}, \end{cases}
\end{align}
satisfies $\pmb{\omega}\in\OP(\texttt{tom($\cdot$)};\rho)$ due to~\cref{def_tom34-3} in~\cref{def_tom34}. Hence, we have $\pmb{\eta} := \pmb{\cP}(\pmb{\omega}) \in \OP(\texttt{tom($\cdot$)};\sigma) $ and, since $q_{h(m)} = m $ by definition, we find
\begin{align}
\tilde{\cP}^{(r)}_{h(m)} \big(\rho^{(h(m))}\big) =\cS_m\big(\eta^{(m)}\big)
\end{align}
for all $m\in J$. For the gap values $m\notin J$, a similar argument applies. Overall, this implies that $\tilde{\pmb{\cP}}\vphantom{\pmb{\cP}}^{(r)}(\pmb{\rho}) = \tilde{\pmb{\sigma}} \in  \OP(\texttt{tom($\cdot$)};\sigma)$. Since the convention choices were arbitrary, this proves the claim.

\end{proof}

\subsection{Properties of operational asymptotic transformation rates}
The above definition appears rather involved and seems to depend on the choice of the tomography protocol \texttt{tom($\cdot$)}. We now employ the techniques developed in~\cref{sec:characterization-OP-set}, in particular~\cref{thm_almost-iid-dF}, to simplify the structure of operational state transformation rates. In particular, we show that they are, in fact, independent of the choice of \texttt{tom($\cdot$)}. For the following, recall~\cref{not_channel-seq,not_trace-ops,not_subpoly-almost-iid,def_eq-rel,not_negligible-ball,not_free-ops-r}.
\begin{corollary} \label{thrm_rate-tom-indep}
%[$ R^{ \texttt{tom}}_\F$ is independent of  \texttt{tom}] 
Let \texttt{tom($\cdot$)} denote any tomography protocol satisfying~\cref{def_tom12,def_tom34}. For any density operators $\rho \in \St(\cH_C)$ and $\sigma\in \St(\cH_D)$ and any fixed set of free operations $\F$, the operational asymptotic transformation rate $ R^{ \texttt{tom}}_\F(\rho \to \sigma)$, defined in~\cref{def_OP-rate}, satisfies
\begin{align}
R^{ \texttt{tom}}_\F(\rho \to \sigma) \,=\,  \sup\Big\{\, r >0\ \big| \ \exists\ {\pmb{\mathcal{P}}}^{(r)} \in \F \,:\ {\pmb{\mathcal{P}}}^{(r)}\big(\mathcal{A}(\rho)\big) \,\subseteq \, B\big(\mathcal{A}(\sigma)\big) \Big\} \,=:\, \tilde{R}_\F(\rho \to \sigma)\,.
\end{align}
In particular, the operational asymptotic transformation rate is independent of the chosen tomography protocol.
\end{corollary}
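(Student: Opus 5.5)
We prove the two inequalities $R^{\texttt{tom}}_\F(\rho\to\sigma)\le\tilde R_\F(\rho\to\sigma)$ and $\tilde R_\F(\rho\to\sigma)\le R^{\texttt{tom}}_\F(\rho\to\sigma)$ separately. In each direction we take a protocol achieving rate $r$ for one definition and turn it into a protocol of the same rate for the other. The bridge between the two definitions consists of three earlier results. \cref{prop_almost-iid-in-OP} gives $\mathcal{A}(\rho)\subseteq\OP(\texttt{tom($\cdot$)};\rho)$, since power-law sublinear defects $\lceil Ln^\kappa\rceil$ are $o(n^\beta)$ for any $\beta\in(\kappa,1)$. \cref{lem_OP-set-robust} then gives $B(\mathcal{A}(\sigma))\subseteq\OP(\texttt{tom($\cdot$)};\sigma)$. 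Finally, \cref{thm_almost-iid-dF} shows that permutation-invariant operational sequences, after $\pmb{\tr}_\alpha$, lie in $B(\mathcal{A}(\cdot))$.

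\emph{Step 1 ($\tilde R\le R^{\texttt{tom}}$).} Let $\pmb{\cP}^{(r)}$ satisfy $\pmb{\cP}^{(r)}(\mathcal{A}(\rho))\subseteq B(\mathcal{A}(\sigma))$, and let $\pmb{\rho}\in\OP(\texttt{tom($\cdot$)};\rho)$ be arbitrary. We define
\[
\tilde\cP_n:=\cP^{(r)}_{n-k_n}\circ\tr_{k_n}\circ\cS_n ,
\]
which is free by properties (iii) and (vi) of \cref{def_QRTs}, and which still has rate $r$ because $k_n=\lceil n^\alpha\rceil=o(n)$. The sequence $\cS(\pmb{\rho})$ is permutation invariant and operational by \cref{def_tom-perm-inv}. \cref{thm_almost-iid-dF} then gives $\pmb{\tr}_\alpha(\cS(\pmb\rho))\simeq\pmb\tau\in\mathcal{A}(\rho)$. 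By \cref{lem_data-processing-neg-ball} (output size $m_n=O(n)$), $\pmb{\cP}^{(r)}(\pmb{\tr}_\alpha(\cS(\pmb\rho)))\in B(\pmb{\cP}^{(r)}(\mathcal{A}(\rho)))\subseteq B(B(\mathcal{A}(\sigma)))=B(\mathcal{A}(\sigma))$, where the last equality holds by transitivity (\cref{prop_eq-rel}). This set is contained in $\OP(\texttt{tom($\cdot$)};\sigma)$ by the two inclusions above, which proves Step 1.

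There is a bookkeeping issue here. The output of $\tilde{\pmb\cP}$ at size $m$ is taken from input index $n$, while the composite sequence $\pmb{\cP}^{(r)}\circ\pmb{\tr}_\alpha$ is indexed through $n-k_n$. Matching these uses the index-selection convention of \cref{not_channel-seq}, and the statement "independent of the convention" in \cref{thm_almost-iid-dF} together with \cref{rmk_op-rate-convention} should make this routine.

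\emph{Step 2 ($R^{\texttt{tom}}\le\tilde R$).} Let $\pmb{\cP}^{(r)}$ map $\OP(\texttt{tom($\cdot$)};\rho)$ into $\OP(\texttt{tom($\cdot$)};\sigma)$. We need a protocol of rate $r$ sending $\mathcal{A}(\rho)$ into $B(\mathcal{A}(\sigma))$. Take
\[
\hat\cP_n:=\tr_{k'_{m_n}}\circ\cS_{m_n}\circ\cP^{(r)}_n ,
\]
i.e.\ symmetrize the output and discard $k'_m=\lceil m^\alpha\rceil$ output systems; the rate is still $r$. For $\pmb\tau\in\mathcal{A}(\rho)\subseteq\OP(\texttt{tom($\cdot$)};\rho)$, the output $\pmb{\cP}^{(r)}(\pmb\tau)$ lies in $\OP(\texttt{tom($\cdot$)};\sigma)$. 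Symmetrizing keeps it in $\OP$ by \cref{def_tom-perm-inv}, and \cref{thm_almost-iid-dF} applied at the output places the resulting sequence in $B(\mathcal{A}(\sigma))$. Hence $r$ is achievable for $\tilde R_\F$. Since both bounds hold for every admissible protocol, the suprema agree. The right-hand side of the claimed identity does not involve \texttt{tom}, so the final claim of tomography-independence follows immediately.

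\emph{Expected obstacle.} The analytic content is already contained in \cref{thm_almost-iid-dF}; what remains delicate is sequence indexing. Composing channel sequences whose output sizes $m_n$ repeat or skip values, and then applying $\pmb{\tr}_\alpha$ on top, must still produce well-defined sequences to which the equivalence relation $\simeq$ and \cref{lem_data-processing-neg-ball} apply. I would handle this first by invoking the convention-independence statements (\cref{rmk_op-rate-convention} and the last sentence of \cref{thm_almost-iid-dF}). I would also check that the subexponential bounds transfer between the index $n$ and $m_n\sim rn$, which holds because $m_n$ is polynomial in $n$.
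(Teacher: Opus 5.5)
Your proposal is correct and matches the paper's proof: the two directions appear in the opposite order, but the compositions are the same, namely $\pmb{\cP}^{(r)}\circ\pmb{\tr}_\alpha\circ\pmb{\cS}$ to show that a rate achievable for $\tilde{R}_\F$ is achievable for $R^{\texttt{tom}}_\F$, and $\pmb{\tr}_\alpha\circ\pmb{\cS}\circ\pmb{\cP}^{(r)}$ for the converse. These are combined with the same ingredients, \cref{prop_almost-iid-in-OP}, \cref{lem_OP-set-robust}, asymptotic permutation invariance \cref{def_tom-perm-inv}, contractivity/\cref{lem_data-processing-neg-ball} and \cref{thm_almost-iid-dF}, and the index bookkeeping is delegated to the convention-independence statements (\cref{rmk_op-rate-convention} and the last sentence of \cref{thm_almost-iid-dF}) exactly as the paper does.
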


\begin{proof}
Let $\rho \in \St(\cH_C)$ and $\sigma\in \St(\cH_D)$ be fixed, and let
\begin{align}
 \tilde{R}_\F(\rho \to \sigma) \,:=\, \sup\Big\{\, r >0 \ \big| \ \exists\ {\pmb{\mathcal{P}}}^{(r)} \in \F \,:\ {\pmb{\mathcal{P}}}^{(r)}\big(\mathcal{A}(\rho)\big) \,\subseteq \, B\big(\mathcal{A}(\sigma)\big) \Big\} \,,
\end{align}
where $\mathcal{A}(\cdot)$ denotes the set of power-law sublinear almost-iid state sequences as defined in~\cref{not_subpoly-almost-iid}. By convention, we set $ \tilde{R}_\F(\rho \to \sigma) =0$ if the above set is empty. We further introduce the notation $\OP_{(\mathrm{sym})}(\texttt{tom($\cdot$)};\rho)$ to denote the set of permutation invariant operational state sequences.

\medskip
\noindent  \underline{$R^{ \texttt{tom}}_\F  \leq \tilde{R}_\F$:} If $R^{ \texttt{tom}}_\F(\rho \to \sigma)=0$, the claim trivially holds. Let now $r > 0$ be achievable for $R^{ \texttt{tom}}_\F(\rho \to \sigma)$, i.e., there exists a sequence of free operations ${\pmb{\mathcal{P}}}^{(r)} \in \F$ such that 
\begin{align} \label{thm_rates1-it1}
{\pmb{\mathcal{P}}}^{(r)}\big(\OP(\texttt{tom($\cdot$)};\rho) \big) \,\subseteq \,\OP(\texttt{tom($\cdot$)};\sigma)\,.
\end{align}
Now, let $\pmb{\rho} \in \cA(\rho)$. Due to~\cref{prop_almost-iid-in-OP}, it holds that $\pmb{\rho}\in \OP(\texttt{tom($\cdot$)};\rho)$. Hence,~\cref{thm_rates1-it1} implies that ${\pmb{\mathcal{P}}}^{(r)}(\pmb{\rho} ) \in \OP(\texttt{tom($\cdot$)};\sigma)$. Next, for $\alpha \in (0,1)$ and $\pmb{\cS} = \big(\cS_n\big)_{n\in \N}$ denoting the symmetrization map, we define a new sequence of free operations by
\begin{align} 
\pmb{\cQ}^{(r)}\,:=\, \pmb{\tr}_\alpha \circ \pmb{\cS} \circ {\pmb{\mathcal{P}}}^{(r)} \,,
\end{align}
which is a free operation since it corresponds to a concatenation of free operations. More precisely, we have \smash{$\pmb{\cQ}^{(r)} = \big(\cQ^{(r)}_n\big)_{n\in\N}$} with \smash{$\cQ^{(r)}_n \in \F\big(C^n \to D^{\tilde{m}_n} \big)$} and $\tilde{m}_n := m_n - \lceil m_n^\alpha \rceil$, where $m_n$ denotes the number of output copies of \smash{$\cP^{(r)}_n$}. As $(\tilde{m}_n)_{n\in \N}$ %is monotonically increasing and 
satisfies $\lim_{n\to\infty} \frac{\tilde{m}_n}{n} = r$, the operation \smash{$ \pmb{\cQ}^{(r)} \in \F$} is a free operation with rate $r$. Applying this operation to the state sequence $\pmb{\rho}$, we find
\begin{align}
\pmb{\cQ}^{(r)}(\pmb{\rho}) \,=\, \pmb{\tr}_\alpha ( \pmb{\sigma} ), \quad \,  \pmb{\sigma}  \,\in\, \OP_{(\mathrm{sym})}(\texttt{tom($\cdot$)};\sigma) ,
%\pmb{\cQ}(\pmb{\rho}) \,\overset{\textnormal{\Cshref{thm_rates1-it1}}}{\in}\,  \pmb{\tr}_\alpha \circ \pmb{\cS} \big( \OP(\texttt{tom($\cdot$)};\rho) \big) \,\overset{\textnormal{\Cshref{..}}}{\subseteq} \, \pmb{\tr}_\alpha\big
\end{align}
where the above inclusion follows by the asymptotic permutation invariance of the tomography protocol,~\cref{def_tom-perm-inv}, combined with~\cref{thm_rates1-it1} and~\cref{rmk_op-rate-convention}. Then,~\cref{thm_almost-iid-dF} implies that there exists an almost-iid state sequence $ \pmb{\tau} \,\in\, \mathcal{A}(\sigma)$ such that 
\begin{align}  \label{thm_rates1-it2}
\pmb{\cQ}^{(r)}(\pmb{\rho}) \,=\, \pmb{\tr}_\alpha ( \pmb{\sigma} ) \simeq \pmb{\tau}\,.
\end{align} 
Since $\pmb{\rho} \in \cA(\rho)$ was chosen arbitrarily, this means that $r $ is achievable for $\tilde{R}_\F(\rho \to \sigma)$ and hence, $ r  \leq  \tilde{R}_\F(\rho \to \sigma)$. Consequently, after taking the supremum over all achievable rates $r$, we conclude that $R^{ \texttt{tom}}_\F(\rho \to \sigma)  \leq \tilde{R}_\F(\rho \to \sigma)$.

\medskip
\noindent \underline{$R^{ \texttt{tom}}_\F  \geq \tilde{R}_\F$:} If $ \tilde{R}_\F(\rho \to \sigma) = 0$, the claim trivially holds. Let now $\tilde{r} > 0$ be achievable for $ \tilde{R}_\F(\rho \to \sigma)$, i.e., there exists a sequence of free operations ${\pmb{\mathcal{P}}}^{(\tilde{r})} \in \F$ such that 
\begin{align} \label{thm_rates2-it1}
{\pmb{\mathcal{P}}}^{(\tilde{r})}\big(\mathcal{A}(\rho)\big) \,\subseteq \, B\big(\mathcal{A}(\sigma)\big)\,.
\end{align}
Now, let $\pmb{\rho}\in \OP(\texttt{tom($\cdot$)};\rho)$. For any $\alpha \in (0,1)$ and $\pmb{\cS} = \big(\cS_n\big)_{n\in \N}$ denoting the symmetrization map, we define a new sequence of free operations by
\begin{align} 
\pmb{\cQ}^{(\tilde{r})} \,:=\,{\pmb{\mathcal{P}}}^{(\tilde{r})}  \circ  \pmb{\tr}_\alpha \circ \pmb{\cS} \,,
\end{align}
where $\pmb{\cQ}^{(\tilde{r})} = \big(\cQ^{(\tilde{r})}_n\big)_{n\in\N}$ with $\cQ^{(\tilde{r})}_n \in \F\big(C^n \to D^{\tilde{m}_n} \big)$ and $\tilde{m}_n := m_{n- \lceil n^\alpha \rceil}$, where $m_n$ denotes the number of output copies of \smash{$\cP^{(\tilde{r})}_n$}. Note that as %$(\tilde{m}_n)_{n\in \N}$ is monotonically increasing and 
$\lim_{n\to\infty} \frac{\tilde{m}_n}{n} = r$, the operation $ \pmb{\cQ}^{(\tilde{r})} \in \F$ is a free operation with rate $r$. 

Due to the asymptotic permutation invariance of the tomography protocol,~\cref{def_tom-perm-inv}, and~\cref{thm_almost-iid-dF}, there exists an almost-iid state sequence $\pmb{\tau} \in \cA(\rho)$ such that
\begin{align}  \label{thm_rates2-it2}
 \pmb{\tr}_\alpha \circ \pmb{\cS}(\pmb{\rho}) \,\simeq \, \pmb{\tau}\,.
\end{align}
Moreover, recall that the relation in~\cref{thm_rates2-it2} is independent of the convention for selecting the output state sequence.
Consequently, since $\pmb{\mathcal{P}}^{(\tilde{r})}$ corresponds to a sequence of CPTP maps, it follows by the contractivity of the trace norm, and since $\tilde{m}_n = \Theta(n)$, that
\begin{align}
\pmb{\cQ}^{(\tilde{r})} (\pmb{\rho}) \,\overset{\textnormal{\Cshref{thm_rates2-it2}}}{\simeq} \, \pmb{\mathcal{P}}^{(\tilde{r})} (\pmb{\tau}) \,\overset{\textnormal{\Cshref{thm_rates2-it1}}}{\subseteq} \, B\big(\cA(\sigma)\big) \,.
\end{align}
By the definition of the negligible ball~\cref{not_negligible-ball} and~\cref{prop_eq-rel}, this means that there exists an almost-iid state sequence $\tilde{\pmb{\tau}} \in \cA(\sigma)$ such that $\pmb{\cQ}^{(\tilde{r})} (\pmb{\rho}) \simeq \tilde{\pmb{\tau}}$. Then, because of~\cref{lem_OP-set-robust,prop_almost-iid-in-OP}, it follows that
\begin{align} \label{thm_rates2-it3}
\pmb{\cQ}^{(\tilde{r})} (\pmb{\rho}) \,\in \, \OP(\texttt{tom($\cdot$)};\sigma)\,.
\end{align}
Since $\pmb{\rho}\in \OP(\texttt{tom($\cdot$)};\rho)$ was arbitrary, this implies that $\tilde{r}$ is achievable for $R^{ \texttt{tom}}_\F(\rho \to \sigma)$ and hence, $ \tilde{r} \leq  R^{ \texttt{tom}}_\F(\rho \to \sigma)$. Consequently, after taking the supremum over all achievable rates $\tilde{r}$, we conclude that $\tilde{R}_\F(\rho \to \sigma) \leq R^{ \texttt{tom}}_\F(\rho \to \sigma)$, and the claim follows.

\end{proof}

In summary,~\cref{thrm_rate-tom-indep} shows that operational state transformation rates are fully characterized in terms of (power-law sublinear) almost-iid state sequences. Moreover,~\cref{thrm_rate-tom-indep} substantially simplifies the analysis of operational state transformation rates and allows us to compare it to other asymptotic transformation rates that are widely studied in the literature~\cite{chit19}. In particular, we can show that these operational rates satisfy the necessary transitivity property for a reasonable interconversion rate, as described below.
\begin{proposition}[Composability] \label{prop_rates-transitivity}
For any density operators $\rho \in \St(\cH_C)$, $\sigma\in \St(\cH_D)$, $\tau \in \St(\cH_E)$, and any fixed set of free operations $\F$ satisfying~\cref{def_QRTs}, it holds that
\begin{align}
 \tilde{R}_\F(\rho \to \sigma) \,\geq\, \tilde{R}_\F(\rho \to \tau)\tilde{R}_\F(\tau \to \sigma) \,.
\end{align}
\end{proposition}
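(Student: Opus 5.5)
The plan is to compose the two protocols and check that the almost-iid structure survives the intermediate step. If either rate on the right-hand side is zero, the claim is trivial. Otherwise we fix achievable rates $r_1$ for $\tilde{R}_\F(\rho\to\tau)$ and $r_2$ for $\tilde{R}_\F(\tau\to\sigma)$, with witnesses $\pmb{\cP}^{(r_1)}$ and $\pmb{\cP}^{(r_2)}$. These satisfy
\[
\pmb{\cP}^{(r_1)}(\cA(\rho))\subseteq B(\cA(\tau)), \qquad \pmb{\cP}^{(r_2)}(\cA(\tau))\subseteq B(\cA(\sigma)).
\]
The aim is to show that $r_1 r_2$ is achievable for $\tilde{R}_\F(\rho\to\sigma)$; taking suprema then gives the result.

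The composed protocol is $\pmb{\cQ}:=\pmb{\cP}^{(r_2)}\circ\pmb{\cP}^{(r_1)}$, with $\cQ_n:=\cP^{(r_2)}_{m_n}\circ\cP^{(r_1)}_n$. Here $m_n$ is the output size of $\cP^{(r_1)}_n$, and an extra partial trace is inserted wherever the convention of~\cref{not_channel-seq} requires it. By concatenation and the freeness of discarding (\cref{def_QRTs}), $\cQ_n$ is free. Its output size is $m'_{m_n}$ with $m'_{m_n}/n=(m'_{m_n}/m_n)(m_n/n)\to r_2 r_1$, so $\pmb{\cQ}$ has rate $r_1r_2$.

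Now take $\pmb{\rho}\in\cA(\rho)$. By assumption $\pmb{\cP}^{(r_1)}(\pmb{\rho})\simeq\pmb{\tau}$ for some $\pmb{\tau}\in\cA(\tau)$. Since $\pmb{\cP}^{(r_2)}$ has linear output size $m'_m=\Theta(m)=\poly(m)$, \cref{lem_data-processing-neg-ball} gives
\[
\pmb{\cP}^{(r_2)}\big(\pmb{\cP}^{(r_1)}(\pmb{\rho})\big)\simeq\pmb{\cP}^{(r_2)}(\pmb{\tau})\in B(\cA(\sigma)).
\]
Transitivity of $\simeq$ (\cref{prop_eq-rel}) then yields $\pmb{\cQ}(\pmb{\rho})\in B(\cA(\sigma))$.

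The main obstacle is bookkeeping of sequence indices. The intermediate sequence $\pmb{\cP}^{(r_1)}(\pmb{\rho})$ is indexed by output sizes $m$ and built via the selector $j(m)$ with partial traces. Meanwhile $\pmb{\cQ}(\pmb{\rho})$ evaluated at a final size $\ell$ uses its own selector, which may call $\cP^{(r_2)}_{m}$ on $\cP^{(r_1)}_n(\rho^{(n)})$ with $n\neq j(m)$. I would handle this in the spirit of~\cref{rmk_op-rate-convention}, and there are two routes.

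The first route redefines $\cP^{(r_1)}_n$, without changing its rate, so that the $n$-th output is exactly the $m_n$-th element of the intermediate sequence; this can be done by tracing out and re-symmetrizing as in that remark.

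The second route works with almost-iid sequences directly. Any choice of representative $n$ with $m_n=m$ gives a state that is subexponentially close, in $n$ and hence in $m$ since $m=\Theta(n)$, to an almost-iid state along $\tau$. Building $\pmb{\tau}$ from these representatives keeps it in $\cA(\tau)$, because the constants $L,\kappa$ can be taken uniform, using \cref{lem_subpoly-almost-iid} and \cref{lem_max-subexp}.

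Finally, the subexponential bounds must be converted between the indices $n$, $m$ and $\ell$. This uses that all three are polynomially related, together with \cref{lem_bounded-subexp} and \cref{def_subexp-equivalence}.
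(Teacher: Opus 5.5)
Your main line is exactly the paper's proof: compose the witnesses, check $m'_{m_n}/n\to r_1r_2$, push the negligible ball through $\pmb{\cP}^{(r_2)}$ with \cref{lem_data-processing-neg-ball}, and conclude by transitivity. The paper stops there and tacitly identifies the output sequence of the composite with $\pmb{\cP}^{(r_2)}\big(\pmb{\cP}^{(r_1)}(\pmb{\rho})\big)$. You are right that this identification is the delicate point. However, your bookkeeping paragraph puts the mismatch at the wrong level, and Route 2 rests on a false claim.

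Under the minimal-representative convention of \cref{not_channel-seq}, the inner mismatch you describe never occurs. Let $n$ be the index selected for some output size of the composite. Then $j(m_n)\le n$, and $\cQ_{j(m_n)}$ has the same output size as $\cQ_n$, so minimality forces $n=j(m_n)$. Hence $\cP^{(r_1)}_n(\rho^{(n)})$ is literally the $m_n$-th element of $\pmb{\cP}^{(r_1)}(\pmb{\rho})$, and Route 1 changes nothing. Conversely, channels $\cP^{(r_1)}_n$ with $n\neq j(m_n)$ never enter $\pmb{\cP}^{(r_1)}(\pmb{\rho})$, so the hypothesis says nothing about them. Your Route 2 assertion that every representative yields a state close to $\cA(\tau)$ is therefore false in general.

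The genuine mismatch is at the outer level. $\cQ_n$ calls $\cP^{(r_2)}_m$ at $m=m_n$, but the output sequence of $\pmb{\cP}^{(r_2)}$ only uses, for each output size $\ell$, the first index $j_2(\ell)$ attaining it. That index need not lie in $\{m_n\}$, and channels at unused indices are unconstrained. Here is a counterexample:
\begin{enumerate}
\item Take $\rho=\tau^{\otimes 2}$ and $\sigma=\tau$ with $\tau$ not free.
\item Let $\cP^{(r_1)}_n$ regroup the systems and apply $\cS_{2n}$, so $m_n=2n$ and the rate is $2$.
\item Let $\cP^{(r_2)}_m=\tr_{\lfloor m/2\rfloor}$ for odd $m$, and $X\mapsto\tr(X)\,\gamma^{\otimes m/2}$ with $\gamma$ free for even $m$.
\end{enumerate}
This is a valid rate-$\tfrac12$ witness, since every size $\lceil m/2\rceil$ is first attained at an odd $m$. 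Yet $\cQ_n$ outputs $\gamma^{\otimes n}$, which is not in $B(\cA(\tau))$: by \cref{prop_almost-iid-in-OP,lem_OP-set-robust}, every element of $B(\cA(\tau))$ passes tomography for $\tau\neq\gamma$, while $\gamma^{\otimes n}$ passes it for $\gamma$. Neither of your routes addresses this, since both modify the inner protocol.

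The fix is to apply the construction of \cref{rmk_op-rate-convention} to the outer protocol. Replace $\cP^{(r_2)}_m$ by $H_{q_m}\circ\tr_{m-j_2(q_m)}$, where $H_\ell$ is the channel actually used for output size $\ell$ and $q_m:=\max\{\ell: j_2(\ell)\le m\}$. This keeps rate $r_2$ and discards only $m-j_2(q_m)=o(m)$ systems. By \cref{lem_subpoly-almost-iid} and contractivity, the traced intermediate states, placed at the distinct indices $j_2(\ell)$, form a sequence in $B(\cA(\tau))$. Up to the harmless partial traces of \cref{not_channel-seq}, the composite output is exactly its image under $\pmb{\cP}^{(r_2)}$, so your data-processing and transitivity step then goes through.
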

\begin{proof}
For $\tilde{R}_\F(\rho \to \tau) = 0$ or $\tilde{R}_\F(\tau \to \sigma)= 0$, the claim trivially follows. Let now $r_1,\,r_2 > 0$ be achievable for $\tilde{R}_\F(\rho \to \tau)$ and $\tilde{R}_\F(\tau \to \sigma)$, respectively. This means that there exist sequences of free operations $\pmb{\cP}^{(r_1)}, \, \pmb{\cQ}^{(r_2)} \in \F$ such that 
\begin{align} \label{prop_rates-transitivity-it1}
\pmb{\cP}^{(r_1)}\big(\mathcal{A}(\rho)\big) \,\subseteq \, B\big(\mathcal{A}(\tau)\big)\,, \qquad \pmb{\cQ}^{(r_2)}\big(\mathcal{A}(\tau)\big) \,\subseteq \, B\big(\mathcal{A}(\sigma)\big)\,.
\end{align}
Now consider a new sequence of free operations $\pmb{\cR}^{(r_2r_1)} \in \F$ defined by the composition
\begin{align}
\pmb{\cR}^{(r_2r_1)}\,=\,\pmb{\cQ}^{(r_2)}  \circ \pmb{\cP}^{(r_1)}\,,
\end{align}
with $\cP^{(r_1)}_n \in \F(C^n\to E^{m_n} )$ and $\cQ^{(r_2)}_n \in \F(E^n \to D^{\ell_n})$, leading to $\cR^{(r_2r_1)}_n \in \F(C^n\to D^{s_n} )$ where $s_n:= \ell_{m_n}$. Note that the sequence $(s_n)_{n\in\N}$ %is clearly monotonically increasing, and one finds 
satisfies $\lim_{n\to \infty} \frac{s_n}{n} = r_2 r_1$. To see this, note that since $\lim_{n\to \infty} \frac{m_n}{n} = r_1 >0$, we necessarily have $\lim_{n\to \infty} m_n = \infty$, and hence, we may write
\begin{align}  \label{prop_rates-transitivity-it2}
\lim_{n\to \infty} \frac{s_n}{n} \,=\,\lim_{n\to \infty} \left(\frac{\ell_{m_n}}{m_n}  \frac{m_n}{n} \right) \, = \,\left( \lim_{m\to \infty} \frac{\ell_{m}}{m} \right)\left( \lim_{n\to \infty}  \frac{m_n}{n} \right) \,=\,r_2 r_1\,.
\end{align}
%In the case where $r_1 = 0$, we can either have $\lim_{n\to \infty} m_n = \infty$ and~\cref{prop_rates-transitivity-it2} still holds, or it can happen that $\sup_{n\in\N} m_n = C $ for some constant $C \geq 0$. In that case, $m_n = C$ for large enough $n$, and we have $\lim_{n\to\N}  \frac{s_n}{n} = \lim_{n\to\N}  \frac{\ell_C}{n} = 0 = r_1r_2$. 
Hence, $\pmb{\cR}^{(r_2r_1)} \in \F$ is a free operation with rate $r_2 r_1$.
%\begin{align}
%\pmb{\cR}^{(r_2r_1)}\,=\, \big(\cR^{(r_2r_1)}_n\big)_{n\in\N} \,,\quad  \cR^{(r_2 r_1)}_n \,:=\,\tr_{k_n} \circ \cQ^{(r_2)}_{\lceil  r_1 n \rceil} \circ \cP^{(r_1)}_n
%\end{align}
%where $k_n := \lceil r_2 \lceil r_1 n \rceil\rceil - \lceil r_2r_1 n \rceil \geq 0$. 
On the other hand, as $\ell_n = \Theta(n)$, we have
\begin{align}
\pmb{\cQ}^{(r_2)} \circ \pmb{\cP}^{(r_1)}\big(\cA(\rho)\big) \,\overset{\textnormal{\Cshref{prop_rates-transitivity-it1}}}{\subseteq} \, \pmb{\cQ}^{(r_2)}\big(B\big(\cA(\tau)\big)\big) \,\overset{\textnormal{\Cshref{lem_data-processing-neg-ball}}}{\subseteq} \, B\big(\pmb{\cQ}^{(r_2)}\big(\cA(\tau)\big)\big)  \,\overset{\textnormal{\Cshref{prop_rates-transitivity-it1}}}{\subseteq} \, B\big(\cA(\sigma)\big)\,.
\end{align}
%Hence, for any $\pmb{\rho} \in \cA(\rho)$ there exists $\pmb{\sigma} \in \cA(\sigma)$ such that $\pmb{\cQ}^{(r_2)} \circ \pmb{\cP}^{(r_1)}(\pmb{\rho}) \simeq \pmb{\sigma}$. In particular, there exist constants $L\geq 0$, $\kappa \in [0,1)$ such that $\sigma^{(m_n)}$ is $\binom{m_n}{Lm_n^\kappa}$-almost-iid along $\sigma$ for $m_n := \lceil r_2 \lceil r_1 n \rceil\rceil $ and for all $n\in \N$. As
%\begin{align}
%\lceil r_2 \lceil r_1 n \rceil\rceil \,\leq\, \lceil r_2 ( r_1 n +1)\rceil \,\leq\,r_2 r_1 n + r_2 +1 \,\overset{\textnormal{\Cshref{lem_poly-1}}}{\leq} K r_2r_1 n \,\leq\, K\lceil r_2r_1 n\rceil
%\end{align}
%for some constant $K\geq 0$, we find that $\tr_{k_n} \big[\sigma^{(m_n)}\big]$ is $\binom{\lceil r_2r_1 n\rceil}{L'\lceil r_2r_1 n\rceil^\kappa}$-almost-iid along $\sigma$ for some constant $L'\geq0$. 
Therefore, $\pmb{\cR}^{(r_2r_1)} \big(\cA(\rho)\big) \subseteq B\big(\cA(\sigma)\big)$, meaning that $r_2r_1$ is achievable for $ \tilde{R}_\F(\rho \to \sigma)$ and thus, $r_2 r_1 \leq \tilde{R}_\F(\rho \to \sigma)$. After taking the supremum over $r_1$ and $r_2$, the claim follows.
%\GM{Say again sth. about index mismatch between $m_n$ and final partial trace?}
\end{proof}

\begin{proposition}[Weak additivity] \label{prop_rates-weak-additivity}
Let $\rho \in \St(\cH_C)$, $\sigma\in \St(\cH_D)$, and $\F$ be any fixed set of free operations satisfying~\cref{def_QRTs}. For any $k,\,\ell \in \N$, it holds that
\begin{align}
 \tilde{R}_\F\big(\rho^{\otimes k} \to \sigma^{\otimes \ell}\big) \,=\, \frac{k}{\ell}\,  \tilde{R}_\F(\rho \to \sigma)\,.
\end{align}
\end{proposition}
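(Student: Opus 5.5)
The plan is to establish the two inequalities separately. Both rest on the correspondence between sequences of $n$ copies of $\rho^{\otimes k}$ and sequences of $kn$ copies of $\rho$, the latter regrouped into blocks. Write $\rho_k:=\rho^{\otimes k}$ on $\cH_C^{\otimes k}$.

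\textbf{Step 1 (regrouping lemma).} First I would show that, up to the negligible ball, $\mathcal{A}(\rho^{\otimes k})$ coincides with the blocked versions of $\mathcal{A}(\rho)$.
\begin{itemize}
\item From blocks to singles: let $\pmb{\omega}\in\mathcal{A}(\rho_k)$ with $\omega^{(n)}$ being $\binom{n}{r_n}$-almost-iid along $\rho_k$. Take the purification $\ket{\theta}^{\otimes k}$ of $\rho_k$. A vector in $\cV^n_{r_n}(\cH_{CE}^{\otimes k},\ket{\theta}^{\otimes k})$ lies in $\cV^{kn}_{kr_n}(\cH_{CE},\ket{\theta})$. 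The support condition therefore transfers, and symmetrizing over $S_{kn}$ (a free operation) yields a $\binom{kn}{kr_n}$-almost-iid state along $\rho$. For lengths not divisible by $k$, pad with at most $k-1$ extra defects or trace out, using \cref{lem_subpoly-almost-iid}.
\item From singles to blocks: a $\binom{kn}{r}$-almost-iid state along $\rho$ is, after grouping, supported on vectors with at most $r$ defective blocks. This requires an extension that is invariant under block permutations, which the $S_{kn}$-invariance of the extension provides. Hence the grouped state is $\binom{n}{r}$-almost-iid along $\rho_k$.
\end{itemize}
The power-law sublinear scaling $Ln^\kappa$ is preserved in both directions, up to constants.

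\textbf{Step 2 (two special cases).} Using Step 1 I would prove $\tilde R_\F(\rho_k\to\rho)\ge k$ and $\tilde R_\F(\rho\to\rho_k)\ge 1/k$.
\begin{itemize}
\item For $\tilde R_\F(\rho_k\to\rho)\ge k$: take the protocol that reinterprets $C_k^n$ as $C^{kn}$. This is the identity channel, hence free. By the blocks-to-singles direction it maps $\mathcal{A}(\rho_k)$ into $B(\mathcal{A}(\rho))$; symmetrization is also free if it is needed.
\item For $\tilde R_\F(\rho\to\rho_k)\ge 1/k$: take $\cS_n$ followed by discarding $n\bmod k$ systems and regrouping.
\end{itemize}
Combining the two special cases with composability (\cref{prop_rates-transitivity}) gives
\[
\tilde R_\F(\rho_k\to\sigma_\ell)\ \ge\ \tilde R_\F(\rho_k\to\rho)\,\tilde R_\F(\rho\to\sigma)\,\tilde R_\F(\sigma\to\sigma_\ell)\ \ge\ \tfrac{k}{\ell}\,\tilde R_\F(\rho\to\sigma).
\]
Applying the same chain with the roles exchanged gives
\[
\tilde R_\F(\rho\to\sigma)\ \ge\ \tilde R_\F(\rho\to\rho_k)\,\tilde R_\F(\rho_k\to\sigma_\ell)\,\tilde R_\F(\sigma_\ell\to\sigma)\ \ge\ \tfrac{\ell}{k}\,\tilde R_\F(\rho_k\to\sigma_\ell),
\]
which is the reverse inequality. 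Together the two displays give the claimed equality.

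\textbf{Expected obstacle.} The main difficulty is the singles-to-blocks direction of Step 1. Grouping consecutive systems does not commute with arbitrary permutations in $S_{kn}$. A permutation can scatter the $r$ defective positions across up to $r$ distinct blocks, and this is harmless. The real issue is whether the resulting state satisfies condition (i) of \cref{def_almost_product_state_mixed}, i.e. invariance under block permutations. My first approach is to note that the support condition only involves positions, so the grouped state is generalized-almost-iid with at most $r$ defective blocks. Applying the free symmetrization $\cS_n$ on blocks then gives a genuine almost-iid state along $\rho_k$. A secondary, routine point is to check that the rate bookkeeping via \cref{not_channel-seq} is unaffected by the $\lfloor\cdot\rfloor$ rounding and by the trace of fewer than $k$ systems, which changes the output by at most $k-1$ defects.
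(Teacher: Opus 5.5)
Your proposal is correct and follows the paper's proof essentially step for step. The paper likewise first establishes $\tilde{R}_\F(\tau^{\otimes s}\to\tau)\geq s$, via $\cV^n_{r_n}(\cH_{EF}^{\otimes s},\ket{\theta}^{\otimes s})\subseteq\cV^{sn}_{sr_n}(\cH_{EF},\ket{\theta})$ plus a full symmetrization, and $\tilde{R}_\F(\tau\to\tau^{\otimes s})\geq 1/s$, by tracing out the remainder, regrouping, and noting that $r_n$ defects affect at most $r_n$ blocks; it then sandwiches with \cref{prop_rates-transitivity} in both directions, exactly as you do. The obstacle you flag is not a real one: as you observe yourself, the $S_{kn}$-invariance of the extension already contains invariance under block permutations, which is precisely what the paper uses, so the extra block symmetrization is harmless but unnecessary.
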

\begin{proof}
%Let $r > 0$ be achievable for $ \tilde{R}_\F(\rho \to \sigma)$, i.e., there exists a sequence of free operations ${\pmb{\mathcal{P}}}^{(\tilde{r})} \in \F$ such that 
%\begin{align}
%{\pmb{\mathcal{P}}}^{(\tilde{r})}\big(\mathcal{A}(\rho)\big) \,\subseteq \, B\big(\mathcal{A}(\sigma)\big)\,.
%\end{align}
Let $\tau \in \St(\cH_E)$ be any density matrix and $s\in \N$. Furthermore, fix a purification $\ket{\theta}_{EF}$ of $\tau$. First, we examine the rates of grouping and ungrouping systems. More precisely, we claim that 
\begin{align} \label{prop_weak-additivity-claim}
 \tilde{R}_\F\big(\tau^{\otimes s} \to \tau \big) \,\geq\, s \qquad \textrm{and} \qquad  \tilde{R}_\F\big(\tau \to \tau^{\otimes s} \big) \,\geq\, \frac{1}{s}\,.
\end{align}

To see the first relation in~\cref{prop_weak-additivity-claim}, let $\pmb{\tau} \in \cA(\tau^{\otimes s})$ be arbitrary. By definition, there exist constants $L \geq 0$, $\kappa \in [0,1)$ such that \smash{$\tau^{(n)}$} is $\binom{n}{r_n}$-almost-iid along $\tau^{\otimes s}$ with $r_n = \lceil L n^\kappa \rceil$ for all $n\in \N$. By definition of almost-iid states, there exists an extension $\tau_{E^nF^n}$ of \smash{$\tau^{(n)}$} such that
\begin{align}
\supp \big(\tau_{E^nF^n} \big) \,\subseteq \, \cV^n_{r_n}\big(\cH^{\ox s}_{EF},\,\ket{\theta}_{EF}^{\ox s} \big) \,\subseteq\,  \cV^{sn}_{sr_n}\big(\cH_{EF},\,\ket{\theta}_{EF} \big)  ,
\end{align}
where the last inclusion follows directly from the definition of the set $\cV$. This means that we may view the $n$-copy state \smash{$\tau^{(n)}$} with reference state \smash{$\tau^{\ox s}$} as an $m_n$-copy state with reference state $\tau$, where $m_n := sn$. Hence, after an additional full symmetrization, the state \smash{$\tilde{\tau}^{(m_n)}:= \cS_{sn}\big(\tau^{(n)}\big)$} is $\binom{sn}{sr_n}$-almost-iid along $\tau$. Since \smash{$sr_n = s\lceil L n^\kappa \rceil \leq \big\lceil \tilde{L} (sn)^\kappa \big\rceil$} for some constant \smash{$\tilde{L} \geq 0$}, this shows that the resulting state sequence $\tilde{\pmb{\tau}}$ (see also~\cref{not_channel-seq}) satisfies $\tilde{\pmb{\tau}}\in \cA(\tau)$. As $\frac{m_n}{n}=s$, this shows that $ \tilde{R}_\F\big(\tau^{\otimes s} \to \tau \big) \geq s$.

To see the second relation in~\cref{prop_weak-additivity-claim}, let $\pmb{\tau} \in \cA(\tau)$ be arbitrary. By definition, there exist constants $L \geq 0$, $\kappa \in [0,1)$ such that \smash{$\tau^{(n)}$} is $\binom{n}{r_n}$-almost-iid along $\tau$ with $r_n = \lceil L n^\kappa \rceil$ for all $n\in \N$. Now, define the number of blocks $m_n$ of size $s$ by $m_n:= \big\lfloor \frac{n}{s} \big\rfloor$. First, we trace away the $k_n := n- m_ns \geq 0 $ remaining systems, leading to the state $\tilde{\tau}^{(m_ns)} := \tr_{k_n}\big[\tau^{(n)} \big]$. By~\cref{rmk_properties_almost_iid_states}, $\tilde{\tau}^{(m_ns)}$ is an $\binom{m_ns}{r_n}$-almost-iid state along $\tau$. By definition of almost-iid states, there exists an extension $\tilde{\tau}_{E^{m_ns}F^{m_ns}}$ of \smash{$\tilde{\tau}^{(m_ns)}$} such that
\begin{align}
\supp \big(\tilde{\tau}_{E^{m_ns}F^{m_ns}} \big) \,\subseteq \, \cV^{m_ns}_{r_n}\big(\cH_{EF},\,\ket{\theta}_{EF} \big) \,\subseteq\,  \cV^{m_n}_{r_n}\big(\cH^{\ox s}_{EF},\,\ket{\theta}^{\ox s}_{EF} \big)  ,
\end{align}
where the last inclusion follows directly from the definition of the set $\cV$, and from the fact that $r_n$ initial defects can yield at most $r_n$ defected blocks of size $s$. This means that we may view the $m_ns$-copy state \smash{$\tilde{\tau}^{(m_ns)}$} with reference state $\tau$ as an $m_n$-copy state $\bar{\tau}^{(m_n)}$ with reference state \smash{$\tau^{\ox s}$}. In addition, note that the initial full permutation invariance of $\tilde{\tau}^{(m_ns)}$ directly implies permutation invariance w.r.t.~the block structure. Furthermore, as $m_n = \Theta(n)$, it follows that \smash{$r_n \leq \big\lceil \tilde{L} m_n^\kappa \big\rceil$} for some constant $\tilde{L} \geq 0$. Therefore, the resulting state sequence, defined by $\bar{\tau}^{(m_n)} := \tilde{\tau}^{(m_ns)}$ (see also~\cref{not_channel-seq}), satisfies $\bar{\pmb{\tau}}\in \cA(\tau^{\ox s})$. Since $\lim_{n\to\infty} \frac{m_n}{n} = \frac{1}{s}$, this shows that $ \tilde{R}_\F\big(\tau \to \tau^{\otimes s} \big) \geq \frac{1}{s}$.

\medskip
\noindent Now, we can combine~\cref{prop_weak-additivity-claim} with~\cref{prop_rates-transitivity} as follows. On the one hand, we have
\begin{align}
 \tilde{R}_\F\big(\rho^{\otimes k} \to \sigma^{\otimes \ell}\big) \,\overset{\textnormal{\Cshref{prop_rates-transitivity}}}&{\geq}\, \tilde{R}_\F\big(\rho^{\otimes k} \to \rho \big) \tilde{R}_\F\big(\rho \to \sigma \big) \tilde{R}_\F\big(\sigma \to \sigma^{\otimes \ell}\big) \\
 \overset{\textnormal{\Cshref{prop_weak-additivity-claim}}}&{\geq}\,  \frac{k}{\ell} \,  \tilde{R}_\F\big(\rho \to \sigma \big)\,.
\end{align}
On the other hand, we find
\begin{align}
 \tilde{R}_\F\big(\rho \to \sigma \big) \,\overset{\textnormal{\Cshref{prop_rates-transitivity}}}&{\geq}\, \tilde{R}_\F\big(\rho \to \rho^{\otimes k} \big) \tilde{R}_\F\big(\rho^{\otimes k} \to \sigma^{\otimes \ell}\big)  \tilde{R}_\F\big(\sigma^{\otimes \ell} \to \sigma \big) \\
 \overset{\textnormal{\Cshref{prop_weak-additivity-claim}}}&{\geq}\,  \frac{\ell}{k} \,  \tilde{R}_\F\big(\rho^{\otimes k} \to \sigma^{\otimes \ell}\big) \,,
\end{align}
which concludes the proof.
\end{proof}

\subsection{Relation to iid asymptotic rates} \label{sec:rel-traditional-rates}
Within our framework, we may define iid asymptotic state transformation rates as follows.
\begin{definition}[iid asymptotic transformation rates]  \label{def_iid_rate}
Let $\rho \in \St(\cH_C)$ and $\sigma\in \St(\cH_D)$, and $\F$ be any fixed set of free operations satisfying~\cref{def_QRTs}. The iid asymptotic state transformation rate is defined by
\begin{align} \label{def:eq_iid_rate}
R_\F(\rho \to \sigma) \,=\, \sup\Big\{\, r>0\ \big| \ \exists\ {\pmb{\mathcal{P}}}^{(r)} \in \F \,:\ {\pmb{\mathcal{P}}}^{(r)}\big( \pmb{\rho}^{\mathrm{iid}}\big) \, \simeq \,\pmb{\sigma}^{\mathrm{iid}} \Big\} \,,
\end{align}
where \smash{$\pmb{\rho}^{\mathrm{iid}} :=\big(\rho^{\otimes n}\big)_{n\in \N} $ and  $\pmb{\sigma}^{\mathrm{iid}} :=\big(\sigma^{\otimes n}\big)_{n\in \N} \,$}. If the set in~\cref{def:eq_iid_rate} is empty, we set $R_\F(\rho \to \sigma)=0$.
\end{definition}
\noindent We make an important remark concerning the definition of the iid rates introduced above. The main difference from the conventional asymptotic conversion rates considered in the literature lies in the approximation criterion: the relation ``$\simeq$" requires the approximation error to decay subexponentially in the number of input copies $n$, whereas the traditional definition only requires the error to vanish asymptotically, without imposing any particular rate of convergence. Consequently, the two definitions may, a priori, yield different conversion rates. For general classes of free operations $\F$ and general conversion tasks, establishing the equivalence of these notions does not appear to be feasible. However, in~\cref{sec:LOCC_rates}, we show that they coincide for the $\LOCC$ entanglement cost and distillation rates. We believe that requiring subexponential error decay is the more operationally meaningful notion and the one most relevant for practical applications.

In the previous section, in particular in~\cref{thrm_rate-tom-indep}, we established that operational asymptotic state transformation rates can be characterized in terms of almost-iid state sequences. This naturally raises the question of whether these rates can be further related to the corresponding rates for iid state sequences. The following theorem establishes a partial result toward proving this equality.
%Recall the definitions of the iid and almost-iid rates, repectively,
%\begin{align}
%R_\F(\rho \to \sigma) \,&=\, \sup\Big\{\, r\ \big| \ \exists\ {\pmb{\mathcal{P}}}^{(r)} \in \F \,:\ {\pmb{\mathcal{P}}}^{(r)}\big( \pmb{\rho}^{\mathrm{iid}}\big) \, \simeq \,\pmb{\sigma}^{\mathrm{iid}} \Big\} \,, \\
%\tilde{R}_\F(\rho \to \sigma)\,&=\,  \sup\Big\{\, r\ \big| \ \exists\ {\pmb{\mathcal{P}}}^{(r)} \in \F \,:\ {\pmb{\mathcal{P}}}^{(r)}\big(\mathcal{A}(\rho)\big) \,\subseteq \, B\big(\mathcal{A}(\sigma)\big) \Big\}  \,.
%\end{align}
%%%%%%%%%%%%%%%%%%%%%%%%%%%%%%%%%% OLD THEOREM MOVING-DEFECTS
%\begin{theorem}[Moving the input defects to the output] \label{thm_moving-input-to-output}
%Let $\rho \in \St(\cH_C)$ and $\sigma\in \St(\cH_D)$, and $\F$ be any fixed set of free operations satisfying~\cref{def_QRTs}. Then, the almost-iid asymptotic transformation rate $ \tilde{R}_\F(\rho \to \sigma)$, introduced in~\cref{thrm_rate-tom-indep}, satisfies
%\begin{align}
%\tilde{R}_\F(\rho \to \sigma) \,=\,  \sup\Big\{\, r\ \big| \ \exists\ {\pmb{\mathcal{P}}}^{(r)} \in \F \,:\ {\pmb{\mathcal{P}}}^{(r)}\big( \pmb{\rho}^{\mathrm{iid}}\big) \, \in \, B\big(\mathcal{A}(\sigma)\big) \Big\}  \,.
%\end{align}
%\end{theorem}
%%%%%%%%%%%%%%%%%%%%%%%%%%%%%%%%%%
 \begin{theorem} \label{thm_almost-iid-iid-rates-relation}
 Let $\rho \in \St(\cH_C)$ and $\sigma\in \St(\cH_D)$, and $\F$ be any fixed set of free operations satisfying~\cref{def_QRTs}. Then, the (operational) almost-iid asymptotic transformation rate $ \tilde{R}_\F(\rho \to \sigma)$, introduced in~\cref{thrm_rate-tom-indep}, is lower bounded by the iid asymptotic rate, i.e., 
 \begin{align}
R_\F(\rho \to \sigma) \,\leq\,  \tilde{R}_\F(\rho \to \sigma)\,.
 \end{align}
 \end{theorem}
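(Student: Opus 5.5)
Given an iid-achievable rate $r<R_\F(\rho\to\sigma)$ with protocol $\pmb{\cP}^{(r)}\in\F$ satisfying $\pmb{\cP}^{(r)}(\pmb{\rho}^{\mathrm{iid}})\simeq\pmb{\sigma}^{\mathrm{iid}}$, I will build a blocked protocol $\pmb{\cQ}$ of essentially the same rate with $\pmb{\cQ}(\cA(\rho))\subseteq B(\cA(\sigma))$. Taking the supremum over $r$ then gives $R_\F\le\tilde R_\F$.

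\textbf{Construction.} Fix $\eta\in(0,1)$ and set the block size $b_n:=\lceil n^\eta\rceil$ and the number of blocks $t_n:=\lfloor n/b_n\rfloor$. On input of $n$ systems, $\cQ_n$ first applies the free symmetrization $\cS_n$, then traces out the $n-t_nb_n$ leftover systems, then applies $(\cP_{b_n}^{(r)})^{\otimes t_n}$ to the $t_n$ consecutive blocks, and finally applies $\cS_{t_nm_{b_n}}$ on the output. This channel is free by conditions 2, 3, 4 and $5'$ of \cref{def_QRTs}. Its output size is $t_nm_{b_n}$, and $t_nm_{b_n}/n\to r$.

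\textbf{Defects.} Let $\pmb{\tau}\in\cA(\rho)$ with $\binom{n}{r_n}$-almost-iid structure, where $r_n=\lceil Ln^\kappa\rceil$. By \cref{rmk_properties_almost_iid_states}(b), the purified extension is a superposition of vectors $\ket{\Psi_j}\in\cV^n_{r_n}$. Each such vector has at most $r_n$ blocks containing a defect, and the remaining blocks are exactly $\ket{\theta}^{\otimes b_n}$. Grouping into blocks as in the proof of \cref{prop_rates-weak-additivity}, the state is $\binom{t_n}{r_n}$-almost-iid along $\rho^{\otimes b_n}$. I then apply $\cP_{b_n}^{\otimes t_n}$ termwise, with the channel extended by the identity on the purifying systems.

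\begin{itemize}
\item On each iid block, write $\cP_{b_n}(\rho^{\otimes b_n})=\sigma^{\otimes m_{b_n}}+\Delta_n$ with $\|\Delta_n\|_1\le f(b_n)$, where $f$ is subexponentially decreasing.
\item The idealized output, in which every iid block is replaced by $\sigma^{\otimes m_{b_n}}$, should be $\binom{t_nm_{b_n}}{r_nm_{b_n}}$-almost-iid along $\sigma$. A defective block contributes at most $m_{b_n}$ output defects.
\item The output defect count is $r_nm_{b_n}\approx Lr\,n^{\kappa+\eta}$. This is power-law sublinear provided $\eta<1-\kappa$.
\item The error accumulated over at most $t_n$ blocks should be bounded by roughly $|\cT_n|\,t_n f(b_n)$. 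Here $|\cT_n|\le 2^{O(n^{(1+\kappa)/2})}$ as in \cref{prop_almost-iid-in-OP}, and $f(b_n)\le 2^{-n^{\eta a}}$ for some $a\in(0,1)$. The bound is subexponential provided $\eta a>(1+\kappa)/2$.
\end{itemize}

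\textbf{Main obstacle: the error estimate.} The constraints in the last step are in tension. The error requires $\eta>(1+\kappa)/(2a)$, while the defect count requires $\eta<1-\kappa$. These need not be compatible, and $\kappa$ depends on the input sequence, so $\eta$ cannot be tuned in advance. I will therefore avoid the pinching factor $|\cT_n|$ altogether.

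The better route is to work at the level of the purification. I will apply the Stinespring isometry $V$ of $\cP_{b_n}$ blockwise to $\ket{\theta}^{\otimes b_n}$. Let $\ket{\phi_n}$ be a purification of $\sigma^{\otimes m_{b_n}}$ that is $f'(b_n)$-close to $V\ket{\theta}^{\otimes b_n}$ by Uhlmann's theorem, with $f'$ still subexponential. There is then a unitary $W$ on the purifying space mapping $V\ket{\theta}^{\otimes b_n}$ exactly to an approximation of $\ket{\phi_n}$.

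I will replace every non-defective block component by $\ket{\phi_n}$, as in the proof of \cref{prop_close-almost-iid}, which replaces $\ket{\theta}^{\otimes m}$ by $\ket{\theta'^{(m)}}$. Doing this coherently gives a vector error of at most $\sum_j|\alpha_j|\,t_n f'(b_n)$. The factor $\sum_j|\alpha_j|\le\sqrt{|\cT|}$ still appears, and removing it is the delicate point.

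To remove it, I will instead define a single unitary acting on the whole space that acts as the replacement on every block simultaneously, using the identity on defective blocks. This should be possible if the replacement is implemented by a fixed local operator on each block, namely the operator $V$ followed by the Uhlmann unitary. Its deviation from the ideal map is then controlled block by block via a telescoping sum that costs $t_nf'(b_n)$, independent of $|\cT|$, and $t_nf'(b_n)$ is subexponential for any $\eta>0$. With this, the only constraint is $\eta<1-\kappa$.

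Since $\kappa$ is unknown in advance, I choose $\eta_n$ adaptively. I take $b_n=\lceil n^{\eta_n}\rceil$ with $\eta_n\to0$ slowly, for instance $b_n=\lceil\log^2 n\rceil$. This keeps $t_nf(b_n)$ negligible only if $f$ decays fast enough, so I will have to check that subexponential decay of $f$ in $b_n$ still yields subexponential decay in $n$. If it does not, I will fall back on a fixed small $\eta$ and argue per input sequence.
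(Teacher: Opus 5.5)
Your block construction, and the telescoping device in particular, are sound up to a point. Let $W_n$ be the rotation in the plane spanned by $V\ket{\theta}^{\otimes b_n}$ and the target purification. It must act on $D^{m_{b_n}}$ as well, not only on the purifying space. You also need an isometry on the purifying registers so that the target is literally $\ket{\varphi}^{\otimes m_{b_n}}$ for a fixed purification $\ket{\varphi}$ of $\sigma$. Both points are harmless, since $W_n$ is only an analysis device, and its action on defective blocks does not matter. With this, $\|W_n^{\otimes t_n}-\id\|_\infty\le t_n\|W_n-\id\|_\infty$, and the $|\cT|$ factor does disappear.

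The gap is in your last step. The telescoping bound is a union bound over all $t_n$ blocks, so it is subexponentially small in $n$ only if the per-block error is. That requires $b_n\ge n^{\eta}$ for some fixed $\eta>0$. With $b_n=n^{o(1)}$, e.g.\ $b_n=\lceil\log^2 n\rceil$, you only get $f(b_n)\le 2^{-n^{o(1)}}$. This is not below $2^{-n^{\alpha}}$ for any fixed $\alpha>0$, even if $f$ decayed exponentially in the block length.

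A fixed $\eta$ does not work either. $\tilde R_\F$ requires a single sequence $\pmb{\cP}^{(r)}$ with $\pmb{\cP}^{(r)}(\cA(\rho))\subseteq B(\cA(\sigma))$. But $\cA(\rho)$ contains sequences with $\kappa>1-\eta$. For those, essentially every block of size $n^{\eta}$ of the symmetrized input contains a defect, and the iid protocol gives no guarantee on such blocks. Your fallback of ``arguing per input sequence'' would let the protocol depend on $\kappa$, which the definition forbids. So within your scheme the two requirements on $b_n$ cannot both be met.

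The missing idea is to stop requiring every clean block to be converted correctly and to let block failures become additional output defects. The paper does this as follows.
\begin{enumerate}
\item It takes $\ell_N=\lceil e^{\sqrt{\ln N}}\rceil$, which is $N^{o(1)}$ but grows faster than any power of $\log N$.
\item It writes $\cW\ket{\theta}^{\otimes\ell}=\sqrt{1-p_\ell}\,\ket{\varphi}^{\otimes m_\ell}+\sqrt{p_\ell}\,\ket{\zeta^\perp}$ with $p_\ell\le 2\eps_\ell$, and expands the clean blocks binomially.
\item It coherently discards only the components with at least $a_N=\lceil N^{\lambda}\rceil$ failed blocks. Here $\lambda\in(\kappa,1)$ enters only the analysis, not the protocol.
\item A Chernoff bound for $\mathrm{Bin}(\tilde s_N,p_\ell)$, whose mean satisfies $\mu_N\le 2\eps_\ell\tilde s_N\to 0$, bounds the discarded norm by $(e\mu_N/a_N)^{a_N/2}\le 2^{-\frac12 a_N\ell_N^{\alpha}(1+o(1))}$.
\end{enumerate}
The decay now comes from the large threshold $a_N\sim N^{\lambda}$, not from the per-block error. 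It is therefore subexponential in $N$ even though $p_\ell$ is only $2^{-N^{o(1)}}$. It also beats the factor $|\cT_N|\le 2^{O(N^{\kappa}\log N)}$ from the triangle inequality over the superposition, because $\lambda>\kappa$ and $\ell_N^{\alpha}/\log N\to\infty$. The price is only $m_\ell a_N=N^{\lambda+o(1)}$ extra output defects. Hence the total $m_\ell(r_N+a_N)$ is power-law sublinear for every $\kappa<1$, with one fixed protocol.

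Incidentally, the correct bound is $\log|\cT_n|=O(n^{\kappa}\log n)$ rather than $O(n^{(1+\kappa)/2})$. Your pinching route still fails for the same universality reason.
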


\begin{proof}
For $R_\F(\rho \to \sigma) =0$, the claim clearly holds. Now let $r>0$ be achievable for $R_\F(\rho \to \sigma)$, i.e., there exists a sequence of free operations ${\pmb{\mathcal{P}}}^{(r)} \in \F$, with $\cP^{(r)}_n \in \F(C^n \to D^{m_n})$ for each $n\in \N$, such that $\lim_{n\to\infty}\frac{m_n}{n} = r$ and
\begin{align} \label{thm_moving-input-it1}
%\pmb{\mathcal{P}}^{(r)}\big( \pmb{\rho}^{\mathrm{iid}}\big)  \in  B\big( \pmb{\sigma}^{\mathrm{iid}}\big)\,.
\tfrac{1}{2}\big\|\cP^{(r)}_n(\rho^{\otimes n})-\sigma^{\otimes m_n}\big\|_1 \,\leq \, \eps_n
\end{align}
for a supexponentially decreasing function $\eps_n$. By~\cref{def_subexp}, there exists $\alpha \in (0,1)$ and $n_0 \in \N$ such that $\eps_n \leq 2^{-n^\alpha}$ for all $n\geq n_0$.
In words, we are given an operation $\pmb{\mathcal{P}}^{(r)}$ that acts on iid inputs, and our goal is to construct an operation that also works for almost-iid inputs. The key idea is to apply $\pmb{\mathcal{P}}^{(r)}$ blockwise to suitable subsystems. On perfect blocks, that is, blocks containing no defects, the original operation produces the desired output state. The remaining, defective blocks can then be treated separately, and we show that their contribution becomes asymptotically negligible.

\noindent \textbf{Step 1:} \textit{(Block structure)}
For any $N\in \N$, let
\begin{align}  \label{thm_moving-input-it1-0}
\ell_N \,:=\, \left\lceil\exp\big(\sqrt{\ln N}\big)\right\rceil, \qquad s_N\,:=\, \left\lfloor\frac{N}{\ell_N}\right\rfloor,
\end{align}
denote the block length and the number of blocks, respectively. Note that $\ell_N \leq N^{\frac{1}{\sqrt{\ln N}}} +1 = N^{o(1)}$ and hence, $\ell_N = o(N^\lambda)$ for any $\lambda > 0$. We may then write $N = \ell_N s_N + k_N$ with $k_N:= N - \ell_N s_N \geq 0$. Note that since $\lim_{N\to\infty} \frac{\ell_N }{N} = 0$, we have
\begin{align} \label{thm_moving-input-it1-1}
\lim_{N\to\infty} \frac{\ell_N s_N}{N} \,=\,1\,,
\end{align}
and hence, $k_N = o(N)$ is sublinear in $N$. Accordingly, we define a new sequence of free operations
\begin{align} \label{thm_moving-input-it1-2}
\pmb{\cQ}^{(r)}\,:=\big(\cQ^{(r)}_N \big)_{N\in \N} \,, \quad \, \cQ^{(r)}_N \,:= \, (\cP^{(r)}_{\ell_N})^{\otimes s_N} \circ \tr_{k_N} \,\in\, \F(C^N \to D^{\tilde{m}_N})\,,
\end{align}
where $ \tr_{k_N}$ denotes the partial trace of the last $k_N$ subsystems, and $\tilde{m}_N := m_{\ell_N} s_N$. Moreover, as %$(\tilde{m}_N)_{N\in \N}$ is monotonically increasing and
\begin{align}  \label{thm_moving-input-it1-3}
 \lim_{N\to \infty} \frac{\tilde{m}_N}{N} \,=\,  \lim_{N\to \infty}\frac{ \ell_N s_N}{N}\frac{m_{\ell_N} }{\ell_N} \,\overset{\textnormal{\Cshref{thm_moving-input-it1-1}}}{=}\, \lim_{N\to \infty} \frac{m_{\ell_N}}{\ell_N} \,=\, r\,,
\end{align}
the map $\pmb{\cQ}^{(r)} \in \F$ is indeed a free operation with rate $r$. The final protocol will then include an additional random permutation,
\begin{align} \label{thm_moving-input-it1-4}
\tilde{\pmb{\cQ}}{\vphantom{\pmb{\cQ}}}^{(r)} \,:=\, \pmb{\cS} \circ \pmb{\cQ}^{(r)} \,.
\end{align}
In the following, we will write $\ell_N \equiv \ell$ and $s_N \equiv s$ for brevity.

%%%%%%%%%%%%%%%%%%%%%%%%%%%%%%%%%%
%%%%%%%%%%%%%%%%%%%%%%%%%%%%%%%%%%
\medskip
\noindent \textbf{Step 2:} \textit{(Purified picture)}
Now let $N\in \N$ and $\pmb{\rho}\in \cA(\rho)$ be arbitrary. It then follows by~\cref{rmk_properties_almost_iid_states} that there exists constants $L\geq0$, $\kappa \in [0,1)$ such that for any $N\in \N$, the state $\bar{\rho}^{(\ell s)} := \tr_{k_N}( \rho^{(N)})$ is $\binom{\ell s}{r_N}$-almost-iid along $\rho$ with $r_N := \lceil LN^\kappa \rceil$. Let $\ket{\theta}_{CE}$ be a purification of $\rho$ with $E \simeq C$. By definition of almost-iid states, there exists an orthonormal family of states $\{\ket{\Psi_i} \}_{i\in\cT_N}$ from \smash{$\cV^{\ell s}_{r_N}(\cH_{CE},\,\ket{\theta}_{CE} )$}, and an extension $\bar{\rho}_{C^{\ell s}E^{\ell s}}$ of $\bar{\rho}^{(\ell s)}$ such that
\begin{align}
   \bar{\rho}_{C^{\ell s}E^{\ell s}} \,=\, \sum_{i,j \in\cT_N} c_{i,j} \ket{\Psi_i}\!\bra{\Psi_j}
\end{align}
with coefficients $c_{i,j} \in \C$ such that $c_{j,j} \geq 0 $ and $\sum_{j \in\cT_N} c_{j,j} =1$. In addition, we employ~\cref{lem_purif_almost_iid} to construct a (normalized) purification $\ket{\bar{\rho}}_{C^{\ell s}E^{\ell s} H}$ of $  \bar{\rho}_{C^{\ell s}E^{\ell s}}$ specified by
\begin{align} \label{thm_moving-input-it2-9}
\ket{\bar{\rho}}_{C^{\ell s}E^{\ell s} \bar{H}} \,=\, \sum_{i \in \cT_N} \alpha_i \ket{\Psi_i}_{C^{\ell s}E^{\ell s}} \ket{h_i}_{H}
\end{align}
where $\alpha_i \in \C$ and $ \{\ket{h_i}_{H}\}_{i\in \cT_N}$ are normalized (but not necessarily orthogonal) states. 

\medskip
As a next step, we will construct a Stinespring dilation of the map $ (\cP^{(r)}_\ell)^{\otimes s}$ that will act on $\ket{\bar{\rho}}_{C^{\ell s}E^{\ell s} H}$ in order to analyse the action of $\pmb{\cQ}^{(r)}$ on $\pmb{\rho}$. First, we use the Stinespring dilation to construct an isometry $\cU : C^\ell \to D^{m_\ell}F^{(\ell)}$ with $F^{(\ell)}\simeq C^\ell D^{m_\ell}$ such that 
\begin{align}  \label{thm_moving-input-it2-1}
 \forall\, \xi_{C^\ell} \in \St(\cH^{\otimes \ell}_C)\,: \quad \tr_{F^{(\ell)}} \big( \cU \xi_{C^\ell} \cU^\dagger\big) \,=\, \cP^{(r)}_\ell (\xi_{C^\ell})\,.
\end{align} 
Then, recall that by~\cref{thm_moving-input-it1}, it holds that $ \pmb{\cP}^{(r)} (\pmb{\rho}^{\mathrm{iid}}) \simeq \pmb{\sigma}^{\mathrm{iid}} $. Accordingly, by Uhlmann's theorem combined with the Fuchs-van de Graaf inequality, there exists a purification \smash{$\ket{\Sigma}_{D^{m_\ell} E^\ell F^{(\ell)}}$} of $\sigma^{\otimes m_\ell}$ such that
\begin{align}  \label{thm_moving-input-it2-2}
\tfrac{1}{2}\norm{(\cU \otimes \id_{E^\ell})\proj{\theta}_{CE}^{\otimes \ell}\, (\cU^\dagger \otimes \id_{E^\ell}) - \proj{\Sigma}_{D^{m_\ell} E^\ell F^{(\ell)}}}_1 \,\leq\, \sqrt{2\eps_\ell} \,.
\end{align}

\medskip
On the other hand, let $\ket{\varphi}_{DG}$ denote a purification of $\sigma$ with a Hilbert space $G$ such that $\dim(G^{m_\ell})\geq \dim(E^\ell F^{(\ell)})$. Since $\ket{\Sigma}_{D^{m_\ell} E^\ell F^{(\ell)}}$ and $\ket{\varphi}^{\otimes m_\ell}_{DG}$ are both purifications of $\sigma^{\otimes m_\ell}$, there exists an isometry $V : E^\ell F^{(\ell )} \to  G^{m_\ell}$ such that
\begin{align}  \label{thm_moving-input-it2-3}
(\id_{D^{m_\ell}} \otimes V)\ket{\Sigma}_{D^{m_\ell} E^\ell F^{(\ell)}}\,=\, \ket{\varphi}^{\otimes m_\ell}_{DG}\,.
\end{align}
Then, we define a new isometry $\cW$ by combining the maps in~\cref{thm_moving-input-it2-2,thm_moving-input-it2-3} as
\begin{align}
\cW\,:\,C^\ell E^\ell \to D^{m_\ell} G^{m_\ell} \,,\quad \cW \,:=\, (\id_{D^{m_\ell}} \otimes V) \circ (\cU \otimes \id_{E^\ell})\,.
\end{align}
By construction, this isometry satisfies
\begin{align}  \label{thm_moving-input-it2-8}
 \forall\, \xi_{C^\ell E^\ell} \in \St(\cH^{\otimes \ell}_{CE})\,: \quad \tr_{G^{m_\ell}}\big( \cW  \xi_{C^\ell E^\ell} \cW^\dagger \big) \,=\, \cP^{(r)}_\ell (\xi_{C^\ell})\,,
\end{align}
where $ \xi_{C^\ell} = \tr_{E^\ell}( \xi_{C^\ell E^\ell})$, and 
\begin{align}  \label{thm_moving-input-it2-4}
\tfrac{1}{2} \norm{ \cW\,\proj{\theta}_{CE}^{\otimes \ell}\,\cW^\dagger - \proj{\varphi}^{\otimes m_\ell}_{DG}}_1 \,\leq\, \sqrt{2\eps_\ell}\,.
\end{align}
Moreover, since the tensor product of Stinespring dilations of any channels is again a Stinespring dilation of the tensor product of these channels, it follows that $\cW^{\otimes s}$ is a Stinespring dilation of $(\cP^{(r)}_\ell)^{\otimes s}$. Finally, we use the above to decompose the output of the purified protocol as
\begin{align} \label{thm_moving-input-it2-6}
\ket{\Phi_\ell}\,:=\, \cW\ket{\theta}^{\otimes \ell}_{CE} \,=\, \sqrt{1-p_\ell} \,\ket{\varphi}^{\otimes m_\ell}_{DG} + \sqrt{p_\ell} \,\ket{\zeta^\perp}_{D^{m_\ell}G^{m_\ell}}\,,\qquad p_\ell \overset{\textnormal{\Cshref{thm_moving-input-it2-4}}}{\leq} 2\eps_\ell\,,
\end{align}
where the vector $ \ket{\zeta^\perp}_{D^{m_\ell}G^{m_\ell}}$ satisfies $ \braket{\zeta^\perp | \varphi}^{\otimes m_\ell} = 0$.

%Note that w.l.o.g., we can assume $\braket{\tilde{\sigma}| \cW | \theta}^{\otimes n} \geq 0$ (otherwise, we may rescale \smash{$\ket{\Sigma}_{D^{m_n} E^nF^{(n)}}$} by a corresponding phase factor). Then,~\cref{lem_basis-unitary} implies that there exists a unitary $\tilde{V}$ on the space ${D^{m_n}G^{m_n}H}$ such that
% \begin{align} \label{thm_moving-input-it2-6}
%    \tilde{V}\,\cW\ket{\theta}_{CE}^{\otimes n}=\ket{\tilde{\sigma}}_{D^{m_n}G^{m_n}H} 
% \end{align} 
% and
% \begin{align} \label{thm_moving-input-it2-7}
% \norm{\id_{D^{m_n}G^{m_n}H} - \tilde{V}}_{\infty} \,=\, \Big\|\cW\ket{\theta}_{CE}^{\otimes n}-\ket{\tilde{\sigma}}_{D^{m_n}G^{m_n}H} \Big\|_2 \,\overset{\textnormal{\Cshref{lem_norm-identities} \eqref{lem_norm-identities-it4}}}{\leq}\, \sqrt{2}\varepsilon_n \,,
% \end{align}
% where $\norm{\cdot}_2$ denotes the Euclidean norm and $\norm{\cdot}_{\infty}$ is the corresponding operator norm. Correspondingly, we may also define the isometry
% \begin{align} \label{thm_moving-input-it2-10}
%\tilde{\cW} \,:\,C^n E^n \to D^{m_n} G^{m_n}H\,,\quad \tilde{\cW} := \tilde{V}\circ \cW \,.
%\end{align}
%\medskip
%\noindent \textbf{Step 3:} \textit{(Evaluating $\pmb{\cQ}^{(r)}(\pmb{\rho})$)} 
\medskip
With the above preparation, especially~\cref{thm_moving-input-it1-2,thm_moving-input-it2-8,thm_moving-input-it2-9,thm_moving-input-it2-6}, we can now rewrite the action of the free operation $\pmb{\cQ}^{(r)}$ on the input almost-iid state sequence $\pmb{\rho}$ for each $N\in \N$ (and recalling that $N = \ell_N s_N + k_N$) as
\begin{align} \label{thm_moving-input-it2-12}
\cQ^{(r)}_N\big(\rho^{(N)}\big) \,&=\, \tr_{H} \, \tr_{G^{m_\ell s}}\Big[\big(\cW^{\otimes s} \otimes \mathrm{id}_{H} \big) \proj{\bar{\rho}}_{C^{\ell s}E^{\ell s} H}  \big(\cW^{\otimes s} \otimes \mathrm{id}_{H} \big)^\dagger \Big] \,,
%\,\overset{(\star)}&{\simeq}\, \tr_{\bar{H}}\,\tr_{H^s} \, \tr_{G^{m_ns}}\Big[\big(\tilde{\cW}^{\otimes s} \otimes \id_{\bar{H}} \big) \proj{\bar{\rho}}_{C^{ns}E^{ns} \bar{H}}  \big(\tilde{\cW}^{\otimes s} \otimes \id_{\bar{H}} \big)^\dagger \Big] \,.  \label{thm_moving-input-it3-0}
\end{align}
and hence, it suffices to analyze the vector
\begin{align}  \label{thm_moving-input-it2-7}
\big(\cW^{\otimes s} \otimes \mathrm{id}_{H} \big) \ket{\bar{\rho}}_{C^{\ell s}E^{\ell s} H} \,\overset{\textnormal{\Cshref{thm_moving-input-it2-9}}}&{=}\, \sum_{i \in \cT_N} \alpha_i \ket{h_i}_{H} \otimes \cW^{\otimes s} \ket{\Psi_i}_{C^{\ell s}E^{\ell s}} \,.
\end{align}
Because $\ket{\Psi_i}_{C^{\ell s}E^{\ell s}}  \in \cV^{\ell s}_{r_N}\big(\cH_{CE},\,\ket{\theta}_{CE} \big)$ for any $i\in \cT_N$, we may write it as
\begin{align}
\ket{\Psi_i}_{C^{\ell s}E^{\ell s}}  \,=\, \pi_i \big(\ket{\theta}^{\otimes \ell \tilde{s}}_{CE} \otimes \ket{\Omega_i}_{C^{\ell(s-\tilde{s})}E^{\ell(s-\tilde{s})}} \big)\,,\qquad \tilde{s} := \max\{ s - r_N, 0\}
\end{align}
where $\pi_i \in S_s$ denotes a (simultaneous) permutation of the $s$ blocks, $\tilde{s}$ is the minimal number of unfaulty (iid) blocks of the form $\ket{\theta}^{\otimes \ell}_{CE}$, and $\ket{\Omega_i}$ represents any `trash'-vector on the remaining faulty blocks. Thus, for any $i\in \cT_N$, it holds
\begin{align} \label{thm_moving-input-it2-10}
\cW^{\otimes s} \ket{\Psi_i}_{C^{\ell s}E^{\ell s}} \,&=\,  \pi_i \Big(\cW^{\otimes \tilde{s}}  \ket{\theta}^{\otimes \ell \tilde{s}}_{CE} \otimes \cW^{\otimes s-\tilde{s}}  \ket{\Omega_i}_{C^{\ell(s-\tilde{s})}E^{\ell(s-\tilde{s})}} \Big)  \\
\overset{\textnormal{\Cshref{thm_moving-input-it2-6}}}&{=} \, \pi_i \Big( \ket{\Phi_\ell}^{\otimes \tilde{s}} \otimes   \ket{\tilde{\Omega}_i}_{D^{m_\ell(s-\tilde{s})}G^{m_\ell(s-\tilde{s})} } \Big)
\end{align}
with $ \ket{\tilde{\Omega}_i} :=  \cW^{\otimes s-\tilde{s}}  \ket{\Omega_i} $. Furthermore, notice that since $s_N = N^{1-o(1)}$, the number clean blocks scales as
\begin{align} \label{thm_moving-input-it2-11}
\lim_{N\to\infty}\frac{\tilde{s}_N}{s_N} \,=\,1 \quad \implies \quad \tilde{s}_N = N^{1-o(1)}\,.
\end{align}

%Because of~\cref{thm_moving-input-it2-11} and due to~\cref{lem_purif_almost_iid}, the vector $\ket{\tilde{\sigma}}^{\otimes \tilde{s}}_{D^{m_n}G^{m_n}H}$ satisfies
%\begin{align}
%\supp \big( \tr_{ H^{\tilde{s}} } \big[ \proj{\tilde{\sigma}}^{\otimes \tilde{s}}_{D^{m_n}G^{m_n}H} \big] \big) \,\subseteq\, \mathrm{span}\, \cV\big(\cH^{\ox \tilde{s} m_n}_{DG},\,\ket{\varphi}_{DG}^{\ox \tilde{s} m_n -\tilde{s} \lceil L'm_n^{\kappa'} \rceil} \big) \,,
%\end{align}
%and hence,
%\begin{align}
%\supp \big( \tr_{H^s} \big[ \tilde{\cW}^{\otimes s} \proj{\Psi_i}_{C^{ns}E^{ns}} \tilde{\cW}^{\otimes s\dagger} \big] \big) \,\subseteq\, \mathrm{span}\, \cV\big(\cH^{\ox s m_n}_{DG},\,\ket{\varphi}_{DG}^{\ox s m_n -\tilde{s} \lceil L'm_n^{\kappa'} \rceil - m_n(s-\tilde{s})} \big) \,,
%\end{align}
%for all $i\in \cT_N$, leading to
%\begin{align} \label{thm_moving-input-it3-3}
%\supp \Big(\tr_{\bar{H}}\, \tr_{H^s} \Big[ \big(\tilde{\cW}^{\otimes s} \otimes \id_{\bar{H}} \big) \proj{\bar{\rho}}_{C^{ns}E^{ns} \bar{H}} \big(\tilde{\cW}^{\otimes s} \otimes \id_{\bar{H}} \big)^\dagger \Big] \Big) \,\subseteq\, \mathrm{span}\, \cV\big(\cH^{\ox s m_n}_{DG},\, \ket{\varphi}_{DG}^{\ox s m_n - e_n} \big) \,,
%\end{align}
%where $e_n := \tilde{s} \lceil L'm_n^{\kappa'} \rceil + m_n(s-\tilde{s})$ denotes the number of defects.
\medskip
\noindent \textbf{Step 3:} \textit{(Coherent binomial truncation)} Next, we aim to show that the output state $\cQ^{(r)}_N\big(\rho^{(N)}\big)$ is close to a (generalized) power-law sublinear almost-iid state along $\sigma$. To do so, we focus on the states $\cW^{\otimes s} \ket{\Psi_i}$ for each $i\in \cT_N$ and analyse the action on the clean $\ket{\theta}^{\otimes \ell}$ and defective $\ket{\Omega_i}$ input blocks separately. For the clean blocks, the main conceptual idea is to turn the approximation error, which is represented by the $p_\ell$ factor in~\cref{thm_moving-input-it2-6}, into additional defects in the output, similarly as was done in the proof of~\cref{prop_close-almost-iid}. The defective input blocks, on the other hand, will simply remain defected blocks in the output state.

Formally, we have $\tilde{s}$ clean input blocks that are transformed via $\cW$ into the state $\ket{\Phi_\ell}^{\otimes \tilde{s}}$, which can be expanded as
\begin{align}
\ket{\Phi_\ell}^{\otimes \tilde{s}} \,=\, \sum_{J\subseteq [\tilde{s}]} (1-p_\ell)^{\frac{\tilde{s}-|J|}{2}} p_\ell^{\frac{|J|}{2}}\ket{\Phi_{\ell,J}}\,,\qquad  \ket{\Phi_{\ell,J}} \,:=\, \bigotimes_{t\in [\tilde{s}]\setminus J} \ket{\varphi}^{\otimes m_\ell}_{DG,t}\, \bigotimes_{t\in J} \ket{\zeta^\perp}_{D^{m_\ell}G^{m_\ell},t}
\end{align}
where the sum ranges over all subsets $J \subseteq \{1,\dots,\tilde{s} \}$ denoting the places of occurrence of an orthogonal block state $\ket{\zeta^\perp}_{D^{m_\ell}G^{m_\ell}}$ in the binomial expansion. Clearly, the state $\ket{\Phi_\ell}^{\otimes \tilde{s}}$ is itself not almost-iid along $\ket{\varphi}_{DG}$ with a small number of defects since it always contains a component with $|J| = \tilde{s}$ defects. However, it is possible to find such an almost-iid state that is close to $\ket{\Phi_\ell}^{\otimes \tilde{s}}$. To find that state, the idea is to truncate the sum in the expansion above in a way that the number of defects remain power-law sublinear in $N$ while ensuring the weight of the discarded component to be subexponentially small in $N$. More precisely, we fix any parameter $\lambda \in (\kappa, 1)$ (recalling that $r_N = \lceil L N^\kappa \rceil$ is the number of defects of the input state $\rho^{(N)}$) and define
\begin{align}
a_N \,:=\, \lceil N^\lambda \rceil\,, \qquad \ket{\Phi'_\ell} \,:=\,  \sum_{\substack{J\subseteq [\tilde{s}] \\ |J| < a_N }} (1-p_\ell)^{\frac{\tilde{s}-|J|}{2}} p_\ell^{\frac{|J|}{2}}\ket{\Phi_{\ell,J}}\,,
%  \qquad \ket{\tilde{\Phi}_\ell}_{D^{m_\ell \tilde{s}}G^{m_\ell \tilde{s}}} := \frac{\ket{\Phi'_\ell}}{\norm{\Phi'_\ell}_2}\,,
\end{align}
and the discarded (unnormalized) component is
\begin{align}
 \ket{R_\ell} \,:=\,  \sum_{\substack{J\subseteq [\tilde{s}] \\ |J| \geq a_N }} (1-p_\ell)^{\frac{\tilde{s}-|J|}{2}} p_\ell^{\frac{|J|}{2}}\ket{\Phi_{\ell,J}}\,.
\end{align}
Note that~\cref{thm_moving-input-it2-11} ensures that $\tilde{s} > a_N$ for sufficiently large values of $N\in \N$. Since the vectors $\ket{\Phi_{\ell,J}}$ are normalized and orthogonal for different subsets $J \neq J'$, the norm of the discarded component can be calculated as
\begin{align} \label{thm_moving-input-it3-1}
\big\| \ket{R_\ell} \big\|^2_2 \,=\,  \sum_{\substack{J\subseteq [\tilde{s}] \\ |J| \geq a_N }} (1-p_\ell)^{\tilde{s}-|J|} p_\ell^{|J|} \,=\, \sum_{j=a_N}^{\tilde{s}} \binom{\tilde{s}}{j}  (1-p_\ell)^{\tilde{s}-j} p_\ell^{j}\,. 
\end{align}
The expression in~\cref{thm_moving-input-it3-1} can be related to a random variable $Z \sim \mathrm{Bin}(\tilde{s}, p_\ell)$ that follows a binomial distribution, namely
\begin{align} \label{thm_moving-input-it3-2}
\Pr\big[Z\geq a_N \big] \,=\,   \sum_{j=a_N}^{\tilde{s}} \binom{\tilde{s}}{j}  (1-p_\ell)^{\tilde{s}-j} p_\ell^{j} \,\leq\,  \left( \frac{e \mu_N}{a_N} \right)^{a_N}\,,
\end{align}
where the last bound corresponds to Chernoff's inequality, see~\cite[Theorem 2.3.1]{Vershynin_2018}, and we further used that $\mu_N := \E[Z] = p_\ell \tilde{s} $ and $a_N \geq \mu_N$ for sufficiently large values of $N\in \N$ since $\lim_{N\to\infty} a_N = \infty$ and
\begin{align} \label{thm_moving-input-it3-3}
\mu_N \overset{\textnormal{\Cshref{thm_moving-input-it2-6}}}{\leq}  2\eps_{\ell_N}\tilde{s}_N \overset{\textnormal{\Cshref{thm_moving-input-it2-11}}}{\leq} 2^{-\ell_N^\alpha + \log(N)+1 } \,\overset{\textnormal{\Cshref{thm_moving-input-it1-0}}}{=}\, 2^{- \ell_N^\alpha(1 + o(1)) } \ \xrightarrow{N\to \infty} \ 0 \,.
\end{align}
Furthermore,~\cref{thm_moving-input-it3-3} can be used to bound~\cref{thm_moving-input-it3-2} by
\begin{align}
 \left( \frac{e \mu_N}{a_N} \right)^{a_N} =\, 2^{a_N(\log\mu_N -\log a_N + \log e)} \overset{\textnormal{\Cshref{thm_moving-input-it3-3}}}{\leq} 2^{-a_N \ell_N^\alpha(1 + o(1)) }\,,
\end{align}
where we used again that $a_N = \lceil N^\lambda \rceil$ and $\lim_{N\to\infty} \frac{\log N}{\ell_N^\alpha} = 0$. To sum up, we found that
\begin{align}  \label{thm_moving-input-it3-4}
\big\| \ket{R_\ell} \big\|_2 \,=\, \sqrt{\Pr\big[Z\geq a_N \big]} \,\leq\,2^{-\frac{1}{2}a_N \ell_N^\alpha(1 + o(1)) }
\end{align}
for sufficiently large $N\in \N$.

\medskip
Hence, we may express the output state in~\cref{thm_moving-input-it2-10} as
\begin{align}   \label{thm_moving-input-it3-5}
\cW^{\otimes s} \ket{\Psi_i}_{C^{\ell s}E^{\ell s}} \,=\,  \underbrace{\pi_i \Big(\ket{\Phi'_\ell} \otimes   \ket{\tilde{\Omega}_i}_{D^{m_\ell(s-\tilde{s})}G^{m_\ell(s-\tilde{s})} } \Big)}_{=: \,\ket{G_i}_{D^{m_\ell s}G^{m_\ell s}}} +  \underbrace{\pi_i \Big(\ket{R_\ell} \otimes   \ket{\tilde{\Omega}_i}_{D^{m_\ell(s-\tilde{s})}G^{m_\ell(s-\tilde{s})} } \Big) }_{=:\, \ket{R_{\ell,i}}_{D^{m_\ell s}G^{m_\ell s}}}
\end{align}
and consequently,~\cref{thm_moving-input-it2-7} can be rewritten as
\begin{align}   \label{thm_moving-input-it3-6}
\big(\cW^{\otimes s} \otimes \mathrm{id}_{H} \big) \ket{\bar{\rho}}_{C^{\ell s}E^{\ell s} H} \,=\,\sum_{i \in \cT_N} \alpha_i \ket{h_i}_{H} \otimes \ket{G_i}_{D^{m_\ell s}G^{m_\ell s}} + \underbrace{\sum_{i \in \cT_N} \alpha_i \ket{h_i}_{H} \otimes \ket{R_{\ell,i}}_{D^{m_\ell s}G^{m_\ell s}}}_{=:\, \ket{R_N}} \,.
\end{align}
The norm of the second component in~\cref{thm_moving-input-it3-6}, that will be discarded, can by bounded as
\begin{align}   \label{thm_moving-input-it3-7}
\big\| \ket{R_N} \big\|_2  \overset{\textnormal{triangle}}{\leq} \sum_{i \in \cT_N}| \alpha_i |\, \big\| \ket{h_i}_{H}  \ket{R_{\ell,i}}_{D^{m_\ell s}G^{m_\ell s}} \big\|_2  \,=\, \sum_{i \in \cT_N}| \alpha_i |\, \big\| \ket{R_{\ell}} \big\|_2 \overset{\textnormal{\Cshref{thm_moving-input-it3-4}}}{\leq} |\cT_N| \,2^{-\frac{1}{2}a_N \ell_N^\alpha(1 + o(1)) }\,.
\end{align}
Recalling from~\cref{rmk_properties_almost_iid_states} that $|\cT_N| \leq 2^{N h(r_N/N) +r_N \log(d_{CE})}$ with $r_N = \lceil L N^\kappa \rceil$, we can bound
\begin{align}
\log |\cT_N|\,\leq\, CN^\kappa \log N
\end{align}
for some constant $C> 0$. Furthermore, for $a_N = \lceil N^\lambda \rceil$ with $\lambda \in (\kappa ,1)$ and $\ell_N $ defined in~\cref{thm_moving-input-it1-0}, we have
\begin{align}
\lim_{N\to \infty} \frac{\log |\cT_N|}{\tfrac{1}{2}a_N \ell_N^\alpha} \,\leq\, \lim_{N\to \infty} 2C \, \frac{N^\kappa \log N}{N^\lambda \ell_N^\alpha} \,=\, 0\,,
\end{align}
and consequently,
\begin{align}  \label{thm_moving-input-it3-8}
\big\| \ket{R_N} \big\|_2 \,\leq\, 2^{-\frac{1}{2}a_N \ell_N^\alpha(1 + o(1)) }\,.
\end{align}
This implies that for any parameter $\nu\in(0,\lambda)$, there exists $N_0 \in \N$ such that $\| \ket{R_N} \|_2 \leq 2^{-N^\nu}$ for all $N \geq N_0$, see also~\cref{def_subexp-equivalence,ex_subexp-function}. Thus, we have shown that the weight of the discarded component $\| \ket{R_N} \|_2$ is subexponentially decreasing in $N$.

\medskip
\noindent \textbf{Step 4:} \textit{(The truncated state is almost-iid)}
Proceeding from~\cref{thm_moving-input-it3-6}, we define the truncated state for sufficiently large $N \in \N$ by
\begin{align}   \label{thm_moving-input-it4-1}
\ket{\hat{G}_N}_{D^{m_\ell {s}}G^{m_\ell {s}} H} := \frac{\ket{G_N}}{\norm{\ket{G_N}}_2}\,,\qquad \ket{G_N} \,:=\, \sum_{i \in \cT_N} \alpha_i \ket{h_i}_{H} \otimes \ket{G_i}_{D^{m_\ell s}G^{m_\ell s}} \,.
\end{align}
By construction, it holds that
\begin{align}  \label{thm_moving-input-it4-2}
\ket{G_i}_{D^{m_\ell s}G^{m_\ell s}} \,\in \, \mathrm{span}\, \cV^{m_\ell s}_{\tilde{r}_N}\big(\cH_{DG},\,\ket{\varphi}_{DG} \big)\,,\qquad \tilde{r}_N \,:=\, m_\ell(r_N +a_N)
\end{align}
for all $i\in \cT_N$. Intuitively, the number $\tilde{r}_N$ of output defects arises from the $r_N$ faulty input blocks and from the approximation error, which is truncated to retain at most $a_N$ faulty blocks. Hence, it follows from~\cref{lem_purif_almost_iid} that
\begin{align}  \label{thm_moving-input-it4-3}
\supp\big(\tr_H\big[\proj{\hat{G}_N}_{D^{m_\ell {s}}G^{m_\ell {s}} H}\big]\big) \,\subseteq\, \mathrm{span}\, \cV^{m_\ell s}_{\tilde{r}_N}\big(\cH_{DG},\,\ket{\varphi}_{DG} \big)\,.
\end{align}
On the other hand, because $m_\ell = \Theta(\ell_N)$, $\ell_N = N^{o(1)}$ and $\lambda > \kappa$, the number of output defects scales as $\tilde{r}_N = N^{\lambda + o(1)}$. Since $\tilde{m}_N = \Theta(N)$ due to~\cref{thm_moving-input-it1-3}, this implies that there exist constants $\tilde{L} \geq 0$ and $\tilde{\kappa} \in (\lambda,1)$ such that
\begin{align}  \label{thm_moving-input-it4-4}
\tilde{r}_N \,\leq\, \big\lceil  \tilde{L}\tilde{m}_N^{\tilde{\kappa}} \big\rceil \,.
\end{align}
for all $N \in \N$. Hence, after a random permutation, we find that the following state
\begin{align}  \label{thm_moving-input-it4-5}
\hat{\omega}^{(\tilde{m}_N)} \,:=\,  \cS_{\tilde{m}_N} \circ  \tr_{G^{m_\ell {s}}}\circ \tr_H\big[\proj{\hat{G}_N}_{D^{m_\ell {s}}G^{m_\ell {s}} H}\big]
\end{align}
is a $\binom{\tilde{m}_N}{\tilde{r}_N}$-almost-iid state along $\sigma$ due to~\cref{thm_moving-input-it4-3}. 
%Finally, we define a state sequence by $\hat{\pmb{\omega}} := \big(\hat{\omega}^{(\tilde{m}_N)} \big)_{N\in\N}$

\medskip
\noindent We emphasise that while the construction of the states $\hat{\omega}^{(\tilde{m}_N)}$ depended on the number of defects $r_N$ via the parameter $\kappa$ and hence, on the input sequence $\pmb{\rho} \in \cA(\rho)$, the free operation $\pmb{\cQ}^{(r)}$ remained unchanged and is hence universally applicable to \emph{any} input sequence. In other words, the coherent binomial truncation serves as a proof technique and does not represent any physical operation.

\medskip
\noindent \textbf{Step 5:} \textit{(Conclusion)}
Finally, recall~\cref{thm_moving-input-it1-4,thm_moving-input-it2-12,thm_moving-input-it4-5}. For all sufficiently large values of $N\in \N$, we can then estimate
\begin{align}
\tfrac{1}{2} \Big\| \tilde{\cQ}^{(r)}_N ( \rho^{(N)})& -  \hat{\omega}^{(\tilde{m}_N)}\Big\|_1 \nonumber \\
\overset{\textnormal{DPI}}&{\leq} \tfrac{1}{2}\norm{\big(\cW^{\otimes s} \otimes \mathrm{id}_{H} \big) \proj{\bar{\rho}}_{C^{\ell s}E^{\ell s} H}  \big(\cW^{\otimes s} \otimes \mathrm{id}_{H} \big)^\dagger -  \proj{\hat{G}_N}_{D^{m_\ell {s}}G^{m_\ell {s}} H}}_1 \\
\overset{\textnormal{\Cshref{lem_norm-identities} \eqref{lem_norm-identities-it3}}}&{\leq} \, \norm{\big(\cW^{\otimes s} \otimes \mathrm{id}_{H} \big) \ket{\bar{\rho}}_{C^{\ell s}E^{\ell s} H} -  \ket{\hat{G}_N}_{D^{m_\ell {s}}G^{m_\ell {s}} H}}_2 \\
\overset{\textnormal{triangle}}&{\leq}  \Big\|\big(\cW^{\otimes s} \otimes \mathrm{id}_{H} \big) \ket{\bar{\rho}}_{C^{\ell s}E^{\ell s} H} - \ket{G_N}\Big\|_2 + \Big\| \ket{G_N} -  \ket{\hat{G}_N}_{D^{m_\ell {s}}G^{m_\ell {s}} H}\Big\|_2 \\
\overset{\textnormal{\Cshref{thm_moving-input-it3-6}}}&{=} \, \big\| \ket{R_N}\big\|_2 + \Big| \big\|\ket{G_N}\big\|_2 - 1\Big| \\
\overset{\textnormal{triangle, \Cshref{thm_moving-input-it3-6}}}&{\leq} \, 2\,\big\| \ket{R_N}\big\|_2 \quad \xrightarrow{\textnormal{\Cshref{thm_moving-input-it3-8}}} \quad \textrm{subexponentially decreasing in $N$}\,. \label{thm_moving-input-it5-0}
\end{align}
For the finitely many small values of $N\in \N$, we may define the states $ \hat{\omega}^{(\tilde{m}_N)}$ arbitrarily. Then, the states $\{ \hat{\omega}^{(\tilde{m}_N)}\}_{N\in \N}$ define a state sequence $ \hat{\pmb{\omega}} $ by means of~\cref{not_channel-seq} which satisfies
% $ \hat{\pmb{\omega}} := \big( \bar{\omega}^{(M)}\big)_{M\in \N}$, with $\bar{\omega}^{(M)} :=  \hat{\omega}^{(\tilde{m}_{\bar{N}(M)})}$ and $\bar{N}(M) := \min\{ N\in \N \,:\, \tilde{m}_N = M \}$, defines a state sequence that satisfies 
$ \hat{\pmb{\omega}} \,\in \, \cA(\sigma)$ due to~\cref{thm_moving-input-it4-4,thm_moving-input-it4-5}. Therefore,~\cref{thm_moving-input-it5-0} shows that
\begin{align}
\tilde{\pmb{\cQ}}{\vphantom{\pmb{\cQ}}}^{(r)}(\pmb{\rho})\, \simeq \, \hat{\pmb{\omega}} \, \in\, \cA(\sigma)
\end{align}
as desired. Hence, since $\pmb{\rho} \in \cA(\rho)$ was arbitrary, the rate $r$ is achievable for $\tilde{R}_\F(\rho \to \sigma)$, meaning that $r\leq  \tilde{R}_\F(\rho \to \sigma)$. After taking the supremum over all achievable rates $r$, we find $ R_\F(\rho \to \sigma) \leq \tilde{R}_\F(\rho \to \sigma) $, which proves the claim.
 \end{proof}
 
% \begin{corollary} \label{cor_almost-iid-iid-rates-relation}
% Let $\rho \in \St(\cH_C)$ and $\sigma\in \St(\cH_D)$, and $\F$ be any fixed set of free operations satisfying~\cref{def_QRTs}. Then, the iid asymptotic transformation rate $R_\F$ is upper bounded by the almost-iid (or operational) asymptotic transformation rate $\tilde{R}_\F$, i.e., 
% \begin{align}
%R_\F(\rho \to \sigma) \,\leq\,  \tilde{R}_\F(\rho \to \sigma)\,.
% \end{align}
% \end{corollary}
% \begin{proof}
%Let $r\geq0$ be achievable for $R_\F(\rho \to \sigma)$, i.e., there exists a sequence of free operations ${\pmb{\mathcal{P}}}^{(r)} \in \F$ such that ${\pmb{\mathcal{P}}}^{(r)}\big( \pmb{\rho}^{\mathrm{iid}}\big)  \simeq \pmb{\sigma}^{\mathrm{iid}}$. Since $ \pmb{\sigma}^{\mathrm{iid}} \in \mathcal{A}(\sigma)$, we have that ${\pmb{\mathcal{P}}}^{(r)}(\pmb{\rho}^{\mathrm{iid}}) \in  B\big(\mathcal{A}(\sigma)\big) $. Thus, it follows immediately by~\cref{thm_moving-input-to-output} that $r$ is also achievable for $\tilde{R}_\F(\rho \to \sigma)$. After taking the supremum over all achievable rates $r$, the claim follows.
%\end{proof}

\medskip
\noindent
\textbf{Relative entropy of resource:} Let $\F$  be any fixed set of free operations satisfying~\cref{def_QRTs}, and for any physical system $\cH_C$, let $\F(C)$ denote the corresponding set of free operations. In the following, we define the family of sets $(\F_n)_{n\in \N}$ with $\F_n := \F(C^n)$. Assume the following properties:
\begin{enumerate} \label{list_BP-axioms}
\item The set $\F_n$ is closed and convex (as a subset of $\St(C^{n})$).
\item The set $\F_n$ is closed under permutations of the subsystems of $C^n$.
\item The family $(\F_n)_{n\in \N}$ is closed under tensor products: If $\sigma \in \F_n$ and $\omega \in \F_m$ then $\sigma \otimes \omega \in \F_{mn}$.
\item The family $(\F_n)_{n\in \N}$ is closed under the partial trace: $\rho_{n+1} \in \F_{n+1}$ implies $\rho_n = \tr_1[\rho_{n+1}] \in \F_n$ (where $\tr_1$ denotes the partial trace over the last subsystem).
\item Full-rank state: There exists $\gamma \in \F_1$ with $\gamma > 0$.
\end{enumerate}
Note that the requirements in~\cref{def_QRTs} actually imply all of the above properties except for the closedness of $\F_n$ and the existence of a full-rank state. Nevertheless, when considering asymptotic state transformation rates, one may replace the set of free operations $\F$ by its closure $\overline{\F}$ without altering the rates. In that case, closedness of $\F_n$ is also implied. Another example concerns the set of $\LOCC$ operations which is convex but not closed~\cite{LOCC14}. Nevertheless, the corresponding set of free operations is given by the set of separable states, $\F_n =  \mathrm{SEP}(A^n:B^n)$, which is closed~\cite{chit19}. On the other hand, the existence of a full-rank state is known to be granted in many quantum resource theories of interest, such as in the case of $\LOCC$ operations. The above properties are also referred to as Brand{\~a}o-Plenio axioms~\cite{brandao_Stein_10}. We further refer to~\cite{Kuroiwa-Yamasaki2020} for a general treatment of quantum resource theories with more minimal assumptions.

Then, for any $\rho \in \St(\cH)$ and set $\cM \subseteq  \St(\cH)$ of states, the \emph{relative entropy of resource}~\cite{VPRK97} is defined as
\begin{align} \label{eq_def_ER}
D(\rho\| \cM) :=  \inf_{\sigma \in \cM} D(\rho \| \sigma) \, .
\end{align}
For a family of free states $(\F_n)_{n\in \N}$ as specified above, its regularized version is given by
\begin{align} \label{eq_def_regul_2}
D^\reg(\rho \| \F) 
:=\lim_{n \to \infty} \frac{1}{n} D(\rho^{\otimes n} \| \F_n)
=\inf_{n \in \N} \frac{1}{n} D(\rho^{\otimes n} \| \F_n)\, ,
\end{align}
which is well-defined as ensured by Fekete's lemma and the requirements listed above.
\begin{proposition} \label{prop_rates-upper-bound}
Let $\rho \in \St(\cH_C)$, $\sigma\in \St(\cH_D)$, and let $\F$ be any fixed set of free operations satisfying~\cref{def_QRTs} and containing a full-rank free state. Then,
\begin{align}
 \tilde{R}_\F(\rho \to \sigma) \,\leq\, \frac{D^\reg(\rho \| \F)}{D^\reg(\sigma \| \F)}\,.
\end{align}
\end{proposition}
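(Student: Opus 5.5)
We plan to prove the bound via a monotonicity argument for the relative entropy of resource, applied to an achieving protocol on iid inputs. The key is that almost-iid outputs have the same regularized relative entropy of resource as iid ones, up to sublinear corrections.

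Let $r>0$ be achievable for $\tilde{R}_\F(\rho\to\sigma)$, with protocol $\pmb{\cP}^{(r)}$, and assume $D^\reg(\sigma\|\F)>0$. Since $\pmb{\rho}^{\mathrm{iid}}\in\cA(\rho)$, we get $\pmb{\cP}^{(r)}(\pmb{\rho}^{\mathrm{iid}})\simeq\pmb{\tau}$ for some $\pmb{\tau}\in\cA(\sigma)$. So $\omega_n:=\cP^{(r)}_n(\rho^{\otimes n})$ is within a subexponentially small trace distance $\eps_n$ of a $\binom{m_n}{r_n}$-almost-iid state $\tau^{(m_n)}$ along $\sigma$, with $r_n\le \lceil L m_n^\kappa\rceil$.

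\textbf{Upper bound on the output.} By the Golden Rule, free operations map $\F_n$ into $\F(D^{m_n})$, so the data-processing inequality gives
\begin{align}
D(\rho^{\otimes n}\|\F_n)\,\geq\, D(\omega_n\|\F(D^{m_n})).
\end{align}

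\textbf{Continuity step.} We pass from $\omega_n$ to $\tau^{(m_n)}$ using asymptotic continuity of the relative entropy of resource. Because a full-rank free state $\gamma$ exists, $D(\cdot\|\F_m)\le m\log(1/\lambda_{\min}(\gamma))$, and the standard Winter/Donald–Horodecki-type bound yields
\begin{align}
|D(\omega\|\F_m)-D(\omega'\|\F_m)|\le \eps\, m\, c+ (1+\eps)h\!\big(\tfrac{\eps}{1+\eps}\big)
\end{align}
for $\tfrac12\|\omega-\omega'\|_1\le\eps$. With $\eps_n$ subexponential and $m_n=\Theta(n)$, this correction is $o(n)$.

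\textbf{Main obstacle: lower bound on the almost-iid state.} We need
\begin{align}
D(\tau^{(m)}\|\F_m)\ge m D^\reg(\sigma\|\F)-o(m).
\end{align}
First we would use the operator inequality from the proof of \cref{prop_almost-iid-in-OP}: $\tau^{(m)}\le|\cT_m|\sum_j\beta_{jj}\,\tr_E\proj{\Psi_j}$, where each term is a permutation of $\sigma^{\otimes m-r}\otimes\Omega_j$ and $\log|\cT_m|=o(m)$. This inequality points the wrong way for lower-bounding $D(\tau\|\cdot)$ directly.

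Instead we go through partial traces. Tracing out the $r$ systems is not enough, since their positions are unknown. We therefore take the measured/hypothesis-testing route. For any $\sigma_F\in\F_m$, we would use the operator inequality $\tau^{(m)}\le 2^{o(m)}\,\overline{\tau}$, with $\overline{\tau}$ a mixture of product states $\sigma^{\otimes m-r}\otimes\Omega$. Combined with Stein-type characterizations (generalized quantum Stein's lemma / Brand\~ao–Plenio, valid under the listed axioms), this bounds the hypothesis-testing divergence of $\tau^{(m)}$ against $\F_m$ from below. Concretely, take a test $T$ optimal for $\sigma^{\otimes m'}$ versus $\F_{m'}$, with $m'=m-\lceil m^{\lambda}\rceil$, and apply it after tracing out a random subset of $m-m'$ systems. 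With high probability this subset covers the defects only partially, but the $1-o(1)$ acceptance on $\sigma^{\otimes m'}$ extends to the almost-iid reduced state via \cref{lem_subpoly-almost-iid} together with the $2^{o(m)}$ operator bound.

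By data processing for $D$ under the test channel, this gives $D(\tau^{(m)}\|\F_m)\gtrsim (1-o(1))\, m D^\reg(\sigma\|\F)-o(m)$. We expect this transfer of Stein exponents to almost-iid states to be the delicate step. If it fails, the fallback is to invoke the known stability result of \cite{MSR2026} for the regularized relative entropy of resource under almost-iid perturbations.

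\textbf{Combining.} Putting the three bounds together, dividing by $n$, and taking $n\to\infty$ yields $D^\reg(\rho\|\F)\ge r\,D^\reg(\sigma\|\F)$. Taking the supremum over achievable $r$ gives the claim.
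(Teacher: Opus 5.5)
Your first two steps coincide with the paper's proof: data processing via the Golden Rule, then Winter's asymptotic continuity \cite[Lem.~7]{win16} with $\kappa_m\le m\log(1/\lambda_{\min}(\gamma))$. Your primary argument for the third step, $D(\tau^{(m)}\|\F_m)\ge m D^\reg(\sigma\|\F)-o(m)$, has a genuine gap.

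The pinching bound $\tau^{(m)}\le|\cT_m|\,\bar\tau$ can only be used on the rejection operator $\id-T$. To get acceptance $1-o(1)$, every defective state $U_\pi(\sigma^{\otimes m'-r'}\otimes\Omega)U_\pi^\dagger$ would have to be rejected with probability $o(1/|\cT_m|)$, uniformly in $\pi$ and $\Omega$, where $\log|\cT_m|$ is of order $m^{\kappa}\log m$. The generalized Stein lemma gives nothing of this kind. It only guarantees vanishing type-I error on exactly iid inputs, with no rate and no robustness to defects. That robustness is what the robustness-under-noise property of \cref{def_tom34} supplies in \cref{prop_almost-iid-in-OP}, which is why the same pinching trick works there.

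In fact Stein-optimal tests can be completely fragile. Take $\F=\LOCC$ and $\sigma=\Phi_2$. The projector $T=\Phi_2^{\otimes m'}$ has type-I error $0$ and $\sup_{\omega\in\mathrm{SEP}}\tr[T\omega]=2^{-m'}$, so it attains the optimal exponent $D^\reg(\Phi_2\|\mathrm{SEP})=1$. Now apply your randomly traced version of $T$ to the $\binom{m}{1}$-almost-iid state $\cS_m(\Phi_2^{\otimes m-1}\otimes\Phi^-)$, where $\Phi^-=\proj{\Phi^-}$. It accepts only when the traced subset happens to contain the defect, i.e.\ with probability $\lceil m^\lambda\rceil/m\to 0$. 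The resulting bound $p\log(1/q)-h(p)$ is then only of order $m^\lambda$.

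Tracing out systems cannot remove defects at unknown, possibly coherently superposed, positions. \cref{lem_subpoly-almost-iid} only says the reduced state is again almost-iid; it says nothing about acceptance probabilities. Building tests that are both defect-robust and keep the type-II exponent against correlated free states is essentially the content of the almost-iid robustness theorem itself.

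The proof therefore stands only through your fallback, which is exactly what the paper does. The relevant result is \cite[Cor.~3.3]{MSR2026-2}; \cite{MSR2026} only raised the question. It states that for $\tau^{(m)}$ that is $\binom{m}{r_m}$-almost-iid along $\sigma$,
\begin{align*}
D\big(\tau^{(m)}\big\|\F_m\big)\,\geq\, D\big(\sigma^{\otimes m}\big\|\F_m\big)-3m\,h\Big(2\sqrt{\tfrac{r_m}{m}}\Big)-12\sqrt{m\,r_m}\,\log\frac{d_D}{\lambda_{\min}(\gamma)}\,,
\end{align*}
and the correction is $o(m)$ for $r_m=\lceil L m^\kappa\rceil$. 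Combined with $D(\sigma^{\otimes m}\|\F_m)\ge m\,D^\reg(\sigma\|\F)$, which follows from the infimum form of the regularization, this gives what you need. Precisely, it yields $\liminf_m \frac{1}{m}D(\tau^{(m)}\|\F_m)\ge D^\reg(\sigma\|\F)$ along the particular sequence $\pmb{\tau}$ produced by the protocol, at blocklengths $m_n$. Your combining step then goes through unchanged.
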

\begin{proof}
Here, we will follow the same logic as in~\cite[Theorem 11.6]{Gour_2025} (see also~\cite{Horodecki2002,chit19}) combined with the insight from~\cite{MSR2026-2} that the relative entropy of resource is robust under almost-iid states. For $ \tilde{R}_\F(\rho \to \sigma) = 0$, the claim immediately follows due to the positivity of the relative entropy. Now let $\tilde{r} > 0$ be achievable for $ \tilde{R}_\F(\rho \to \sigma)$, i.e., there exists a sequence of free operations ${\pmb{\mathcal{P}}}^{(\tilde{r})} \in \F$ such that 
\begin{align} 
\pmb{\mathcal{P}}^{(\tilde{r})}\big(\mathcal{A}(\rho)\big)  \in  B\big(\mathcal{A}(\sigma)\big)\,.
\end{align}
That is, for $\pmb{\rho}^{\mathrm{iid}} \in \cA(\rho)$ there exists an almost-iid sequence $\pmb{\tau} \in \cA(\sigma)$ such that for any $n\in \N$, we have
\begin{align}
\frac{1}{2} \norm{\mathcal{P}^{(\tilde{r})}_n\big( \rho^{\ox n}\big) - \tau^{(m_n)} }_1 \,\leq\, \eps_n\,,
\end{align}
where $\eps_n$ is subexponentially decreasing in $n$ and $(m_n)_{n\in \N}$ %is a monotonically increasing sequence with 
satisfies $\lim_{n\to\infty}\frac{m_n}{n} = \tilde{r}>0$. It is shown in~\cite{win16} that the relative entropy of resource is asymptotically continuous. By further recalling that $\mathcal{P}^{(\tilde{r})}_n( \F_n) \subseteq \F_{m_n}$, we evaluate for all $n \in \N$
\begin{align}
D\big(\rho^{\otimes n}\| \F_n\big) \,\overset{\textnormal{DPI}}&{\geq}\, D\big(\mathcal{P}^{(\tilde{r})}_n( \rho^{\otimes n}) \| \F_{m_n}\big) \\
 \,\overset{\textnormal{\cite[Lem. 7]{win16}}}&{\geq}\, D\big(  \tau^{(m_n)}\| \F_{m_n}\big) - \eps_n \kappa_{m_n} -  (1+\eps_n)h\Big(\frac{\eps_n}{1+\eps_n}\Big) \\
 \overset{\textnormal{\cite[Cor. 3.3]{MSR2026-2}}}&{\geq}\,  D\big(  \sigma^{\otimes m_n}\| \F_{m_n}\big) - \eps_n \kappa_{m_n} -  (1+\eps_n)h\Big(\frac{\eps_n}{1+\eps_n}\Big)\nonumber \\
 &\hphantom{{\geq}\,  D\big(  \sigma^{\otimes m_n}\| \F\big) - \eps_n \kappa_{m_n}aa} - 3m_n h\Big(2 \sqrt{\frac{r_{m_n}}{m_n}}\Big) - 12 \sqrt{m_nr_{m_n}}  \log \frac{d_D}{\lambda_{\min}(\gamma)} \,,
\end{align}
where $\kappa_n := \sup_{\omega^{(n)}} D(\omega^{(n)} \| \F_n) \leq n\log \big( \frac{1}{\lambda_{\min}(\gamma)}\big) $ (e.g., see~\cite[Lemma A.8]{MSR2026-2}) with $\gamma > 0$ denoting the full-rank state that exists by assumption, and $r_n=o(n)$ denotes the number of defects. Hence, dividing both sides of the inequality above by $n$ and taking the limit $n\to \infty$ yields
\begin{align}
D^\reg(\rho \| \F) \,\geq\, \lim_{n\to \infty} \frac{m_n}{n} \frac{1}{m_n}  D\big(  \sigma^{\otimes m_n}\| \F_{m_n}\big) \,=\, \tilde{r} D^\reg(\sigma \| \F) \,.
\end{align}
After taking the supremum over all achievable rates $\tilde{r}$, this completes the proof. Finally, we remark that the statement of~\cref{prop_rates-upper-bound} may hold for more general resource monotones that are robust under almost-iid states. For instance, see~\cite{Ferrari2023} in the iid case. 
\end{proof}

\begin{corollary} \label{cor_basic-rates-relations}
Let $\rho \in \St(\cH_C)$, $\sigma\in \St(\cH_D)$, and $\F$ be as above. For $\sigma \notin \F(D)$, we find
\begin{align}
 \tilde{R}_\F(\rho \to \rho) \,=\,\begin{cases}\infty\,, & \rho \in \F(C)\\ 1\,, &\mathrm{else} \end{cases} \qquad \mathrm{and} \qquad  \tilde{R}_\F(\rho \to \sigma) \,\leq\,  \frac{1}{\tilde{R}_\F(\sigma \to \rho)} \,.
\end{align}
\end{corollary}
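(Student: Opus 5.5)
For the identity rate $\tilde{R}_\F(\rho\to\rho)$ I split into the free and non-free cases. In both I start from the trivial lower bound. Take $\cP_n=\mathrm{id}$, which is free by \cref{def_QRTs}; it maps $\cA(\rho)$ into itself, so $\tilde{R}_\F(\rho\to\rho)\geq 1$ always.

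If $\rho\notin\F(C)$, I use \cref{prop_rates-upper-bound} with $\sigma=\rho$, which gives $\tilde{R}_\F(\rho\to\rho)\leq D^\reg(\rho\|\F)/D^\reg(\rho\|\F)=1$. This needs $D^\reg(\rho\|\F)>0$, and I expect this to be the main subtlety, since $\rho\notin\F_1$ alone does not force positivity of the regularized quantity. I would appeal to the Brand\~ao--Plenio setting (axioms listed above, with the full-rank state $\gamma$), where faithfulness of $D^\reg$ on non-free states is known. Failing that, I would state the corollary under the implicit convention that ``non-free'' means $D^\reg(\cdot\|\F)>0$, which is how the hypothesis ``denominator nonzero'' enters \cref{prop_rates-upper-bound}.

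If $\rho\in\F(C)$, then $\rho^{\ox m}\in\F(C^m)$ for every $m$ by the tensor-product closure. For any $r>0$ define $\cP_n:=\tr_n$ followed by preparation of $\rho^{\ox\lceil rn\rceil}$. This channel is free because discarding is free and preparing a free state is free (implications (ii),(v) after \cref{def_QRTs}). It maps every sequence to $\pmb\rho^{\mathrm{iid}}\in\cA(\rho)$, and the convention of \cref{not_channel-seq} only adds partial traces of the iid state, so the output stays iid. Hence every $r$ is achievable, and the rate is $\infty$.

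For the second inequality, I take $\sigma\notin\F(D)$ and apply \cref{prop_rates-transitivity} with the intermediate state $\sigma$: $\tilde{R}_\F(\sigma\to\sigma)\geq\tilde{R}_\F(\sigma\to\rho)\,\tilde{R}_\F(\rho\to\sigma)$. By the first part, the left side equals $1$, so $\tilde{R}_\F(\rho\to\sigma)\leq 1/\tilde{R}_\F(\sigma\to\rho)$. In the edge case $\tilde{R}_\F(\sigma\to\rho)=0$ the bound reads $\leq\infty$ and holds trivially. In the case $\tilde{R}_\F(\sigma\to\rho)=\infty$, I must show $\tilde{R}_\F(\rho\to\sigma)=0$ under the convention $1/\infty=0$. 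If $\tilde{R}_\F(\rho\to\sigma)>0$, then arbitrarily large finite achievable rates for $\sigma\to\rho$ multiply with a fixed positive achievable rate for $\rho\to\sigma$ to exceed $1$, which contradicts $\tilde{R}_\F(\sigma\to\sigma)=1$.
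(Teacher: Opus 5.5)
Your proposal is correct and follows the paper's proof essentially step by step: the identity channel gives the lower bound $1$, the relative-entropy converse bound (\cref{prop_rates-upper-bound}) gives the upper bound $1$ in the non-free case, tracing out and re-preparing free states gives rate $\infty$ in the free case, and composability through $\sigma\to\rho\to\sigma$ gives the second inequality. The paper simply asserts $0<D^\reg(\rho\|\F)<\infty$ for $\rho\notin\F(C)$ (finiteness is immediate from the full-rank free state $\gamma$). Your appeal to faithfulness in the Brand\~ao--Plenio setting is therefore at the same level of rigor, and your handling of the $0$ and $\infty$ edge cases is slightly more careful than the paper's.
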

\begin{proof}
First, note that $ \tilde{R}_\F(\rho \to \rho) \geq 1$ for any $\rho \in \St(\cH_C)$ since the identity is a free operation, $\mathrm{id}_C \in \F(C\to C)$. If $ \rho \in \F(C)$, then $ \tilde{R}_\F(\rho \to \rho) = \infty$ since tracing and appending free states is a free operation, and this process can be repeated arbitrary many times. For $ \rho \notin \F(C)$, we have $0< D^\reg(\rho \| \F) < \infty$ and thus,~\cref{prop_rates-upper-bound} implies that $ \tilde{R}_\F(\rho \to \rho) \leq 1$ and hence, $ \tilde{R}_\F(\rho \to \rho) = 1$.
For the second relation, we can now use~\cref{prop_rates-transitivity} to evaluate
\begin{align}
1 \,=\,\tilde{R}_\F(\sigma \to \sigma) \,\overset{\textnormal{\Cshref{prop_rates-transitivity}}}{\geq}\, \tilde{R}_\F(\sigma \to \rho)\tilde{R}_\F(\rho \to \sigma)\,,
\end{align}
which proves the claim.
\end{proof}

\subsection{Special case: entanglement theory ($\F=\LOCC$)} \label{sec:LOCC_rates}
%In the following, we focus on the special case where the set of free operations is given by $\LOCC$ (local operations and classical communication). In particular, we study the interconversion from or to multiple copies of entangled Bell pairs.
As an application, we employ the concept operational asymptotic state transformation rates in the context of (bipartite) entanglement measures. More precisely, given a bipartite density operator $\rho_{AB}$ between two distant parties $A$ and $B$, one may operationally characterize the entanglement that is present in $\rho_{AB}$ by inferring the amount of entanglement initially shared between $A$ and $B$ (quantified by a number of shared Bell pairs) that is needed to construct the state $\rho_{AB}$ using only local operations and classical comcharacterizmunication ($\LOCC$) between $A$ and $B$. On the other hand, a different characterization may be given by the amount of entanglement that can be gained from $\rho_{AB}$ by $\LOCC$ operations. These two conceptual ideas lead to the notion of \emph{entanglement cost} and \emph{entanglement distillation}, respectively.

In the formulation of asymptotic entanglement cost and distillation, the conversion tasks are considered not for a single bipartite state $\rho_{AB}$, but for \emph{many} copies of it. Traditionally, these copies are assumed to form the iid state $\rho_{AB}^{\otimes n}$. However, there is no fundamental or operational justification for assuming such an iid structure. This suggests that a more natural operational formulation of entanglement conversion is obtained by replacing the iid assumption with the framework proposed here, namely by defining entanglement cost and distillation in terms of operational asymptotic transformation rates. In this section, we show that the traditional and operational notions are closely related.

Let 
\begin{align}
\Phi_2\, = \, \proj{\Phi_2}\,,\qquad \ket{\Phi_2} \,:= \, \frac{1}{\sqrt{2}}(\ket{00}+\ket{11})
\end{align} 
denote the $2\times 2$ dimensional maximally entangled state (i.e., a Bell state). For $\rho \in \St(\cH_{AB})$, the (iid) entanglement distillation and entanglement cost are defined as
\begin{align} \label{eq_E_D_and_E_C}
E_D(A:B)_{\rho} := R(\rho \to \Phi_2) \qquad \textnormal{and} \qquad E_C(A:B)_{\rho} := R(\Phi_2 \to \rho)^{-1} \, ,
\end{align}
where $ R(\rho \to \sigma) \equiv  R_{\LOCC}(\rho \to \sigma)$ denotes the iid asymptotic transformation rate introduced in~\cref{def_iid_rate}. Note that the iid asymptotic transformation rate is composable in the sense that for any $\rho \in \St(\cH_C)$, $\sigma\in \St(\cH_D)$, $\tau \in \St(\cH_E)$, it holds
\begin{align} \label{eq_triangle_rate}
 R(\rho \to \sigma )\, \geq\, R(\rho \to \tau) R(\tau \to \sigma) \, ,
\end{align}
which follows by analogous arguments as in~\cref{prop_rates-transitivity} (see also~\cite{Horo03,chit19}). This property is natural as it means that transforming $\rho \to \tau \to \sigma$ cannot be more efficient (in terms of rate) than transforming $\rho \to \sigma$ directly. Furthermore, note that since
\begin{align} \label{eq_rho_2_rho}
 R(\Phi_2 \to \Phi_2) = 1\, ,
\end{align}
we have
\begin{align}
1 
\overset{\textnormal{\Cshref{eq_rho_2_rho}}}{=} R(\Phi_2 \to \Phi_2) 
\overset{\textnormal{\Cshref{eq_triangle_rate}}}{\geq}R(\Phi_2 \to \rho) R(\rho \to \Phi_2) 
\overset{\textnormal{\Cshref{eq_E_D_and_E_C}}}{=}E_C(A:B)_{\rho}^{-1} E_D(A:B)_{\rho} \, ,
\end{align}
which reveals the well-known relation $E_D(A:B)_{\rho} \leq E_C(A:B)_{\rho}$.

Analogously, we introduce the notion of \emph{operational entanglement distillation} and \emph{cost} through operational asymptotic state transformation rates.
\begin{definition} \label{def_operational-distillation-cost}
Let $\rho \in \St(\cH_{AB})$ and $\F=\LOCC$. The operational entanglement distillation and entanglement cost are defined as
\begin{align}
\tilde{E}_D(A:B)_{\rho} \,:=\, \tilde{R}_{\LOCC}(\rho \to \Phi_2) \qquad \textnormal{and} \qquad \tilde{E}_C(A:B)_{\rho}\, :=\,  \tilde{R}_{\LOCC}(\Phi_2 \to \rho)^{-1} \, ,
\end{align}
respectively, where $\tilde{R}_{\LOCC}$ denotes the (operational) almost-iid asymptotic transformation rate introduced in~\cref{thrm_rate-tom-indep}.
\end{definition}
\begin{corollary} \label{cor:separation_Dreg}
Let $\rho \in \St(\cH_{AB})$. By~\cref{cor_basic-rates-relations,prop_rates-upper-bound}, and recalling that $D^\reg(\Phi_2\| \mathrm{SEP}) = 1$ where $\mathrm{SEP}$ denotes the set of separable states, we further have the relation
\begin{align}
\tilde{E}_D(A:B)_{\rho} \,\leq \,D^\reg(\rho \| \mathrm{SEP}) \,\leq \,  \tilde{E}_C(A:B)_{\rho}\,.
\end{align}
\end{corollary}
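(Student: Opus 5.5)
The corollary is essentially a bookkeeping consequence of \cref{prop_rates-upper-bound} and \cref{cor_basic-rates-relations} applied with $\F=\LOCC$ and the Bell state $\Phi_2$. Before using those results, I would check that their hypotheses hold. The free states of $\LOCC$ are the separable states $\mathrm{SEP}(A^n:B^n)$. These satisfy the Brand\~{a}o--Plenio properties listed before \cref{prop_rates-upper-bound}: they are closed, convex, permutation closed, closed under tensor products and partial traces, and they contain the full-rank state $\id_{AB}/d_{AB}$. I would also recall the standard fact $D^\reg(\Phi_2\|\mathrm{SEP})=1$: the relative entropy of entanglement of a pure state equals its entanglement entropy and is additive on pure states, so $\frac1n D(\Phi_2^{\otimes n}\|\mathrm{SEP})=1$ for all $n$. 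In particular $\Phi_2\notin\mathrm{SEP}$, which is what \cref{cor_basic-rates-relations} requires.

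For the left inequality, I would apply \cref{prop_rates-upper-bound} with $\sigma=\Phi_2$. Since the denominator is $D^\reg(\Phi_2\|\mathrm{SEP})=1\neq0$, this gives directly
\begin{align}
\tilde{E}_D(A:B)_\rho=\tilde R_{\LOCC}(\rho\to\Phi_2)\le D^\reg(\rho\|\mathrm{SEP}).
\end{align}

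For the right inequality, I would apply \cref{prop_rates-upper-bound} with the roles exchanged, so that the input is $\Phi_2$ and the target is $\rho$. This yields $\tilde R_{\LOCC}(\Phi_2\to\rho)\le 1/D^\reg(\rho\|\mathrm{SEP})$. This is the point where I expect the only real care to be needed, namely in handling the degenerate cases through the conventions for the reciprocal.

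If $D^\reg(\rho\|\mathrm{SEP})>0$, inverting the bound gives $\tilde E_C(A:B)_\rho=\tilde R_{\LOCC}(\Phi_2\to\rho)^{-1}\ge D^\reg(\rho\|\mathrm{SEP})$. Here $\tilde R_{\LOCC}(\Phi_2\to\rho)=0$ is allowed, with the convention $0^{-1}=\infty$.

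If $D^\reg(\rho\|\mathrm{SEP})=0$, the right inequality holds trivially, because $\tilde E_C\ge0$ by definition, with $\tilde R=\infty$ giving $\tilde E_C=0$.

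Combining the two cases gives the claimed chain. As a consistency check, \cref{cor_basic-rates-relations} with $\sigma=\Phi_2$ gives $\tilde R(\rho\to\Phi_2)\le 1/\tilde R(\Phi_2\to\rho)$, that is, $\tilde E_D\le\tilde E_C$, which agrees with the outer inequality.
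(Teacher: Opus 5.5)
Your proposal is correct and matches the paper's argument. You apply \cref{prop_rates-upper-bound} once with target $\Phi_2$ and once with source $\Phi_2$, use $D^\reg(\Phi_2\|\mathrm{SEP})=1$, and invert the second bound to obtain $D^\reg(\rho\|\mathrm{SEP})\le\tilde{E}_C(A:B)_\rho$, with \cref{cor_basic-rates-relations} giving only the outer relation $\tilde{E}_D\le\tilde{E}_C$. Your handling of the degenerate cases ($D^\reg(\rho\|\mathrm{SEP})=0$ or $\tilde{R}_{\LOCC}(\Phi_2\to\rho)=0$) and your check of the hypotheses (a full-rank separable state exists, $\Phi_2\notin\mathrm{SEP}$) make explicit what the paper leaves implicit.
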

\begin{corollary}
Let $\rho \in \St(\cH_{AB})$ and $k\in \N$. By~\cref{prop_rates-weak-additivity}, weak additivity holds for both operational entanglement distillation and cost, i.e.,
\begin{align}
\tilde{E}_D(A:B)_{\rho^{\otimes k }} \,= \, k  \tilde{E}_D(A:B)_{\rho} \qquad \textrm{and} \qquad \tilde{E}_C(A:B)_{\rho^{\otimes k }} \,=\, k \tilde{E}_C(A:B)_{\rho} \,.
\end{align}
\end{corollary}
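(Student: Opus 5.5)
Both the weak additivity statement for $\tilde{E}_D$ and the one for $\tilde{E}_C$ follow directly from \cref{prop_rates-weak-additivity} applied with $\F=\LOCC$. First I would check that $\LOCC$ meets the hypotheses of \cref{def_QRTs}; this has already been remarked after that definition. In particular, for bipartite systems the permutation channels $U_{\pi,A^n}\otimes U_{\pi,B^n}$ are local, and convex combinations of LOCC protocols are again LOCC, using shared classical randomness.

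For distillation, I would take $\rho^{\otimes k}$ as the input reference state and $\Phi_2$ as the output, and apply the proposition with $\ell=1$:
\begin{align}
\tilde{E}_D(A:B)_{\rho^{\otimes k}} \,=\, \tilde{R}_{\LOCC}\big(\rho^{\otimes k}\to \Phi_2^{\otimes 1}\big) \,=\, \tfrac{k}{1}\,\tilde{R}_{\LOCC}(\rho\to\Phi_2) \,=\, k\,\tilde{E}_D(A:B)_{\rho}\,.
\end{align}
Here $\rho^{\otimes k}$ is viewed as a single-copy state on $\cH_{A^kB^k}$, with the bipartition $A^k:B^k$.

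For the cost, I would apply the proposition with input $\Phi_2$, i.e.\ $k=1$, and output $\rho^{\otimes k}$, i.e.\ $\ell=k$. This gives
\begin{align}
\tilde{R}_{\LOCC}\big(\Phi_2\to\rho^{\otimes k}\big)=\tfrac{1}{k}\,\tilde{R}_{\LOCC}(\Phi_2\to\rho),
\end{align}
and inverting yields $\tilde{E}_C(A:B)_{\rho^{\otimes k}}=k\,\tilde{E}_C(A:B)_{\rho}$.

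The only point needing care is the degenerate case, where a rate is $0$ (inverse $\infty$) or $\infty$. I would handle it with the convention $1/0=\infty$; the identity is then consistent, because the proposition's equality is multiplicative with a finite positive factor.

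I do not expect any real obstacle. The only subtlety is confirming that the grouping argument in \cref{prop_rates-weak-additivity} respects the bipartite structure, which amounts to checking that the block permutations used there act locally. That holds because the permutations permute $(A_i,B_i)$ pairs simultaneously.
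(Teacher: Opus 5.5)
Your proposal is correct and matches the paper, which obtains the corollary directly from \cref{prop_rates-weak-additivity} with $\F=\LOCC$: for distillation it uses $\rho^{\otimes k}\to\Phi_2$ with output power $1$, and for the cost it uses $\Phi_2\to\rho^{\otimes k}$ with input power $1$, followed by inversion. Your additional checks — the $A^k:B^k$ bipartition, the locality of the simultaneous pair permutations, and the $1/0=\infty$ convention — are sound, though the paper does not spell them out.
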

In the following, we will relate the notions of operational entanglement distillation and cost to the traditional iid counter parts.

\subsubsection{Subexponentially decaying approximation error}
We recall two equivalent formulations for the definition of (standard) asymptotic state transformation rates that are widely found in the literature (e.g., see~\cite{BBPSSW96,bennett_mixed-state_1996,HHT01,BrandaoGour2015,Ferrari2023,Gour_2025}). In the following, let $C:=AB$ and $D:=\bar{A}\bar{B} $ denote the input and output bipartite systems, respectively. For any $\rho \in \St(\cH_C)$ and $\sigma\in \St(\cH_D)$, the (standard) asymptotic state transformation rate can be expressed as
\begin{align} \label{eq:EC_def-traditional-forall}
{R}(\rho\to\sigma) \,&=\, \lim_{\eps \to 0^+} \sup_{m,n\in\N} \Big\{\, \frac{m}{n}\ \big| \ \inf_{\cP\in\LOCC(C^n \to D^m)} \tfrac{1}{2} \big\|\cP(\rho^{\otimes n}) - \sigma^{\otimes m} \big\|_1 \leq \eps \,\Big\} \\
&=\, \sup \Big\{ \,r >0 \ \big| \ \lim_{n\to \infty}\inf_{\cP^{(r)}_n\in\LOCC(C^n \to D^{m_n})} \tfrac{1}{2} \big\|\cP^{(r)}_n(\rho^{\otimes n}) - \sigma^{\otimes m_n} \big\|_1 = 0\,\Big\}\,,
\end{align}
where in the last line, $(m_n)_{n\in \N}$ denotes any %monotonically increasing 
sequence of positive integers such that $\lim_{n\to\infty} \frac{m_n}{n} = r$. On the other hand, the iid asymptotic transformation rate introduced in~\cref{def_iid_rate} can be rewritten as
\begin{align}
R^{\mathrm{(exp)}}(\rho\to\sigma) \,=\,  \sup \Big\{ \,r>0 \ \big| \ \inf_{\cP^{(r)}_n\in\LOCC(C^n \to D^{m_n})} \tfrac{1}{2} \big\|\cP^{(r)}_n(\rho^{\otimes n}) - \sigma^{\otimes m_n} \big\|_1\, \textrm{ is subexp.~decreas.~in } n \,\Big\} .
\end{align}
In words,~\cref{def_iid_rate} requires the approximation error $\epsilon_n := \inf_{\cP_n\in\LOCC} \tfrac{1}{2} \big\|\cP_n(\rho^{\otimes n}) - \sigma^{\otimes m_n} \big\|_1$ to decay subexponentially in $n$, whereas the traditional definition imposes no specific rate of convergence beyond asymptotic vanishing. Hence,~\cref{def_iid_rate} may a priori differ from the traditional notions found in the literature. While the inequality $R^{\mathrm{(exp)}}(\rho\to\sigma) \leq R(\rho\to\sigma) $ is immediate, establishing equality between these two notions does not appear feasible in general. Nevertheless, we prove below that they indeed coincide for the entanglement distillation and cost rates.

Within this section, we will denote the respective entanglement cost and distillation as
\begin{align} \label{eq_bar_E_D_and_E_C}
E_D(A:B)_{\rho} := R(\rho \to \Phi_2) \qquad \textnormal{and} \qquad E_C(A:B)_{\rho} := R(\Phi_2 \to \rho)^{-1} \, , \\
E^{\mathrm{(exp)}}_D(A:B)_{\rho} := R^{\mathrm{(exp)}}(\rho \to \Phi_2) \qquad \textnormal{and} \qquad E^{\mathrm{(exp)}}_C(A:B)_{\rho} := R^{\mathrm{(exp)}}(\Phi_2 \to \rho)^{-1} \,.
\end{align}

\begin{proposition} \label{prop:subexp_entanglement-cost-distillation}
Let $\rho \in \St(\cH_{AB})$. It holds that
\begin{align}
E^{\mathrm{(exp)}}_D(A:B)_{\rho} \,= \, E_D(A:B)_{\rho}  \qquad \textnormal{and} \qquad  E^{\mathrm{(exp)}}_C(A:B)_{\rho}  \,=\,  E_C(A:B)_{\rho} \,,
\end{align}
i.e., the approximation error in the definitions of (iid) entanglement cost and distillation can be assumed to be subexponentially decreasing in the number of input copies.
\end{proposition}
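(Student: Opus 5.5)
The inequality $R^{\mathrm{(exp)}}\leq R$ is immediate for both tasks, since a subexponentially decaying error in particular vanishes. So the plan is to prove the reverse inequalities $E^{\mathrm{(exp)}}_D\geq E_D$ and $E^{\mathrm{(exp)}}_C\leq E_C$. The common strategy is to fix a rate $r$ achievable in the traditional sense, take any slightly smaller rate $r'<r$ (resp. larger cost), and build an $\LOCC$ protocol whose error decays subexponentially. The tool will be a block-and-verify or block-and-concentrate construction: a fixed protocol with constant error $\eps$ on blocks of growing size $\ell$, combined with concentration over many independent blocks.

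\textbf{Distillation.} Fix $r<E_D$ and $\eps$ small. There is a fixed blocklength $\ell$ and an $\LOCC$ map $\cP_\ell$ with $\cP_\ell(\rho^{\otimes\ell})\approx_\eps \Phi_2^{\otimes m_\ell}$ and $m_\ell/\ell\geq r$. I would apply $\cP_\ell^{\otimes s}$ to $s=\lfloor n/\ell\rfloor$ blocks. The output is a product of $s$ iid states, each $\eps$-close to $\Phi_2^{\otimes m_\ell}$.

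The key point is that Bell pairs allow error correction, or hashing, without loss of rate. First apply a twirl, so each block becomes a mixture of Bell-diagonal states. Then use hashing or a random stabilizer code at rate $1-h(\eps')$, which achieves exponentially small error; this is the classical fact that hashing on Bell-diagonal iid states has exponentially decaying error. The result is a rate $r(1-O(h(\eps)))$ with exponentially small error. Since $\eps$ is arbitrary, $E^{\mathrm{(exp)}}_D\geq E_D$.

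An alternative would be the Devetak--Winter hashing bound applied to $\rho^{\otimes\ell}$ directly. Its error exponent (via Rényi/one-shot smoothing bounds on $H_{\min}^\eps$) is exponential for rates below $-H(A|B)_{\rho^{\otimes\ell}}/\ell$, and by regularization $\sup_\ell$ of this equals $E_D$ up to the block pre-processing. I would present the first route and cite the hashing error exponent.

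\textbf{Cost.} Here the target is $E_C=E_F^\infty$, by Hayden--Horodecki--Terhal. For rate $1/r$ slightly above $E_F^{(\ell)}/\ell$, I would use the standard protocol: a pure-state decomposition $\rho^{\otimes\ell}=\sum_x p_x\psi_x$, sampling a typical sequence $x^s$, and entanglement dilution of $\bigotimes\psi_{x_i}$ via Schumacher compression and teleportation. The errors come from typical-sequence tails and compression tails, which are controlled by Hoeffding/Chernoff bounds (as in the method of types) and hence are exponentially small at any rate strictly above the entropy. The dilution step for pure states also has exponentially small error above the entropy of entanglement, using Lo--Popescu with a sublinear amount of classical communication. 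Taking $\ell\to\infty$ then yields $E^{\mathrm{(exp)}}_C\leq E_C$.

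The main obstacle is verifying that every ingredient has a genuinely exponential (or at least subexponential) error at rates strictly inside the achievable region, uniformly once $\ell$ is fixed. For distillation this is the hashing exponent for Bell-diagonal states after twirling. I would check it carefully via the Chernoff bound on the number of Pauli errors, combined with the union bound over syndromes.
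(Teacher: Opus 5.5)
Your proposal is correct. For the cost part it follows the paper's argument. The paper also reduces to $E_C=E_F^\infty$ (Hayden--Horodecki--Terhal) and proves a one-state bound $E_C^{\mathrm{(exp)}}(A:B)_\omega\le E_F(A:B)_\omega$ by typicality over an optimal pure-state ensemble and over Schmidt strings. It then passes from $\rho^{\otimes\ell}$ to $\rho$ via weak additivity of $E_C^{\mathrm{(exp)}}$. You use that step implicitly when you ``take $\ell\to\infty$'' and should state it; it follows by regrouping output copies. The paper shrinks the typicality window as $\delta=n^{-1/4}$, giving error $2^{-\Omega(\sqrt n)}$ with $O(n^{3/4})$ overhead, and prepares each truncated pure state exactly from a maximally entangled state of sufficient Schmidt rank. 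Your fixed $\delta$ with Schumacher compression and teleportation gives exponential error at a rate loss $O(\delta)$, which is equally fine. Lo--Popescu is unnecessary, since classical communication is free.

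For distillation the routes differ. The paper applies $(\cP_\ell)^{\otimes N}$ and then a state-dependent decoupling protocol with smoothing $\nu_n=2^{-n^\alpha}$. It controls the yield through the QAEP, tracking the $n$-dependent smoothing, plus Alicki--Fannes--Winter continuity, $-\frac{1}{\ell}H(\bar A^{m_\ell}|\bar B^{m_\ell})_{\sigma}\ge\frac{m_\ell}{\ell}-\delta_\ell$. Your sketched Devetak--Winter alternative is close to this. Your main route (Pauli twirl, then classical hashing) needs only a Chernoff bound and a union bound over the typical set, and for fixed $\ell$ it even gives exponential error. However, it relies on the Bell-diagonal/Pauli structure of the target. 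The decoupling machinery, by contrast, is what the paper reuses for non-product almost-iid inputs in Corollary~\ref{cor_ED-equal} and Lemma~\ref{lem_universal-decoupling}.

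One precision point: after the twirl, the $m_\ell$ pairs inside a block are still correlated. So the typicality count in the hashing analysis must be done over the $s$ iid blocks, with alphabet $\{I,X,Y,Z\}^{m_\ell}$. The twirl preserves the overlap with $\Phi_2^{\otimes m_\ell}$, so $q(I^{m_\ell})\ge 1-\eps$, and Fano gives $H(q)\le h(\eps)+2\eps m_\ell$. Hence the yield per input copy is at least $\frac{m_\ell}{\ell}(1-2\eps)-\frac{h(\eps)+O(\delta)}{\ell}$. The per-pair rate $1-h(\eps')$ is not justified as stated, but the conclusion is unaffected.
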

\begin{proof}
\noindent {\textbf{Entanglement distillation:}} The direction $E^{\mathrm{(exp)}}_D(A:B)_{\rho} \leq E_D(A:B)_{\rho} $ is immediate. For the other direction, first note that if $ E_D(A:B)_{\rho}=0$, the claim trivially follows. Now, let $r> 0 $ be achievable for $ E_D(A:B)_{\rho}$, i.e., there exists a sequence of channels $\big( \cP^{(r)}_n\big)_{n\in\N}$ with $\cP^{(r)}_n\in\LOCC(C^n \to D^{m_n})$, where $(m_n)_{n\in \N}$ %is a monotonically increasing sequence \
satisfies $\lim_{n\to\infty} \frac{m_n}{n} = r$, such that
\begin{align} \label{prop_subexp-distillation-1}
% \eps_n \,:=\, \tfrac{1}{2} \big\|\cP^{(r)}_n(\rho^{\otimes n}) - \Phi_2^{\otimes m_n} \big\|_1 \qquad \textrm{satisfies} \qquad  \lim_{n\to \infty} \eps_n \,=\,0\,.
\lim_{n\to \infty} \tfrac{1}{2} \big\|\cP^{(r)}_n(\rho^{\otimes n}) - \Phi_2^{\otimes m_n} \big\|_1 \,=\, 0 \,.
\end{align}
Now, the idea is to append a decoupling $\LOCC$ protocol $\cD$, as properly described in~\cref{sec:decoupling} below, to the distillation protocol $\cP^{(r)}$. To analyse the whole protocol, let $\ell \in \N$ be fixed, and set
\begin{align} \label{prop_subexp-distillation-2}
 \eps_{\ell} \,:=\, \tfrac{1}{2} \big\|\cP^{(r)}_{\ell}(\rho^{\otimes \ell}) - \Phi_2^{\otimes m_{\ell}} \big\|_1 \,.
\end{align}
Without loss of generality, we assume $\dim{\bar{A}} = \dim{\bar{B}} =2$, and we further denote 
\begin{align}
\sigma_{m_{\ell}}\, :=\, \cP^{(r)}_{\ell}(\rho^{\otimes \ell}) \,\in\, \St((\bar{A}\bar{B})^{\otimes m_{\ell}})\,.
\end{align}
It is convenient to apply the distillation protocol $\cP^{(r)}$ block-wise: Any $n \in \N$ may be rewritten as $n = N\ell + s$ where $N,s \in \N$ with $N = \lfloor \frac{n}{\ell} \rfloor$ and $s \leq \ell -1$. The following $\LOCC$ channel then yields
\begin{align}
\big(\cP^{(r)}_{\ell}\big)^{\otimes N} \circ \tr_s \, (\rho^{\otimes n}) \,=\,\sigma_{m_{\ell}}^{\otimes N}\,.
\end{align}
As a next step, we choose $\nu_n := 2^{-n^\alpha}$ for any $\alpha \in (0,1)$, which is a subexponentially decreasing function in $n$ with $\nu_n \in (0,1)$ for all $n\in \N$. From~\cref{sec:decoupling}, we know that there exists a ($\LOCC$) decoupling protocol $\cD$ such that
\begin{align} \label{prop_subexp-distillation-4}
%\tilde{\sigma}_{m_n} \,:=\, \cD (\sigma_{m_n}) \qquad \textrm{such that} \qquad 
\tfrac{1}{2} \norm{ \cD (\sigma^{\otimes N}_{m_\ell}) -  \Phi_2^{\otimes k_n}}_1 \,\leq\, 2\sqrt{26\nu_n} \,,
\end{align}
and according to~\cref{eq_condition_decoupling_2}, the maximal number $k_n$ of output Bell pairs is given by
\begin{align}
k_n  \,=\, \left\lfloor - H^{\nu_n}_{\max}(\bar{A}_1^{m_\ell N}|\bar{B}_1^{m_\ell N})_{\sigma^{\otimes N}} - 2 \log\left( \frac{1}{\nu_n}\right) \right\rfloor \,.
\end{align}
Then, we use the (strong) quantum asymptotic equipartition property (QAEP)~\cite{tomamichel09} and the continuity of conditional entropy~\cite{AF03,win16} to determine the rate of the overall constructed $\LOCC$ channel
\begin{align}
\tilde{\cP}_n \,:=\, \cD \circ \big(\cP^{(r)}_{\ell}\big)^{\otimes N} \circ \tr_s \,\in \, \LOCC(C^n \to D^{k_n}) \,.
\end{align}
Formally, recalling that $\ell \in \N$ is fixed and $s\leq \ell-1$, the rate is given by
\begin{align}
\lim_{n\to \infty} \frac{k_n}{n} \,&\geq\,\lim_{n\to \infty} \bigg[- \frac{H^{\nu_n}_{\max}(\bar{A}_1^{m_\ell N}|\bar{B}_1^{m_\ell N})_{\sigma^{\otimes N}}}{N\ell}\left(\frac{1}{1+\tfrac{s}{N\ell} }\right) -  \frac{2\log\left( \frac{1}{\nu_n}\right)-1}{n} \bigg]\\
\overset{\textnormal{QAEP}}&{=}\, -\frac{1}{\ell} H(\bar{A}_1^{m_\ell }|\bar{B}_1^{m_\ell })_{\sigma} \\
\overset{\textnormal{continuity, \Cshref{prop_subexp-distillation-2}}}&{\geq}\,  -\frac{1}{\ell} H(\bar{A}_1^{m_\ell }|\bar{B}_1^{m_\ell })_{\Phi_2^{\otimes m_{\ell}}} - \frac{2m_\ell}{\ell} \eps_\ell - \frac{(1+\eps_\ell)}{\ell}h\Big(\frac{\eps_\ell}{1+\eps_\ell}\Big) \\
&=\, \frac{m_\ell}{\ell}  -\bigg( \underbrace{\frac{2m_\ell}{\ell} \eps_\ell + \frac{(1+\eps_\ell)}{\ell}h\Big(\frac{\eps_\ell}{1+\eps_\ell}\Big)}_{=:\, \delta_\ell}\bigg)\,. \label{prop_subexp-distillation-3}
\end{align}
Note that in the QAEP step, one needs to be careful with the $n$-dependence of the smoothing error $\nu_n$. However, by keeping track of the bounds in the proof of the QAEP and inserting $\nu_n = 2^{-n^\alpha}$, the statement indeed continues to hold (for instance, see also the proof of~\cite[Proposition 4.2]{MSR2026}). 

Finally, recall that by assumption, we have $\lim_{\ell\to\infty} \frac{m_\ell}{\ell} = r$ and hence, $\lim_{\ell\to\infty} \delta_\ell = 0$. Furthermore, the approximation error in~\cref{prop_subexp-distillation-4} is subexponentially decreasing in $n$. On the other hand, since the bound in~\cref{prop_subexp-distillation-3} holds for any $\ell \in \N$, we may choose for any $\delta \in (0,1)$ a large enough value $\ell$ such that $ \frac{m_\ell}{\ell} \geq r - \frac{\delta}{2}$ and $\delta_\ell \leq  \frac{\delta}{2}$. Then,~\cref{prop_subexp-distillation-3} implies that the rate $ r - \delta$ is achievable for $E^{\mathrm{(exp)}}_D(A:B)_{\rho} $ and thus, $r-\delta \leq E^{\mathrm{(exp)}}_D(A:B)_{\rho} $.  After taking the supremum over all $\delta \in (0,1)$ and achievable rates $r$, we find $E_D(A:B)_{\rho}  \leq E^{\mathrm{(exp)}}_D(A:B)_{\rho}$, which proves the desired equality.

\medskip
\noindent {\textbf{Entanglement cost:}} The direction $E_C(A:B)_{\rho} \leq E^{\mathrm{(exp)}}_C(A:B)_{\rho}$ is immediate since imposing a rate of decay on the approximation error is a stronger requirement than merely asking that the error vanish. For the converse inequality, we aim to prove that $E^{\mathrm{(exp)}}_C(A:B)_{\rho} \leq E^\infty_F(A:B)_{\rho}$, where $E^\infty_F$ denotes the regularised entanglement of formation~\cite{bennett_mixed-state_1996}. The proof idea is to follow the same reasoning given in~\cite{HHT01} while keeping the error bounds explicit. Conceptually, we proceed in three steps:
\begin{enumerate}
 \item As in~\cite[Lemma 1]{HHT01}, we show that the modified cost $E^{\mathrm{(exp)}}_C$ is weakly additive under tensor powers.
 \item We adapt~\cite[Lemma 2]{HHT01} and prove the one-state estimate $E^{\mathrm{(exp)}}_C(A:B)_{\omega} \leq E_F(A:B)_{\omega}$ for any $\omega \in \St(\cH)$. Here, strong typicality is the key tool which will further guarantee a supexponentially decreasing approximation error.
 \item The one-state estimate is then applied to $\omega = \rho^{\otimes \ell}$. By weak additivity, this produces the regularized quantity $E^\infty_F(A:B)_{\rho}$.
\end{enumerate}

\noindent Recall that
\begin{align}  \label{eq:Ef-definition}
 E_F(A:B)_\rho\, := \,\min_{\{p_i,\ket{\psi_i}\}} \sum_i p_i E(\psi_i)\,, \qquad E(\psi_i) \,:=\, H\bigl(\tr_B\proj{\psi_i}\bigr)\,,
\end{align}
where $\{p_i,\ket{\psi_i}\} $ denotes any pure state ensemble such that $\rho = \sum_i p_i\proj{\psi_i}$ and $H(\cdot)$ is the von Neumann entropy, and the regularization is given by
\begin{align} \label{eq:regularised-Ef}
 E^\infty_F(A:B)_\rho \, :=\, \lim_{\ell\to\infty}\frac{1}{\ell}E_f(\rho^{\otimes \ell})  \,=\, \inf_{\ell\in\N}\frac{1}{\ell}E_f(\rho^{\otimes \ell}) \,.
\end{align}

\medskip
First, we prove weak additivity of the modified entanglement cost. For every fixed $\ell \in\N$,
\begin{align}  \label{eq:weak-additivity}
E_C^{\mathrm{(exp)}}(A^\ell:B^\ell)_{\rho^{\otimes \ell}} =\ell E_C^{\mathrm{(exp)}}(A:B)_\rho.
\end{align}
This can be seen directly from the operational definition, suppose first that a protocol converts $n$ Bell pairs into $m_n$ copies of $\rho^{\otimes \ell}$. The same output is simply $\ell m_n$ copies of $\rho$.  Conversely, a protocol producing $m_n$ copies of $\rho$ also produces $\lfloor \frac{m_n}{\ell}\rfloor$ copies of $\rho^{\otimes \ell}$: The output is grouped into blocks of size $\ell$, and the fewer than $\ell$ left-over systems are discarded. This discarding is local and cannot increase the trace distance. The number $n$ of input Bell pairs is unchanged in both constructions, so any subexponential approximation error bound remains of the same form. Therefore,
\begin{align}
 R^{\mathrm{(exp)}}(\Phi_2\to\rho^{\otimes \ell}) =\frac{1}{\ell}R^{\mathrm{(exp)}}(\Phi_2\to\rho),
\end{align}
which is equivalent to~\cref{eq:weak-additivity}.

\medskip
Now, we focus on the one-state estimate $E^{\mathrm{(exp)}}_C(A:B)_{\omega} \leq E_F(A:B)_{\omega}$. Let $\omega_{AB}$ be any finite-dimensional bipartite state and fix any number
\begin{align} \label{eq:c-choice}
 c>E_F(A:B)_\omega \,.
\end{align}
We will construct, for every sufficiently large $n$, an $\LOCC$ protocol that uses at most $\lceil nc\rceil$ Bell pairs and prepares an approximation to $\omega^{\otimes n}$ with error $2^{-\Omega(\sqrt n)}$.

Since the Hilbert spaces are finite-dimensional, the minimum in~\cref{eq:Ef-definition} is attained by a finite ensemble. Fix an optimal ensemble $\cE=\{p_i,\ket{\psi_i}\}_{i=1}^k$ such that $p_i > 0$ for all $i\in [k]$ and
\begin{align} \label{eq:optimal-ensemble}
 \omega=\sum_{i=1}^k p_i\proj{\psi_i}, \qquad E_F(A:B)_\omega =\sum_{i=1}^k p_iE(\psi_i)\,.
\end{align}

\smallskip
\noindent
\textit{First use of typicality:}
Let $s=(i_1,\ldots,i_n)\in\{1,\ldots,k\}^n$ and define the unnormalised vector
\begin{align} \label{eq:HHT-sequence}
 \ket{\psi_s}:=\bigotimes_{t=1}^n\sqrt{p_{i_t}}\ket{\psi_{i_t}}\,, \qquad p_s:=\braket{\psi_s|\psi_s}=\prod_{t=1}^n p_{i_t}\,. 
\end{align}
Furthermore, we choose
\begin{align}
 \delta_1=\delta_1(n):=n^{-1/4}
\end{align}
and, writing $N_i(s)$ for the number of occurrences of the symbol $i$ in the string $s$, we define the strongly typical set
\begin{align}  \label{eq:classical-typical-set}
 T_{\delta_1}^{(n)}  :=\left\{s\ :\ \left|\frac{N_i(s)}{n}-p_i\right|\leq\delta_1   \text{ for all }i\in [k]\right\}.
\end{align}
As in~\cite{HHT01}, we want to approximate $\omega^{\otimes n}$ by the state
\begin{align} \label{eq:HHT-typical-state}
 \omega_{T_{\delta_1}^{(n)}} :=\frac{1}{p_{T_{\delta_1}^{(n)}}} \sum_{s\in T_{\delta_1}^{(n)}}\proj{\psi_s}\,, 
\end{align}
where $p_{T_{\delta_1}^{(n)}} = \sum_{s\in T_{\delta_1}^{(n)}} p_s$. Hoeffding's inequality~\cite{Hoeffding1963} and the union bound imply
\begin{align} \label{eq:first-typicality-error}
 \eps_{1,n} \,:=\, 1-p_{T_{\delta_1}^{(n)}} \,=\, \sum_{s\notin T_{\delta_1}^{(n)}}p_s \,\leq\, 2k\exp(-2n\delta_1^2) \, =\, 2k\exp(-2\sqrt n)\,.
\end{align}
Then, we have $\omega^{\otimes n}=p_{T_{\delta_1}^{(n)}} \omega_{T_{\delta_1}^{(n)}}+(1-p_{T_{\delta_1}^{(n)}})\tau_n$ for some state $\tau_n$, and for all $n\in \N$,
\begin{align} \label{eq:first-truncation-distance}
 \tfrac{1}{2}\Big\|\omega^{\otimes n} -\omega_{T_{\delta_1}^{(n)}}\Big\|_1\, \leq\, \eps_{1,n}.
\end{align}
Moreover, note that conditioned on a typical word $s$, the normalised target vector is
\begin{align} \label{eq:conditional-product}
 \ket{\bar\psi_s} \,:=\, p_s^{-1/2}\ket{\psi_s} \, \simeq \, \bigotimes_{i=1}^k\ket{\psi_i}^{\otimes N_i(s)}\,,
\end{align}
where $\simeq$ denotes equality up to a common local permutation of the tensor factors. Thus the first typicality step controls how often each pure state $\ket{\psi_i}$ occurs. It does not yet specify how many Bell pairs are needed to prepare those pure states; that is the role of the second typicality step.

\smallskip
\noindent
\textit{Second use of typicality:}
We next repeat the argument of~\cite{bennett96} to find the number of Bell pairs needed to prepare the pure states $\ket{\psi_i}^{\otimes N_i(s)}$ and to bound the corresponding error. For each $i \in [k]$, we use the Schmidt decomposition to write
\begin{align} \label{eq:Schmidt-decomposition}
 \ket{\psi_i} \, =\,\sum_{j=1}^{d_i}\sqrt{q_j^{(i)}} \ket{e_j^{(i)}}_A\ket{f_j^{(i)}}_B\,, \qquad E(\psi_i)\,=\,H(q^{(i)})\,,
\end{align}
where $\{ \ket{e_j^{(i)}}_A \}_j$ and $\{\ket{f_j^{(i)}}_B \}_j$ denote orthonormal sets of vectors, and we assume $q_j^{(i)} >0$ for all $j \in [d_i]$. Consider $N$ copies of $\ket{\psi_i}$.  A Schmidt string
$j^N=(j_1,\ldots,j_N)$ occurs with squared Schmidt coefficient
\begin{align}
 q_{j^N}^{(i)}:=\prod_{t=1}^Nq_{j_t}^{(i)}.
\end{align}
Next, we choose
\begin{align}
 \delta_2=\delta_2(n):=n^{-1/4}
\end{align}
and define another strongly typical set
\begin{align} \label{eq:Schmidt-typicality}
T_{\delta_2}^{(N)}(q^{(i)}) :=\left\{j^N\ :\  \left|\frac{N_j(j^N)}{N}-q_j^{(i)}\right|\leq\delta_2  \text{ for all }j \in [d_i]\right\}\,.
\end{align}
Again, by Hoeffding's inequality, we find that
\begin{align}  \label{eq:second-typicality-error}
 \beta_{i,N} \, :=\,\sum_{j^N\notin T_{\delta_2}^{(N)}(q^{(i)})}q_{j^N}^{(i)} \, \leq \, 2d_i\exp(-2N\delta_2^2)\,.
\end{align}
Then, we want to approximate the state $\ket{\psi_i}^{\otimes N}$ by a coherent truncation to typical Schmidt strings,
\begin{align} \label{eq:coherent-truncation}
 \ket{\psi_{i,N}'} \,:=\, \frac{1}{\sqrt{1-\beta_{i,N}}}  \sum_{j^N\in T_{\delta_2}^{(N)}(q^{(i)})} \sqrt{q_{j^N}^{(i)}}  \ket{e_{j^N}^{(i)}}_A\ket{f_{j^N}^{(i)}}_B\,,
\end{align}
and its squared overlap with the original state is given by
\begin{align} \label{eq:truncation-overlap}
 \left|\braket{\psi_i^{\otimes N}|\psi_{i,N}'}\right|^2 \, =\, 1-\beta_{i,N}\,.
\end{align}
We next bound the Schmidt rank of $\ket{\psi_{i,N}'}$.  For any \smash{$j^N\in T_{\delta_2}^{(N)}(q^{(i)})$} and \smash{$ a_i:=\sum_{j=1}^{d_i}\log_2\frac{1}{q_j^{(i)}}<\infty $}, we have
\begin{align} \label{eq:typical-string-probability}
 -\frac1N\log_2q_{j^N}^{(i)} \,=\, \sum_{j=1}^{d_i}\frac{N_j(j^N)}{N}  \log_2\frac1{q_j^{(i)}} \, \leq \, H(q^{(i)})+a_i\delta_2 \,=\,E(\psi_i)+a_i\delta_2\,.
\end{align}
Therefore, every string \smash{$j^N\in T_{\delta_2}^{(N)}(q^{(i)})$} has probability at least $2^{-N(E(\psi_i)+a_i\delta_2)}$, and since the probabilities of all strings sum to one, we find
\begin{align} \label{eq:typical-Schmidt-rank}
 \left|T_{\delta_2}^{(N)}(q^{(i)})\right| \,\leq \, 2^{N(E(\psi_i)+a_i\delta_2)}\,.
\end{align}
Consequently, the Schmidt rank of $\ket{\psi_{i,N}'}$ is bounded by~\cref{eq:typical-Schmidt-rank}. Since a maximally entangled state of Schmidt rank at least $D$ can be converted by deterministic $\LOCC$ into any pure state of Schmidt rank at most $D$, we thus find that $\ket{\psi_{i,N}'}$ can be prepared \emph{exactly} from $ \left\lceil N\bigl(E(\psi_i)+a_i\delta_2\bigr)\right\rceil $ Bell pairs.

\smallskip
\noindent
Now, fix $s\in T_{\delta_1}^{(n)}$.  By applying the above construction to each factor in~\cref{eq:conditional-product}, we obtain
\begin{align}   \label{eq:cost-psi-bar}
 \ket{\bar\psi_s'} \, :=\, \bigotimes_{i=1}^k\ket{\psi_{i,N_i(s)}'}\,.
\end{align}
The number $m(s)$ of Bell pairs needed to prepare this state satisfies
\begin{align}  \label{eq:uniform-cost}
 m(s)\, \leq\, \sum_{i=1}^k  \left\lceil N_i(s)\bigl(E(\psi_i)+a_i\delta_2\bigr)\right\rceil \,\overset{\textnormal{\Cshref{eq:classical-typical-set}}}&{\leq} \, nE_F(A:B)_\omega +n\delta_1\sum_{i=1}^kE(\psi_i) +n\delta_2\max_i a_i+k \\
 &\leq\, nE_F(A:B)_\omega+Cn^{3/4}+k,
\end{align}
where $C<\infty$ depends only on the fixed optimal ensemble $\{p_i,\ket{\psi_i}\} $. Because $c>E_F(A:B)_\omega$, the last expression is at most $\lceil nc\rceil$ for all sufficiently large $n$. Thus the same deterministic supply of $\lceil nc\rceil$ Bell pairs works for every typical word $s \in T_{\delta_1}^{(n)}$.

\smallskip
\noindent
\textit{Full $\LOCC$ protocol:} Consider the state
\begin{align} \label{eq:HHT-prepared-state}
 \omega_{T_{\delta_1}^{(n)}}'\, :=\, \frac{1}{p_{T_{\delta_1}^{(n)}}}  \sum_{s\in T_{\delta_1}^{(n)}} p_s \proj{\bar\psi_s'}
\end{align}
where $\ket{\bar\psi_s'}$ is defined in~\cref{eq:cost-psi-bar}. This state can be prepared by one-way $\LOCC$: Alice samples $s$ with probability \smash{$p_s/p_{T_{\delta_1}^{(n)}}$}, communicates $s$ to Bob, and the parties prepare $\ket{\bar\psi_s'}$ using the construction above.

\smallskip
\noindent
Finally, we bound the error of this preparation. For every typical $s\in T_{\delta_1}^{(n)}$ and all sufficiently large $n$, we have $N_i(s)\geq \frac{np_i}{2}$.~\cref{eq:second-typicality-error,eq:truncation-overlap} therefore imply for all typical $s$ that
\begin{align} \label{eq:conditional-error}
 \frac12\norm{\proj{\bar\psi_s}-\proj{\bar\psi_s'}}_1 \,=\, \sqrt{1-\prod_{i=1}^k(1-\beta_{i,N_i(s)})} \,\leq\, \sqrt{\sum_{i=1}^k\beta_{i,N_i(s)}}\, \leq \, K_2\exp(-L_2\sqrt n)
\end{align}
for constants $K_2,L_2>0$. By the triangle inequality, we thus find
\begin{align}
 \frac12\Big\| \omega_{T_{\delta_1}^{(n)}} -\omega_{T_{\delta_1}^{(n)}}' \Big\|_1  \leq K_2\exp(-L_2\sqrt n)
\end{align}
for sufficiently large $n$. Combining this with~\cref{def_subexp-equivalence,eq:first-typicality-error,eq:first-truncation-distance}, we hence obtain an $\LOCC$ channel
$\mathcal D_n$ such that
\begin{align}
 \frac12\norm{
 \mathcal D_n(\Phi_2^{\otimes\lceil nc\rceil})
 -\omega^{\otimes n}}_1
 \leq K2^{-L\sqrt n}
 \label{eq:output-indexed-error}
\end{align}
for suitable constants $K,L>0$ and for all $n\in \N$.

Note that at this point, $n$ is the number of output copies, while the definition of $R^{\mathrm{(exp)}}(\Phi_2\to\omega)$ instead indexes protocols by the number of input Bell pairs. To translate between the two conventions, let $N$ be the number of available Bell pairs and set $n_N:= \lfloor\frac{N}{c}\rfloor$. Since $\lceil n_N c\rceil\leq N$, the parties discard the unused Bell pairs and apply $\mathcal D_{n_N}$.  Denote this overall $N$-input channel by $\cP_N$. Then
\begin{align}  \label{eq:input-indexed-error}
 \lim_{N\to\infty}\frac{n_N}{N}\,=\,\frac1c\,, \qquad \frac12\norm{\cP_N(\Phi_2^{\otimes N}) -\omega^{\otimes n_N}}_1 \,\leq\, \tilde{K}2^{-\tilde{L}\sqrt N}
\end{align}
for suitable $\tilde{K},\tilde{L}>0$.  Hence $1/c$ is an achievable formation rate and $ E^{\mathrm{(exp)}}_C(A:B)_{\omega} \leq c$. Since the choice of $c>  E_F(A:B)_{\omega}$ was arbitrary,
\begin{align}  \label{eq:one-state-bound}
 E^{\mathrm{(exp)}}_C(A:B)_{\omega} \,\leq\, E_F(A:B)_{\omega}\,.
\end{align}

\medskip
\noindent
\textit{Regularisation step:}
Apply~\cref{eq:one-state-bound} to the state $\omega=\rho^{\otimes \ell}$ for any fixed $\ell\in\N$. Weak additivity gives
\begin{align}
 \ell  E^{\mathrm{(exp)}}_C(A:B)_{\rho} \,\overset{\textnormal{\Cshref{eq:weak-additivity}}}{=}\, E^{\mathrm{(exp)}}_C(A^\ell:B^\ell)_{\rho^{\otimes \ell}} \,\overset{\textnormal{\Cshref{eq:one-state-bound}}}{\leq}\, E_F(A^\ell:B^\ell)_{\rho^{\otimes \ell}}\,.
\end{align}
Therefore
\begin{align}  \label{eq:modified-cost-upper-bound}
 E^{\mathrm{(exp)}}_C(A:B)_{\rho} \, \leq\, \inf_{\ell \in\N}\frac{1}{\ell}E_F(A^\ell:B^\ell)_{\rho^{\otimes \ell}} \,=\, E_F^\infty(A:B)_{\rho}\,.
\end{align}
We remark that the constants in the subexponential estimate may depend on the fixed integer $\ell$, which is however harmless because $\ell$ is chosen before the asymptotic limit is taken.

\medskip
\noindent
Finally, recall that~\cite{HHT01} proved that for finite-dimensional bipartite states, the traditional entanglement cost is
\begin{align}  \label{eq:HHT-theorem}
  E_C(A:B)_{\rho} =E_F^\infty(A:B)_{\rho}\,.
\end{align}
Hence, we find
\begin{align}
  E_C(A:B)_{\rho} \, \leq \,E^{\mathrm{(exp)}}_C(A:B)_{\rho}  \,\overset{\textnormal{\Cshref{eq:modified-cost-upper-bound}}}{\leq}\, E_F^\infty(A:B)_{\rho} \, \overset{\textnormal{\Cshref{eq:HHT-theorem}}}{=}\,  E_C(A:B)_{\rho}\,,
\end{align}
which proves the claim.
\end{proof}

We remark that finite dimensionality guarantees both a finite optimal ensemble $\cE=\{p_i,\ket{\psi_i}\}_{i=1}^k$ and finite Schmidt alphabets $\{1,\ldots,d_i\}$. Consequently, both applications of Hoeffding's inequality involve finitely many bounded random variables, and the constants $a_i$ in~\cref{eq:typical-string-probability} are finite. These are the specific ingredients that need not be available in infinite dimensions. In fact, while~\cref{eq:HHT-theorem} for the traditional entanglement cost has been established for infinite dimensional systems in~\cite{Yamasaki2025}, the analogous equality with the subexponential entanglement cost as in~\cref{prop:subexp_entanglement-cost-distillation} need not hold. % A corresponding counterexample can be found in~\cref{lem_entanglement-cost-counterex}. 

\subsubsection{Entanglement distillation is operational}
In this section, we combine most of our results, in particular~\cref{thm_almost-iid-iid-rates-relation,prop:subexp_entanglement-cost-distillation}, to prove that the operational (almost-iid) and traditional (iid) notions of entanglement distillation coincide. This is achieved by an additional use of decoupling, as detailed below.
\begin{corollary} \label{cor_ED-equal}
Let $\rho \in \St(\cH_{AB})$ and $\F = \LOCC$. Then, the almost-iid asymptotic entanglement distillation satisfies
\begin{align}
E_D(A:B)_{\rho} \,=\,  R_\LOCC(\rho \to \Phi_2)\,=\, \tilde{R}_\LOCC(\rho \to \Phi_2) \,=\,\tilde{E}_D(A:B)_{\rho}\,.
\end{align}
\end{corollary}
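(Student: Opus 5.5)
The chain of equalities splits into three pieces. First, $E_D(A:B)_\rho = R_\LOCC(\rho\to\Phi_2)$ is \cref{prop:subexp_entanglement-cost-distillation}: the traditional rate coincides with the rate defined with subexponential error. Second, \cref{thm_almost-iid-iid-rates-relation} gives $R_\LOCC(\rho\to\Phi_2)\leq \tilde{R}_\LOCC(\rho\to\Phi_2)$. The last equality is the definition of $\tilde{E}_D$. So the only real work is the converse inequality $\tilde{R}_\LOCC(\rho\to\Phi_2)\leq E_D(A:B)_\rho$.

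For the converse, fix an achievable $\tilde r>0$ for $\tilde{R}_\LOCC(\rho\to\Phi_2)$, with an LOCC sequence ${\pmb{\mathcal{P}}}^{(\tilde r)}$ satisfying ${\pmb{\mathcal{P}}}^{(\tilde r)}(\cA(\rho))\subseteq B(\cA(\Phi_2))$. We apply this to $\pmb{\rho}^{\mathrm{iid}}\in\cA(\rho)$. This produces outputs $\mathcal{P}_n(\rho^{\ox n})$ that are subexponentially close to states $\tau^{(m_n)}$ which are $\binom{m_n}{r_{m_n}}$-almost-iid along $\Phi_2$, with $r_m=\lceil Lm^\kappa\rceil$ and $\kappa<1$. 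The aim is to convert these defective Bell pairs into perfect Bell pairs by LOCC, losing only a sublinear number of pairs. For this we append the decoupling protocol $\cD$ of \cref{sec:decoupling}, with smoothing parameter $\nu_n=2^{-n^\alpha}$.

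By the guarantee used in the proof of \cref{prop:subexp_entanglement-cost-distillation}, $\cD(\tau^{(m_n)})$ is $O(\sqrt{\nu_n})$-close to $\Phi_2^{\ox k_n}$, where
\begin{align*}
k_n=\big\lfloor -H^{\nu_n}_{\max}(\bar A^{m_n}|\bar B^{m_n})_{\tau}-2\log(1/\nu_n)\big\rfloor .
\end{align*}
The key estimate is therefore
\begin{align*}
-H^{\nu_n}_{\max}(\bar A^{m_n}|\bar B^{m_n})_{\tau^{(m_n)}}\geq m_n-o(m_n).
\end{align*}
Since $\Phi_2$ is pure, we may take the purifying system trivial, so $\tau^{(m_n)}$ is supported on the span of $\cV^{m_n}_{r}$ along $\ket{\Phi_2}$. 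Every vector there has the form $U_\pi(\ket{\Phi_2}^{\ox m-r}\ox\ket{\Omega})$. I would bound $H_{\max}(\bar A|\bar B)_\tau$ directly, without smoothing, by the almost-iid entropy estimates of \cite{MSR2026}: via the pinching bound $\tau\le|\cT|\sum_j\beta_{jj}\proj{\Psi_j}$, with $|\cT|\le 2^{m h(r/m)}d^{2r}$. Each $\proj{\Psi_j}$ has $H_{\max}(\bar A|\bar B)\le -(m-r)+r\log d$, because the defect part contributes at most $r\log d$. Combining these gives $-H_{\max}\ge m-O(r\log d)-O(mh(r/m))=m-o(m)$. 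The extra $2\log(1/\nu_n)=2n^\alpha$ is also sublinear. Hence $k_n/n\to\tilde r$.

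The total error is the subexponential $\simeq$-error plus $O(2^{-n^\alpha/2})$, so the composite LOCC protocol $\cD\circ\mathcal{P}_n$ achieves rate $\tilde r$ on $\rho^{\ox n}$ with subexponential error. Thus $\tilde r\le R_\LOCC(\rho\to\Phi_2)$, and taking the supremum closes the chain.

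The main obstacle is the max-entropy bound for almost-iid Bell pairs. The pinching/mixture step loses a factor $\log|\cT|$ in $H_{\max}$, and because $H_{\max}$ is defined via fidelity, it is not immediately monotone under operator inequalities in the way $H_{\min}$ is. If that route fails, I would instead use the duality $H_{\max}(\bar A|\bar B)=-H_{\min}(\bar A|R)$ with the purification from \cref{lem_purif_almost_iid}, and bound $H_{\min}$, which is well behaved under $\tau\le|\cT|\tilde\tau$. Failing that, I would cite the robustness of conditional entropies under almost-iid perturbations from \cite{MSR2026,MSR2026-2}.
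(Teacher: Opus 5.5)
Your proposal is correct and has the same skeleton as the paper's proof. The paper sandwiches $R_\LOCC(\rho\to\Phi_2)\le\tilde R_\LOCC(\rho\to\Phi_2)\le T(\rho\to\Phi_2)\le R_\LOCC(\rho\to\Phi_2)$, where $T$ only requires that iid inputs be mapped into $B(\cA(\Phi_2))$. It proves the last inequality exactly as you do: feed $\rho^{\ox n}$ into the protocol and append the state-dependent decoupling map with $\nu_n=2^{-n^\alpha}$.

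The two arguments differ only in the key estimate on $k_n$. The paper invokes the almost-iid asymptotic equipartition property \cite[Prop.~4.2]{MSR2026} to obtain $-H^{\nu_n}_{\max}(\bar A^{m_n}|\bar B^{m_n})_{\tau^{(m_n)}}/m_n\to -H(\bar A|\bar B)_{\Phi_2}=1$. It then has to remark that this AEP still holds with an $n$-dependent smoothing parameter. You instead use $H^{\nu}_{\max}\le H_{\max}$ and bound the unsmoothed quantity in one shot, exploiting that the target is pure. Your route is more elementary and avoids the smoothing issue entirely; the paper's route is the one that would carry over to non-pure targets.

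The obstacle you flag is not real. If $\tau\le c\,\omega$, then $F(\tau,\id\ox\sigma)\le c\,F(\omega,\id\ox\sigma)$, because the square root is operator monotone. So $H_{\max}$ does behave well under operator inequalities, and your primary route works without the duality fallback.

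You leave implicit the step from the components $\Psi_j$ to their mixture. This costs one more $\log|\cT|$, for instance via $\tr\sqrt{A+B}\le\tr\sqrt{A}+\tr\sqrt{B}$. More simply, skip the pinching bound altogether. Since $\tau\le\Pi$, where $\Pi$ is the projector onto $\mathrm{span}\,\cV^{m}_{r}$, you get $F(\tau,\id\ox\sigma)\le|\cT|\,\tr[\Pi(\id\ox\sigma)]\le|\cT|^2\,2^{-(m-r)}$. This uses that every $\tr_{\bar A}\proj{\Psi_j}$ has operator norm at most $2^{-(m-r)}$. It is exactly the computation in the proof of \cref{lem_universal-decoupling}. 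Either way the loss is $O(\log|\cT|)=o(m)$, which gives $k_n/n\to\tilde r$.
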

\begin{proof}
Consider the following rate definitions,
\begin{align}
 R(\rho \to \Phi_2)  \,\overset{\textnormal{\Cshref{def_iid_rate}}}&{=}\,  \sup\Big\{\, r >0\ \big| \ \exists\ {\pmb{\mathcal{P}}}^{(r)} \in \LOCC \,:\ {\pmb{\mathcal{P}}}^{(r)}\big( \pmb{\rho}^{\mathrm{iid}}\big) \, \in \, B\big(\pmb{\Phi}_2^{\mathrm{iid}}\big) \Big\}  \\
T(\rho \to \Phi_2) \, &:=\,  \sup\Big\{\, r > 0\ \big| \ \exists\ {\pmb{\mathcal{P}}}^{(r)} \in \LOCC \,:\ {\pmb{\mathcal{P}}}^{(r)}\big( \pmb{\rho}^{\mathrm{iid}}\big) \, \in \, B\big(\mathcal{A}(\Phi_2)\big) \Big\} \,.
\end{align}
We have the relations $ R(\rho \to \Phi_2)  \leq \tilde{R}(\rho \to \Phi_2) $, which is the content of~\cref{thm_almost-iid-iid-rates-relation}, and $  \tilde{R}(\rho \to \Phi_2) \leq  T(\rho \to \Phi_2)$ which follows immediately since $\pmb{\Phi}_2^{\mathrm{iid}} \in \mathcal{A}(\Phi_2) $. Next, we claim that $T(\rho \to \Phi_2) \leq R(\rho \to \Phi_2)$. To see this, first note that for $T(\rho \to \Phi_2)=0$, the claim trivially follows. Now, let $r > 0$ be achievable for $ T(\rho \to \Phi_2)$, i.e., there exists a sequence of free operations ${\pmb{\mathcal{P}}}^{(r)} \in \LOCC$, with $\cP^{(r)}_n \in \LOCC(C^n \to D^{m_n})$ for each $n\in \N$, such that 
\begin{align} \label{eq:ED-equal-it1}
\pmb{\mathcal{P}}^{(r)}\big( \pmb{\rho}^{\mathrm{iid}}\big)  \in  B\big(\mathcal{A}(\Phi_2)\big)\,,
\end{align}
where the sequence $(m_n)_{n\in \N}$ %is a monotonically increasing sequence 
satisfies $\lim_{n\to\infty}\frac{m_n}{n} = r$. 
%Since $r=0$ is always achievable via trace operations, we can assume $r>0$ without loss of generality. 
The idea is to append a decoupling $\LOCC$ protocol $\cD$, as properly described in~\cref{sec:decoupling}, to the distillation protocol $\cP^{(r)}$. More precisely, we assume $D = \bar{A}\bar{B}$ with $\dim{\bar{A}} = \dim{\bar{B}} =2$ without loss of generality, and due to~\cref{eq:ED-equal-it1}, there exists an almost-iid sequence $\pmb{\tau} \in \cA(\Phi_2)$ such that for any $n \in \N$,
\begin{align}
\frac{1}{2}\norm{ \cP^{(r)}_{n}(\rho^{\otimes n}) - \tau^{(m_n)}}_1 \,\leq \, \eps_n
\end{align}
with $\eps_n$ being a subexponentially decreasing function. Then, we choose $\nu_n := 2^{-n^\alpha}$ for any $\alpha \in (0,1)$, which is also a subexponentially decreasing function in $n$ with $\nu_n \in (0,1)$. From~\cref{sec:decoupling}, we know that there exists a ($\LOCC$) decoupling protocol $\pmb{\cD}$ (depending on $\pmb{\tau}$) such that
\begin{align} \label{eq:ED-equal-it2}
\frac{1}{2} \norm{ \cD_m \big( \tau^{(m_n)}\big) -  \Phi_2^{\otimes k_n}}_1 \,\leq\, 2\sqrt{26\nu_n} \,,
\end{align}
and according to~\cref{eq_condition_decoupling_2}, the maximal number $k_n$ of output Bell pairs is given by
\begin{align}
k_n  \,=\, \left\lfloor - H^{\nu_n}_{\max}(\bar{A}_1^{m_n}|\bar{B}_1^{m_n})_{ \tau^{(m_n)}} - 2 \log\left( \frac{1}{\nu_n}\right) \right\rfloor \,.
\end{align}
By exploiting the almost-iid structure of $\pmb{\tau}$, and recalling that %$(m_n)_{n\in \N}$ is monotonically increasing with 
$\lim_{n\to\infty}\frac{m_n}{n} =r >0$, we evaluate
\begin{align}
\lim_{n\to \infty} \frac{k_n}{n} \,&=\,\lim_{n\to \infty} \bigg[- \frac{H^{\nu_n}_{\max}(\bar{A}_1^{m_n}|\bar{B}_1^{m_n})_{ \tau^{(m_n)}}}{m_n}\Big(\frac{m_n}{n}\Big) -  \frac{2\log\left( \frac{1}{\nu_n}\right)}{n} \bigg]\\
\overset{\textnormal{\cite[Prop.~4.2]{MSR2026}}}&{=}\, - H(\bar{A} |\bar{B})_{\Phi_2}r \\
 \,&=\, r\,. \label{eq:ED-equal-it3}
\end{align}
Note that in the second (QAEP) step, one needs to be careful with the $n$-dependence of the smoothing parameter $\nu_n$. However, by keeping track of the bounds in the proof of~\cite[Proposition 4.2]{MSR2026} and inserting $\nu_n = 2^{-n^\alpha}$, the statement indeed continues to hold. Therefore, it holds
\begin{align}
\frac{1}{2}\norm{ \cD_m \circ \cP^{(r)}_{n}(\rho^{\otimes n}) -  \Phi_2^{\otimes k_n}}_1 \,&\leq \, \frac{1}{2}\norm{ \cD_m \circ \cP^{(r)}_{n}(\rho^{\otimes n}) - \cD_m( \tau^{(m_n)})}_1 + \frac{1}{2}\norm{\cD_m( \tau^{(m_n)}) -  \Phi_2^{\otimes k_n}}_1 \\
&\leq\, \eps_n + 2\sqrt{26\nu_n}\,,
\end{align}
and hence, the rate $r$ is achievable for $R(\rho\to \Phi_2)$, meaning that $r \leq R(\rho\to \Phi_2)$ After taking the supremum over all achievable rates $r >0 $, we find $ T(\rho\to \Phi_2) \leq  R(\rho\to \Phi_2)$ and hence,
\begin{align}
 R(\rho\to \Phi_2) \,\leq\, \tilde{R}(\rho\to \Phi_2) \,\leq\,  T(\rho\to \Phi_2) \,\leq\,   R(\rho\to \Phi_2)\,,
\end{align}
showing that all of these quantities are equal. This proves the desired claim.
\end{proof}

As a consequence of the above, we can show that for pure states, the distillable entanglement and entanglement cost coincide and the iid or more operational almost-iid notions are equivalent.
\begin{corollary}[Pure states] \label{cor_pure-states}
Let $\ket{\psi}_{AB} \in \cH_{AB}$. Then,
\begin{align}
 \tilde{E}_D(A:B)_{\psi} =  E_D(A:B)_{\psi} =  \tilde{E}_C(A:B)_{\psi} =  E_C(A:B)_{\psi} = H(A)_{\psi} \, .
\end{align}
\end{corollary}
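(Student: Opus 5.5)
The plan is to assemble the chain of equalities and inequalities from results already available, and then evaluate the common value on pure states.

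First, \cref{cor_ED-equal} gives $\tilde{E}_D(A:B)_{\psi} = E_D(A:B)_{\psi}$ for every bipartite state, in particular for $\psi$. Second, \cref{cor:separation_Dreg} gives
\begin{align}
\tilde{E}_D(A:B)_{\psi} \,\leq\, D^\reg(\psi\|\mathrm{SEP}) \,\leq\, \tilde{E}_C(A:B)_{\psi}\,.
\end{align}
Third, \cref{thm_almost-iid-iid-rates-relation} applied to $\Phi_2\to\psi$ yields $R_{\LOCC}(\Phi_2\to\psi)\leq \tilde{R}_{\LOCC}(\Phi_2\to\psi)$, i.e.\ $\tilde{E}_C(A:B)_{\psi}\leq E_C(A:B)_{\psi}$. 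Here the iid quantities are the subexponential-error versions of \cref{def_iid_rate}. By \cref{prop:subexp_entanglement-cost-distillation}, these coincide with the standard $E_D$ and $E_C$. Altogether this gives the sandwich
\begin{align}
E_D(A:B)_{\psi} \,=\, \tilde{E}_D(A:B)_{\psi} \,\leq\, \tilde{E}_C(A:B)_{\psi} \,\leq\, E_C(A:B)_{\psi}\,.
\end{align}

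It then remains to show that the two ends of the sandwich agree and equal $H(A)_\psi$. For the upper end, $E_C(A:B)_{\psi}=E_F^\infty(A:B)_{\psi}$ by \cref{eq:HHT-theorem}. For a pure state the only ensemble decomposition is $\psi$ itself, so $E_F(\psi^{\otimes \ell}) = H(A^\ell)_{\psi^{\otimes\ell}} = \ell H(A)_\psi$, and hence $E_C(A:B)_\psi = H(A)_\psi$. For the lower end, I will cite the standard result of Bennett et al.~\cite{bennett96} that $E_D(A:B)_{\psi}\geq H(A)_\psi$.

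As an alternative that stays inside the paper's machinery, I would apply the decoupling protocol of \cref{sec:decoupling} directly to $\psi^{\otimes n}$, as in the distillation part of \cref{prop:subexp_entanglement-cost-distillation} with $\ell=1$. By the QAEP this distills at rate $-H(A|B)_\psi = H(A)_\psi$ with subexponentially small error, because $H(AB)_\psi=0$. This gives $E_D(A:B)_\psi\geq H(A)_\psi$ directly in the subexponential sense.

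Combining both ends, all the inequalities in the sandwich collapse to equalities with common value $H(A)_\psi$.

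The only delicate point is making sure every iid quantity in the chain refers to the same (subexponential) definition. \cref{prop:subexp_entanglement-cost-distillation} settles this, so I expect no substantive obstacle; the proof is a bookkeeping combination of the cited results.
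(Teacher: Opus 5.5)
Your proposal is correct and follows the paper's proof. It uses the same sandwich $E_D(A:B)_\psi=\tilde{E}_D(A:B)_\psi\leq\tilde{E}_C(A:B)_\psi\leq E_C(A:B)_\psi$, obtained from \cref{cor_ED-equal}, a comparison of operational rates, and \cref{thm_almost-iid-iid-rates-relation}, and closes it with the pure-state values of the two endpoints. The differences are cosmetic: the paper takes the middle inequality directly from \cref{cor_basic-rates-relations}, whereas you go through $D^\reg$ via \cref{cor:separation_Dreg}, which is itself derived from that corollary; the paper quotes Bennett et al.\ for both endpoints rather than evaluating $E_F^\infty$ on pure states; and it leaves implicit the identification of subexponential and standard rates via \cref{prop:subexp_entanglement-cost-distillation}, which you make explicit.
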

\begin{proof}
Note that because $\ket{\psi}_{AB}$ is pure, we have
\begin{align}
H(A)_{\psi} \overset{\textnormal{\cite{bennett96}}}{=} E_D(A:B)_{\psi} \overset{\textnormal{\Cshref{cor_ED-equal}}}{=} \tilde{E}_D(A:B)_{\psi} \overset{\textnormal{\Cshref{cor_basic-rates-relations}}}{\leq} \tilde{E}_C(A:B)_{\psi} \overset{\textnormal{\Cshref{thm_almost-iid-iid-rates-relation}}}{\leq}  E_C(A:B)_{\psi}  \overset{\textnormal{\cite{bennett96}}}{=} H(A)_{\psi} \, .
\end{align}
\end{proof}

\begin{corollary}[Operational transformations with pure endpoints] \label{cor:moving-pure-endpoints}
Let $\rho \in\St(\cH_{AB})$, and let $\ket{\psi}_{AB},\,\ket{\phi}_{AB} \in \cH_{AB}$ be entangled pure states. Then,
\begin{align} \label{eq:moving-pure-endpoints}
\tilde{R}_{\LOCC}(\rho\to\phi) \,=\, \frac{E_D(A:B)_\rho}{H(A)_\phi}\,, \qquad \tilde{R}_{\LOCC}(\psi\to \rho) \,=\, \frac{H(A)_\psi}{\tilde{E}_C(A:B)_\rho}\,.
\end{align}
In particular, it holds
\begin{align}
\tilde{R}_{\LOCC}(\psi\to\phi) \,=\, \frac{H(A)_\psi}{H(A)_\phi} \,=\, R_{\LOCC}(\psi\to\phi) \,.
\end{align}
\end{corollary}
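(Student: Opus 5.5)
The plan is to get each formula by sandwiching it between an achievability bound from \cref{prop_rates-transitivity} and a converse bound from the same proposition, with the pure endpoint as the pivot. The pure-state values come from \cref{cor_pure-states}, together with the facts that $\tilde{R}_{\LOCC}(\Phi_2\to\phi)=1/H(A)_\phi$ and $\tilde{R}_{\LOCC}(\phi\to\Phi_2)=H(A)_\phi$. These two facts are exactly the statements $\tilde{E}_C(A:B)_\phi=H(A)_\phi$ and $\tilde{E}_D(A:B)_\phi=H(A)_\phi$ of \cref{cor_pure-states}, rewritten via \cref{def_operational-distillation-cost}. Since $\phi$ and $\psi$ are entangled, $H(A)>0$, so all these rates are finite and nonzero.

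\textbf{First formula.}
\begin{itemize}
\item Lower bound: by \cref{prop_rates-transitivity},
\[
\tilde{R}(\rho\to\phi)\geq \tilde{R}(\rho\to\Phi_2)\,\tilde{R}(\Phi_2\to\phi)=\tilde{E}_D(\rho)/H(A)_\phi ,
\]
and \cref{cor_ED-equal} identifies $\tilde{E}_D(\rho)=E_D(\rho)$.
\item Upper bound: composing in the other order,
\[
E_D(\rho)=\tilde{R}(\rho\to\Phi_2)\geq \tilde{R}(\rho\to\phi)\,\tilde{R}(\phi\to\Phi_2)=\tilde{R}(\rho\to\phi)\,H(A)_\phi .
\]
\end{itemize}
Together these give equality.

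\textbf{Second formula.}
\begin{itemize}
\item Upper bound: $\tilde{R}(\Phi_2\to\rho)\geq \tilde{R}(\Phi_2\to\psi)\,\tilde{R}(\psi\to\rho)$, so $\tilde{R}(\psi\to\rho)\leq H(A)_\psi\,\tilde{R}(\Phi_2\to\rho)=H(A)_\psi/\tilde{E}_C(\rho)$.
\item Lower bound: $\tilde{R}(\psi\to\rho)\geq \tilde{R}(\psi\to\Phi_2)\,\tilde{R}(\Phi_2\to\rho)=H(A)_\psi/\tilde{E}_C(\rho)$.
\end{itemize}

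\textbf{Edge cases.} These need the conventions $1/0=\infty$ and $1/\infty=0$.
\begin{itemize}
\item If $\tilde{R}(\Phi_2\to\rho)=0$, then $\tilde{E}_C=\infty$. Both bounds still hold, giving $0$, since the upper bound is $H(A)_\psi\cdot 0$.
\item If $\rho$ is free, $\tilde{R}(\Phi_2\to\rho)=\infty$, and one argues directly that free states can be prepared at arbitrary rate. I would mention this rather than belabor it.
\item If $E_D(\rho)=0$, the first formula's upper bound forces $\tilde{R}(\rho\to\phi)=0$. Here the multiplication is with $H(A)_\phi>0$ finite, so there is no issue.
\end{itemize}

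\textbf{Special case $\rho=\psi$.} For the final display, substitute $\rho=\psi$ into the first formula and use $E_D(\psi)=H(A)_\psi$ from \cref{cor_pure-states}. The iid equality $R_{\LOCC}(\psi\to\phi)=H(A)_\psi/H(A)_\phi$ is the classical result of \cite{bennett96}. If one prefers an internal argument:
\begin{itemize}
\item \cref{thm_almost-iid-iid-rates-relation} gives $\geq$ against the operational value, i.e.\ $R_{\LOCC}(\psi\to\phi)\leq \tilde{R}(\psi\to\phi)$.
\item The iid composability \cref{eq_triangle_rate} gives the matching bound $R(\psi\to\phi)\geq R(\psi\to\Phi_2)\,R(\Phi_2\to\phi)=H(A)_\psi/H(A)_\phi$, using $E_D$ and $E_C$ for pure states. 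This uses \cref{prop:subexp_entanglement-cost-distillation} so that the subexponential iid rates agree with the standard ones.
\end{itemize}

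The main (mild) obstacle is the careful handling of the zero and infinity conventions in the second formula. Everything else is algebra with \cref{prop_rates-transitivity}.
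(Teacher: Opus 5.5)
Your proposal is correct and matches the paper's proof. Both obtain each formula by applying \cref{prop_rates-transitivity} in the two orders with $\Phi_2$ as the pivot, and then inserting \cref{cor_ED-equal} and \cref{cor_pure-states}. The only additions are your bookkeeping of the $0/\infty$ conventions and your internal derivation of $R_{\LOCC}(\psi\to\phi)=H(A)_\psi/H(A)_\phi$ via \cref{thm_almost-iid-iid-rates-relation}, iid composability and \cref{prop:subexp_entanglement-cost-distillation}; the paper leaves that last equality implicit. (The case $\tilde{R}_{\LOCC}(\Phi_2\to\rho)=0$ that you treat cannot actually occur, since $\tilde{E}_C(A:B)_\rho\leq E_C(A:B)_\rho<\infty$.)
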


\begin{proof}
By~\cref{prop_rates-transitivity}, we find
\begin{align}
\tilde{R}_{\LOCC}(\rho\to\phi) \,\geq\, \tilde{R}_{\LOCC}(\rho\to\Phi_2)\tilde{R}_{\LOCC}(\Phi_2 \to\phi) \,\overset{\textnormal{\Cshref{cor_ED-equal}}}{=}\, \frac{E_D(A:B)_{\rho}}{\tilde{E}_C(A:B)_{\phi}} \,\overset{\textnormal{\Cshref{cor_pure-states}}}{=}\,  \frac{E_D(A:B)_{\rho}}{H(A)_{\phi}} \,.
\end{align}
For the other direction, we similarly evaluate
\begin{align}
\tilde{R}_{\LOCC}(\rho\to\phi) \,\overset{\textnormal{\Cshref{prop_rates-transitivity}}}{\leq}\,  \frac{\tilde{R}_{\LOCC}(\rho\to\Phi_2)}{\tilde{R}_{\LOCC}(\phi \to\Phi_2) } \,\overset{\textnormal{\Cshref{cor_ED-equal,cor_pure-states}}}{=}\,  \frac{E_D(A:B)_{\rho}}{H(A)_{\phi}} \,,
\end{align}
which thus shows equality. Analogously, we have
\begin{align}
\tilde{R}_{\LOCC}(\psi\to\rho)  \,\overset{\textnormal{\Cshref{prop_rates-transitivity}}}&{\geq}\, \tilde{R}_{\LOCC}(\psi\to\Phi_2)\tilde{R}_{\LOCC}(\Phi_2\to\rho) \,\overset{\textnormal{\Cshref{cor_pure-states}}}{=}\, \frac{H(A)_{\psi}}{\tilde{E}_C(A:B)_{\rho}} \,, \\
\tilde{R}_{\LOCC}(\psi\to\rho) \,\overset{\textnormal{\Cshref{prop_rates-transitivity}}}&{\leq}\,  \frac{\tilde{R}_{\LOCC}(\Phi_2 \to \rho)}{\tilde{R}_{\LOCC}(\Phi_2 \to \psi) } \,\overset{\textnormal{\Cshref{cor_pure-states}}}{=}\, \frac{H(A)_{\psi}}{\tilde{E}_C(A:B)_{\rho}} \,,
\end{align}
which proves the second equality. Finally, the relation $\tilde{R}_{\LOCC}(\psi\to\phi) = \frac{H(A)_\psi}{H(A)_\phi}$ from the above due to~\cref{cor_pure-states}.
\end{proof}

Finally, we remark that there exist bipartite entangled mixed states for which operational entanglement distillation and cost differ.
\begin{corollary}
There exists a state $\rho \in \St(\cH_{AB})$ such that
\begin{align}
 \tilde{E}_D(A:B)_{\rho} \,<\,  \tilde{E}_C(A:B)_{\rho}\,.
\end{align}
\end{corollary}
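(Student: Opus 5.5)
By \cref{cor:separation_Dreg} we already have $\tilde{E}_D(A:B)_{\rho} \leq D^\reg(\rho\|\mathrm{SEP}) \leq \tilde{E}_C(A:B)_{\rho}$ for every state. It therefore suffices to exhibit a single state $\rho$ for which one of these two inequalities is strict. The cleanest route is to use the left one and a state with vanishing distillable entanglement but nonzero relative entropy of entanglement. So I would take $\rho$ to be a bound entangled state, for instance a PPT entangled state. Horodecki-type examples in $3\times 3$ work, or the standard entangled PPT state whose regularized relative entropy of entanglement is known to be strictly positive. Anything with $E_D=0$ and $D^\reg(\rho\|\mathrm{SEP})>0$ suffices.

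First, show $\tilde{E}_D(A:B)_{\rho}=0$. By \cref{cor_ED-equal}, $\tilde{E}_D(A:B)_{\rho}=E_D(A:B)_{\rho}$. For any PPT state the standard (iid) distillable entanglement vanishes, since LOCC preserves the PPT property and PPT states have fidelity at most $2^{-m}$ with $\Phi_2^{\otimes m}$. Hence no distillation protocol can approach $\Phi_2^{\otimes m_n}$ at any positive rate, and $E_D(A:B)_\rho=0$.

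Second, show $\tilde{E}_C(A:B)_{\rho}>0$. Here I would use $D^\reg(\rho\|\mathrm{SEP})\leq \tilde{E}_C(A:B)_{\rho}$ from \cref{cor:separation_Dreg}. This reduces the task to the known fact that the chosen PPT entangled state has strictly positive regularized relative entropy of entanglement. That holds, for example, for any entangled state by faithfulness of $D^\reg(\cdot\|\mathrm{SEP})$; this is the Brandão–Plenio result, which relies on the generalized quantum Stein's lemma (with its later corrections). Alternatively, one can use explicit states for which $D^\reg$ is lower bounded by a positive quantity such as the Rains bound or a logarithmic-negativity-type bound; for PPT states one could instead use the squashed entanglement lower bound on $E_C$. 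However, since only $\tilde{E}_C\geq D^\reg(\cdot\|\mathrm{SEP})$ is available in the operational setting, I would rely on a lower bound on $D^\reg$ itself.

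The main obstacle is this second step: certifying $D^\reg(\rho\|\mathrm{SEP})>0$ for a concrete bound entangled state, because regularized quantities are hard to compute. My first attempt would be to cite faithfulness of the regularized relative entropy of entanglement on entangled states. If that is deemed too heavy, I would pick a state such as the antisymmetric Werner state. That state is not bound entangled, but it has $E_D<D^\reg$ known explicitly: $E_D\leq$ Rains bound $=\log(4/3)$ for $d=3$, while $D^\reg=\log(4/3)$ too, so it does not give strictness on its own. Hence the PPT example together with faithfulness is the preferred route. Combining the two steps gives $\tilde{E}_D(A:B)_{\rho}=0<D^\reg(\rho\|\mathrm{SEP})\leq\tilde{E}_C(A:B)_{\rho}$.
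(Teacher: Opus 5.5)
Your argument is correct and is essentially the paper's own proof: take a bound entangled (e.g.\ PPT entangled) state, use \cref{cor_ED-equal} to get $\tilde{E}_D(A:B)_\rho = E_D(A:B)_\rho = 0$, and combine \cref{cor:separation_Dreg} with the faithfulness of $D^\reg(\cdot\|\mathrm{SEP})$ to obtain $0 < D^\reg(\rho\|\mathrm{SEP}) \leq \tilde{E}_C(A:B)_\rho$. The side remarks on the antisymmetric Werner state and on squashed entanglement play no role in the argument and can be dropped.
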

\begin{proof}
The existence of bound entanglement shows that the traditional entanglement distillation is not faithful, i.e., there exists an entangled state $\rho \in \St(\cH_{AB})$ such that $E_D(A:B)_{\rho} = 0$~\cite{HHH98}. By~\cref{cor_ED-equal}, the operational distillable entanglement is likewise not faithful, $\tilde{E}_D(A:B)_{\rho} = 0$. Combining this with~\cref{cor:separation_Dreg} and the fact that the regularized relative entropy of entanglement \emph{is} faithful~\cite{brandao_Stein_10}, the claim follows.
\end{proof}

%%%%%%%%%%%%%%%%%%%%%%%%%%%%%%%%%%%%%%%%%%%%%%%%%%%%%%%%%%%%%%%%%%%%%%%%%%%%%%%%%%%%%%%%%%%%%%%%%%%%%%%%%%%%%%%%%%%%%%%
\subsubsection{Decoupling protocol: achieving iid Bell pairs} \label{sec:decoupling}
In this section, we present a $\LOCC$ protocol that converts a bipartite state $\rho_{A_1^n B_1^n} \in \St(\cH_{AB}^{\otimes n})$ into iid Bell pairs $\ket{\Phi_{d_A}}^{\otimes k}$, where the number $k$ of Bell pairs is determined by the (entanglement) structure of $\rho_{A_1^n B_1^n} $.
% This shows that in the definitions of entanglement cost and distillable entanglement, it makes no difference if we assume iid or almost-iid Bell pairs. \RR{Double check -- as discussed}
The $\LOCC$ protocol achieving this task is depicted in~\cref{fig_Bell_protocol}. It works in three steps:
\begin{enumerate}[(1)]
    \item We start with a bipartite state $\rho_{A_1^n B_1^n} \in \St(\cH_{AB}^{\otimes n})$. Let $\rho_{A_1^n B_1^n E} = \proj{\Psi}_{A_1^n B_1^n E}$ be a purification of $\rho_{A_1^n B_1^n}$.
    \item Then, we apply a (Haar) random unitary $U_{A_1^n}$ on the $A_1^n$ system. The output is split into the first $k$ subsystems which we now denote as the $C$ system, and the remaining $n-k$ subsystems which we measure in the computational (product) basis. More precisely, the measurement symbol in~\cref{fig_Bell_protocol} performs the operation
    \begin{align}
      M_{A_{k+1}^n}: \ Z \mapsto \sum_x \bra{x} Z \ket{x}\, \proj{x}_{A_{k+1}^n} \, .
    \end{align}
    We implement the measurement as an isometry $V_{XS|A_{k+1}^n }$ (via the Stinespring dilation)\footnote{Formally, the isometry implementing the measurement is given by $V_{XS|A_{k+1}^n} = \sum_x \ket{x}_X \otimes \ket{x}_S \otimes \bra{x}_{A_{k+1}^n}$.} by means of an additional auxiliary system $S$. The output state is denoted by $\omega_{C X B_1^n S  E}$.
     \item The classical measurement outcome $X$ is communicated from the $A_1^n$ system to the $B_1^n$ system, and a ``retrieval" map $\cR$ is then jointly applied on the $B_1^n$ and $X$ systems. The retrieval map $\cR$ is chosen such that it produces $k$ (maximally entangled) Bell pairs between the output systems $C$ and $D$. The output state is denoted by $\chi_{C D S E}$. 
\end{enumerate}
\begin{figure}[htb!]
    \centering
    \begin{tikzpicture}[thick,scale=1]
    \def\y{1}
    \draw (0,0) -- (1,0);
    \draw [draw=black] (1,0.3) rectangle (2,-0.3);
    \node at (1.5,0) {$U_{A_1^n}$};
    \draw (2,-0.15) -- (3,-0.15);
    
    \draw [draw=black] (3,0.0) rectangle (4,-0.6);
    \draw[thin] (4,-0.325) -- (5,-0.325);
    \draw[thin] (4,-0.275) -- (5,-0.275);
    \draw[thin] (3.8,-0.475) arc (0:180:0.3);
    \draw[thin,->] (3.5,-0.4) -- (3.8,-0.1);
 
    \draw (2,0.15) -- (7,0.15);

    \draw (0,-\y) -- (5,-\y);
    \draw [draw=black] (5,-0.275+0.15) rectangle (6,-\y-0.15);
    \draw (6,-0.275/2-0.5*\y) -- (7,-0.275/2-0.5*\y);
    \node at (5.5,-0.275/2-0.5*\y) {$\cR$};
    
    \draw (3.5,-0.6) -- (3.5,-1.5*\y);
    \draw (3.5,-1.5*\y) -- (7,-1.5*\y);    
     
    \draw (0,-2*\y) -- (7,-2*\y);     

    % notation
    \node at (-0.3,0) {$A_1^n$};
    \node at (-0.3,-\y) {$B_1^n$}; 
    \node at (-0.3,-2*\y) {$E$};  
    \node at (3,0.5) {$A_1^k =:C$};
    \node at (2.5,-0.5) {$A_{k+1}^n$}; 
    \node at (3.3,-1.5*\y) {$S$};
    \node at (6.4,-0.275/2-0.5*\y+0.2) {$D$};  
    \node at (4.4,-0.52) {$X$};      

    \draw[dashed,gray] (0.2,0.4)--(0.2,-2*\y-0.7);
    \draw[dashed,gray] (4.75,0.4)--(4.75,-2*\y-0.7);  
    \draw[dashed,gray] (6.75,0.4)--(6.75,-2*\y-0.7); 

    \node[gray] at (0.2,-2*\y-0.7-0.3) {$\ket{\Psi}_{A_1^n B_1^n E}$}; 
    \node[gray] at (4.75,-2*\y-0.7-0.3) {$\omega_{C X B_1^n S  E}$};   
    \node[gray] at (6.75,-2*\y-0.7-0.3) {$\chi_{C D S E}$};

    \draw [rotate=180,thick, decorate,decoration={brace,amplitude=5pt}] (0.7,+1.1) -- (0.7,-1.3*\y+1.1);
    \node at (-2.5,-0.5*\y) {$\rho_{A_1^n B_1^n} \in \St(\cH_{AB}^{\otimes n})$};
    % \node at (-3,-0.5*\y+0.5) {$n$ almost-iid Bell pairs};
    \draw [thick, decorate,decoration={brace,amplitude=5pt}] (7.2,0+0.25) -- (7.2,-\y+0.25);
    \node at (8.6,-0.5*\y+0.25) {$\proj{\Phi_{d_A}}^{\otimes k}$};
     \node at (8.6,-0.5*\y+0.25+0.5) {$k$ iid Bell pairs};
     
    \end{tikzpicture}
    \caption{An $\LOCC$ protocol that converts an $n$-fold bipartite state into $k$ iid Bell pairs.} %We show in the text that $k$ is asymptotically equal to $n$, i.e.~$\lim_{n \to \infty} \frac{k}{n}=1$. Therefore, the protocol is asymptotically optimal.}
    \label{fig_Bell_protocol}
\end{figure}

\noindent In order to analyse the above protocol, we make use of the decoupling theorem~\cite[Theorem~3.1]{DBWR14}. Let $\eps \in (0,1)$. The theorem implies that if
\begin{align} \label{eq_condition_decoupling_1}
    H^{\eps}_{\min}(A_1^n|E)_{\rho} + H^{\eps}_{\min}(A_1^n|C X)_{\tau} \geq 2\log\left(\frac{1}{\eps}\right),
\end{align}
where $\tau_{A_1^n C X}$ denotes the Choi–Jamio{\l}kowski representation of the map $\cT_{A_1^n \to C X } := \id_{A_1^k} \otimes M_{A_{k+1}^n}$, then there exists a unitary $U_{A_1^n}$ such that
\begin{align} \label{eq_decoupling_thrm}
    \norm{\cT_{A_1^n \to C X }\left(U_{A_1^n} \rho_{A_1^n E} U^\dagger_{A_1^n} \right) - \tau_{C X}\otimes \rho_E}_1 = \norm{\omega_{CXE} - \frac{\id_{CX}}{d_C d_X}\otimes \omega_E}_1 \leq 13\eps\,.
\end{align}
Note that in the first equality above, we used that $\rho_E = \omega_E$, and the explicit form of the representation
\begin{align} \label{eq_CJ-rep}
    \tau_{A_1^n C X} = \proj{\psi}_{A_1^k, C} \otimes \frac{1}{d_X} \sum_x \proj{x}_{A_{k+1}^n}\otimes \proj{x}_X \, ,
\end{align}
where \smash{$\proj{\psi}_{A_1^k, C} \equiv \proj{\Phi_{d_A}}^{\otimes k}$} denotes a %$d_A \times d_A$ 
maximally entangled state between $k$ pairs of the input \smash{$A_1^k$} and output $C$ systems, implying that $\tau_{C X} = \frac{\id_{CX}}{d_C d_X}$. 
Because the trace distance is contractive under trace-preserving completely positive maps~\cite[Theorem~8.16]{wolf_notes},~\cref{eq_decoupling_thrm} further implies that
\begin{align} \label{eq_decoupling_DPI}
    \norm{\omega_{XE}- \frac{\id_{X}}{d_X}\otimes \omega_{E}}_1 \leq \norm{\omega_{CXE} - \frac{\id_{CX}}{d_C d_X}\otimes \omega_E}_1 \overset{\textnormal{\Cshref{eq_decoupling_thrm}}}{\leq} 13\eps \, ,
\end{align}
and as a consequence,
\begin{align}
    \norm{\omega_{CXE} - \frac{\id_{C}}{d_C}\otimes \omega_{XE}}_1 &\leq  \norm{\omega_{CXE} - \frac{\id_{CX}}{d_C d_X}\otimes \omega_E}_1 + \norm{\frac{\id_{CX}}{d_C d_X}\otimes \omega_E - \frac{\id_{C}}{d_C}\otimes \omega_{XE}}_1 \\
    &=\norm{\omega_{CXE} - \frac{\id_{CX}}{d_C d_X}\otimes \omega_E}_1 + \norm{\frac{\id_{X}}{d_X}\otimes \omega_E -  \omega_{XE}}_1 \\
    \overset{\textnormal{\Cshref{eq_decoupling_thrm,eq_decoupling_DPI}}}&{\leq} 26\eps \label{eq_decoupling_thrm-2}.
\end{align}
The above decoupling statement now allows us to construct the retrieval map $\cR$ as following. Since the protocol consists of a unitary channel followed by an isometric channel, the state $\omega_{C X B_1^n S  E}$ is pure. On the other hand, we define \smash{$\ket{G}_{CDXE\bar{E}} := (\ket{\Phi_{d_A}}^{\otimes k})_{CD}\otimes \ket{\xi}_{XE\bar{E}}$} as a purification of the state $ \frac{\id_{C}}{d_C}\otimes \omega_{XE}$ with additional purifying systems $D$ and $\bar{E}$, and where $\ket{\xi}_{XE\bar{E}}$ is any purification of $ \omega_{XE}$.\footnote{Without loss of generality, we can choose the dimension of the $\bar{E}$ system large enough such that the $D\bar{E}$ system is at least as large as the $B_1^n S$ system.} Now, due to Uhlmann's theorem and the Fuchs-van de Graaf inequalities~\cite{fuchs99}, there exists a purification $\tilde{\omega}_{CDXE\bar{E}}$ of $\omega_{CXE}$ such that
\begin{align} \label{eq_uhlmann-fdv-purifications}
    \Big\lVert\tilde{\omega}_{CDXE\bar{E}} -\proj{G}_{CDXE\bar{E}} \Big\rVert_1 \leq 2\sqrt{ \norm{\omega_{CXE} - \frac{\id_{C}}{d_C}\otimes \omega_{XE}}_1} \overset{\textnormal{\Cshref{eq_decoupling_thrm-2}}}{\leq} 2\sqrt{26\eps} \, .
\end{align}
In addition, since $\omega_{C X B_1^n S  E}$ is also a purification of $\omega_{C X E}$, there exists an isometry $W_{D\bar{E}|B_1^n S}$ acting on the purifying systems that relates these two purifications,
\begin{align}
    W_{D\bar{E}|B_1^n S}\, \omega_{C X B_1^n S  E}\, W^\dagger_{D\bar{E}|B_1^n S} = \tilde{\omega}_{CDXE\bar{E}} \overset{\textnormal{\Cshref{eq_uhlmann-fdv-purifications}}}{\approx} \proj{G}_{CDXE\bar{E}} = (\proj{\Phi_{d_A}}^{\otimes k})_{CD}\otimes \proj{\xi}_{XE\bar{E}}\, . \label{eq_GM_s1}
\end{align}
Finally, by exploiting the explicit form of the state $\omega_{C X B_1^n S  E} $ after applying the measurement isometry $V_{XS|A_{k+1}^n}$, a direct calculation shows that if the isometry $W$ is applied on the $B_1^n X$ system instead of the $B_1^n S$ system, the output state remains the same (since the systems $X$ and $S$ are essentially copies of each other), i.e.,
\begin{align}
     W_{D\bar{E}|B_1^n X}\, \omega_{C X B_1^n S  E}\, W^\dagger_{D\bar{E}|B_1^n X} \overset{\textnormal{\Cshref{eq_GM_s1}}}{\approx} (\proj{\Phi_{d_A}}^{\otimes k})_{CD}\otimes \proj{\xi}_{SE\bar{E}} =: \chi_{CDSE\bar{E}}\, .
\end{align}
Thus, the retrieval function $\cR$ is just defined by the isometry $W_{D\bar{E}|B_1^n X}$ and yields the maximally entangled output state $(\proj{\Phi_{d_A}}^{\otimes k})_{CD}$ up to an error that scales as $O(\sqrt{\eps})$. This concludes the protocol.

It remains to be understood how large one can choose $k$ for the protocol to yield $\chi_{CD} \approx \proj{\Phi_{d_A}}^{\otimes k}$.
This is quantified by~\cref{eq_condition_decoupling_1} which ensures that the decoupling theorem holds. Since $\rho_{A_1^n B_1^n E} = \proj{\Psi}_{A_1^n B_1^n E}$, the duality of the smooth min- and max-entropy~\cite[Proposition~6.14]{marco_book} ensures that
\begin{align} \label{eq_duality_decoupling}
    H^{\eps}_{\min}(A_1^n|E)_{\rho} \overset{\textnormal{duality}}{=} - H^{\eps}_{\max}(A_1^n|B_1^n)_{\rho}\, .
\end{align}
Furthermore, the simple structure of $\tau_{A_1^n C X}$ implies
\begin{align} \label{eq_decoupling2_ass}
    H^{\eps}_{\min}(A_1^n|C X)_{\tau} \geq H_{\min}(A_1^n|C X)_{\tau} 
    \overset{\textnormal{\Cshref{eq_CJ-rep}}}{=}  H_{\min}(A_1^k|C)_{\tau} + \underbrace{H_{\min}(A_{k+1}^n|X)_{\tau}}_{=0} 
    = -k \, .
\end{align}
With the help of~\cref{eq_duality_decoupling,eq_decoupling2_ass}, we can strengthen the decoupling condition given in~\cref{eq_condition_decoupling_1} to
\begin{align} \label{eq_condition_decoupling_2}
k   \,\overset{!}{\leq}\, - H^{\eps}_{\max}(A_1^n|B_1^n)_{\rho} - 2 \log\left( \frac{1}{\eps}\right) 
%\\  \overset{\textnormal{\Cshref{prop_strong_AEP}}}&{=} - n H(A|B)_{\Phi} - 2 \log \left(\frac{1}{\eps}\right) +o(n) \\
% &= n - 2 \log \left(\frac{1}{\eps}\right) +o(n)   \, .
\end{align}
This shows that we can choose the maximal number of iid Bell pairs to be $ k= \lfloor - H^{\eps}_{\max}(A_1^n|B_1^n)_{\rho} - 2 \log\left( \frac{1}{\eps}\right) \rfloor $. Moreover, note that it is clear from \Cref{fig_Bell_protocol} that the protocol is $\LOCC$ between the systems $A_1^n$ and $B_1^n$. We emphasise that both the unitary $U_{A_1^n}$ and the retrieval map $\cR$, defined by the isometry $W_{D\bar{E}|B_1^n X}$, are constructed depending on the initial state $\rho_{A_1^n B_1^n}$ and therefore, the overall $\LOCC$ protocol is dependent on $\rho_{A_1^n B_1^n}$.

Finally, we remark that due to Nielsen's theorem~\cite{nielsenChuang_book}, one can further convert $k$ iid copies of a maximally entangled state $\ket{\Phi_{d_A}}$ into $M\geq k$ iid copies of Bell ebits $\ket{\Phi_2}$ via $\LOCC$ operations, i.e.,
\begin{align}
\ket{\Phi_{d_A}}^{\otimes k} \quad \overset{\LOCC}{\longrightarrow} \quad \ket{\Phi_2}^{\otimes M}\,,\qquad M \,=\, \lfloor k\,\log_2(d_A) \rfloor \,.
\end{align}

%--------------------------------------------------------------------------
\begin{lemma}[Universal Bell decoupling protocol] \label{lem_universal-decoupling}
Let $n,\,w\in \N$ with $w\leq n$ and $\delta \in (0,1)$. Let $\St^n_w(\cH_{AB}, \ket{\Phi_2})$ denote the set of $\binom{n}{w}$-almost-iid states along $\Phi_2$ with $\dim(\cH_A)= \dim(\cH_B)=2$.
% and let $\cT_{n,w} := \dim \St^n_w(\cH_{AB}, \Phi_2)$. 
Then, there exists $k\in \N$ and a (one-way) $\LOCC$ channel $\cD_n^{w,\delta}$ satisfying
\begin{align}
\sup_{\omega^{(n)} \in \St^n_w(\cH_{AB}, \ket{\Phi_2})}  \Big\|\cD_n^{w,\delta} \big(\omega^{(n)}\big) - \Phi_2^{\otimes k} \Big\|_1 \,\leq\, \delta
\end{align}
and
\begin{align}
k \,\geq \, n-5\log D_n^w -4\log \left(\frac{1}{\delta}\right) - c_0  \,,
\end{align}
where $ D_n^w := \sum_{j=0}^w \binom{w}{j}3^j \leq 2^{nh(w/n) +2w}$ and $c_0 :=  2\log(104) + 1$.
\end{lemma}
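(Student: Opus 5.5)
The plan is to run the decoupling protocol of this section, but with one unitary and one retrieval map chosen universally for the whole set $\St^n_w(\cH_{AB},\ket{\Phi_2})$, not tuned to a particular input. The key observation is structural. Since $\Phi_2$ is pure, its purification is trivial. Every $\omega^{(n)}\in\St^n_w$ is therefore supported on the subspace $\mathcal{K}:=\mathrm{span}\,\cV^n_w(\cH_{AB},\ket{\Phi_2})$. Writing each defect as $(\id\otimes P)\ket{\Phi_2}$ with $P\in\{X,Y,Z\}$ (up to the choice of defect positions, which is counted combinatorially), this space is spanned by orthonormal Bell-basis vectors $(\id_{A^n}\otimes P_{\vec{j}})\ket{\Phi_2}^{\otimes n}$, where $P_{\vec{j}}$ ranges over Pauli strings with at most $w$ nontrivial factors. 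Its dimension is therefore at most $D_n^w$.

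The universal protocol first applies, on Bob's side, a fixed isometry that coherently encodes the Pauli string $\vec{j}$ into a register $J$ of dimension $D_n^w$. Concretely, the state $\omega$ becomes a state of the form $(\ket{\Phi_2}^{\otimes n})_{A^nB^n}$ correlated with $J$ through $P_{\vec{j}}$. Since $P_{\vec j}$ acting on $B$ equals $P^T_{\vec j}$ acting on $A$, we can equivalently view every input in $\St^n_w$ as $(V\otimes\id)$ applied to a state $\ket{\Phi_2}^{\otimes n}_{A^nB^n}\otimes \xi_{J E}$, where $V=\sum_{\vec j}P^T_{\vec j}\otimes\proj{\vec j}_J$ is a controlled unitary. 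The point is that the uncertainty lives only in the mixed state $\xi_{JE}$ on a space of dimension at most $D_n^w$.

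The next step is to apply the decoupling theorem~\cite[Theorem~3.1]{DBWR14} to $A^n$ with the reference being $JE$ together with the purification. One needs $H_{\min}(A^n|E')_\rho\ge n-2\log D_n^w$ for all such states. This follows from the dimension bound $H_{\min}(A|BR)\ge H_{\min}(A|B)-2\log\dim R$ applied with $R=J$, starting from the maximally entangled reference. Combined with $H_{\min}(A_1^n|CX)_\tau=-k$ from~\cref{eq_decoupling2_ass}, the decoupling condition~\cref{eq_condition_decoupling_1} holds for $k\approx n-2\log D_n^w-2\log(1/\eps)$. Crucially, the Haar average in the decoupling theorem bounds the expected distance, and the averaged bound $13\eps$ is state-independent.

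To obtain a single unitary that works for all inputs simultaneously, I would apply decoupling not to each $\omega$ separately but to one canonical purification that covers the whole set. Take $\xi_{JE}$ maximally entangled between $J$ and a reference $E$ of dimension $D_n^w$. Every $\omega\in\St^n_w$ is then obtained from this canonical state by an operation on $E$ alone, since any state on $J$ arises by steering from a maximally entangled reference. Decoupling of $C X$ from $E$ for the canonical state then yields a fixed retrieval isometry on $B^nX$ (via Uhlmann, as in~\cref{eq_GM_s1}) which is universal. The reason is that operations on $E$ commute with Bob's retrieval and leave the $CD$ Bell pairs untouched, up to a factor $D_n^w$ coming from how much a steering operation can amplify the error. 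This factor is accounted for by an extra $\log D_n^w$ and a doubled $\log(1/\eps)$ in the bound on $k$.

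Choosing $\eps$ so that the final $2\sqrt{26\eps}\cdot(\text{amplification})\le\delta$ produces constants like $2\log 104$, and the floor contributes the $+1$. Altogether this gives $k\ge n-5\log D_n^w-4\log(1/\delta)-c_0$. The bound $D_n^w\le 2^{nh(w/n)+2w}$ comes from~\cref{eq_sizeT} with $d_{CE}=4$.

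The main obstacle is the universality step: controlling how the error propagates when passing from the canonical maximally entangled reference to an arbitrary input. If steering turns out to be too lossy, I would instead use the post-selection-style operator inequality $\omega\le D_n^w\,\bar\omega$, where $\bar\omega$ is the normalized projector onto $\mathcal K$ (this follows from a pinching argument, as in~\cite[Lemma~4.3]{MSR2026}). I would run decoupling for $\bar\omega$, and then transfer the fidelity bound to $\omega$ at a cost of a factor $D_n^w$ on the failure probability. Either route should reproduce the $5\log D_n^w$ term.
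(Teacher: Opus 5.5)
Your universality step is the right mechanism, and it is essentially what the paper does. You run the decoupling protocol once on a canonical purification of $\Omega=\Pi_n^w/D$ and transfer the guarantee to every input at multiplicative cost $D=D_n^w$; the paper writes this as $\|\bar\cD-\cR_\Phi\|_\diamond\le D\,\|J(\bar\cD-\cR_\Phi)\|_1$ with the replacer $\cR_\Phi(Z)=\tr(Z)\,\Phi_2^{\otimes k}$.

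However, the label register $J$ you build this on does not work. No isometry on Bob's side can coherently write the Pauli label into $J$. Local operations on $B^n$ cannot change Alice's marginal. Yet the vector $\tfrac{1}{\sqrt2}(\id\otimes(\id+P))\ket{\Phi_2}^{\otimes n}$ has Alice-marginal $2^{-n}(\id+P^{T})$, while its image (in either reading: replacing by, or appending, the label) would have marginal $2^{-n}\id$.

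Moreover, the representation via the controlled unitary $V$ on $\ket{\Phi_2}^{\otimes n}\otimes\xi_{JE}$ produces, once $J$ is discarded, only mixtures diagonal in the Pauli--Bell basis. It misses coherent inputs such as $\ket{00}=\tfrac{1}{\sqrt2}(\ket{\Phi^+}+\ket{\Phi^-})\in\St^1_1(\cH_{AB},\ket{\Phi_2})$. Steering on $E$ from a $J$--$E$ maximally entangled state inherits this defect. Drop $J$ and take the canonical state maximally entangled between the subspace $\cK_n^w(\Phi_2)\subseteq\cH_{AB}^{\otimes n}$ itself and a reference $R$, i.e.\ $\ket{\Gamma}=D^{-1/2}\sum_\ell\ket{\Phi_\ell}\ket{\ell}_R$. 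Every pure state supported on $\cK_n^w(\Phi_2)$ is then $(\id\otimes X_R)\ket{\Gamma}$ with $\|X\|_\infty^2\le\tr(X^\dagger X)=D$, which is exactly your amplification factor.

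The second gap is quantitative and prevents you from reaching the stated $5\log D_n^w$. The generic dimension bound only gives $H_{\min}(A^n|R)_\Gamma\ge n-2\log D$. Since the transferred error is $2D\sqrt{26\eps}$, you are forced to take $\eps=\delta^2/(104D^2)$. The term $2\log(1/\eps)$ alone then contributes $4\log D+4\log(1/\delta)+2\log 104$, so your count ends at $k\ge n-6\log D_n^w-4\log(1/\delta)-c_0$. The tally ``an extra $\log D_n^w$ and a doubled $\log(1/\eps)$'' does not add up to five.

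The missing ingredient is the sharp bound $H_{\max}(A^n|B^n)_\Omega\le\log D-n$, equivalently $H_{\min}(A^n|R)_\Gamma\ge n-\log D$ by duality. It holds because every basis vector of $\cK_n^w(\Phi_2)$ is maximally entangled, so $\tr[\Pi_n^w(\id\otimes\xi_{B^n})]=D\,2^{-n}$ for every $\xi_{B^n}$. Hence $F(\Omega,\id\otimes\xi_{B^n})=\tfrac{1}{D}\|\Pi_n^w(\id\otimes\sqrt{\xi_{B^n}})\|_1^2\le\|\Pi_n^w(\id\otimes\sqrt{\xi_{B^n}})\|_2^2=D\,2^{-n}$, using $\|Y\|_1\le\sqrt{\mathrm{rank}\,Y}\,\|Y\|_2$.

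Your fallback via $\omega\le D\bar\omega$ does not recover the constants either. It transfers the infidelity $\tr[(\id-\Phi_2^{\otimes k})\cD(\cdot)]$ to the pure target with a factor $D$. Converting back to trace distance then costs a square root, which forces $\eps\propto\delta^4/D^2$ and yields $8\log(1/\delta)$ instead of $4\log(1/\delta)$.
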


\begin{proof}
For $n \in \N$ and $w\in\{0,\ldots,n\}$, we define the following (Hamming) space
\begin{align} \label{eq:bounded-defect-subspace}
\cK_n^w (\Phi_2) \,:=\, \mathrm{span} \left\{ (I_{A^n}\ox P_{B^n})\ket{\Phi_2}^{\ox n} \,\middle|\, P\in\{I,X,Y,Z\}^{\ox n},\ \mathrm{wt}(P)\leq w \right\},
\end{align}
and its dimension is given by
\begin{align} \label{eq:bounded-defect-dimension}
D_n^w \,:= \dim\cK_n^w (\Phi_2) \,=\, \sum_{j=0}^{w}\binom{n}{j}3^j\,.
\end{align}
Note that for a Pauli string $P = P_1\ox\cdots\ox P_n  \in  \{I,X,Y,Z\}^{\ox n} $ with $I,X,Y,Z$ denoting the single-qubit Pauli operators, we write $\mathrm{wt}(P)$ for the number of factors $P_i$ that are different from the identity.  Thus, $\mathrm{wt}(P)$ simply counts the pairs on which $P$ acts non-trivially. Furthermore, the vectors in~\cref{eq:bounded-defect-subspace} are maximally entangled and form an orthonormal basis of $\cK_n^w (\Phi_2)$ since, for any two Pauli strings $P$ and $Q$, it holds
\begin{align} \label{eq:Pauli-Bell-orthogonality}
\bra{\Phi_2}^{\ox n} (I\ox P^\dagger Q) \ket{\Phi_2}^{\ox n} \,=\, \frac{1}{2^n}\tr(P^\dagger Q) \,=\, \delta_{P,Q}\,.
\end{align}
Next, let $\Pi_n^w$ denote the (orthogonal) projector onto $\cK_n^w (\Phi_2)$ and we introduce the universal test state
\begin{align}
\Omega_{A^nB^n} \,:= \,\frac{\Pi_n^w}{D_n^w}\,.
\end{align}
The proof idea is to design a decoupling protocol depending on the state $\Omega_{A^nB^n}$. Then we will show that, due to the structure of $\Omega_{A^nB^n}$, the constructed decoupling protocol will also work for any state that is supported in $\cK_n^w (\Phi_2)$. In the following, we will write $D \equiv D_n^w$ for brevity.

\medskip
\noindent \textbf{Step 1:} \textit{(Constructing the decoupling channel)}
Let $\{\ket{\Phi_\ell}\}_{\ell=1}^{D}$ denote the orthonormal basis of $\cK_n^w (\Phi_2)$ consisting of maximally entangled vectors $\ket{\Phi_\ell}$, and define a purification of $\Omega_{A^nB^n}$ as
\begin{align} \label{eq:moving-note-universal-test-purification}
\ket{\Gamma}_{A^nB^n R} \,:=\, \frac1{\sqrt{D}} \sum_{\ell=1}^{D} \ket{\Phi_\ell}_{A^nB^n}  \ket{\ell}_R\,.
\end{align}
We now apply the decoupling protocol described above to the state $\Omega_{A^nB^n}$, using $\ket{\Gamma}_{A^nB^n R}$ as its purification and treating $R$ as the purifying environment. Then, for any $\eps \in (0,1)$ and $k\in \N_0$ with
\begin{align} \label{eq:universal-sufficient-condition}
k   \,\leq \, - H^{\eps}_{\max}(A_1^n|B_1^n)_{\Omega} - 2 \log\left( \frac{1}{\eps}\right),
\end{align}
we obtain a (one-way) $\LOCC$ channel $\cD^{w,\eps}_n$ obeying
\begin{align} \label{eq:moving-note-universal-Choi-error}
\Big\| (\mathrm{id}_R\otimes\cD^{w,\eps}_n)\big(\proj{\Gamma}_{A^nB^n R}\big) -\tfrac{\id_R}{D}\otimes\Phi_2^{\otimes k} \Big\|_1 \,\leq \, 2\sqrt{26\eps}\,.
\end{align}
% Notice that the map $\cD^{w,\eps}_n$ depends on the parameters $n$, $w$, and $\eps$.
As a next step, we bound the conditional smooth max-entropy of $\Omega_{A^nB^n}$. Since every basis vector $\ket{\Phi_\ell}$ is maximally entangled and  $\Pi_n^w=\sum_{\ell=1}^{D}\proj{\Phi_\ell}$, it holds for any state $\xi_{B^n} \in \St(\cH_B^{\otimes n})$ that
\begin{align}\label{eq:projector-overlap}
\tr\big[\Pi_n^w(\id_{A^n}\ox\xi_{B^n})\big] \,=\, \sum_{\ell=1}^{D} \tr_{B^n}\big[\tr_{A^n}(\proj{\Phi_\ell})\xi_{B^n}\big] \,= \, \sum_{\ell=1}^{D} \frac{1}{2^n}\tr(\xi_{B^n}) \,= \, \frac{D}{2^n}\,.
\end{align}
By further using that $\| X \|_1 \leq \sqrt{\mathrm{rank}(X)} \|X\|_2$ for any hermitian operator $X$, we can evaluate following the fidelity expression (using the convention $F(\rho,\sigma) := \|\sqrt{\rho}\sqrt{\sigma}\|_1^2$) as
\begin{align}\label{eq:fidelity-first-line}
F(\Omega_{A^nB^n},\id_{A^n}\ox \xi_{B^n}) \,&=\, \frac{1}{D} \norm{\Pi_n^w\big(\id_{A^n}\ox\sqrt{\xi_{B^n}}\big)}^2_1 \\
\,&\leq\,  \norm{\Pi_n^w\big(\id_{A^n}\ox\sqrt{\xi_{B^n}}\big)}^2_2 \\
\,&=\, \tr\big[\Pi_n^w(\id_{A^n}\ox\xi_{B^n})\big] \\
\overset{\textnormal{\Cshref{eq:projector-overlap}}}&{=} \frac{D}{2^n}\,, \label{eq:fidelity-last-line}
\end{align}
where in the second step, we used that $\mathrm{rank}\big(\Pi_n^w(\id_{A^n}\ox\sqrt{\xi_{B^n}})\big) \leq \mathrm{rank}(\Pi_n^w) = D$. By the definition of the conditional max-entropy, we therefore find
\begin{align} \label{eq:universal-conditional-max-entropy}
H_{\max}(A^n|B^n)_\Omega \,=\, \sup_{\xi_{B^n}\in\St(\cH^{\ox n}_{B})} \log  F(\Omega_{A^nB^n},\id_{A^n}\ox\xi_{B^n}) \,\overset{\textnormal{\Cshref{eq:fidelity-last-line}}}{\leq}\,  \log D - n\,.
\end{align}
In fact, equality holds in~\cref{eq:universal-conditional-max-entropy} as can be seen by choosing $\xi_{B^n} = \frac{\id_{B^n}}{2^n}$. Hence, since smoothing can only decrease the max-entropy, we find
\begin{align} \label{eq:smooth-max-entropy-bound}
H_{\max}^{\eps}(A^n|B^n)_\Omega \,\leq\, H_{\max}(A^n|B^n)_\Omega \,=\, \log D - n\,.
\end{align}
%Intuitively, this bound shows that $n$ perfect Bell pairs contribute $-n$ to the conditional max-entropy, whereas the uncertainty about a $D$-dimensional family of possible defect vectors costs at most $\log D$. 
Thus, the sufficient condition~\cref{eq:universal-sufficient-condition} for the number of output Bell pairs $k$ is satisfied whenever
\begin{align} \label{eq:moving-note-Hamming-decoupling-condition}
k \,\leq\, n -\log D_n^w - 2 \log\left( \frac{1}{\eps}\right)\,.
\end{align}

\medskip
\noindent \textbf{Step 2:} \textit{(From the universal test state to an arbitrary almost-iid input)}
First, we argue that if a state $\omega^{(n)} \in \St(\cH_{AB}^{\ox n})$ is $\binom{n}{w}$-almost-iid along $\ket{\Phi_2}$, then its support is contained in the space $\cK_n^w(\Phi_2)$ defined in~\cref{eq:bounded-defect-subspace}. To see this, recall that any $\omega^{(n)} \in   \St^n_w(\cH_{AB}, \ket{\Phi_2})$ satisfies 
\begin{align}
\supp(\omega^{(n)}) \subseteq  \mathrm{span}\,\cV^n_w(\cH_{AB},\ket{\Phi_2}_{AB})
\end{align}
by definition. The set $\cV^n_w(\cH_{AB},\ket{\Phi_2}_{AB})$ consists of vectors of the form $\ket{\psi_j}_{A^n B^n } =\pi_j \big(\ket{\Phi_2}_{AB}^{\otimes n-w} \otimes \ket{\xi_j}_{A^w B^w}\big)$ for some permutation $\pi_j$ and state $\ket{\xi_j}_{A^w B^w} \in \cH_{AB}^{\ox w}$. As any pure state $\ket{\xi_j}_{A^w B^w} $ may be expressed as $\ket{\xi_j}_{A^w B^w}  = (\id_{A^w}\otimes X_{B^w})\ket{\Phi_2}_{AB}^{\otimes w} $ for some operator $X_{B^w}$ (e.g., see~\cite{wolf_notes}), and the operator $X_{B^w}$ may be expanded in the (multi-qubit) Pauli basis, it follows that $\ket{\psi_j}_{A^n B^n } \in \cK_n^w(\Phi_2)$ and thus, 
\begin{align} \label{eq:almost-iid-support-Hamming}
\supp(\omega^{(n)}) \,\subseteq \,  \cK_n^w(\Phi_2)\,.
\end{align}
%\begin{align}
%X_{B^w} = \sum_{P \in\{I,X,Y,Z\}^{\ox w}} c_{P} P\,,
%\end{align}

\medskip
\noindent Next, let $\bar{\cD}_n^{\omega,\epsilon}$ denote the restriction of $\cD_n^{\omega,\epsilon}$ to input states that are supported in $\cK_n^w(\Phi_2)$, and further define a replacer channel $\cR_\Phi(Z)$ on this subspace by
\begin{align} \label{eq:replacer-channel}
\cR_\Phi(Z) \,:=\, \tr(Z)\Phi_2^{\otimes k}\,.
\end{align}
Then, we have
\begin{align}
(\mathrm{id}_R \ox\cR_{\Phi})\big(\proj{\Gamma}_{A^nB^n R}\big) \,=\, \tfrac{\id_R}{D}\ox\Phi_2^{\ox k}\,.
\end{align}
Moreover, since $\ket{\Gamma}_{A^nB^n R}$ is a maximally entangled vector between the $D$-dimensional system $\cH_R$ and the $D$-dimensional subspace $\cK_M^w(\Phi_2)$, the left hand side of~\cref{eq:moving-note-universal-Choi-error} corresponds precisely to the trace norm of the (normalized) Choi operator $J(\Delta)$ with
\begin{align}
\Delta \,:=\, \bar{\cD}_n^{w,\epsilon} -\cR_{\Phi}\,,\qquad J(\Delta)\,:=\, (\Delta \ox \mathrm{id}_R)\big(\proj{\Gamma}_{A^nB^n R}\big)\,.
\end{align}
Now, the idea is to use the trace norm of the Choi operator $J(\Delta)$ as a bound for the diamond norm of the channel difference operator $\Delta$, i.e.,
\begin{align} \label{eq:Choi-to-diamond}
\norm{\Delta}_\diamond \,\leq\, D\norm{J(\Delta)}_1\,.
\end{align}
One way to justify~\cref{eq:Choi-to-diamond} is to recall that any pure state $\ket{\psi} \in \cK_n^w(\Phi_2) \otimes \cH_R$ may be written as $\ket{\psi} = ( \id_{\cK_n^w} \otimes X_R)\ket{\Gamma}$ for some operator $X$ satisfying $\tr(X^\dagger X)=D$ (e.g., see~\cite{wolf_notes}), and consequently,
\begin{align} \label{eq:Choi-to-diamond-derivation}
\norm{(\Delta \ox \mathrm{id}_R)(\proj{\psi})}_1 = \norm{ ( \id \otimes X_R)J(\Delta)( \id \otimes X^\dagger_R)}_1 \leq \norm{X}_\infty^2 \norm{J(\Delta)}_1 = D\norm{J(\Delta)}_1\,.
\end{align}
By convexity, the same bound holds for mixed inputs, and~\cref{eq:Choi-to-diamond} follows (see also~\cite[Exercise 3.6]{watrous_2018}). Putting everything together, we find for any almost-iid input state $\omega^{(n)} \in  \St^n_w(\cH_{AB}, \ket{\Phi_2})$ that
\begin{align}
 \Big\|\cD_n^{w,\eps} \big(\omega^{(n)}\big) - \Phi_2^{\otimes k} \Big\|_1 \,\overset{\textnormal{\Cshref{eq:replacer-channel}}}&{=}\,  \Big\|{\cD}_n^{w,\eps} \big(\omega^{(n)}\big) - \cR_\Phi\big(\omega^{(n)} \big) \Big\|_1 \\
  \,\overset{\textnormal{\Cshref{eq:almost-iid-support-Hamming}}}&{\leq}\, \norm{ \bar{\cD}_n^{w,\epsilon} -\cR_{\Phi}}_\diamond \\
   \,\overset{\textnormal{\Cshref{eq:Choi-to-diamond}}}&{\leq} \, D\norm{ (\Delta \ox \mathrm{id}_R)\big(\proj{\Gamma}_{A^nB^n R}\big)}_1 \\
    \,\overset{\textnormal{\Cshref{eq:moving-note-universal-Choi-error}}}&{\leq}\,2D\sqrt{26\eps}\,.
\end{align}
Since the bound above is independent of the state $\omega^{(n)}$, we obtain the same bound after taking the supremum over all states $\omega^{(n)} \in  \St^n_w(\cH_{AB}, \ket{\Phi_2})$. Finally, for $\delta \in (0,1)$ we set
\begin{align}
\eps \,:=\, \frac{\delta^2}{104D^2}\qquad \implies \qquad 2D\sqrt{26\eps} \,=\, \delta\,.
\end{align}
Inserting this choice into~\cref{eq:moving-note-Hamming-decoupling-condition} yields
\begin{align}
k \,\leq \, n -\log D_n^w -2\log \left(\frac{104D^2}{\delta^2}\right)  \,=\, n-5\log D_n^w -4\log \left(\frac{1}{\delta}\right) -2\log(104) \,.
\end{align}
Setting $k:= \big\lfloor n-5\log D_n^w -4\log \left(\frac{1}{\delta}\right) -2\log(104) \big\rfloor$ concludes the proof.
\end{proof}

This finite block lemma becomes particularly useful when the number of defects $w$ is sublinear in the block length $n$. The following proposition shows how to exploit this to find a universal sequence of decoupling protocols that functions for any almost-iid input $\pmb{\omega} \in \cA(\Phi_2)$.

\begin{proposition}[Universal Bell decoupling with unit rate] \label{prop:unit-rate-universal-sanitizer}
There exists a universal sequence of (one-way) $\LOCC$ channels $\pmb{\cD} = (\cD_n)_{n\in\N}$ such that every almost-iid Bell state sequence $\pmb{\omega} \in \cA(\Phi_2)$ is converted into a sequence of perfect iid Bell pairs,
\begin{align}
\pmb{\cD} \big(  \cA(\Phi_2)\big) \subseteq B\big(\pmb{\Phi}_2^{\mathrm{iid}}\big)\,,
\end{align}
with a subexponentially decreasing error and asymptotic rate one.
\end{proposition}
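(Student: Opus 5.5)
The plan is to build $\pmb{\cD}$ from \cref{lem_universal-decoupling}, applied with a defect budget $w_n$ that grows just fast enough to dominate every power-law sublinear defect count. Membership $\pmb{\omega}\in\cA(\Phi_2)$ only guarantees $r_n=\lceil Ln^\kappa\rceil$ for some unknown $L\ge0$ and $\kappa\in[0,1)$. The channel, however, must not depend on $\pmb{\omega}$. I would therefore fix
\begin{align*}
w_n := \min\{\lceil n^{1-1/\log\log n}\rceil, n\}
\end{align*}
for large $n$, or any other sequence with $w_n=n^{1-o(1)}$ and $w_n=o(n)$. The error parameter is $\delta_n:=2^{-n^{1/2}}$. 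Then set $\cD_n:=\cD_n^{w_n,\delta_n}$ from \cref{lem_universal-decoupling}, which outputs $k_n$ Bell pairs. For finitely many small $n$ the channel can be arbitrary, for instance a trace followed by preparation of a fixed state, with $k_n$ chosen as in the lemma.

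The first step is to verify universality. For any $\pmb{\omega}\in\cA(\Phi_2)$ we have $\lceil Ln^\kappa\rceil\le w_n$ for all $n\ge n_0(L,\kappa)$, because $n^{1-1/\log\log n}/n^\kappa\to\infty$. The sets $\St^n_w(\cH_{AB},\ket{\Phi_2})$ are nested in $w$, since $\cV^n_r\subseteq\cV^n_{r'}$ for $r\le r'$. Hence $\omega^{(n)}\in\St^n_{w_n}$ eventually, and the supremum bound in the lemma gives
\begin{align*}
\tfrac12\|\cD_n(\omega^{(n)})-\Phi_2^{\otimes k_n}\|_1\le\delta_n/2
\end{align*}
for all $n\ge n_0$. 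The right-hand side is subexponentially decreasing. Finitely many early indices do not affect the notion of subexponential decay, using \cref{def_subexp} with a suitable $N_0$.

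The second step is the rate. The lemma yields
\begin{align*}
k_n\ge n-5\log D_n^{w_n}-4n^{1/2}-c_0.
\end{align*}
Here $\log D_n^{w_n}\le nh(w_n/n)+2w_n$, and $h(x)\le 2\sqrt{x}$ gives $nh(w_n/n)\le 2\sqrt{nw_n}=o(n)$. So $k_n/n\to1$. I would also take $k_n\ge1$ eventually and note $k_n\le n$, so the output sizes grow to infinity as \cref{not_channel-seq} requires.

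The third step, which I expect to be the main obstacle, is bookkeeping. Membership in $B(\pmb{\Phi}_2^{\mathrm{iid}})$ is stated for the output sequence indexed by output size $m$, via the convention of \cref{not_channel-seq}: representatives $j(m)$ are chosen, followed by partial traces for gap sizes. Partial traces of $\Phi_2^{\otimes k}$ are again iid Bell pairs, and the trace distance contracts, so the error at output size $m$ is at most $\delta_{j(m)}/2$. Since $k_n\le n$, we have $j(m)\ge m$, which gives $\delta_{j(m)}\le 2^{-m^{1/2}}$ and hence subexponential decay in $m$. Finally, each $\cD_n$ is one-way $\LOCC$ by the lemma, and $\pmb{\cD}$ is independent of $\pmb{\omega}$, which is the claimed universality.
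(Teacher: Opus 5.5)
Your proposal is correct and follows the paper's proof: you invoke \cref{lem_universal-decoupling} with $\delta_n = 2^{-\sqrt n}$ and a universal defect budget $w_n = o(n)$ that eventually dominates every $\lceil Ln^\kappa\rceil$. The paper takes $w_n = \min\{n,\lceil n/\log(n+1)\rceil\}$ instead of your $n^{1-1/\log\log n}$, which makes no difference, and it likewise reads off rate one from $\log D_n^{w_n}\le nh(w_n/n)+2w_n = o(n)$. Your remarks on the nestedness of $\St^n_w$ in $w$ and on re-indexing by output size via $j(m)\ge m$ only spell out steps the paper leaves implicit.
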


\begin{proof}
For any $n\in \N$, set
\begin{align}
w_n\, :=\,
\min\bigg\{n, \left\lceil\frac{n}{\log(n+1)}\right\rceil \bigg\}\,, \qquad \delta_n\,:=\,2^{-\sqrt n}\,,
\end{align}
and invoke~\cref{lem_universal-decoupling} to define a sequence of (one-way) $\LOCC$ operations \smash{$\pmb{\cD} := \big(\cD^{w,\delta}_n\big)_{n\in\N}$}. Now, let $\pmb{\omega} \in \cA(\Phi_2)$ be arbitrary. By definition, there exist constants $L\geq0$ and $\kappa \in [0,1)$ such that $\omega^{(n)}$ is \smash{$\binom{n}{r_n}$}-almost-iid along $\ket{\Phi_2}$ with $r_n = \lceil Ln^\kappa \rceil$ for each $n\in \N$. Furthermore, it is straightforward to see that there exists $N_0 \in \N$ such that $r_n \leq w_n$ for all $n\geq N_0$. Hence, in this case we have $\omega^{(n)} \in \St^n_{w_n}(\cH_{AB}, \ket{\Phi_2})$ and~\cref{lem_universal-decoupling} therefore implies that
\begin{align}
 \Big\|\cD_n^{w,\delta} \big(\omega^{(n)}\big) - \Phi_2^{\otimes k} \Big\|_1 \,\leq\, \delta_n\,, \qquad \textrm{with} \quad k \,\geq \, n-5\log D_n^w -4\log \left(\frac{1}{\delta_n}\right) - c_0  \,,
\end{align}
where $ D_n^w \leq 2^{nh(w/n) +2w}$ and $c_0$ is a constant. The approximation error $\delta_n$ is subexponentially decreasing in $n$ by construction. Moreover, we have
\begin{align}
\lim_{n\to\infty} \frac{k}{n} \,\geq\, \lim_{n\to\infty} \bigg[1 - 5 h\Big(\frac{w_n}{n}\Big) + 10 \frac{w_n}{n} - 4\frac{1}{\sqrt{n}} - \frac{c_0}{n}\bigg] \,=\, 1
\end{align}
since $w_n = o(n)$ is sublinear. Thus, $\pmb{\cD}$ is a universal $\LOCC$ operation with rate one.
\end{proof}

%------------------------------------------------------------------
\subsubsection{Single-shot properties}
So far, we have discussed transformation rates in the asymptotic regime. We now connect these asymptotic quantities to the single-shot definitions of entanglement cost and distillation, following the standard approach. We expect the same single-shot formulas to underlie both the operational and conventional frameworks, with the distinction arising only in how the corresponding asymptotic quantities are obtained through regularization. We follow the notation introduced in~\cite{Gour_2025}.

For any bipartite finite-dimensional system $\cH_{AB}$ and for any $M\in \N$ with $M\leq \min\{d_A,d_B\}$, let
\begin{equation}
  |\Phi_M\rangle\, :=\, \frac{1}{\sqrt M}\sum_{j=1}^{M}\ket{j}_A\ket{j}_B\,, \qquad \Phi_M \,:=\,\proj{\Phi_M}\,.
\end{equation}
denote a maximally entangled state of rank $M$. The family of maximally entangled states $\{\Phi_M\}_{M\in \N}$ satisfies the following two properties,
\begin{align}
\Phi_N \otimes \Phi_M \, \sim \, \Phi_{NM}\,,\qquad N\geq M \ \implies \ \Phi_N\, \xrightarrow{\LOCC} \, \Phi_M\,,
\end{align} 
where the relation $\rho \sim \sigma$ indicates that the two states $\rho$ and $\sigma$ can be converted exactly into each other by $\LOCC$ operations. Note that in more general quantum resource theories, any family of states satisfying the above properties is sometimes referred to as \emph{golden unit}, see~\cite[Definition 11.1]{Gour_2025}.
\begin{definition-2}[Single-shot transformation rates~{\cite[Definition 11.2]{Gour_2025}}] \label{def_single-shot-rates}
Let $\rho \in \St(\cH_{AB})$ and $\eps\in (0,1)$. The \emph{$\eps$-single-shot entanglement cost} and \emph{distillation of} $\rho$ are defined as
\begin{align}
 \operatorname{Cost}^{\eps}(A:B)_\rho \,&:=\, \min\Big\{ \log M \,:\, \inf_{\cP \in \LOCC}\tfrac{1}{2} \norm{\cP(\Phi_M) - \rho }_1 \leq \eps\,, \ M\in\N   \Big\},  \label{eq:ss-cost} \\
 \operatorname{Distill}^{\eps}(A:B)_\rho \,&:=\,  \max\Big\{  \log M \,:\,   \inf_{\cP \in \LOCC}\tfrac{1}{2} \norm{\cP(\rho) - \Phi_M }_1 \leq\eps\,, \ M\in\N \Big\}, \label{eq:ss-distill} 
\end{align}
respectively.
\end{definition-2}
Let us further recall a few known relations. For any $\rho \in \St(\cH_{AB})$, the traditional (iid) asymptotic rates are related to the single-shot quantities via
\begin{align}
E_C(A:B)_{\rho} \overset{\textnormal{\cite[Lem.~11.10]{Gour_2025}}}&{=} \lim_{\eps \to 0^+} \liminf_{n\to \infty} \frac{1}{n}  \operatorname{Cost}^{\eps}(A^n:B^n)_{\rho^{\otimes n}} \overset{\textnormal{\cite{HHT01}}}{=}  \lim_{\eps \to 0^+} \limsup_{n\to \infty} \frac{1}{n}  \operatorname{Cost}^{\eps}(A^n:B^n)_{\rho^{\otimes n}}, \label{eq:EC-asymptotic-single-shot} \\
E_D(A:B)_{\rho} \overset{\textnormal{\cite[Lem.~11.10]{Gour_2025}}}&{=} \lim_{\eps \to 0^+} \limsup_{n\to \infty} \frac{1}{n}  \operatorname{Distill}^{\eps}(A^n:B^n)_{\rho^{\otimes n}} \overset{\textnormal{\cite{Rains99}}}{=}  \lim_{\eps \to 0^+} \liminf_{n\to \infty} \frac{1}{n}  \operatorname{Distill}^{\eps}(A^n:B^n)_{\rho^{\otimes n}}. \label{eq:ED-asymptotic-single-shot}
\end{align}
%Notice the distinction between the $\liminf$ and $\limsup$ expressions, which arises from different, though ultimately equivalent, asymptotic definitions of entanglement cost and distillation. To illustrate this point, let us focus on entanglement cost. In~\cite[Definition 11.3]{Gour_2025}, cost is defined by requiring that \emph{there exists} numbers $n,\,m\in \N$ such that a conversion $\Phi_2^{\otimes n} \xrightarrow{\LOCC} \rho^{\otimes m}$ is approximately possible. In addition, one can further argue that the values $n,\,m$ may be chosen arbitrarily large. This existence gives rise to the $\liminf_{n\to \infty} $ in~\cref{eq:EC-asymptotic-single-shot}. On the other hand, a definition of entanglement cost as given in~\cref{eq:EC_def-traditional-forall} or~\cref{def_iid_rate}, implicitly requires that an approximate conversion $\Phi_2^{\otimes n} \xrightarrow{\LOCC} \rho^{\otimes m}$ must eventually hold \emph{for all} values of $n$, giving rise to the $\limsup_{n\to \infty} $ in~\cref{eq:EC-asymptotic-single-shot}. These two definitions are in fact equivalent as is further discussed in~\cite{HHT01}. Within our framework, the definition of operational asymptotic rates are analogous to~\cref{def_iid_rate}.
Notice the distinction between the $\liminf$ and $\limsup$ expressions, which reflects different, though ultimately equivalent, asymptotic definitions of entanglement cost and distillation. To illustrate this point, let us focus on entanglement cost. In~\cite[Definition 11.3]{Gour_2025}, the cost is defined by requiring the \emph{existence} of integers $k,n\in\mathbb N$ for which the approximate conversion \smash{$ \Phi_2^{\otimes k}\xrightarrow{\LOCC}\rho^{\otimes n}$} is possible at a given accuracy and rate. Moreover, it is easily shown that the target blocklength $n$ may be chosen arbitrarily large. Thus, the definition only requires arbitrarily large blocklengths at which the normalized one-shot cost is small; it does not require such a bound to hold at every sufficiently large blocklength. This quantifier structure is captured by the $\liminf_{n\to\infty}$ in~\cref{eq:EC-asymptotic-single-shot}. By contrast, the definitions in~\cref{eq:EC_def-traditional-forall,def_iid_rate} require such approximate conversions to exist \emph{for every} sufficiently large target blocklength $n$. This uniform requirement leads to the corresponding $\limsup_{n\to\infty}$. In finite-dimensional bipartite entanglement theory under $\LOCC$, these two definitions of entanglement cost are equivalent, as shown in~\cite{HHT01}, while for more general quantum resource theories, such an equivalence does not trivially follow. The definition of operational asymptotic rates adopted in our framework is analogous to that in~\cref{def_iid_rate}. Finally, note that an analogous distinction arises for entanglement distillation, with the roles of $\liminf$ and $\limsup$ reversed.

It is further shown in~\cite[Theorem 13.20]{Gour_2025} that for any $\eps\in (0,\tfrac{1}{2})$, the $\eps$-single-shot entanglement cost can be related to a smoothed conditional max-entropy, namely
\begin{align} \label{eq:single-shot-cost-hmax}
 \inf_{\rho_{ABX}} {H}^{2\eps}_{0}(A|X)_\rho \,\leq\, \operatorname{Cost}^{\eps}(A:B)_\rho \,\leq\, \inf_{\rho_{ABX}} {H}^{\eps^2}_{0}(A|X)_\rho\,,
\end{align}
where the infimum ranges over all classical systems $X$ and all extensions $\rho_{ABX}$ of $\rho_{AB}$.\footnote{Note that in~\cite[Theorem 13.20]{Gour_2025}, the $\eps$-single-shot entanglement cost is defined with respect to the squared purified distance $P^2$ instead of the trace distance $T$. Since these are related by means of the Fuchs-van de Graaf inequalities, $T(\rho, \sigma)^2 \leq P^2(\rho,\sigma) \leq 2T(\rho, \sigma)$ for any states $\rho$ and $\sigma$,~\cite[Theorem 13.20]{Gour_2025} then implies~\cref{eq:single-shot-cost-hmax}.} Here (following the notation of~\cite{Gour_2025}), the conditional max-entropy is defined via
\begin{align}
{H}^{\eps}_{0}(A|X)_\rho := \min_{\omega_{XA} \in B_\eps(\rho_{XA})} {H}_{0}(A|X)_\omega\,,\quad {H}_{0}(A|X)_\omega := \max_{\tau_X \in \St(\cH_X)} \log \tr\big[ \Pi^{XA}_\omega(\tau_X \otimes \id_A)\big]
\end{align}
with %$B_\eps(\rho) = \big\{\omega : \frac{1}{2}\norm{\omega-\rho}_1\leq \eps \big\}$ and 
$\Pi^{XA}_\omega$ denoting the projector onto the support of $\omega_{XA}$. Next, we show how the operational asymptotic rates can be retrieved from the single-shot quantities. Due to~\cref{cor_ED-equal}, it follows directly that~\cref{eq:ED-asymptotic-single-shot} also holds for the operational entanglement distillation $\tilde{E}_D(A:B)_\rho$. For the operational entanglement cost, we obtain the following relation.

\begin{lemma} \label{lem_single-shot-cost}
Let $\rho \in \St(\cH_{AB})$. Then,
\begin{align}
\tilde{R}_{\LOCC}(\Phi_2 \to \rho) \,=\, \sup\Big\{\, r >0\ \big| \ \exists\ {\pmb{\mathcal{P}}}^{(r)} \in \LOCC \,:\ {\pmb{\mathcal{P}}}^{(r)}\big( \pmb{\Phi}_2^{\mathrm{iid}}\big) \, \in \, B\big(\mathcal{A}(\rho)\big) \Big\} \,.
\end{align}
\end{lemma}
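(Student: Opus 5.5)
We prove the two inequalities between $\tilde{R}_{\LOCC}(\Phi_2 \to \rho)$ and the right-hand side, which we denote by $T(\Phi_2\to\rho)$, in the same way as the rate sandwich in the proof of~\cref{cor_ED-equal}. Here the input side is reduced to iid Bell pairs by universal decoupling.

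\emph{Easy direction, $\tilde{R}_{\LOCC}(\Phi_2\to\rho)\leq T(\Phi_2\to\rho)$.} Every iid sequence is almost-iid with zero defects, so $\pmb{\Phi}_2^{\mathrm{iid}} \in \cA(\Phi_2)$. Hence any $\pmb{\cP}^{(r)}$ with $\pmb{\cP}^{(r)}(\cA(\Phi_2))\subseteq B(\cA(\rho))$ satisfies in particular $\pmb{\cP}^{(r)}(\pmb{\Phi}_2^{\mathrm{iid}})\in B(\cA(\rho))$. Taking suprema over achievable $r$ gives the inequality.

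\emph{Main direction, $T\leq \tilde{R}_{\LOCC}$.} Fix $r>0$ achievable for $T$, witnessed by $\pmb{\cP}^{(r)}$ with output sizes $m_n$. Let $\pmb{\cD}=(\cD_n)_n$ be the universal sequence of $\LOCC$ channels from~\cref{prop:unit-rate-universal-sanitizer}; write $k_n$ for its output lengths, so that $k_n/n\to 1$. The candidate protocol is $\pmb{\cQ}:=\pmb{\cP}^{(r)}\circ\pmb{\cD}$, where $\cQ_n:=\cP^{(r)}_{k_n}\circ\cD_n$. It is $\LOCC$ by concatenation, and its rate is $\lim_n m_{k_n}/n = \lim_n (m_{k_n}/k_n)(k_n/n) = r$, by the same argument as~\cref{prop_rates-transitivity-it2}.

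Now let $\pmb{\omega}\in\cA(\Phi_2)$ be arbitrary. By~\cref{prop:unit-rate-universal-sanitizer}, $\pmb{\cD}(\pmb{\omega})\simeq\pmb{\Phi}_2^{\mathrm{iid}}$ with error $\delta_n=2^{-\sqrt n}$, uniformly in $\pmb{\omega}$ once $n\geq N_0$. The data-processing result~\cref{lem_data-processing-neg-ball} (the output size is linear, hence polynomial) then gives
\[
\pmb{\cQ}(\pmb{\omega}) \in B\big(\pmb{\cP}^{(r)}(\pmb{\Phi}_2^{\mathrm{iid}})\big) \subseteq B\big(B(\cA(\rho))\big)=B(\cA(\rho)).
\]
The final equality holds because $\simeq$ is transitive by~\cref{prop_eq-rel}. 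So $r$ is achievable for $\tilde{R}_{\LOCC}(\Phi_2\to\rho)$.

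\emph{Main obstacle: index bookkeeping.} The composition feeds $\cP^{(r)}$ with block length $k_n$ instead of $n$. We therefore have to check that the subexponential error of $\pmb{\cP}^{(r)}$ at index $k_n$ is still subexponential in $n$, which holds because $k_n=\Theta(n)$ by~\cref{def_subexp-equivalence}. We also have to check that the induced output sequence, built by the selection convention of~\cref{not_channel-seq}, is compared against the correct representatives of the almost-iid sequence for $\rho$. The non-injectivity of $n\mapsto k_n$ needs care here: the output at each $m$ must be compared to the element of $\cA(\rho)$ at the same output size. This can be handled by invoking~\cref{rmk_op-rate-convention} and~\cref{lem_subpoly-almost-iid}, which say that partial traces and reindexing by $\Theta(n)$ keep a sequence in $\cA(\rho)$ and preserve $\simeq$. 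Combining both directions gives equality.
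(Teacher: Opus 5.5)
Your argument matches the paper's proof. The easy direction uses $\pmb{\Phi}_2^{\mathrm{iid}}\in\cA(\Phi_2)$, and the main direction precomposes with the universal decoupling sequence $\pmb{\cD}$ of \cref{prop:unit-rate-universal-sanitizer}, then chains \cref{lem_data-processing-neg-ball} with transitivity of $\simeq$.

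The index bookkeeping you flag goes beyond the paper, which simply reads $\pmb{\cP}^{(r)}\circ\pmb{\cD}$ as applying the two sequence maps one after the other. Note, however, that \cref{rmk_op-rate-convention} is stated for $R^{\texttt{tom}}_\F$, not $\tilde{R}_\F$, so citing it does not by itself close that step.
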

\begin{proof}
Let
\begin{align} \label{eq:single-shot-cost-0}
T( \Phi_2\to \rho ) \, :=&\,  \sup\Big\{\, r>0\ \big| \ \exists\ {\pmb{\mathcal{P}}}^{(r)} \in \LOCC \,:\ {\pmb{\mathcal{P}}}^{(r)}\big( \pmb{\Phi}_2^{\mathrm{iid}}\big) \, \in \, B\big(\mathcal{A}(\rho)\big) \Big\} \,. 
\end{align}
It immediately follows that $\tilde{R}_{\LOCC}(\Phi_2 \to \rho) \leq T( \Phi_2\to \rho )$ since $\pmb{\Phi}_2^{\mathrm{iid}} \in \cA(\Phi_2)$. For the other direction, let $r>0$ be achievable for $T( \Phi_2\to \rho )$, i.e., there exists a sequence of free operations ${\pmb{\mathcal{P}}}^{(r)} \in \LOCC$ such that 
\begin{align} \label{eq:single-shot-cost-1}
\pmb{\mathcal{P}}^{(r)}\big( \pmb{\Phi}_2^{\mathrm{iid}}\big)  \in  B\big(\mathcal{A}(\rho)\big)\,.
\end{align}
By~\cref{prop:unit-rate-universal-sanitizer}, there exists a universal sequence of Bell decoupling protocols $\pmb{\cD}$ satisfying
\begin{align}  \label{eq:single-shot-cost-2}
\pmb{\cD} \big(  \cA(\Phi_2)\big) \subseteq B\big(\pmb{\Phi}_2^{\mathrm{iid}}\big)
\end{align}
with rate one. Hence, the composition $\tilde{\pmb{\cP}}\vphantom{\pmb{\cP}}^{(r)}:={\pmb{\mathcal{P}}}^{(r)}  \circ\pmb{\cD} $ is again a sequence of $\LOCC$ operations with rate $r$, and it satisfies
\begin{align}  \label{eq:single-shot-cost-3}
\tilde{\pmb{\cP}}\vphantom{\pmb{\cP}}^{(r)} \big(  \cA(\Phi_2) \big) \,\overset{\textnormal{\Cshref{eq:single-shot-cost-2}}}{\subseteq} \, {\pmb{\mathcal{P}}}^{(r)}\big(B\big(\pmb{\Phi}_2^{\mathrm{iid}}\big) \big)  \,\overset{\textnormal{\Cshref{lem_data-processing-neg-ball}}}{\subseteq} \, B\big( {\pmb{\mathcal{P}}}^{(r)}\big(\pmb{\Phi}_2^{\mathrm{iid}}\big) \big) \,\overset{\textnormal{\Cshref{eq:single-shot-cost-0}}}{\subseteq} \, B\big(\mathcal{A}(\rho)\big)\,.
\end{align}
Therefore, $r$ is achievable for $\tilde{R}_{\LOCC}(\Phi_2 \to \rho)$, implying that $r\leq \tilde{R}_{\LOCC}(\Phi_2 \to \rho)$. After taking the supremum over all achievable rates $r>0$, we find $T( \Phi_2\to \rho ) \leq \tilde{R}_{\LOCC}(\Phi_2 \to \rho)$, and the claim follows.
\end{proof}

\begin{proposition}
Let $\rho \in \St(\cH_{AB})$. Then,
\begin{gather}
%\tilde{E}_D(A:B)_\rho \,=\, \inf_{f \in \mathrm{SD}} \limsup_{n\to \infty} \frac{1}{n}  \operatorname{Distill}^{\eps}(A^n:B^n)_{\rho^{\otimes n}} \,=\, \lim_{\eps \to 0^+} \limsup_{n\to \infty} \frac{1}{n}  \operatorname{Distill}^{\eps}(A^n:B^n)_{\rho^{\otimes n}}\,, \\
\tilde{E}_C(A:B)_\rho \,=\, \inf_{\pmb{\omega}\in \cA(\rho)} \inf_{f \in \mathsf{SD}} \limsup_{n\to \infty} \frac{1}{n}  \operatorname{Cost}^{f(n)}(A^n:B^n)_{\omega^{(n)}} \,,
\end{gather}
where $\mathsf{SD} := \big\{f : \N \to (0,1) \ |\  f \textrm{ is subexponentially decreasing}\, \big\}$.
\end{proposition}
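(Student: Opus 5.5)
The plan is to use \cref{lem_single-shot-cost}, which gives
\[\tilde{E}_C(A:B)_\rho^{-1}=T(\Phi_2\to\rho)=\sup\big\{r>0 : \exists\,\pmb{\cP}^{(r)}\in\LOCC \text{ with } \pmb{\cP}^{(r)}(\pmb{\Phi}_2^{\mathrm{iid}})\in B(\cA(\rho))\big\}.\]
Write $Q:=\inf_{\pmb{\omega}\in\cA(\rho)}\inf_{f\in\mathsf{SD}}\limsup_n \frac1n\operatorname{Cost}^{f(n)}(A^n:B^n)_{\omega^{(n)}}$. It then suffices to prove $T^{-1}\ge Q$ and $T^{-1}\le Q$. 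Both directions translate between ``an input-indexed protocol using $k$ Bell pairs reaching $\omega^{(m_k)}$ with error $\eps_k$'' and ``an output-indexed single-shot cost at blocklength $n$ with error $f(n)$.'' The Bell state $\Phi_2^{\otimes k}$ equals $\Phi_{2^k}$ up to local relabeling, so $\operatorname{Cost}$ with $M=2^k$ matches $k$ input Bell pairs. The case $T=0$ or $Q=\infty$ is handled by the convention $\tilde{E}_C=\infty$.

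For $T^{-1}\ge Q$, take $r<T$ achievable with $\pmb{\cP}^{(r)}$, outputs $m_k$, and a sequence $\pmb{\omega}\in\cA(\rho)$ with $\tfrac12\|\cP_k(\Phi_2^{\otimes k})-\omega^{(m_k)}\|_1\le\eps_k$, where $\eps_k$ is subexponential. By \cref{not_channel-seq} every $n$ is served by $j(n)$ followed by a partial trace. The output state sequence $\pmb{\omega}'$ defined through $\tr_{m_{j(n)}-n}\omega^{(m_{j(n)})}$ still lies in $\cA(\rho)$ by \cref{lem_subpoly-almost-iid}, because $m_{j(n)}-n=o(n)$ eventually. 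This holds since $m_k/k\to r$ and $j(n)$ is the minimal index with $m_j\ge n$. Then $\operatorname{Cost}^{f(n)}(\omega'^{(n)})\le j(n)$ with $f(n):=\eps_{j(n)}$, which is subexponential in $n$ because $j(n)=\Theta(n)$ (compare \cref{lem_data-processing-neg-ball}). Hence $\limsup_n j(n)/n=1/r$ gives $Q\le 1/r$, and taking $r\uparrow T$ finishes this direction. One subtlety is that $f$ must map into $(0,1)$. I would replace $f$ by $\max\{f,2^{-n^{1/2}}\}$, which is harmless since $\operatorname{Cost}^\eps$ is monotone in $\eps$.

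For $T^{-1}\le Q$, fix $c>Q$ and choose $\pmb{\omega}\in\cA(\rho)$ and $f\in\mathsf{SD}$ such that $\operatorname{Cost}^{f(n)}(\omega^{(n)})\le nc'$ for some $c'\in(Q,c)$ and all large $n$. This gives LOCC maps $\cD_n$ from $\Phi_{M_n}$ with $\log M_n\le nc'$. The main obstacle is that $\log M_n$ need not be an integer and the minimizing $M_n$ need not be a power of $2$. I would handle this by using $\lceil nc'\rceil$ Bell pairs, i.e.\ $\Phi_{2^{\lceil nc'\rceil}}$, which can be converted exactly to $\Phi_{M_n}$ by LOCC (golden-unit property / Nielsen). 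Then, exactly as at the end of \cref{prop:subexp_entanglement-cost-distillation}, I reindex by input: with $N$ Bell pairs, set $n_N:=\lfloor N/c\rfloor$, discard the surplus, and apply $\cD_{n_N}$. Since $\lceil n_Nc'\rceil\le N$ for large $N$, this yields rate $1/c$ with error $f(n_N)$, which is subexponential in $N$. The output sequence is $\omega^{(n_N)}$ re-expressed through \cref{not_channel-seq}; because $n_N$ is nondecreasing, hits every integer up to finite gaps, and the convention only selects representatives, it lies in $B(\cA(\rho))$ (invoking \cref{rmk_op-rate-convention} if needed). Hence $T\ge 1/c$, and letting $c\downarrow Q$ completes the proof.

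The main check is the bookkeeping in the second direction. The error must be measured against the same $\pmb{\omega}$ along the output indices selected by \cref{not_channel-seq}, and the chosen infimizing $M_n$ must be compatible with integer Bell-pair counts. I expect both to go through with the monotonicity and the exact-conversion property above.
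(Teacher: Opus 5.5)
Your proposal is correct and follows essentially the same route as the paper. Both directions go through \cref{lem_single-shot-cost}: an achievable rate $r$ becomes a cost bound by discarding surplus outputs (you use the minimal representative $j(n)$, the paper uses $N_n=\lceil n/s\rceil$ inputs with $s<r$), and a cost bound becomes a protocol via the exact conversion $\Phi_{2^k}\to\Phi_M$ for $2^k\geq M$ followed by input reindexing. Two small points need tidying: the infimum over $\LOCC$ in $\operatorname{Cost}^{f(n)}$ need not be attained, so, as the paper does, add a subexponential slack $\delta(n)$ to the error; and $\lfloor N/c\rfloor$ skips infinitely many integers when $c<1$, which is harmless only because each missed size is obtained by tracing out a bounded number of systems, so the data-processing inequality and \cref{lem_subpoly-almost-iid} keep the output in $B(\cA(\rho))$.
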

\begin{proof}
For the proof, we introduce the notation
\begin{align}
C_*(\rho) \,:=&\, \inf_{\pmb{\omega}\in \cA(\rho)} \inf_{f \in \mathsf{SD}} \limsup_{n\to \infty} \frac{1}{n}  \operatorname{Cost}^{f(n)}(A^n:B^n)_{\omega^{(n)}}\,.
\end{align}
Our goal is to show that $\tilde{E}_C(A:B)_\rho  = C_*(\rho)$.

\medskip
\noindent
\underline{$ \tilde{E}_C(A:B)_\rho  \leq C_*(\rho)$:} Fix any $c>C_*(\rho)$. By definition, there exist $\pmb{\omega}\in \cA(\rho)$ and $f\in  \mathsf{SD}$ such that
\begin{equation}  \label{eq:below-c}
  \limsup_{n\to\infty} \frac1n\operatorname{Cost}^{f(n)}(A^n:B^n)_{\omega^{(n)}}\,<\, c \,.
 \end{equation}
Choose any $r \in (0,\frac{1}{c}) $ and set $m_n:=\lceil rn\rceil$ for any $n\in \N$. Then $\lim_{n\to \infty}\frac{m_n}{n} = r$, and by~\cref{eq:below-c}, we have
\begin{equation}  \label{eq:cost-below-n}
 \log M_n \,:=\, \operatorname{Cost}^{f(m_n)}(A^{m_n}:B^{m_n})_{\omega^{(m_n)}} \,<\, c m_n\, <\, n 
\end{equation}
for all sufficiently large $n$, where the last inequality further used that $cr<1$. Furthermore, choose any subexponentially decreasing function $\delta \in \mathsf{SD}$. Then, according to~\cref{def_single-shot-rates}, there exists an $\LOCC$ map $\cP$ such that
\begin{align}
\tfrac{1}{2} \norm{\cP(\Phi_{M_n}) -\omega_{m_n}}_1 \,\leq  \, f(m_n) + \delta(n)\,.
\end{align}
On the other hand, since $M_n < 2^n$ due to~\cref{eq:cost-below-n}, there exists an exact $\LOCC$ conversion mapping $\Phi_2^{\otimes n} \to \Phi_{2^n} \to \Phi_{M_n}$. Concatenating these two mappings yields an $\LOCC$ protocol $\tilde{\cP}_n$ such that
\begin{align}
\tfrac{1}{2} \norm{\tilde{\cP}_n(\Phi_2^{\otimes n}) -\omega_{m_n}}_1 \,\leq  \, f(m_n) + \delta(n)\,.
\end{align}
Since $m_n = \lceil rn \rceil$ scales linearly in $n$, the function $ f(m_n)$ and hence the sum $ f(m_n) + \delta(n)$ is subexponentially decreasing in $n$. With~\cref{lem_single-shot-cost}, this therefore implies that $r$ is achievable for $\tilde{R}_{\LOCC}(\Phi_2 \to \rho)$ and thus, $r\leq \tilde{R}_{\LOCC}(\Phi_2 \to \rho)$. By taking the supremum over all $r\in (0,\frac{1}{c})$ and $c >C_*(\rho)$, we find $\frac{1}{C_*(\rho)}\leq \tilde{R}_{\LOCC}(\Phi_2 \to \rho)$, which proves the first direction.

\medskip
\noindent
\underline{$ \tilde{E}_C(A:B)_\rho  \geq C_*(\rho)$:} 
Let $r> 0$ be achievable for $\tilde{R}_{\LOCC}(\Phi_2 \to \rho)$, i.e., there exists a sequence of operations ${\pmb{\mathcal{P}}}^{(r)} \in \LOCC$ such that 
\begin{align} \label{eq:asymptotic-single-shot-1}
\pmb{\mathcal{P}}^{(r)}\big( \pmb{\Phi}_2^{\mathrm{iid}}\big)\, \simeq\, \pmb{\tau} \,\in\, \mathcal{A}(\rho)\,.
\end{align}
Recalling that $ \tilde{E}_C(A:B)_\rho \leq  E_C(A:B)_\rho = E^\infty_F(A:B)_\rho < \infty$, we can assume $\tilde{R}_{\LOCC}(\Phi_2 \to \rho)>0$. The free operations satisfy $\cP^{(r)}_N \in \LOCC(C^N \to D^{m_N})$ for all $N\in \N$ with $C :=\bar{A}\bar{B}$, $D:= AB$, and 
\begin{align}
\lim_{N\to\infty} \frac{m_N}{N} = r\,.
\end{align} 
Next, fix any number $s \in (0,r)$ and, for any $n\in \N$ representing the target blocklength, define the input copy number $N_n := \lceil \frac{n}{s}\rceil$. Note that for sufficiently large values of $n$, we find 
\begin{align}
m_{N_n}\, =\, \frac{m_{N_n}}{N_n}N_n \,>\, sN_n\, \geq \, n\,.
\end{align} 
Hence, in this case one may trace out the remaining $m_{N_n} - n$ systems after applying the protocol $\cP_{N_n}$, which yields
\begin{align}  \label{eq:asymptotic-single-shot-2}
\frac{1}{2}\norm{\tr_{m_{N_n} - n}\circ \cP_{N_n}(\Phi_2^{\otimes N_n}) - \omega^{(n)}}_1 \,\overset{\textnormal{DPI}}{\leq}\,\frac{1}{2}\norm{ \cP_{N_n}(\Phi_2^{\otimes N_n}) - \tau^{(m_{N_n})}}_1 \,\overset{\textnormal{\Cshref{eq:asymptotic-single-shot-1}}}{\leq}\, f(m_{N_n})\,,
\end{align}
where we defined $ \omega^{(n)} :=\tr_{m_{N_n}-n}\big[\tau^{(m_{N_n})}\big] $, and $f\in \mathsf{SD}$ is a subexponentially decreasing function. Since
\begin{align}
\frac{m_{N_n}-n}{m_{N_n}} \,= \,  1-\frac{n}{N_n}\frac{N_n}{m_{N_n}}   \,\overset{n\to \infty}{\longrightarrow}\,  1 - \frac{s}{r} \,<\,1\,,
\end{align}
it follows by~\cref{lem_subpoly-almost-iid} that $\pmb{\omega} = \big( \omega^{(n)}\big)_{n\in\N} \in \cA(\rho)$. In addition, the error function $h(n) :=f(m_{N_n}) $ is also subexponentially decreasing since $m_{N_n} = \Theta(n)$, hence $h\in \mathsf{SD}$. Consequently,
\begin{align}
 \operatorname{Cost}^{h(n)}(A^n:B^n)_{\omega^{(n)}} \,\overset{\textnormal{\Cshref{eq:asymptotic-single-shot-2}}}{\leq}\, N_n
\end{align}
for all sufficiently large $n \in \N$, and therefore
\begin{align}
C_*(\rho)\, \leq\,  \limsup_{n\to\infty}\frac{N_n}{n}\,  =\, \frac{1}{s}\,.
\end{align}
After taking the supremum over all achievable $r>0$ and $s\in (0,1)$ we obtain $C_*(\rho) \leq \frac{1}{ \tilde{R}_{\LOCC}(\Phi_2 \to \rho)}$, which proves the second direction.

\end{proof}

\newpage
\appendix
\section*{Appendix}
%\newpage
\section{Technical statements}

%------------------------------ Typical sequences
\begin{proposition}[Typical sequences] \label{prop_typical-sequences} \phantom{....} 
\begin{enumerate}[(i)]

\item  \label{lem_typical_itm1} Let $P_X$ be a probability distribution on $\cX$ and let $\vec{x} = (x_1, \dots ,x_n)$ be an $n$-tuple chosen according to the i.i.d.~product distribution $(P_X)^n$. Then, for any $\epsilon \geq 0$,
\begin{align}
\Pr_{\vec{x}} \Big[ D\big( \lambda_{\vec{x}} \| P_X \big) >  \epsilon \Big] \, \leq \, 2^{-n\big(\epsilon - |\cX|\frac{\log{(n+1)}}{n} \big)}\,,
\end{align}
where $\lambda_{\vec{x}}$ denotes the empirical frequency distribution with respect to $\vec{x}$, and where $D$ denotes the relative entropy.

\item \label{lem_typical_itm2} Let $P_X$ be a probability distribution on $\cX$ and let $\vec{x} = (x_1, \dots ,x_n)$ be an $n$-tuple chosen according to the i.i.d.~product distribution $(P_X)^n$. Then, for any $\epsilon \geq 0$,
\begin{align}
\Pr_{\vec{x}} \Big[ \big\|  \lambda_{\vec{x}} - P_X \big\|_1 >  \epsilon \Big] \, \leq \, 2^{-n\big(\frac{\epsilon^2}{2 \ln{2}} - |\cX|\frac{\log{(n+1)}}{n} \big)}\,,
\end{align}
where $\lambda_{\vec{x}}$ denotes the empirical frequency distribution with respect to $\vec{x}$.

\item \label{lem_typical_itm3} Let $\vec{y} = (y_1, \dots ,y_N)$ be a fixed $N$-tuple with elements in $\cX$, and let $\vec{x} =  (x_1, \dots ,x_n) $ be sampled at random from $\vec{y}$ with or without replacement, where $n\leq N$. Then, for any $\epsilon \geq 0$,
\begin{align}
\Pr_{\vec{x}} \Big[ \big\|  \lambda_{\vec{x}} - \lambda_{\vec{y}} \big\|_1 >  \epsilon \Big] \, \leq \, 2^{-n\big(\frac{\epsilon^2}{2 \ln{2}} - |\cX|\frac{\log{(n+1)}}{n} \big)}\,,
\end{align}
where $\lambda_{\vec{y}}$ and  $\lambda_{\vec{x}}$ denote the empirical frequency distributions with respect to $\vec{y}$ and  $\vec{x}$, respectively.

\end{enumerate}
\end{proposition}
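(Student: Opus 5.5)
The plan is to prove the three items with the method of types, and to deduce (ii) and (iii) from an estimate of the type (i).

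\textbf{Item (i).} Write the event as a union over type classes: the probability equals $\sum_{Q} (P_X)^n(\Lambda^Q_n)$, where the sum runs over types $Q$ with denominator $n$ satisfying $D(Q\|P_X)>\epsilon$. I will use two standard facts.
\begin{itemize}
\item The number of types is at most $(n+1)^{|\cX|}$.
\item For every type $Q$, $(P_X)^n(\Lambda^Q_n)\le 2^{-nD(Q\|P_X)}$. To see this, note that every sequence of type $Q$ has probability $2^{-n(H(Q)+D(Q\|P_X))}$ and $|\Lambda^Q_n|\le 2^{nH(Q)}$. The latter holds because $Q^n(\Lambda^Q_n)\le 1$ and each sequence of type $Q$ has $Q^n$-probability exactly $2^{-nH(Q)}$.
\end{itemize}
Summing over the at most $(n+1)^{|\cX|}$ types with $D(Q\|P_X)>\epsilon$ gives the bound $2^{-n\epsilon+|\cX|\log(n+1)}$, which is the claim.

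\textbf{Item (ii).} This follows from (i) via Pinsker's inequality $D(Q\|P)\ge \frac{1}{2\ln 2}\|Q-P\|_1^2$, with $D$ in bits. Hence $\|\lambda_{\vec{x}}-P_X\|_1>\epsilon$ implies $D(\lambda_{\vec{x}}\|P_X)>\frac{\epsilon^2}{2\ln 2}$. Applying (i) with $\epsilon'=\frac{\epsilon^2}{2\ln 2}$ gives exactly the stated exponent.

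\textbf{Item (iii).} Sampling \emph{with} replacement from $\vec{y}$ is iid sampling from the distribution $\lambda_{\vec{y}}$, so (ii) applies directly with $P_X=\lambda_{\vec{y}}$.

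Sampling \emph{without} replacement is the main obstacle, since the samples are no longer independent. I would reduce it to the type-class bound, replacing the iid probability of each type class by a hypergeometric probability. For a type $Q$ of length $n$, the probability that the sample has type $Q$ is
\begin{align}
\frac{\prod_{a\in\cX}\binom{N\lambda_{\vec{y}}(a)}{nQ(a)}}{\binom{N}{n}}\,.
\end{align}
I then need the bound $\prod_a\binom{N\lambda_{\vec{y}}(a)}{nQ(a)}\big/\binom{N}{n}\le |\Lambda^Q_n|\,\lambda_{\vec{y}}^n(\vec{x}_Q)$ up to polynomial factors. One route is a direct ratio estimate, comparing falling factorials $(Nq)_{k}/(N)_n$ with $q^k$ via the inequality $(Nq-j)/(N-j)\le q$ for $q\le 1$. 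This bound is not exactly true termwise, however.

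The cleaner route is Hoeffding's classical observation that sampling without replacement is dominated in convex order by sampling with replacement (Hoeffding 1963, Thm.~4). The quantity $\|\lambda_{\vec{x}}-\lambda_{\vec{y}}\|_1$ can be written as $\max_{S\subseteq\cX} 2(\lambda_{\vec{x}}(S)-\lambda_{\vec{y}}(S))$. For each fixed $S$, the count $n\lambda_{\vec{x}}(S)$ is a sum of indicator variables sampled without replacement. Hoeffding's comparison therefore transfers the Chernoff bound $\Pr[\lambda_{\vec{x}}(S)-\lambda_{\vec{y}}(S)>\epsilon/2]\le e^{-n\epsilon^2/2}$ from the with-replacement case.

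A union bound over the $2^{|\cX|}$ subsets then yields $2^{|\cX|}e^{-n\epsilon^2/2}$. Since $e^{-n\epsilon^2/2}=2^{-n\epsilon^2/(2\ln 2)}$ and $2^{|\cX|}\le (n+1)^{|\cX|}$ for $n\ge1$, this is at most the stated bound.

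I expect the without-replacement comparison to be the step needing the most care. The plan is to cite Hoeffding's convexity theorem rather than reprove it.
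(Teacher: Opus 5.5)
Your items (i), (ii) and the with-replacement half of (iii) coincide with the paper's argument: method of types as in Cover--Thomas Thm.~11.2.1, then Pinsker, then iid sampling from $\lambda_{\vec{y}}$. Your proposal is correct, but for sampling \emph{without} replacement you take a genuinely different route.

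The paper stays at the level of relative entropy. It applies Markov's inequality to $2^{nD(\lambda_{\vec{x}}\|\lambda_{\vec{y}})}$. It then uses a multivariate convex-order comparison (Marshall--Olkin, Cor.~A.2.d, applied to the convex function $\nu\mapsto 2^{nD(\nu\|\lambda_{\vec{y}})}$ of the empirical distribution) to replace the without-replacement expectation by the with-replacement one. It bounds the latter by the number of types $(n+1)^{|\cX|}$, using $\Pr[\lambda_{\vec{z}}=\nu]\le 2^{-nD(\nu\|\lambda_{\vec{y}})}$, and only then invokes Pinsker.

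You instead use $\|\lambda_{\vec{x}}-\lambda_{\vec{y}}\|_1=2\max_S(\lambda_{\vec{x}}(S)-\lambda_{\vec{y}}(S))$. You apply Hoeffding's scalar bound, valid without replacement by his Thm.~4, to each of the $2^{|\cX|}$ subset counts, and take a union bound.

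What each route buys:
\begin{itemize}
\item Your route needs only Hoeffding's classical one-dimensional comparison. It avoids checking Marshall--Olkin's hypotheses for a vector-valued statistic, which the paper merely asserts.
\item Your route also gives the $n$-independent prefactor $2^{|\cX|}$ instead of $(n+1)^{|\cX|}$.
\item The paper's route yields the stronger intermediate bound $\Pr[D(\lambda_{\vec{x}}\|\lambda_{\vec{y}})>\epsilon]\le(n+1)^{|\cX|}2^{-n\epsilon}$. This is a full analogue of (i) for sampling without replacement, and it keeps all three items within one method-of-types framework.
\end{itemize}

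Your aside about a direct ratio estimate of the hypergeometric type probabilities is unnecessary once you commit to the Hoeffding route.
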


\begin{proof}
\begin{itemize}
\item[\eqref{lem_typical_itm1}] See~\cite[Theorem 11.2.1]{cover}.

\item[\eqref{lem_typical_itm2}] The claim follows directly by combining~\eqref{lem_typical_itm1} and~\cref{lem_Pinsker}.

\item[\eqref{lem_typical_itm3}] First, consider the case where $\vec{x} =  (x_1, \dots ,x_n) $ is sampled at random from $\vec{y}$ \emph{with} replacement. This scenario is equivalent to choosing the $n$-tuple $\vec{x} = (x_1, \dots ,x_n)$ according to the i.i.d.~product distribution $( \lambda_{\vec{y}})^n$. Hence,~\eqref{lem_typical_itm2} directly applies with $P_X$ replaced by $\lambda_{\vec{y}}$, and the assertion follows.

For the case where $\vec{x} =  (x_1, \dots ,x_n) $ is sampled at random from $\vec{y}$ \emph{without} replacement, the proof idea is to relate this case to the analogue scenario with replacement in the spirit of Hoeffding's result~\cite[Theorem 4]{Hoeffding1963}. Accordingly, let $\vec{z} =  (z_1, \dots ,z_n) $ be sampled at random from $\vec{y}$ with replacement. Let $D_{\vec{x}} = D\big( \lambda_{\vec{x}} \|  \lambda_{\vec{y}} \big)$ and $D_{\vec{z}} = D\big( \lambda_{\vec{z}} \|  \lambda_{\vec{y}}  \big)$. By applying a Chernoff bound, we find for any $\epsilon \geq 0$ and $t>0$ that
\begin{align}
\Pr_{\vec{x}} \big[ D_{\vec{x}} > \epsilon \big] \,=\, \Pr_{\vec{x}} \big[2^{tD_{\vec{x}}} > 2^{t\epsilon}\big] \,\leq\, 2^{-t\epsilon}\, \E_{\vec{x}}\big[2^{tD_{\vec{x}}} \big]\,,
\end{align}
where the last step is Markov's inequality. As a next step, we apply~\cite[Corollary A.2.d.]{Marshall2011} to the function $\xi(\vec{x}) = 2^{t D_{\vec{x}}}$, leading to
\begin{align}
\E_{\vec{x}}\big[2^{tD_{\vec{x}}} \big] \,\leq\, \E_{\vec{z}}\big[2^{tD_{\vec{z}}} \big]\,.
\end{align}
This corollary is applicable since \textit{(a)} the relative entropy is jointly convex~\cite{cover} and in particular, the map $ \nu \mapsto D(\nu \|  \lambda_{\vec{y}} )$ is convex in the distribution $\nu$, \textit{(b)} the exponential map $u \mapsto 2^{tu}$ is convex and increasing for $u \in \R$ and any $t >0$, and \textit{(c)} hence the composite map $f( \nu ) = 2^{t D(\nu \|  \lambda_{\vec{y}} )}$ is convex. Furthermore, it is straightforward to check that any map of the form $\xi(\vec{x}) = f(\lambda_{\vec{x}})$ with $f$ convex satisfies the requirements listed in~\cite[Section 12.A., Eqs.~(3) \& (4)]{Marshall2011}. Choosing $t=n$, we further obtain
\begin{align}
 \E_{\vec{z}}\big[2^{nD_{\vec{z}}} \big] \,=\, \sum_{\nu \in \cP_n} \Pr_{\vec{z}} \big[\lambda_{\vec{z}} = \nu \big]\, 2^{n D(\nu \|  \lambda_{\vec{y}} ) } \leq |\cP_n|\,,
\end{align}
where $\cP_n$ denotes the set of types with denominator n, and the last step follows due to the bound \smash{$ \Pr_{\vec{z}} \big[\lambda_{\vec{z}} = \nu \big] \leq 2^{-n D(\nu \|  \lambda_{\vec{y}} )}$} for any $\nu \in \cP_n$, see~\cite[Theorem 11.1.4]{cover}. Furthermore, we know that \smash{$|\cP_n| \leq (n+1)^{|\cX|}$}, see~\cite[Theorem 11.1.1]{cover}. By combining all of the above, we then find
\begin{align}
\Pr_{\vec{x}} \big[ D\big( \lambda_{\vec{x}} \|  \lambda_{\vec{y}} \big) > \epsilon \big]\,\leq\, 2^{-n\epsilon}\, \E_{\vec{x}}\big[2^{nD_{\vec{x}}} \big] \, \leq \, 2^{-n\epsilon}\, \E_{\vec{z}}\big[2^{nD_{\vec{z}}} \big] \, \leq \, (n+1)^{|\cX|} \, 2^{-n\epsilon} \,.
\end{align}
 Finally, combining the above relation with~\cref{lem_Pinsker} yields
 \begin{align}
 \Pr_{\vec{x}} \Big[ \big\|  \lambda_{\vec{x}} - \lambda_{\vec{y}} \big\|_1 >  \epsilon \Big] \, \leq \, (n+1)^{|\cX|} \, 2^{-n\frac{\epsilon^2}{2 \ln{2}}} \,=\,2^{-n\big(\frac{\epsilon^2}{2 \ln{2}} - |\cX|\frac{\log{(n+1)}}{n} \big)}\,.
 \end{align}

\end{itemize}
\end{proof}

%------------------------------ Pinsker
\begin{lemma}[Pinsker's inequality] \label{lem_Pinsker}
Let $P$ and $Q$ be probability distributions. Then
\begin{align}
\big\| P-Q \big\|_1 \,\leq\, \sqrt{(2\ln{2})\,D\big(P\|Q\big)}\,,
\end{align}
where $D$ denotes the relative entropy.
\end{lemma}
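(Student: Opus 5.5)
This is the classical Pinsker inequality. I would prove it by first reducing to the binary case via data processing, and then verifying the binary case by a one-variable calculus argument.

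\textbf{Reduction.} Let $A:=\{x: P(x)\geq Q(x)\}$. Then
\begin{align}
\|P-Q\|_1 \,=\, 2\big(P(A)-Q(A)\big)\,.
\end{align}
Coarse-graining onto the two events $A$ and $A^c$ is a classical channel. It therefore cannot increase the relative entropy, since this is the data-processing inequality; equivalently, it is the log-sum inequality applied on $A$ and on $A^c$. Writing $p:=P(A)$ and $q:=Q(A)$, it thus suffices to show
\begin{align}
p\log\tfrac{p}{q}+(1-p)\log\tfrac{1-p}{1-q} \,\geq\, \tfrac{2}{\ln 2}(p-q)^2 \qquad \text{for } p\geq q\,.
\end{align}
Squaring $2(p-q)\le\sqrt{(2\ln2)D}$ gives exactly this inequality. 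The case $p\le q$ follows symmetrically, or by swapping the roles of $A$ and $A^c$.

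\textbf{Binary case.} Fix $p$ and define, for $q\in(0,p]$,
\begin{align}
g(q):=p\ln\tfrac{p}{q}+(1-p)\ln\tfrac{1-p}{1-q}-2(p-q)^2\,,
\end{align}
working in natural logarithms. Then $g(p)=0$, and
\begin{align}
g'(q) \,=\, -\tfrac{p}{q}+\tfrac{1-p}{1-q}+4(p-q) \,=\, (q-p)\Big(\tfrac{1}{q(1-q)}-4\Big)\,.
\end{align}
Since $q(1-q)\le \tfrac14$, the second factor is nonnegative, so $g'(q)\le 0$ for $q\le p$. Hence $g$ is nonincreasing on $(0,p]$ and $g(q)\ge g(p)=0$. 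Converting back to $\log_2$ multiplies the divergence by $1/\ln 2$, which yields the required constant.

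\textbf{Edge cases.} If $D(P\|Q)=\infty$ the statement is trivial. If $q=0$ with $p>0$, the divergence is infinite. The boundary cases $p\in\{0,1\}$ follow by continuity of $g$.

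The main step requiring care is the reduction: I need the relative entropy to be monotone under the coarse-graining with the correct conventions ($0\log0=0$). This is standard via the log-sum inequality. The derivative computation is routine.
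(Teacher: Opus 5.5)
Your proof is correct and is the standard argument of Cover and Thomas (Lemma 11.6.1), which the paper simply cites without reproducing. It reduces to the binary case by coarse-graining onto $A=\{x:P(x)\geq Q(x)\}$ via data processing, and then proves the binary inequality by showing $g$ is nonincreasing in $q$ on $(0,p]$. As a minor remark, your choice of $A$ already forces $p\geq q$, so the case $p\leq q$ never arises.
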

\begin{proof}
See~\cite[Lemma 11.6.1]{cover}.
\end{proof}

%------------------------------ Gentle measurement lemma & Lemma about trace and euclidean norm
\begin{lemma}[Gentle measurement lemma~{\cite[Corollary 3.15]{watrous_2018}}] \label{lem_gentle-measurement}
Let $\rho\in \St(\cH_C)$ be a density operator, and let $P\in \Pos(\cH_C)$ be a positive semidefinite operator satisfying $P\leq \id_C$ and $\tr(P\rho) >0$. Then it holds that
\begin{align}
    F\left(\rho\,,\frac{\sqrt{P}\,\rho\,\sqrt{P}}{\tr(P\rho)} \right) \,\geq\, \tr(P\rho) \,,
\end{align}
where $F(\rho,\sigma) := \|\sqrt{\rho}\sqrt{\sigma}\|_1^2$ denotes the fidelity. 
\end{lemma}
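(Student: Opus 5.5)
This is the gentle measurement lemma. I plan to reduce it to a one-line computation with the fidelity formula for a pure state, and then extend to mixed $\rho$ by monotonicity of the fidelity.

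\emph{Step 1: pure $\rho$.} Take first $\rho=\proj{\psi}$. The normalized post-measurement state $\sigma:=\sqrt{P}\rho\sqrt{P}/\tr(P\rho)$ is then also pure, $\sigma=\proj{\phi}$ with $\ket{\phi}=\sqrt{P}\ket{\psi}/\sqrt{\tr(P\rho)}$. Hence
\begin{align}
F(\rho,\sigma)\,=\,|\braket{\psi|\phi}|^2\,=\,\frac{\bra{\psi}\sqrt{P}\ket{\psi}^2}{\tr(P\rho)}\,.
\end{align}
Since $0\leq P\leq\id$, every eigenvalue of $P$ lies in $[0,1]$, so $\sqrt{P}\geq P$ as operators. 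This gives $\bra{\psi}\sqrt{P}\ket{\psi}\geq\bra{\psi}P\ket{\psi}=\tr(P\rho)$, and therefore $F\geq\tr(P\rho)^2/\tr(P\rho)=\tr(P\rho)$.

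\emph{Step 2: general $\rho$.} Purify $\rho$ to $\ket{\psi}_{CR}$ and apply Step 1 to the operator $P\otimes\id_R$. This operator also satisfies $0\leq P\otimes\id_R\leq\id$, and $\tr\big((P\otimes\id_R)\proj{\psi}\big)=\tr(P\rho)$. The post-measurement state of the purification, $(\sqrt{P}\otimes\id)\proj{\psi}(\sqrt{P}\otimes\id)/\tr(P\rho)$, has $C$-marginal exactly $\sigma$. Step 1 shows the fidelity between the two pure states is at least $\tr(P\rho)$. The fidelity is monotone under the partial trace over $R$ (data processing), so $F(\rho,\sigma)\geq\tr(P\rho)$.

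I do not expect a real obstacle. The only point needing care is that the reduction really yields $\sigma$ as the marginal, which holds because $\sqrt{P}\otimes\id$ acts trivially on $R$. One must also keep the convention $F=\|\sqrt{\rho}\sqrt{\sigma}\|_1^2$, which for pure states equals the squared overlap used in Step 1.
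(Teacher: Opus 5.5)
Your proof is correct. For pure states $\|\sqrt{\rho}\sqrt{\sigma}\|_1^2=|\braket{\psi|\phi}|^2$, and the operator inequality $\sqrt{P}\geq P$ holds because the spectrum of $P$ lies in $[0,1]$. The reduction in Step~2 is also valid: since $\sqrt{P\otimes\id_R}=\sqrt{P}\otimes\id_R$, the post-measurement purification has $C$-marginal exactly $\sigma$, and monotonicity of $F$ under $\tr_R$ applies.

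The paper gives no proof of its own and simply cites Watrous. A more direct argument works with mixed $\rho$ from the start and makes your Step~2 unnecessary. For $\sigma=\sqrt{P}\rho\sqrt{P}/\tr(P\rho)$ one has $\sqrt{\rho}\,\sigma\sqrt{\rho}=(\sqrt{\rho}\sqrt{P}\sqrt{\rho})^2/\tr(P\rho)$ with $\sqrt{\rho}\sqrt{P}\sqrt{\rho}\geq 0$. Hence $\|\sqrt{\rho}\sqrt{\sigma}\|_1=\tr\sqrt{\sqrt{\rho}\,\sigma\sqrt{\rho}}=\tr(\sqrt{P}\rho)/\sqrt{\tr(P\rho)}$, and therefore $F(\rho,\sigma)=\tr(\sqrt{P}\rho)^2/\tr(P\rho)\geq\tr(P\rho)$, again by $\sqrt{P}\geq P$.

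Both routes hinge on the same inequality $\sqrt{P}\geq P$. The direct one is fully elementary and gives the exact value of $F(\rho,\sigma)$: it needs no appeal to data processing for the fidelity, which is itself usually derived from Uhlmann's theorem. Yours trades that for a transparent pure-state overlap computation.

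In fact your bound $\bra{\psi}(\sqrt{P}\otimes\id_R)\ket{\psi}^2/\tr(P\rho)=\tr(\sqrt{P}\rho)^2/\tr(P\rho)$ coincides with this exact value. So no slack is lost in your monotonicity step: the purifications you chose are already optimal in Uhlmann's sense.
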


\begin{lemma} \label{lem_norm-identities}
    Let $\ket{\Phi}$, $\ket{\Psi}\,\in\cH$ be any two (nonzero) vectors in a Hilbert space, and let $\| \cdot\|_1$ and $\| \cdot\|_2$ denote the trace norm and the Euclidean norm, respectively. Then, the following holds:
    \begin{enumerate}[(i)]
    \item \(
        \big\| \ket{\Psi} \bra{\Phi}\big\|_1 = \big\|\ket{\Psi} \big\|_2 \,\big\|\ket{\Phi} \big\|_2\,.
    \) \label{lem_norm-identities-it1}
    \item If both $\ket{\Phi}$ and $\ket{\Psi}$ are normalized, then
    \(
        \big\| \ket{\Psi} -\ket{\Phi}\big\|_2 \,=\, \sqrt{2\big(1-\mathrm{Re}(\braket{\Psi|\Phi})\big)} \,.
        %\,\leq\,\sqrt{2\big(1-|\braket{\Psi|\Phi}|\big)}
    \) \label{lem_norm-identities-it2}
    \item If both $\ket{\Phi}$ and $\ket{\Psi}$ are normalized, then
    \begin{align*}
        \tfrac{1}{2}\big\| \proj{\Psi} -\proj{\Phi}\big\|_1 \,\leq\, \big\| \ket{\Psi} -\ket{\Phi}\big\|_2 \sqrt{1-\frac{ \big\| \ket{\Psi} -\ket{\Phi}\big\|^2_2}{4}}  \,\leq\,  \big\| \ket{\Psi} -\ket{\Phi}\big\|_2 \,.
    \end{align*} \label{lem_norm-identities-it3}
     \item If both $\ket{\Phi}$ and $\ket{\Psi}$ are normalized and $\braket{\Psi|\Phi}\geq 0$, then
    \begin{align*}
       \big\| \ket{\Psi} -\ket{\Phi}\big\|_2  \,\leq\, \sqrt{2}\,  \tfrac{1}{2}\big\| \proj{\Psi} -\proj{\Phi}\big\|_1   \,.
    \end{align*} \label{lem_norm-identities-it4}
\end{enumerate}
\end{lemma}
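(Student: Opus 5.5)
The plan is to reduce all four items to elementary computations with the inner product $c:=\braket{\Psi|\Phi}$. For pure states we use the standard formula $\tfrac12\|\proj{\Psi}-\proj{\Phi}\|_1=\sqrt{1-|\braket{\Psi|\Phi}|^2}$. If it is not taken as known, I would derive it by restricting to the (at most) two-dimensional span of $\ket{\Psi},\ket{\Phi}$. There $\proj{\Psi}-\proj{\Phi}$ is traceless Hermitian with determinant $-(1-|c|^2)$, so its eigenvalues are $\pm\sqrt{1-|c|^2}$.

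For \eqref{lem_norm-identities-it1}, I would compute $X^\dagger X$ for $X=\ket{\Psi}\bra{\Phi}$, namely $X^\dagger X=\|\Psi\|_2^2\,\proj{\Phi}$. Hence $\sqrt{X^\dagger X}=\|\Psi\|_2\,\proj{\Phi}/\|\Phi\|_2$, and taking the trace gives $\|\Psi\|_2\|\Phi\|_2$. For \eqref{lem_norm-identities-it2}, I would simply expand the square: $\|\Psi-\Phi\|_2^2=\braket{\Psi|\Psi}+\braket{\Phi|\Phi}-2\,\mathrm{Re}\,c=2(1-\mathrm{Re}\,c)$.

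For \eqref{lem_norm-identities-it3}, set $\delta:=\|\ket{\Psi}-\ket{\Phi}\|_2$ and $a:=\mathrm{Re}\,c$. By \eqref{lem_norm-identities-it2}, $a=1-\delta^2/2$. Since $|c|\ge|a|$, we get $1-|c|^2\le 1-a^2=(1-a)(1+a)=\tfrac{\delta^2}{2}\big(2-\tfrac{\delta^2}{2}\big)=\delta^2\big(1-\tfrac{\delta^2}{4}\big)$. Taking square roots gives the first inequality. The second follows because $0\le\delta\le 2$, so the factor $\sqrt{1-\delta^2/4}$ lies in $[0,1]$.

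For \eqref{lem_norm-identities-it4}, the assumption gives $c=a\in[0,1]$ real. Then $\delta^2=2(1-a)$ and $2\big(\tfrac12\|\proj{\Psi}-\proj{\Phi}\|_1\big)^2=2(1-a^2)$. The claim reduces to $1-a\le 1-a^2$, i.e.\ $a^2\le a$, which holds for $a\in[0,1]$.

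None of the steps is a real obstacle. The only point requiring care is in \eqref{lem_norm-identities-it3}: when $\mathrm{Re}\,c$ may be negative, one still needs $|c|^2\ge(\mathrm{Re}\,c)^2$. This is exactly why \eqref{lem_norm-identities-it4} needs the phase assumption $c\ge0$.
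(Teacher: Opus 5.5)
Your proposal is correct and follows essentially the same route as the paper: the same computation of $X^\dagger X$ for (i), direct expansion for (ii), and, for (iii), the bound $|c|\ge|\mathrm{Re}\,c|$ combined with the pure-state formula $\sqrt{1-|c|^2}$. For (iv) you reduce the claim to $a^2\le a$, whereas the paper writes $\sqrt{2(1-\sqrt{1-\epsilon^2})}=\epsilon\sqrt{2/(1+\sqrt{1-\epsilon^2})}\le\sqrt{2}\,\epsilon$; this is the same elementary inequality in a different form.
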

\begin{proof} Recall that $\|A\|_1 = \tr\big[\sqrt{A^\dagger A}\,\big]$ for any linear operator $A$ on $\cH$, and $\big\|\ket\Psi \big\|_2 = \sqrt{\braket{\Psi|\Psi}}$ for any vector $\ket\Psi \in \cH$.

\noindent \eqref{lem_norm-identities-it1} In particular, for $A:=\ket{\Psi} \bra{\Phi} $, we have
\begin{align}
    A^\dagger A \,=\,\braket{\Psi|\Psi} \proj{\Phi}\,=\, \big\|\ket{\Psi}  \big\|_2^2\, \proj{\Phi} \quad \implies \quad \sqrt{A^\dagger A}\,=\,\frac{\big\|\ket{\Psi}  \big\|_2}{\big\|\ket{\Phi}  \big\|_2}  \,\proj{\Phi}\,.
\end{align}
Hence,
\begin{align}
    \big\| \ket{\Psi} \bra{\Phi}\big\|_1\,=\,\big\|A\big\|_1 \,=\, \tr\left[\frac{\|\ket{\Psi}  \|_2}{\|\ket{\Phi}  \|_2}  \,\proj{\Phi} \right] \,=\,\big\|\ket{\Psi}  \big\|_2\,\big\|\ket{\Phi}  \big\|_2\,.
\end{align}

\noindent \eqref{lem_norm-identities-it2} Since $\braket{\Psi|\Psi} = \braket{\Phi|\Phi} =1$, we find
\begin{align}
     \big\| \ket{\Psi} -\ket{\Phi}\big\|_2^2 \,=\,\big(\bra{\Psi} -\bra{\Phi}\big)\big(\ket{\Psi} -\ket{\Phi}\big)\,=\,2- \big(\braket{\Psi|\Phi}+\braket{\Phi|\Psi}\big)\,=\,2\big(1-\mathrm{Re}(\braket{\Psi|\Phi})\big)\,.
\end{align}

\noindent \eqref{lem_norm-identities-it3} First, note that
\begin{align}
|\braket{\Psi|\Phi}| \,\geq\, \big|\mathrm{Re}(\braket{\Psi|\Phi})\big| \,\overset{\eqref{lem_norm-identities-it2}}{=}\, \bigg| 1 - \frac{  \big\| \ket{\Psi} -\ket{\Phi}\big\|^2_2}{2} \bigg|\,.
\end{align}
Then, we evaluate
\begin{align}
 \tfrac{1}{2}\big\| \proj{\Psi} -\proj{\Phi}\big\|_1 \,\overset{\textnormal{F.v.d.G.}}{=}\, \sqrt{1-|\braket{\Psi|\Phi}|^2} \,\leq\,  \big\| \ket{\Psi} -\ket{\Phi}\big\|_2 \sqrt{1-\frac{ \big\| \ket{\Psi} -\ket{\Phi}\big\|^2_2}{4}}  \,\leq\,  \big\| \ket{\Psi} -\ket{\Phi}\big\|_2 \,.
\end{align}

\noindent \eqref{lem_norm-identities-it4} Because of the Fuchs-van de Graaf inequality and $\braket{\Psi|\Phi}\geq0$, it holds that
\begin{align}
\epsilon \,:=\,  \tfrac{1}{2}\big\| \proj{\Psi} -\proj{\Phi}\big\|_1 \,=\, \sqrt{1-|\braket{\Psi|\Phi}|^2} \quad \implies \quad \braket{\Psi|\Phi} \,=\, \sqrt{1-\epsilon^2} \,.
\end{align}
Furthermore, note that
\begin{align}
1- \sqrt{1-\epsilon^2} \,=\, \frac{(1- \sqrt{1-\epsilon^2})(1+ \sqrt{1-\epsilon^2})}{1+ \sqrt{1-\epsilon^2}} \,=\, \frac{\epsilon^2}{1+ \sqrt{1-\epsilon^2}}\,.
\end{align}
With that, we evaluate
\begin{align}
  \big\| \ket{\Psi} -\ket{\Phi}\big\|_2 \,\overset{\eqref{lem_norm-identities-it2}}{=}\,  \sqrt{2\big(1-\braket{\Psi|\Phi} \big)} \,=\,\sqrt{2\big(1- \sqrt{1-\epsilon^2}\big)} \,=\, \epsilon\,\sqrt{\frac{2}{1+ \sqrt{1-\epsilon^2}}} \,\leq \,\epsilon\,\sqrt{2}\,,
\end{align}
which proves the claim.
\end{proof}
\newpage
 \begin{lemma} \label{lem_number-theory}
 For any $j \in (0,1)$, consider the set $S_{j} := \big\{n \in \N \,:\,  n = \ell + \lceil \ell^{j} \rceil \,, \  \ell \in \N  \big\}$. Then, for any $s\in \N$ with $s\geq 2$, it holds that
 \begin{align}
 S_{\frac{1}{s}} \,\cup\,  S_{\frac{1}{2s}} \,=\,\N \setminus \{1,3\}\,.
 \end{align}
 \end{lemma}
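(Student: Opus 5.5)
The function $f_j(\ell):=\ell+\lceil \ell^{j}\rceil$ is strictly increasing in $\ell$, and its increments satisfy $f_j(\ell+1)-f_j(\ell)=1+\big(\lceil (\ell+1)^{j}\rceil-\lceil \ell^{j}\rceil\big)\in\{1,2\}$, since $(\ell+1)^j-\ell^j<1$ for $j\in(0,1)$. Hence $\N\setminus S_j$ consists exactly of the integers skipped at increments of size $2$. Such a jump occurs precisely when $\lceil \ell^{j}\rceil$ increases, i.e.\ when $\ell=m^{1/j}$ for an integer $m$ (the ceiling jumps from $m$ to $m+1$ as $\ell^j$ passes from $\le m$ to $>m$). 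For $j=1/s$ this happens at $\ell=m^s$, $m\ge 1$.

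The first step is to compute the missing values explicitly. For $j=\tfrac1s$, the jump from $\ell=m^s$ to $\ell=m^s+1$ sends $f_j$ from $m^s+m$ to $m^s+1+(m+1)$, so the skipped integer is $m^s+m+1$. The smallest value of $f_j$ is $f_j(1)=2$, so $1$ is also missing. Thus
\[
\N\setminus S_{1/s}=\{1\}\cup\{m^s+m+1: m\in\N\},
\]
and in the same way
\[
\N\setminus S_{1/(2s)}=\{1\}\cup\{k^{2s}+k+1: k\in\N\}.
\]

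It then suffices to show that the two exceptional sets intersect exactly in $\{1,3\}$. The value $3$ lies in both, via $m=k=1$. For any other common value we need $m^s+m=k^{2s}+k=(k^2)^s+k$ with $m,k\ge 1$ and $(m,k)\neq(1,1)$. Since $x\mapsto x^s+x$ is strictly increasing, comparing $m$ with $k^2$ settles it. If $m\ge k^2$, then $m^s+m\ge k^{2s}+k^2>k^{2s}+k$ whenever $k\ge 2$, and for $k=1$ equality forces $m=1$. If $m<k^2$, then $m\le k^2-1$ and $m^s+m\le (k^2-1)^s+k^2-1$. This is less than $k^{2s}+k$ because $k^{2s}-(k^2-1)^s\ge 1\ge k^2-1-k$ fails only when $k^2-k-1>k^{2s}-(k^2-1)^s$. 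But $k^{2s}-(k^2-1)^s\ge s(k^2-1)^{s-1}\ge 2(k^2-1)>k^2-k-1$ for $s\ge2$ and $k\ge2$, and the case $k=1$ gives $m=0$, which is excluded.

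The main care point is this inequality step, which uses $s\ge 2$; the mean-value bound above is how I would handle it. I would also double-check the boundary behaviour of the ceiling at exact integer powers, since that determines exactly where the jumps occur.
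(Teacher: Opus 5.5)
Your proof is correct and follows the paper's route. Both arguments identify the skipped integers as $1$ together with $m^s+m+1$ (resp.\ $k^{2s}+k+1$), and both rule out any common value other than $3$ by trapping $k^{2s}+k$ strictly between $f(k^2-1)$ and $f(k^2)$ for the increasing map $f(x)=x^s+x$; the only difference is that the paper bounds $k^{2s}-(k^2-1)^s$ from below by $k^{2s-2}$ via $a^s-b^s=(a-b)\sum_{i=0}^{s-1}a^{s-1-i}b^i$, while you use the mean value theorem.

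The clause ``$k^{2s}-(k^2-1)^s\ge 1\ge k^2-1-k$ fails only when\ldots'' is garbled; in fact $1\ge k^2-k-1$ is false for $k\ge 3$. Replace it with the direct statement that your mean-value bound gives $k^{2s}-(k^2-1)^s>k^2-k-1$, which is exactly the inequality needed.
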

 
 \begin{proof}
First we note that for any $j \in (0,1)$ and $n(\ell):=  \ell + \lceil \ell^{j}\rceil $, the gap between two values, $\Delta n = n(\ell+1) - n(\ell)$, satisfies $1\leq \Delta n \leq 2$ for all $\ell\in \N$. Furthermore, we argue that the integers $\{1,3\}$ always correspond to skipped integers for any choice of $j$. To see that, we simply evaluate the first few terms: $n(0) = 0, \,n(1) = 2,\, n(2) = 4$, which is true independently of $j$.

Now, for rationals $j = \frac{1}{s}$ with $s\in \N$ and $s\geq 2$, the gaps occur precisely at positions $\ell^s$, and indeed we evaluate
\begin{align}
n(\ell^s) \,=\,  \ell^s + \ell\,,\quad n(\ell^s+1) \,=\,  \ell^s+1 + \lceil (\ell^s+1)^{\frac{1}{s}}\rceil \,=\,  \ell^s+1 + \ell + 1 = n(\ell^s) + 2   \,. 
\end{align} 
This means that the set of skipped integers is given by $\bar{S}_{\frac{1}{s}} = \big\{\ell^s +\ell +1 \, |\, \ell \in \N \big\}$. Similarly, one finds $\bar{S}_{\frac{1}{2s}} = \big\{\ell^{2s} +\ell +1 \, |\, \ell \in \N \big\}$. Next, we aim to show that $\bar{S}_{\frac{1}{s}} \cap \bar{S}_{\frac{1}{2s}} = \{3\}$. In fact, any element $\bar{n} \in \bar{S}_{\frac{1}{s}} \cap \bar{S}_{\frac{1}{2s}}$ would need to satisfy
\begin{align} \label{lem_number-theory-it}
\bar{n}\,=\,x^s +x +1 \,=\,y^{2s} + y + 1 \quad \iff \quad x^s +x  \,=\,y^{2s} + y 
\end{align}
for some $x,\,y \in \N$. If $x = y = 1$, we recover $\bar{n} = 3$, which is always skipped as observed before. For $y=1$ and $x \geq 2$,~\cref{lem_number-theory-it} is clearly never satisfied. Now consider the remaining general case with $x\in \N$ and $y \geq 2$.
%assume (by contradiction) that~\cref{lem_number-theory-it} holds with $x,\,y \geq 2$.
Define the function $f(\ell) := \ell^s + \ell $ and evaluate
\begin{align}
f(y^2) \,&=\,  y^{2s} + y^2  \, > \, y^{2s} + y  \,, \\
f(y^2-1) \,&=\,  (y^2-1)^s + y^2  - 1 \,\overset{(*)}{<}\,y^{2s} + y\,,
\end{align}
where the step $(*)$ follows from the identity $a^s - b^s =(a-b) \sum_{i=0}^{s-1} a^{s-i-1} b^i$ by setting $a= y^2$ and $b= y^2-1$, implying that
\begin{align}
y^{2s} -  (y^2-1)^s \,&=\,  \sum_{i=0}^{s-1}  y^{2(s-i-1)} (y^2-1)^i \,=\, y^{2s-2} +  \sum_{i=1}^{s-1}  y^{2(s-i-1)} (y^2-1)^i  \\
\,&> \,  y^{2s-2} \\
 \,&>\,y^2 -y -1 \quad \forall\, s\geq 2 \,.
\end{align}
Hence, we have shown that 
\begin{align}
 f(y^2-1) < y^{2s} + y < f(y^2)\,.
\end{align}
But since there is no integer value lying between $y^2-1$ and $y^2$, there cannot exist any $x \in \N$ such that $f(x) = y^{2s} + y$ and thus,~\cref{lem_number-theory-it} cannot hold in this case. This proves that $\bar{S}_{\frac{1}{s}} \cap \bar{S}_{\frac{1}{2s}} = \{3\}$, which immediately implies that $ S_{\frac{1}{s}} \,\cup\,  S_{\frac{1}{2s}} \,=\,\N \setminus \{1,3\}$ as desired.
 \end{proof}

\newpage
\bibliographystyle{arxiv_no_month}
\bibliography{bibliofile}

\end{document}